\newtheorem{theorem}{Theorem}
\newtheorem{lemma}{Lemma}
\newtheorem{corollary}{Corollary}
\newtheorem{definition}{Definition}
\newcommand{\pf}{\operatorname{pf}}
\newcommand{\haf}{\operatorname{haf}}
\newcommand{\R}{\mathbb{R}}
\numberwithin{equation}{section}
\def \be {\begin{equation}}
\def \ee {\end{equation}}
\def \ba {\begin{array}}
\def \ea {\end{array}}
\def \bea{\begin{eqnarray}}
\def \eea{\end{eqnarray}}
\def \and {{~\textrm{and}~}}
\begin{document}
\sloppy

\newpage
\setcounter{page}{1}

\begin{center}{\Large \textbf{A Generalized Grassmann–Pfaffian Framework for Monomer--Dimer and Spanning Trees\\
}}\end{center}

\begin{center}
E. A. Ramirez Trino\textsuperscript{1}
M. A. Seifi MirJafarlou\textsuperscript{1} and
M. A. Rajabpour\textsuperscript{1},

\end{center}

\begin{center}
{1} Universidade Federal Fluminense, Niterói - RJ, Brazil
%\\
%{2} Theoretical Physics III, Center for Electronic Correlations and Magnetism, Institute of Physics, University of Augsburg, D-86135 Augsburg, Germany
%\\
% TODO: provide email address of corresponding author
%${}^\star$ {\small \sf arashjafarizadeh@id.uff.br}
\end{center}

%\begin{center}
%\today
%\end{center}

%\section*{Abstract}
\begin{abstract}
We develop a unified framework for Berezin integrals over Grassmann variables that establishes master identities for exponential quadratic fermionic forms and linear fermionic forms coupled to both bosonic and fermionic sources. The construction is rigorous for both real and complex fermions in arbitrary dimensions and remains well-defined even when the underlying matrices are singular. Our main mathematical results appear in two master theorems. Theorem \ref{exponential with linear fermionic and bosonic sources} provides a comprehensive identity for Berezin integrals over Grassmann variables for real fermions with mixed bosonic–fermionic sources, applicable to any antisymmetric matrix. Its complex analogue, Theorem \ref{theorem 2.2}, yields corresponding determinant-based representations. Together, they serve as generating functionals for a wide range of combinatorial and physical models. Key applications include the dimer, monomer–dimer, matching, and almost-matching problems. We revisit the Kasteleyn theorem for planar dimers using Berezin integrals. We construct monomer–dimer systems through the \textit{Monobisyzexant (Mbsz)} function, which generalizes the Hafnian to incorporate monomer contributions and admits a Pfaffian-sum representation for planar graphs (Theorem \ref{thm:main2}); and practical techniques for handling singular matrices via unitary block decomposition (Theorem \ref{HuaTheorem}) and spectral analysis. We further present explicit mappings between Hafnians and Pfaffians and their submatrix generalizations (Hafnianinhos and Pfaffianinhos); an alternative source-ordered Berezin integral representation for spanning trees and forests using complex bosonic sources that regularizes the Laplacian zero mode (Theorem~\ref{TheoST}). Overall, this work offers a flexible toolkit for the theoretical analysis and computational implementation of graph-based models and lattice field theories using Berezin integrals over Grassmann variables .
\end{abstract}

\vspace{10pt}
\noindent\rule{\textwidth}{1pt}
%\tableofcontents\thispagestyle{fancy}
\tableofcontents
\noindent\rule{\textwidth}{1pt}
\vspace{10pt}

\section{Introduction}
A graph $G$ consists of a set of points, called \textit{vertices}, and a set of lines, called \textit{edges}, connecting specified pairs of vertices. A \textit{dimer}, or \textit{matching}, is an object that occupies a single edge along with its endpoints. A \textit{dimer covering} of $G$, also known as a \textit{perfect matching}, is a collection of dimers such that each vertex is covered by exactly one dimer~\cite{LovaszPlummer1986}. Enumerating the perfect matchings of $G$ is equivalent to computing the partition function of the dimer model on the graph. This model provides a unifying framework that connects combinatorics, statistical mechanics, and fermionic field theory via Berezin integrals over Grassmann variables~\cite{Kasteleyn1961,hayn1997}.

From a combinatorial perspective, all the information of a graph $G$ is encoded in its adjacency matrix $\mathbf{A}$ (also called connectivity matrix). The natural algebraic object that enumerates perfect matchings is the \textit{Hafnian} of the adjacency matrix~\cite{Barvinok1999}. More generally, for an even dimensional symmetric matrix, the Hafnian (Definition~\ref{Def:Hafnian}) counts all possible pairwise partitions of indices, each weighted by the corresponding matrix elements. The term “Hafnian” was introduced by Caianiello~\cite{Caianiello1973}, inspired by his stay in Copenhagen (Hafnia in Latin). Although fundamental for dimer counting, evaluating the Hafnian is \#P-complete, making exact computation intractable for general graphs~\cite{Barvinok2016a,Rudelson2016,Chien2004,Sankowski2003}. Nevertheless, several algorithmic advances have been developed; notably, a parallelizable algorithm for Hafnians of complex weighted graphs was proposed in~\cite{Quesada2019}, with computational complexity $\mathcal{O}(L^{3} 2^{L/2})$, which remains the fastest exact method currently known.

The enumeration of perfect matchings exhibits a fundamental computational dichotomy between planar and non-planar graphs. For planar graphs, Kasteleyn's celebrated theorem~\cite{Kasteleyn1961,Kasteleyn1963} provides a profound simplification: by constructing a special edge orientation where each face has an odd number of clockwise-oriented edges, the dimer partition function becomes exactly computable as the Pfaffian of an antisymmetric Kasteleyn matrix $\mathbf{K}$, with Temperley and Fisher independently deriving analogous results for square lattices~\cite{TemperleyFisher1961,Fisher1961}. This methodology extends substantially beyond planar graphs: for bipartite graphs, Little's theorem completely characterizes Pfaffian orientability through forbidden $K_{3,3}$ subdivisions~\cite{Little1973,Little1975}, with polynomial-time recognition algorithms~\cite{Robertson1999,McCuaig2004}. For arbitrary graphs embedded on orientable surfaces of genus $g$, Kasteleyn's original framework was systematically generalized by~\cite{CimasoniReshetikhin2007,CimasoniReshetikhin2008}, which demonstrated that the partition function becomes a carefully signed linear combination of $4^g$ Pfaffians corresponding to different spin structures, naturally reducing to the single Pfaffian case when $g=0$. This rich mathematical edifice admits an elegant Grassmann integral formulation where the dimer model maps to a free fermion theory, with exact solvability in polynomial time for fixed genus analogous to Onsager's seminal solution of the 2D Ising model~\cite{Onsager1944}. In this work, we focus exclusively on the planar case, providing a self-contained treatment of Kasteleyn's method and its Grassmann formulation while acknowledging these broader mathematical contexts.

Beyond combinatorics, dimers are paradigmatic physical models. They describe diatomic molecules adsorbed on crystal surfaces \cite{Kasteleyn1961,Fisher1961}, spin-singlet valence-bond states in quantum antiferromagnets \cite{Anderson1973}, RVB states relevant to high-$T_c$ superconductors \cite{Anderson1987}, and frustrated magnetic systems or spin liquids \cite{Moessner2001,Misguich2002}. The Pfaffian structure highlights their fermionic nature, naturally connecting to quadratic fermionic actions and motivating Berezin integrals over Grassmann variables representations for dimer configurations \cite{Samuel1980,ZinnJustin2002,Kasteleyn1963}.

The \textit{monomer--dimer model} arises as a natural generalization of the pure dimer model, allowing for configurations that include unpaired vertices (\textit{monomers}). A \textit{monomer--dimer covering} of a graph \(G\) is a collection of monomers and dimers such that each vertex is occupied by exactly one of these objects, either a monomer or a dimer. This formulation was first introduced in the early 1960s by Fisher and Stephenson~\cite{FisherStephenson1963} to study defect correlations and the effects of vacancies in close-packed dimer systems.

In the pure dimer case, Temperley and Fisher~\cite{TemperleyFisher1961} and Kasteleyn~\cite{Kasteleyn1963} showed that the partition function on planar graphs can be expressed as a Pfaffian, enabling efficient exact computation. However, including monomers dramatically increases the combinatorial complexity, making the enumeration of monomer--dimer coverings \#P-complete~\cite{Jerrum1987}. Consequently, a general Pfaffian representation for the monomer--dimer problem is impossible. 

Nevertheless, several important exceptions arise when monomers are constrained to specific locations. For instance, \cite{TW03} and~\cite{Wu2006} derived a Pfaffian formula for the partition function of a system with a single monomer located on the boundary of a finite square lattice, using the Temperley bijection~\cite{Te74}. This result was extended by ~\cite{Wu_2011} to cylinders of odd width, which are non-bipartite lattices. \cite{Priezzhev2008} further considered the half-plane square lattice with monomers fixed on the boundary, deriving a Pfaffian formula to compute multipoint boundary monomer correlations. \cite{AllegraFortin2014} showed that when monomers are fixed at arbitrary positions in a square lattice, the partition function can be expressed as a product of two Pfaffians. Building on these ideas, \cite{GiulianiJauslinLieb2016} demonstrated that a Pfaffian formula is achievable for arbitrary planar graphs whenever all monomers are constrained to the boundary.

Berezin integral over Grassmann variables formulations have also proven valuable, providing a compact framework to study monomer--monomer correlations, which decay exponentially at finite densities~\cite{Krauth2003,Papanikolaou2007,AllegraFortin2014}. These approaches unify combinatorial insights with field-theoretic methods, allowing analytical and numerical exploration of otherwise intractable monomer--dimer systems.

From a combinatorial perspective, for graphs $G$ that may include loops, one can define a function encoding the full monomer–dimer partition function through the Laplacian of $G$, which we denote as the \textit{Monobisyzexant}\footnote{The term \textit{Monobisyzexant} derives from the Greek \textit{mono} (single), \textit{bi} (pair), and \textit{syzexis} (union or pairing).} (Mbsz). This construction generalizes the standard monomer–dimer model by extending the Hafnian to symmetric matrices with diagonal entries that account for loops. Unlike the Hafnian, the Mbsz naturally incorporates these loop contributions (single pair or monomer) and is well-defined in arbitrary dimensions. When the diagonal entries vanish, it reduces to the usual Hafnian, while in even dimensions it coincides with the \textit{loop-Hafnian}~\cite{loophafnian2019faster}, which is itself defined in terms of the graph Laplacian. A precise definition and detailed construction are provided in Section~\ref{Sec:2}.

Since the Hafnian admits a fermionic representation via Berezin integrals~\cite{Allegra2015tesis,AllegraFortin2014,Papanikolaou2007}, the Mbsz naturally inherits this structure, providing a unified Grassmann-Berezin framework that accommodates both paired and unpaired sites~\cite{Papanikolaou2007,AllegraFortin2014}. Early Grassmann formulations of the dimer and monomer–dimer models were introduced by Samuel~\cite{Samuel1980}, and subsequently developed in~\cite{hayn1997,Barbaro1997,Plechko1985}. In this approach, each vertex is associated with a pair of Grassmann variables to enforce single occupancy, resulting in a fermionic integral over a quartic action that encodes the standard dimer model. By including an additional quadratic term, one obtains the Grassmann–Berezin representation of the monomer–dimer partition function, which is amenable to diagrammatic expansions and analytical manipulations.

For odd $L$, no perfect matching exists because the Pfaffian of an odd-dimensional skew-symmetric matrix vanishes, leading necessarily to at least one monomer~\cite{Tutte1947,Lovasz1972,Gallai1963}. Specific graph families—including star graphs, unbalanced bipartite graphs, or graphs containing many odd components—can feature multiple monomers, with a number that scales linearly with system size~\cite{HeilmannLieb1972}. To handle both even- and odd-dimensional cases in a unified way, we define a generalized partition function (Theorem~\ref{exponential with linear fermionic and bosonic sources}) that does not require the Kasteleyn matrix be non singular. This construction introduces a fermionic term coupled to bosonic and fermionic source fields,
\begin{equation*}
    \overset{\text{Pure dimer (even $L$)}}{
        \int \mathbf{D}\boldsymbol{\chi}\,
        e^{\tfrac{1}{2}\boldsymbol{\chi}^\top \mathbf{K} \boldsymbol{\chi}}
    }
    \;\;\;\longrightarrow\;\;\;
    \overset{\text{Dimer with monomers (even and odd $L$)}}{
        \int \mathbf{D}\boldsymbol{\chi}\,
        e^{\tfrac{1}{2}\boldsymbol{\chi}^\top \mathbf{K} \boldsymbol{\chi}
        + \mathbf{u}^\top \boldsymbol{\chi}
        + \boldsymbol{\psi}^\top \boldsymbol{\chi}}
    }.
\end{equation*}
In this framework, the bosonic source $\mathbf{u}$ and fermionic source $\boldsymbol{\psi}$ play complementary roles in encoding monomer configurations. The bosonic source $\mathbf{u}$ introduces single monomer insertions, while the fermionic source $\boldsymbol{\psi}$ systematically encodes multiple monomer configurations. The generalized Berezin integral over Grassmann variables (Theorem~\ref{exponential with linear fermionic and bosonic sources}) thus provides a unified description of the dimer partition function with fixed monomers and monomer correlations.

Setting the fermionic source to zero yields two distinct physical regimes: for even vertex count, the partition function reduces to the pure dimer model; for odd vertex count, it describes systems with single monomers, enumerating almost-perfect matchings where the bosonic source components $u_i$ assign weights to individual monomers. Conversely, setting the bosonic source to zero generates partition functions with fixed multiple monomers, which can be integrated over the fermionic source to recover the dimer model with fixed monomer statistics, following the approach established in~\cite{AllegraFortin2014}.

We present two fundamental theorems that formalize this framework: Theorem~\ref{thm:main} establishes the Kasteleyn correspondence for planar dimers, while Theorem~\ref{thm:main2} generalizes this result to monomer-dimer systems. The latter theorem demonstrates the equivalence between the exponential form with mixed sources in Theorem~\ref{exponential with linear fermionic and bosonic sources} and the combinatorial dimer and monomer-dimer models. 

These results are established through combinatorial and algebraic methods employing Levi-Civita symbols and Grassmann variables, providing relationships between Hafnians and Pfaffians—and their submatrix generalizations—in planar graphs. For the pure dimer model, we demonstrate the transformation from Hafnians to Pfaffians, converting computationally intractable enumeration into tractable evaluation. For monomer-dimer systems, we systematically derive the decomposition of the Monobisyzexant function into sums of Pfaffians and Pfaffian submatrix functions (Pfaffianhos), revealing that while the inherent computational complexity persists even for planar graphs, our algebraic framework enables exact combinatorial enumeration through structured decomposition.

This framework enables the computation of both even- and odd-point correlation functions in arbitrary dimensions, extending beyond conventional Grassmann–Berezin treatments of dimer models.

A related approach was introduced in~\cite{SeifiMirjafarlou2024}, where linear bosonic sources are incorporated using complex fermions $c$ and $c^\dagger$. Like our construction, their results do not depend on the matrix dimension; however, they assume that the matrix is invertible and include only bosonic sources, whereas our framework employs abstract real fermions (Grassmann variables) with fermionic sources and does not require invertible matrices. Following~\cite{negele1988quantum}, such generalized Berezin integrals are also relevant in quantum many-body systems, where linear source terms naturally encode particle insertions and correlation functions.

The core mathematical results of this paper are two master Berezin integrals over Grassmann variables identities:
\begin{itemize}
    \item \textbf{Theorem \ref{exponential with linear fermionic and bosonic sources}:} A Berezin integral over Grassmann variables for real fermions with mixed bosonic--fermionic sources, valid for any antisymmetric matrix $\mathbf{A}$ regardless of dimension or invertibility.
    \item \textbf{Theorem \ref{theorem 2.2}:} The complex analogue, yielding Determinant-based representations.
\end{itemize}

We derive unified formulations of dimer, dimer with fixed monomers, and dimer--fugacity models (Section~\ref{Sec:2}), establishing a systematic framework for organizing Pfaffian hierarchies associated with different monomer sectors.

Having generalized the dimer model to arbitrary source insertions (bosonic and fermionic), it is natural to turn to other connectivity structures where similar Grassmann--Berezin techniques apply—most notably, spanning trees and forests. 

Spanning trees and forests are among the most elementary yet fundamental combinatorial structures in graph theory~\cite{Kirchhoff1847, Cayley1889, Tutte1954, ChaikenKleitman1978}. In a connected graph $g$, a \textit{spanning tree} is a subgraph that connects all vertices without forming loops. A natural generalization is provided by \textit{rooted spanning forests}, obtained by selecting a subset of vertices (the roots) and constructing trees rooted at each of them; hence, the number of roots coincides with the number of trees in the forest. These objects admit a remarkable algebraic interpretation: the number of spanning trees of $g$ is given by the determinant of any principal minor of its Laplacian matrix (obtained by deleting one row and one column)~\cite{Kirchhoff1847}. Likewise, the number of rooted spanning forests is obtained by deleting the rows and columns corresponding to the chosen roots~\cite{Tutte1958, Tutte1954, ChaikenKleitman1978}.  

The study of these structures originates in Kirchhoff’s pioneering work on electrical networks~\cite{Kirchhoff1847}, which led to the celebrated \textit{Matrix–Tree Theorem}. Since then, spanning trees and forests have played central roles across mathematics and physics, appearing in contexts such as effective resistances in electrical networks~\cite{Kirchhoff1847}, percolation theory~\cite{Grimmett1999}, the Abelian sandpile model and self-organized criticality~\cite{MajumdarDhar1992}, and renormalization theory through the rooted forests of the Connes–Kreimer Hopf algebra~\cite{ConnesKreimer1998}.

Here we introduce a \textit{source-ordered} Grassmann–Berezin representation of spanning trees (Corollary~\ref{Corollary2.6}), which extends the usual quadratic-term Berezin-integral formulations~\cite{berezin1965, Samuel1980} and organizes root insertions via an explicit ordering of source fields. The construction couples Grassmann variables to bosonic source (complex vectors) \(\mathbf{u},\bar{\mathbf{u}}\) and uses an ordered product of exponentials of linear terms that cannot, in general, be combined into a single exponential because of Grassmann anticommutation:
\begin{equation*}
\overset{\text{Spanning tree}}{\int \mathbf{D}(\boldsymbol{\chi},\bar{\boldsymbol{\chi}})\,
    \bar{\chi}_{a}\chi_{a}\,
    e^{\bar{\boldsymbol{\chi}}^\top\mathbf{L}\boldsymbol{\chi}}}
    \quad\longrightarrow\quad\overset{\text{Spanning tree}}{\left(\sum_{i=1}^L\sum_{j=1}^L\bar u_j u_i\right)^{-1}
    \int\mathbf{D}(\boldsymbol{\chi},\bar{\boldsymbol{\chi}})\,
    e^{\bar{\boldsymbol{\chi}}^\top\mathbf{L}\boldsymbol{\chi}}\,
    e^{\bar{\boldsymbol{\chi}}^\top\mathbf{u}}\,
    e^{\bar{\mathbf{u}}^\top\boldsymbol{\chi}}},
\end{equation*}

where \(\mathbf{L}\) denotes the graph Laplacian (a formal definition is given in Appendix~\ref{Laplacianmatrix}).

The bosonic sources \((\mathbf{u},\bar{\mathbf{u}})\) implement root insertions while preserving the full operator structure of \(\mathbf{L}\); in practice these sources regularize the Laplacian's zero mode without resorting to explicit cofactors or minors, providing a compact device for analytic continuation and perturbative expansions. Importantly, the spanning-tree measure extracted from the source-augmented Berezin integral is independent of the particular choice of bosonic source vectors, so the insertion plays only an auxiliary (gauge-like) role in the construction.

Finally, for the case of singular Kasteleyn matrices, we present a practical method for handling zero modes in Berezin integrals (Theorem~\ref{HuaTheorem}), based on the unitary block decomposition of an antisymmetric matrix $\mathbf{K}$. This decomposition separates the zero and non-zero sectors, preserves the Pfaffian in the non-zero block, and explicitly identifies the additional Grassmann monomials associated with the kernel, thereby allowing the direct evaluation of Berezin integrals without the need to invoke formal inverses. As a result, Berezin integrals can be evaluated directly, without resorting to formal inverses—a generalization of \cite{AllegraFortin2014}, which assumes non-singular matrices. Similarly, an analogous technique using spectral decomposition enables the enumeration of spanning forests in disconnected graphs: by separating the connected components in the block Laplacian matrix, one can count the trees in each component independently.

Our approach provides a systematic treatment of almost-perfect matchings and spanning trees, delivering explicit formulas for partition functions and correlation functions with mixed sources. The formalism bridges algebraic combinatorics and functional integration, offering a flexible toolkit for both theoretical analysis and computational implementations in graph-based and field-theoretic models.

This work is structured to systematically transition from established combinatorial models to our generalized mathematical framework. Section~\ref{Sec:2} reviews the combinatorial formulation of the dimer model, establishing the foundation for its Grassmann-Berezin integral representations. We begin with the general dimer model on arbitrary graphs before specializing to the planar case, where Pfaffian techniques enable significant computational simplifications. Section~\ref{Sec:3} extends this analysis to the monomer-dimer model for general graphs, demonstrating that even with Pfaffian methods the planar monomer-dimer case remains computationally challenging. We then introduce two tractable special cases of dimer models with fixed monomers: one employing fermionic sources and the other utilizing bosonic sources. Section~\ref{Sec:4} examines spanning trees and forests, developing an alternative Berezin integral representation with complex bosonic sources. These applications culminate in the mathematical framework presented in Section~\ref{Sec:5}, which provides rigorous definitions of the Hafnian and Monobisyzexant functions alongside their Grassmann-Berezin representations. The core analytical contributions appear in Section~\ref{Sec:6}, where we establish Theorems~\ref{exponential with linear fermionic and bosonic sources} and~\ref{theorem 2.2}—comprehensive Berezin integral identities for real and complex fermions with mixed bosonic-fermionic sources.

\section*{Notation}

We define the index set \([N] = \{1,\dots,N\}\). 
Subsets of \([N]\) with cardinality \(r\) are denoted by 
\(I_{[r]}, J_{[r]} \subseteq [N]\). 
Specifically,
\[
I_{[r]} = \{i_1,\dots,i_r\}, 
\qquad 
J_{[r]} = \{j_1,\dots,j_r\},
\]
where the indices are listed in increasing order. 
The complement of \(I_{[r]}\) in \([N]\) is denoted \(I^c_{[r]}\), 
and analogously \(J^c_{[r]}\). 
We also use the notation 
\[
I_{[0,r]} = \{i_0,i_1,\dots,i_r\}.
\]

Let \(\mathbf{A}\) be an \(N \times M\) matrix. 
For index sets \(I_{[r]} \subseteq [N]\) and \(J_{[s]} \subseteq [M]\), 
we denote by \(\mathbf{A}_{\left[I_{[r]},J_{[s]}\right]}\) the submatrix obtained by restricting 
\(\mathbf{A}\) to the rows indexed by \(I_{[r]}\) and the columns indexed by \(J_{[s]}\), 
preserving their original order. 
If all rows are retained, we write 
\(\mathbf{A}_{\left[\star,J_{[s]}\right]} = \mathbf{A}_{\left[[N],J_{[s]}\right]}\); 
if all columns are retained, we write 
\(\mathbf{A}_{\left[I_{[r]},\star\right]} = \mathbf{A}_{\left[I_{[r]},[M]\right]}\). 
For the special case \(I_{[r]}=J_{[r]}\), we use the shorthand 
\(\mathbf{A}_{\left[I_{[r]},I_{[r]}\right]} \equiv \mathbf{A}_{\left[I_{[r]}\right]}\).  

For multiple summations, we adopt the convention
\begin{equation*}
    \begin{split}
    \sum_{i_1,\dots,i_{m}=1}^{L}&\equiv\; \sum_{i_1=1}^{L}\cdots\sum_{i_{m}=1}^{L},\\
    \sum_{i_1<\dots<i_{m}=1}^{L}&\equiv\sum_{i_1=1}^{L-m+1}\sum_{i_2=i_1+1}^{L-m+2}\cdots\sum_{i_{m-1}=i_{m-2}+1}^{L-1}\sum_{i_{m}=i_{m-1}+1}^{L}.
    \end{split}
\end{equation*}

Finally, we define the sign functions  
\[
\epsilon(I_{[r]})=(-1)^{\frac{r(r-1)}{2}} \, (-1)^{\sum\limits_{i \in I_{[r]}} i},
\]
which corresponds to the signature of the permutation mapping the sequence \(1\dots N\) into \(I_{[r]}I^c_{[r]}\), with both subsequences ordered increasingly. Analogously,
\[
\epsilon\left(I_{[r]},J_{[r]}\right)=\epsilon\left(I_{[r]}\right)\epsilon\left(J_{[r]}\right) = (-1)^{\sum\limits_{i \in I_{[r]}} i+\sum\limits_{j\in J_{[r]}} j},
\]
representing the sign of the permutation that maps \(I_{[r]}I^c_{[r]}\) into \(J_{[r]}J^c_{[r]}\).

\section{The dimer model, Planar dimers, and Pfaffian Techniques I}\label{Sec:2}

This section develops a systematic analytical framework for dimer models through a synthesis of combinatorial and algebraic methodologies. We commence with the classical dimer model on arbitrary graphs, formulating the partition function in terms of the Hafnian of the adjacency matrix while introducing correlation functions through the Hafnianinho formalism. For planar graph configurations, we demonstrate how Kasteleyn's seminal method facilitates an equivalent representation via Pfaffians of oriented adjacency matrices, with correlation functions naturally expressed through corresponding Pfaffianinho constructions. The fundamental equivalence between Hafnian and Pfaffian formulations is rigorously established through Berezin integration over Grassmann variables. While the core combinatorial results presented in this section are well-established in the literature, we provide a comprehensive treatment with particular emphasis on their translation into the Grassmann-Berezin formalism, ensuring both completeness and pedagogical clarity for subsequent developments.

\subsection{The Hafnian and the dimer model}
\begin{definition}[Graph]
A (finite) graph is $G=(V,E)$, optionally with a \textit{weight} $w:E\to\R_{\ge 0}$. Unless stated, graphs are simple (no loops; no parallel edges). Write $n=|V|$ and $m=|E|$.
\end{definition}

\begin{definition}[Weighted adjacency matrix]
The adjacency matrix associated with the graph $G$ is $\mathbf{A}$, where each entry corresponds to the ordered pair of vertices $(v_i, v_j)$ and is given by
\[
A_{ij}=\begin{cases}
w_{ij}, & \text{if } v_i \text{ and } v_j \text{ are connected by an edge},\\[4pt]
0, & \text{otherwise}.
\end{cases}
\]
(When $w\equiv 1$ it is just the adjacency matrix.)
\end{definition}

\begin{definition}[Perfect Matching]
A perfect matching of \( G \) is a set \( (PM) \subseteq E \) of edges such that:
\begin{enumerate}
    \item Every vertex \( v \in V \) is incident to exactly one edge in \( (PM) \), i.e.,
    \[
    \forall v \in V, \; \exists! \, e \in (PM) \text{ such that } v \in e.
    \]
    \item No two edges in \( (PM) \) share a common vertex (edges are pairwise non-adjacent).
\end{enumerate}
Equivalently, \( (PM) \) is a perfect matching if it covers all vertices of \( G \) and forms a set of disjoint edges whose union is \( V \).
\end{definition}

\begin{definition}[Dimer (or Perfect Matching) Partition Function]
The partition function of the dimer model on a graph $G = (V,E)$ with edge weights $\{w_e\}_{e \in E}$ is defined as:
\[
Z_D(G) = \sum_{M \in \mathcal{M}(G)} \prod_{e \in M} w_e,
\]
where $\mathcal{M}(G)$ denotes the set of all perfect matchings of $G$. For unweighted graphs ($w_e = 1$ for all $e \in E$), this reduces to the number of perfect matchings:
\[
Z_D(G) = |\mathcal{M}(G)|.
\]
\end{definition}

\begin{definition}[Dimer (or Edge) Correlation Function]
For a set of distinct edges $\{e_1, e_2, \dots, e_p\}$ with $e_r = (i_r, j_r)$, the $p$-point correlation function is defined as:
\begin{equation}
\langle e_1 e_2 \cdots e_p\rangle_D:= \frac{Z_D(G \setminus \{e_1, \dots, e_p\})}{Z_D(G)}
\end{equation}
where $Z(G \setminus \{e_1, \dots, e_p\})$ denotes the partition function on the graph with edges $e_1, \dots, e_p$ removed.
\end{definition}

\begin{definition}[Monomer (or vertices) Correlation Function]
For a set of vertices $\{v_1, v_2, \dots, v_{2l}\}$, the monomer correlation function gives the probability that these vertices remain unoccupied (i.e., are not covered by any dimer):
\begin{equation}
\langle v_1 v_2 \cdots v_{2l} \rangle_D:= \frac{Z_D(G \setminus \{v_1, \dots, v_{2l}\})}{Z_D(G)}
\end{equation}
where $Z(G \setminus \{v_1, \dots, v_{2l}\})$ is the partition function on the graph with vertices $v_1, \dots, v_{2l}$ removed.
\end{definition}

\begin{definition}[Mixed Monomer--dimer Correlation Function]
For a set of vertices $\{v_1, v_2, \dots, v_{2l}\}$ and a set of distinct edges $\{e_1, e_2, \dots, e_p\}$ with $e_r= (i_r, j_r)$, the mixed correlation function gives the joint probability that all specified vertices are unoccupied (monomers) and all specified edges are occupied by dimers:
\begin{equation}
\langle v_1\cdots v_{2l};e_1\cdots e_p\rangle_D:=\frac{Z_D(G \setminus \{v_1,\dots,v_{2l},e_1,\dots,e_p\})}{Z_D(G)}
\end{equation}
where $Z_D(G \setminus \{v_1,\dots,v_{2l},e_1,\dots,e_p\})$ denotes the partition function on the graph with both the specified vertices and edges removed. The removal of edges $e_1,\dots,e_p$ and vertices $v_1,\dots,v_{2l}$ is performed simultaneously, and the resulting graph may have isolated vertices or disconnected components.
\end{definition}

\begin{theorem}[Berezin integral over Grassmann variables representation of the dimer model~\cite{Samuel1980, Samuel1980b}]

The partition function of the dimer model counts the total number of such coverings, that is, the number of perfect matchings of the graph. This partition function is given by
\begin{equation}
    Z_D(\mathbf{A})=\int\mathbf{D}(\boldsymbol{\chi},\bar{\boldsymbol{\chi}})
    e^{\frac{1}{2} (\Bar{\boldsymbol{\chi}}\boldsymbol{\chi})^\top\mathbf{A} (\Bar{\boldsymbol{\chi}}\boldsymbol{\chi})}=\haf(\mathbf{A}),
\end{equation}
where we define $(\Bar{\boldsymbol{\chi}}\boldsymbol{\chi})^\top:=\begin{pmatrix}
        \bar{\chi}_1\chi_1&\cdots&\bar{\chi}_{2L}\chi_{2L}
    \end{pmatrix}$, $\mathbf{D}(\boldsymbol{\chi},\bar{\boldsymbol{\chi}}):=d\chi_1d\bar{\chi}_1\cdots d\chi_{2L}d\bar{\chi}_{2L}$, and $\haf(\mathbf{A})$ is the Hafnian (Definition~\ref{Def:Hafnian}) of the adjacency matrix $\mathbf{A}$.

The mixed $2r$-correlation function is
\begin{equation}
    \langle v_1 \cdots v_{2l}; e_1 \cdots e_p \rangle_D=\left\langle\left(\prod_{\alpha=1}^{2l}\bar{\chi}_{i_\alpha}\chi_{i_\alpha}\right) \left(\prod_{\beta=1}^{p}\bar{\chi}_{j_\beta}\chi_{j_\beta}\bar{\chi}_{k_\beta}\chi_{k_\beta}\right)\right\rangle_{\mathrm{D}},\quad(2l+p=2r\leq2L).
\end{equation}
When $l=0$, we have the dimer correlation function, and if $p=0$, we have the monomer correlation. The left side is defined as
\begin{equation}
\begin{split}
&\left\langle\left(\prod_{\alpha=1}^{2l}\bar{\chi}_{i_\alpha}\chi_{i_\alpha}\right)\left(\prod_{\beta=1}^{p}\bar{\chi}_{j_\beta}\chi_{j_\beta}\bar{\chi}_{k_\beta}\chi_{k_\beta}\right)\right\rangle_{\mathrm{D}}\\
&\qquad\qquad=\left\langle \prod_{\alpha=1}^{2r}\bar{\chi}_{i_\alpha}\chi_{i_\alpha} \right\rangle_{\mathrm{D}}
   := \frac{1}{Z_{D}(\mathbf{A})}
   \int \mathbf{D}(\boldsymbol{\chi},\bar{\boldsymbol{\chi}})
   \left(\prod_{\alpha=1}^{2r}\bar{\chi}_{i_\alpha}\chi_{i_\alpha}\right)
   e^{\frac{1}{2}(\bar{\boldsymbol{\chi}}\boldsymbol{\chi})^\top \mathbf{A}(\bar{\boldsymbol{\chi}}\boldsymbol{\chi})}
   =\frac{\haf\left(\mathbf{A}_{\left[I^c_{[2r]}\right]}\right)}{\haf(\mathbf{A})},
\end{split}
\end{equation}
where $I_{[2r]}= \{i_1, \dots, i_{2r}\} \subseteq [2L]$, and $I^c_{[2r]}$ denotes the complement of $I_{[2r]}$. Here, $\haf\!\left(\mathbf{A}_{\left[I^c_{[2r]}\right]} \right)$ is the Hafnian of a submatrix of $\mathbf{A}$, referred to as the \textit{hafnianinho} (Definition~\ref{def:hafnianinho}). 

The proof follows directly from expanding the exponential and applying Berezin integration rules, which systematically extract the perfect matching contributions through the Hafnian structure. The correlation functions emerge naturally as normalized expectation values of Grassmann operator products, reducing to ratios of Hafnians through the fundamental properties of Grassmann Gaussian integrals. Complete fermionic Hafnian identities are further developed in Corollary~\ref{Coro:Fermionichafnian}.
\end{theorem}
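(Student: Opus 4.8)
The plan is to reduce the Berezin integral to a purely combinatorial sum by expanding the exponential and exploiting the nilpotency and parity of the on-site bilinears $b_i := \bar{\chi}_i\chi_i$. First I would record the two structural facts that drive everything: each $b_i$ is a Grassmann-\emph{even} (bosonic) element, so the $b_i$ mutually commute and no reordering sign can ever arise; and $b_i^2 = \bar{\chi}_i\chi_i\bar{\chi}_i\chi_i = 0$ by nilpotency of $\chi_i$. Writing out the quadratic form, $\frac12(\bar{\boldsymbol{\chi}}\boldsymbol{\chi})^\top\mathbf{A}(\bar{\boldsymbol{\chi}}\boldsymbol{\chi})=\frac12\sum_{i,j}A_{ij}b_ib_j$, the diagonal contributions vanish because $b_i^2=0$, and the symmetry $A_{ij}=A_{ji}$ together with commutativity collapses the off-diagonal part to $X:=\sum_{i<j}A_{ij}b_ib_j$.

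Second, I would invoke the Berezin saturation rule: $\int\mathbf{D}(\boldsymbol{\chi},\bar{\boldsymbol{\chi}})(\cdots)$ is nonzero only on the top monomial $\prod_{s=1}^{2L}\bar{\chi}_s\chi_s$, and with the ordering fixed by $\mathbf{D}$ one checks the single-site identity $\int d\chi_i\,d\bar{\chi}_i\,\bar{\chi}_i\chi_i=+1$, so that the fully saturated monomial integrates to $+1$ (again with no sign, since the $b_i$ commute and may be freely sorted into the measure order). Expanding $e^{X}=\sum_n X^n/n!$, only the term with $n=L$ can saturate all $2L$ sites: every lower power leaves some pair $(\chi_i,\bar{\chi}_i)$ un-integrated and hence vanishes, while repeated factors vanish by $b_i^2=0$. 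Thus $Z_D(\mathbf{A})=\int\mathbf{D}(\boldsymbol{\chi},\bar{\boldsymbol{\chi}})\,X^L/L!$.

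Third, I would expand $X^L/L!$ multinomially. A surviving term is a product of $L$ factors $A_{i_kj_k}b_{i_k}b_{j_k}$ whose site-pairs $\{i_k,j_k\}$ are pairwise disjoint, since any repeated site forces a $b^2=0$; these are precisely the perfect matchings $M=\{\{i_1,j_1\},\dots,\{i_L,j_L\}\}$. Each matching arises from the $L!$ orderings of its edges, cancelling the $1/L!$, and contributes $\prod_{\{i,j\}\in M}A_{ij}$ times the saturated monomial, which integrates to $+1$. Summing over all perfect matchings yields $\sum_M\prod_{\{i,j\}\in M}A_{ij}=\haf(\mathbf{A})$ by Definition~\ref{Def:Hafnian}. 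For the correlation function the argument is identical after pre-multiplying by $\prod_{\alpha=1}^{2r}\bar{\chi}_{i_\alpha}\chi_{i_\alpha}=\prod_{\alpha}b_{i_\alpha}$: these factors saturate the sites in $I_{[2r]}$, so only the $(L-r)$-th power of $X$ survives, and only monomials supported on $I^c_{[2r]}$, i.e. perfect matchings of the remaining vertices, contribute. This produces $\haf(\mathbf{A}_{[I^c_{[2r]}]})$, and dividing by $Z_D(\mathbf{A})$ gives the claimed ratio.

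The only genuine obstacle is the sign and normalization bookkeeping: one must verify once and for all that the commuting (even) nature of the $b_i$ means that neither the multinomial rearrangement, nor the sorting of the $L$ matched pairs into the measure order, nor the insertion of the correlation factors introduces any sign. This is exactly what distinguishes the Hafnian (all $+$ signs) from the Pfaffian, and it is the reason the doubled variables $\bar{\chi}_i\chi_i$ are used rather than a single real Grassmann field. I would make this precise through a short induction establishing the single-site integral and the commutativity of even elements, deferring the more general fermionic Hafnian identities to Corollary~\ref{Coro:Fermionichafnian}.
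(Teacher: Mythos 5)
Your proposal is correct and follows essentially the same route the paper indicates: expanding the exponential of the even, nilpotent bilinears $\bar{\chi}_i\chi_i$, observing that only the fully saturating term $X^L/L!$ survives Berezin integration, and identifying the surviving monomials with perfect matchings so that the sum reproduces $\haf(\mathbf{A})$ (and $\haf(\mathbf{A}_{[I^c_{[2r]}]})$ for the correlators). Your explicit attention to the commutativity of the even elements and the single-site normalization is exactly the bookkeeping the paper leaves implicit, and it matches the fermionic Hafnian identities developed in Corollary~\ref{Coro:Fermionichafnian}.
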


As established, computing the Hafnian is \#P-complete~\cite{valiant_computer_science, Barvinok2016a}, making exact enumeration of perfect matchings computationally intractable for arbitrary graphs. Nevertheless, significant advances in approximation methods have emerged through algebraic transformations of the underlying combinatorial structure. In particular, the Berezin integral over Grassmann variables representation of the dimer model, when combined with bosonization techniques~\cite{HSTrans1,HSTrans}, transforms the combinatorial enumeration into an ordinary integral amenable to numerical approximation. Recent work demonstrates that fermionic dual transformations applied to the original Berezin integral over Grassmann variables, followed by bosonization, yield substantially improved approximations. This approach has proven successful in computing sums of powers of principal minors (SPPM)~\cite{najafi2024,Sourlas2019} and extends beyond dimer systems to any Berezin integral with quartic interactions in the exponential. Complementary to these methods, the Bethe approximation provides another powerful technique for evaluating Berezin integrals over Grassmann variables with quartic terms, as demonstrated in the context of SPPM~\cite{Ramezanpour2024}. These algebraic approaches collectively offer practical computational pathways for problems that are formally intractable through exact methods.

\subsection{From Hafnian to Kasteleyn Pfaffian and the Planar dimer model}
\begin{definition}[Planar drawing and planar graph]
A \textit{planar drawing} of $G$ is a mapping of vertices to distinct points in the plane and edges to simple curves connecting their endpoints such that curves only intersect at common endpoints. A graph is \textit{planar} if it admits a planar drawing.
\end{definition}

\begin{definition}[Skew adjacency / Kasteleyn matrix for an orientation]\label{def:skew}
Fix an orientation $\vec G$ of $G$. Its \textit{skew adjacency matrix} $K(\vec G,w)\in\mathbb R^{n\times n}$ is
\[
K_{ij} \;=\;
\begin{cases}
+w_{ij} & \text{if } i\!\to\! j \text{ in }\vec G,\\
-w_{ij} & \text{if } j\!\to\! i \text{ in }\vec G,\\
0 & \text{if } \{i,j\}\notin E,
\end{cases}
\qquad \mathbf{K}=-\mathbf{K}^\top .
\]
(When $w\equiv 1$ it is just the signed skew-adjacency.)
\end{definition}

\begin{definition}[Kasteleyn orientation]\label{Def:KasteleynOrientation}
An orientation of the edges is \textit{Kasteleyn} if for every (bounded) face, the number of edges oriented clockwise along the facial boundary is odd.
\end{definition}

As an illustrative example, consider the planar graph shown in Figure~\ref{Fig:Korientation}. The corresponding Kasteleyn matrix is given by
\begin{equation*}
    \mathbf{K} = 
    \begin{pmatrix}
        0 & -1 &  1 &  0 &  0 &  1 \\
        1 &  0 &  1 &  0 &  0 &  0 \\
       -1 & -1 &  0 &  1 &  0 &  0 \\
        0 &  0 & -1 &  0 &  1 &  1 \\
        0 &  0 &  0 & -1 &  0 & -1 \\
       -1 &  0 &  0 & -1 &  1 &  0
    \end{pmatrix},
\end{equation*}
where the matrix entries reflect the Kasteleyn orientation: $K_{12} = -1$ indicates the edge is oriented from vertex 2 to 1, $K_{34} = 1$ corresponds to orientation from vertex 3 to 4, and $K_{26} = 0$ confirms the absence of an edge between vertices 2 and 6.

\begin{figure}[]
    \centering
    \includegraphics[width=0.8\linewidth]{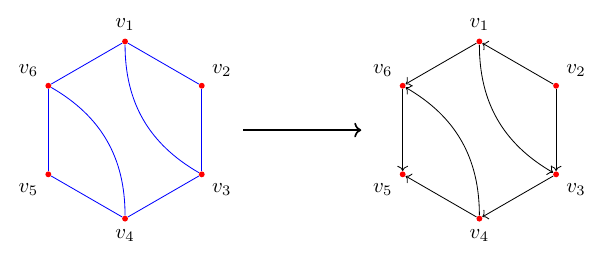}
    \caption{(Left) Original planar graph. (Right) Kasteleyn-oriented version with edge directions assigned to satisfy the odd clockwise orientation condition for each face.}
    \label{Fig:Korientation}
\end{figure} 

Using a fixed planar embedding and Kasteleyn orientation, we define the following:
\begin{definition}
The antisymmetric \textit{sign matrix} $\mathbf{S}=(S_{ij})$ by
\[
S_{ij}=
\begin{cases}
+1 &\text{if } i\to j,\\
-1 &\text{if } j\to i,\\
0 &\text{if } \{i,j\}\notin E,
\end{cases}
\quad\text{with } S_{ij}=-S_{ji}.
\]    
\end{definition}
% A crucial phenomenon arises in the context of graph orientations: the Pfaffian of an arbitrary orientation matrix $\mathbf{S}$ may either equal or provide a strict lower bound on the number of perfect matchings. The equality case—where $|\pf(\mathbf{S})|$ exactly enumerates all perfect matchings—is precisely achieved through a \textit{Kasteleyn orientation} (Definition~\ref{Def:KasteleynOrientation}).

We now provide a detailed explanation of the mathematical structures underlying the Hafnian and Pfaffian, highlighting both their similarities and fundamental differences.

The Hafnian, defined for $2L \times 2L$ symmetric matrices as (Definition~\ref{Def:Hafnian}):
\begin{equation*}
    \haf(\mathbf{A}):=\frac{1}{2^L L!}\sum_{i_1,\dots,i_{2L}=1}^{2L}\epsilon^{i_1\dots i_{2L}}\epsilon^{i_1\dots i_{2L}}A_{i_1i_2}\cdots A_{i_{2L-1}i_{2L}},
\end{equation*}
The product $A_{i_1i_2}\cdots A_{i_{2L-1}i_{2L}}$ represents the statistical weight of a potential dimer configuration, where each matrix element $A_{ij}$ encodes both the presence and the weight of the edge connecting vertices $i$ and $j$. 

The Levi-Civita symbol $\epsilon^{i_1\dots i_{2L}}$ plays two crucial roles in this construction:
\begin{itemize}
    \item It ensures the indices $(i_1,\dots,i_{2L})$ form a permutation of $\{1,\dots,2L\}$, thereby enforcing the condition that each vertex appears exactly once---the mathematical expression of a \textit{perfect matching}.
    
    \item In the Hafnian formulation, the square of the Levi-Civita symbol yields a constant positive sign for all contributing configurations, distinguishing it from the determinant where permutation signs are preserved.
\end{itemize}

When we introduce an arbitrary orientation to the graph, we assign signs to the edges, transforming the weights from $A_{ij} = w_{ij}$ to $S_{ij}A_{ij} = \pm w_{ij}$. Defining a skew-symmetric matrix through this orientation is always possible, but the critical insight lies in the behavior of the combinatorial sign:
\[
\epsilon^{i_1\dots i_{2L}}S_{i_1 i_2}\cdots S_{i_{2L-1}i_{2L}}.
\]
For a general orientation, this product depends on the specific perfect matching $M$ and can be $\pm 1$. The remarkable property of the \textit{Kasteleyn orientation} is that it ensures this product equals $+1$ for \textit{every} perfect matching $M$.

This fundamental observation enables the transformation from the Hafnian to the Pfaffian. The Pfaffian of a $2L \times 2L$ antisymmetric matrix $\mathbf{K}$ is defined as:
\begin{equation}
    \pf(\mathbf{K}):= \frac{1}{2^L L!}\sum_{i_1,\dots,i_{2L}=1}^{2L}\epsilon^{i_1\dots i_{2L}}K_{i_1i_2}\cdots K_{i_{2L-1}i_{2L}},
\end{equation}
and admits efficient computation via the identity $\pf(\mathbf{K})^2 = \det(\mathbf{K})$.

Kasteleyn's seminal insight~\cite{Kasteleyn1961,Kasteleyn1963} was to demonstrate that for planar graphs, the symmetric adjacency matrix $\mathbf{A}$ can be transformed into an antisymmetric \textit{Kasteleyn matrix} $\mathbf{K}$ through a carefully chosen sign matrix $\mathbf{S}$ encoding the Kasteleyn orientation, thereby establishing the following:
\begin{theorem}[Hafnian-Pfaffian correspondence for Planar Dimers (Partition Function)]\label{thm:main}
Let $G = (V,E)$ be a planar graph with $|V|$ even, equipped with a symmetric weight matrix $\mathbf{A}$ satisfying $A_{ij} \geq 0$ and $A_{ii} = 0$. Let $\mathbf{K}$ be a Kasteleyn matrix for $G$ constructed via $K_{ij}=\epsilon^{ij}A_{ij}$ and following a Kasteleyn orientaion, where $\mathbf{S}=(\epsilon^{ij})$ is a Kasteleyn sign matrix. Then the following are equivalent:

\begin{enumerate}[label=(\roman*)]
\item \textbf{Combinatorial form}~\cite{Kasteleyn1961}: The dimer partition function equals the Pfaffian of the Kasteleyn matrix:
\begin{equation}
Z_D=\haf(\mathbf{A})=\pf(\mathbf{K}).
\end{equation}

\item \textbf{Grassmann-Berezin form}: The quartic and quadratic Berezin integrals over Grassmann variables coincide:
\begin{equation}
Z_D=\int \mathbf{D}(\boldsymbol{\chi},\bar{\boldsymbol{\chi}})
e^{\frac{1}{2}(\bar{\boldsymbol{\chi}}\boldsymbol{\chi})^\top\mathbf{A}(\bar{\boldsymbol{\chi}}\boldsymbol{\chi})}
= \int \mathbf{D}\boldsymbol{\chi} \, e^{\frac{1}{2}\boldsymbol{\chi}^\top\mathbf{K}\boldsymbol{\chi}}.
\end{equation}
\end{enumerate}

The proof is in Appendix~\ref{Proof:TheoHfPf}.
\end{theorem}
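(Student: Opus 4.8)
The plan is to prove the combinatorial identity~(i) directly, and then obtain the Grassmann--Berezin form~(ii) as a consequence of~(i) together with two standard Berezin evaluations. For~(i), I would expand both $\haf(\mathbf{A})$ and $\pf(\mathbf{K})$ from their defining sums and collapse the Levi--Civita symbols so that each reduces to a sum over perfect matchings. Because $|V|=2L$ and $A_{ii}=0$, every surviving term corresponds to a genuine perfect matching $M=\{\{a_1,b_1\},\dots,\{a_L,b_L\}\}$. The Hafnian assigns each $M$ the positive weight $\prod_{r=1}^{L} w_{a_r b_r}$ (the doubled Levi--Civita squares to $+1$), while the Pfaffian assigns the weight $\sigma(M)\prod_{r=1}^{L} w_{a_r b_r}$, where $\sigma(M)=\varepsilon(M)\prod_{r=1}^{L} S_{a_r b_r}$ combines the signature $\varepsilon(M)$ of the permutation ordering the endpoints of $M$ with the product of Kasteleyn orientation signs. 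Thus the two expressions coincide term by term, and hence both equal $Z_D$, precisely when $\sigma(M)=+1$ for every perfect matching $M$; this is the content of the ``remarkable property'' highlighted before the statement, and proving it is the crux.

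To establish $\sigma(M)=+1$ uniformly, I would fix a reference matching $M_0$ and compare $\sigma(M)$ with $\sigma(M_0)$ through the symmetric difference $M\,\triangle\,M_0$, which decomposes into vertex-disjoint cycles of even length alternating between $M$-edges and $M_0$-edges (shared edges double and contribute trivially). A direct bookkeeping of the Pfaffian permutation signs together with the orientation signs $S_e$ shows that the relative sign $\sigma(M)\sigma(M_0)$ factorizes over these cycles, and that a cycle $C$ contributes $+1$ precisely when it is \emph{oddly oriented}, i.e.\ carries an odd number of edges pointing against a fixed traversal (equivalently, an odd number of clockwise edges). I would then invoke the topological consequence of the Kasteleyn condition: summing the single-face parity requirement over the faces enclosed by $C$ and applying Euler's formula shows that the number of clockwise edges of any simple cycle has parity opposite to the number of vertices it strictly encloses. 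Since an alternating cycle separates interior from exterior in the planar embedding, both $M$ and $M_0$ match the enclosed vertices among themselves, so each such cycle encloses an even number of vertices and is therefore oddly oriented. Consequently every cycle contributes $+1$, giving $\sigma(M)=\sigma(M_0)$; evaluating $\sigma$ on one convenient matching fixes the common value at $+1$ and yields $\pf(\mathbf{K})=\haf(\mathbf{A})=Z_D$.

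I expect the enclosed-vertex parity lemma --- the propagation of the single-face Kasteleyn condition to arbitrary nice cycles via Euler's formula --- to be the main obstacle, since this is the step where planarity is genuinely used and where the edge-counting over nested faces must be controlled carefully. By contrast, the accompanying reduction of the relative matching sign to the oddly-oriented condition is a careful but routine permutation computation.

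For~(ii), the first equality is exactly the already-established Berezin representation of the dimer model, namely $\int \mathbf{D}(\boldsymbol{\chi},\bar{\boldsymbol{\chi}})\,e^{\frac{1}{2}(\bar{\boldsymbol{\chi}}\boldsymbol{\chi})^\top\mathbf{A}(\bar{\boldsymbol{\chi}}\boldsymbol{\chi})}=\haf(\mathbf{A})$. The second equality is the source-free specialization of Theorem~\ref{exponential with linear fermionic and bosonic sources}, the standard Gaussian Berezin integral for a single family of real fermions, $\int\mathbf{D}\boldsymbol{\chi}\,e^{\frac{1}{2}\boldsymbol{\chi}^\top\mathbf{K}\boldsymbol{\chi}}=\pf(\mathbf{K})$, which follows by expanding the exponential and applying the top-degree Berezin rule. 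Combining these two evaluations with the identity $\haf(\mathbf{A})=\pf(\mathbf{K})$ from~(i) immediately gives the equality of the two Berezin integrals, completing the proof.
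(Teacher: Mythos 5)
Your proposal is correct and follows essentially the same route as the paper: part (i) rests on the same Kasteleyn parity argument via the symmetric difference $M\,\triangle\,M_0$, its decomposition into alternating cycles, the reduction of the relative sign to the odd-orientation condition, and the face-parity/Euler argument (which you actually spell out more carefully than the paper's appendix, which only sketches the step $c(C)\equiv \ell(C)-1 \pmod 2$). The only cosmetic difference is in part (ii), where you deduce the equality of the Berezin integrals from (i) together with the two standard evaluations $\haf(\mathbf{A})$ and $\pf(\mathbf{K})$, whereas the paper transforms the quartic Grassmann integral directly into the quadratic one using the fermionic form of the same sign identity; both are routine once (i) is established.
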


Within the Berezin integral formalism over Grassmann variables, the transition from fermionic Hafnian to fermionic Pfaffian relies on the Kasteleyn condition $\epsilon^{i_1\dots i_{2L}}S_{i_1 i_2}\cdots S_{i_{2L-1}i_{2L}}=1$, which in fermionic terms becomes:
\[
\int\left(\prod_{i=1}^{2L}d\chi_i\right) \chi_{i_1}\cdots\chi_{i_{2L}} S_{i_1 i_2}\cdots S_{i_{2L-1}i_{2L}} = 1.
\]

To complete our mathematical toolkit for combinatorial functions, we now introduce the Hafnianinho (Definition~\ref{def:hafnianinho}):

\begin{equation*}
    \haf\left(\mathbf{A}_{\left[I^c_{[2r]}\right]}\right):=\frac{1}{2^{L-r}\left(L-r\right)!}\sum_{j_1,\dots,j_{2(L-r)}=1}^{2L}\epsilon^{i_1\cdots i_{2r}j_1\cdots j_{2(L-r)}}\epsilon^{i_1\cdots i_{2r}j_1\cdots j_{2(L-r)}}A_{j_1j_2}\cdots A_{j_{2(L-r)-1}j_{2(L-r)}},
\end{equation*}

where $I=\{i_1,\dots,i_{2r}\}\subseteq[2L]$ with $i_1<\dots<i_{2r}$. This function enumerates perfect matchings within vertex-induced subgraphs, providing a generalized counting mechanism for subsystems with even cardinality.

The Hafnianinho computes the number of perfect matchings in the vertex-induced subgraph defined by the complement set $I^c$, which contains $2(L-r)$ vertices. Formally:

\begin{itemize}
    \item The double Levi-Civita symbol $\epsilon^{i_1\cdots i_{2r}j_1\cdots j_{2(L-r)}}$ ensures that:
    \begin{enumerate}
        \item The indices $\{j_1,\dots,j_{2(L-r)}\}$ form a permutation of $I^c$
        \item All contributing terms enter with positive sign, analogous to the standard Hafnian
    \end{enumerate}
    
    \item The product $A_{j_1j_2}\cdots A_{j_{2(L-r)-1}j_{2(L-r)}}$ represents the weight of a potential perfect matching configuration restricted to the subgraph induced by $I^c$
    
    \item The normalization factor $2^{L-r}(L-r)!$ accounts for the overcounting inherent in summing over all permutations of the $2(L-r)$ vertices
\end{itemize}
The transition from Hafnianinho to Pfaffianinho requires an additional ingredient beyond the Kasteleyn orientation and sign matrix. Consider a graph where every pair of vertices shares an edge, we are excluding pure dimers. Following the same procedure as for the Hafnian, we assign an orientation to the edges and obtain the sign subsystem of the submatrix used for counting dimers or perfect matchings. This yields the expression:
\[
\epsilon^{i_1\cdots i_{2r}j_1\cdots j_{2(L-r)}}S_{j_1 j_2}\cdots S_{j_{2(L-r)-1}j_{2(L-r)}}.
\]
We observe that additional signs are required to achieve the identity:
\[
\epsilon^{i_1\cdots i_{2r}j_1\cdots j_{2(L-r)}}S_{i_1 i_2}\cdots S_{i_{2r-1}i_{2r}}S_{j_1 j_2}\cdots S_{j_{2(L-r)-1}j_{2(L-r)}}=1.
\]
These missing signs are obtained by introducing the product $S_{ij}K_{ij}=1$, demonstrating that the Hafnianinho in this case equals the product of oriented edges $K_{ij}$ with the Pfaffianinho of the Kasteleyn matrix.

We now introduce the \textit{Pfaffian of a submatrix}, or \textit{Pfaffianinho}, which extends the Pfaffian to principal submatrices. For a $2L\times2L$ antisymmetric matrix $\mathbf{K}$ with rows and columns $I=\{i_1,\dots,i_{2r}\}\subseteq[2L]$ removed, where $i_1<\dots<i_{2r}$, the Pfaffianinho is defined as:
\begin{equation}
\pf\left(\mathbf{K}_{\left[I^c_{[2r]}\right]}\right):=\frac{\epsilon\left(I_{[2r]}\right)}{(L-r)!2^{L-r}}\sum_{j_1,\dots,j_{2(L-r)}=1}^{2L}\varepsilon^{i_1\cdots i_{2r}j_1\cdots j_{2(L-r)}}K_{j_1j_2}\cdots K_{j_{2(L-r)-1}j_{2(L-r)}},
\end{equation}
where $1\leq r<L$. For the special case $r=L$, we define:
\begin{equation}
    \pf\left(\mathbf{K}_{[\emptyset]}\right):=1.
\end{equation}

In the more general scenario where extracted vertices are not necessarily adjacent (forming monomers rather than dimers), the orientation structure changes fundamentally. The expression:
\[
\epsilon^{i_1\cdots i_{2r}j_1\cdots j_{2(L-r)}}S_{j_1 j_2}\cdots S_{j_{2(L-r)-1}j_{2(L-r)}}
\]
cannot be completed with missing signs because monomers represent separate points without common edges. The solution employs \textit{disordering operators}, which generate new configurations with corrected signs, transforming:
\[
\epsilon^{i_1\cdots i_{2r}j_1\cdots j_{2(L-r)}}S_{j_1 j_2}\cdots S_{j_{2(L-r)-1}j_{2(L-r)}}K_{j_1j_2}\cdots K_{j_{2(L-r)-1}j_{2(L-r)}}
\]
into $K'_{j_1j_2}\cdots K'_{j_{2(L-r)-1}j_{2(L-r)}}$, where $K'_{ij}$ are components of the modified Kasteleyn matrix.

\begin{definition}[Modified Kasteleyn Matrix]
For a planar graph $G$ with Kasteleyn matrix $\mathbf{K}$ and monomer set $M \subset V$, the modified Kasteleyn matrix $\mathbf{K}'$ is constructed by selecting paths from each monomer to the boundary and reversing the sign of the matrix components $K_{ij}$ for each edge $(i,j)$ along these paths. The perfect matching count with monomers at $M$ is given by $\left|\pf\left(\mathbf{K}'_{\left[I^c_{[2r]}\right]}\right)\right|$, where $I^c_{[2r]}= V \setminus M$.
\end{definition}

These constructions use Jacobi's identity for Pfaffians (Appendix~\ref{PropPf}).

\begin{lemma}[Hafnianinho-Pfaffianinho correspondence for Planar Dimers]\label{PropHafPf}
Let $G = (V,E)$ be a planar graph with $|V|$ even, equipped with a symmetric weight matrix $\mathbf{A}$ satisfying $A_{ij} \geq 0$ and $A_{ii} = 0$. Let $\mathbf{K}$ be a Kasteleyn matrix for $G$ constructed via $K_{ij}=\epsilon^{ij}A_{ij}$, where $\mathbf{S}$ is a Kasteleyn sign matrix. Let $\mathbf{K}'$ be the modified Kasteleyn matrix incorporating monomer effects. Then:

\begin{enumerate}[label=(\roman*)]
\item \textbf{Combinatorial form}: The number of perfect matchings of the subgraph with the set of vertices $I^c_{[2r]}$ is:
\begin{equation}
\haf\!\left(\mathbf{A}_{\left[I^c_{[2r]}\right]}\right)=\begin{cases}
(-1)^r\pf(\mathbf{K'})\pf\left((\mathbf{K}'^{-1})_{\left[I_{[2r]}\right]}\right)\\
\epsilon\left(I_{[2r]}\right)\pf\left(\mathbf{K}'_{\left[I^c_{[2r]}\right]}\right)\\
\left(\prod_{\ell=1}^r K_{i_\ell j_\ell}\right)\epsilon\left(I_{[2r]}\right)\pf\left(\mathbf{K}_{\left[I^c_{[2r]}\right]}\right)&\text{if $I_{[2r]}$ is a set of dimers}
\end{cases}.
\end{equation}

\item \textbf{Grassmann-Berezin form}:
\begin{equation}
\int \mathbf{D}(\boldsymbol{\chi},\bar{\boldsymbol{\chi}})\left(\prod_{\alpha=1}^{2r}\bar{\chi}_{i_\alpha}\chi_{i_\alpha}\right)
e^{\frac{1}{2}(\bar{\boldsymbol{\chi}}\boldsymbol{\chi})^\top\mathbf{A}(\bar{\boldsymbol{\chi}}\boldsymbol{\chi})}
=\int\mathbf{D}\boldsymbol{\chi} 
\left(\prod_{\alpha=1}^{2r}\chi_{i_\alpha}\right)e^{\frac{1}{2}\boldsymbol{\chi}^\top\mathbf{K}'\boldsymbol{\chi}}.
\end{equation}
\end{enumerate}

The proof is in Appendix~\ref{ProofPropHafPf}.
\end{lemma}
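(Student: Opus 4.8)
The plan is to establish the Grassmann--Berezin identity (ii) first and then read off the three combinatorial identities in (i) as evaluations of its two sides, exactly paralleling the strategy used for Theorem~\ref{thm:main} but now with a monomer insertion. First I would evaluate the left-hand side of (ii): each factor $\bar\chi_{i_\alpha}\chi_{i_\alpha}$ saturates the pair of integrations $d\chi_{i_\alpha}\,d\bar\chi_{i_\alpha}$ and pins vertex $i_\alpha$ as a monomer, so that expanding the quartic exponential and applying the Berezin rules (as in the Berezin representation of the dimer model) forces the surviving quadratic factors to tile the complementary vertex set $I^c_{[2r]}$ by dimers. This gives $\text{LHS}=\haf\!\left(\mathbf{A}_{[I^c_{[2r]}]}\right)$, the Hafnianinho.

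Next I would evaluate the right-hand side. Only the term of degree $2(L-r)$ in the expansion of $e^{\frac12\boldsymbol\chi^\top\mathbf{K}'\boldsymbol\chi}$ can, together with $\prod_\alpha\chi_{i_\alpha}$, saturate all $2L$ Grassmann variables; performing the integration and matching the resulting Levi-Civita structure to the definition of the Pfaffianinho yields $\text{RHS}=\epsilon\!\left(I_{[2r]}\right)\pf\!\left(\mathbf{K}'_{[I^c_{[2r]}]}\right)$. Thus (ii) and the second line of (i) are the same statement, and everything reduces to a single sign identity: for every perfect matching $M$ of the subgraph induced by $I^c_{[2r]}$ (with the $j$'s ordered so as to realize $M$),
\begin{equation*}
\epsilon^{\,i_1\cdots i_{2r}j_1\cdots j_{2(L-r)}}\prod_{(j_a,j_b)\in M}\sgn(K'_{j_a j_b})=+1 ,
\end{equation*}
so that the signed Pfaffianinho sum collapses term-by-term onto the positive Hafnianinho sum (using $K'_{j_aj_b}=\sgn(K'_{j_aj_b})\,w_{j_aj_b}$ and $\epsilon(I_{[2r]})^2=1$).

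Proving this sign identity is the main obstacle, and it is where planarity is essential. I would argue it by comparing any two matchings $M,M_0$ of $I^c_{[2r]}$ through their symmetric difference $M\triangle M_0$, which decomposes into vertex-disjoint cycles; the Kasteleyn odd-clockwise condition on each bounded face forces the relative sign contributed by each such cycle to be $+1$, exactly as in the proof of Theorem~\ref{thm:main}. The role of the modification $\mathbf{K}\to\mathbf{K}'$ (reversing $K_{ij}$ along chosen monomer-to-boundary paths, per the Modified Kasteleyn Matrix definition) is to repair the absolute sign of a single reference matching: the path reversals flip the parity precisely on the cycles that separate the inserted monomers from the outer face, so that after modification the absolute sign equals $+1$ for one matching and hence, by the relative-sign argument, for all of them.

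Finally I would derive the remaining two cases of (i). The third line is the specialization of the second to the situation where $I_{[2r]}$ pairs up into genuine edges $(i_\ell,j_\ell)\in E$: then the monomer-to-boundary paths may be taken along those edges, and reinstating the missing Kasteleyn signs via $S_{ij}K_{ij}=A_{ij}$ extracts the prefactor $\prod_{\ell=1}^{r}K_{i_\ell j_\ell}$ while replacing $\mathbf{K}'$ by the unmodified $\mathbf{K}$, giving $\haf(\mathbf{A}_{[I^c_{[2r]}]})=\left(\prod_{\ell}K_{i_\ell j_\ell}\right)\epsilon(I_{[2r]})\pf(\mathbf{K}_{[I^c_{[2r]}]})$. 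The first line follows from the second by the Pfaffian analogue of Jacobi's identity (Appendix~\ref{PropPf}): when $\mathbf{K}'$ is invertible it relates $\pf(\mathbf{K}'_{[I^c_{[2r]}]})$ to $\pf(\mathbf{K}')\,\pf\!\big((\mathbf{K}'^{-1})_{[I_{[2r]}]}\big)$ up to a sign that, after combining with $\epsilon(I_{[2r]})^2=1$, reduces to $(-1)^r$. The only care needed is the bookkeeping of the Jacobi sign and the orientation convention of the Berezin measure, which I would fix once at the outset.
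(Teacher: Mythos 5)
Your proposal is correct and follows essentially the same route as the paper's proof in Appendix~\ref{ProofPropHafPf}: both reduce the Hafnianinho--Pfaffianinho correspondence to the Kasteleyn-type sign identity $\epsilon^{i_1\cdots i_{2r}j_1\cdots j_{2(L-r)}}\prod S_{j_aj_b}=+1$ after absorbing the monomer sign deficit into the modified matrix $\mathbf{K}'$, treat (ii) as the Grassmann transcription of the same manipulation, and obtain the first line of (i) from Jacobi's identity for Pfaffians. The only difference is presentational (you start from the Berezin identity and read off the combinatorial forms, while the paper starts from the Hafnianinho expansion), and you are somewhat more explicit about why the monomer-to-boundary path reversals repair the absolute sign, which the paper leaves implicit.
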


\subsubsection{Planar dimer model}
The partition function of the planar dimer model enumerates the total number of perfect matchings (or dimer coverings) of the graph, as established in Theorem~\ref{thm:main}. For completeness and to gather all essential components of the planar dimer system in one place, we restate the partition function in function of the Kasteleyn matrix $\mathbf{K}$:
\begin{equation*}\label{eq:pf-int}
    Z_D(\mathbf{K}) = \int \mathbf{D}\boldsymbol{\chi} \, 
    e^{\frac{1}{2}\boldsymbol{\chi}^\top\mathbf{K}\boldsymbol{\chi}} = \mathrm{pf}(\mathbf{K}),
    \qquad 
    \mathbf{D}\boldsymbol{\chi} := d\chi_{2L} \cdots d\chi_1,
\end{equation*}

The correlation functions for this model are:
\begin{theorem}[Hafnianinho-Pfaffianinho correspondence for Planar Dimers (Correlation functions)]
The $r$-point edge correlation function for planar graphs~\cite{CimasoniReshetikhin2008} is given by:

\begin{equation}
\langle e_1 e_2 \cdots e_k \rangle_D=(-1)^k\left(\prod_{\ell=1}^k K_{i_\ell j_\ell}\right)\pf\left((\mathbf{K}^{-1})_{[I_{[k]}]}\right)=\left(\prod_{\ell=1}^k K_{i_\ell j_\ell} \right)\left\langle\prod_{\alpha
=1}^k\chi_{i_\alpha}\chi_{j_\alpha}\right\rangle_{\mathrm{D}},
\end{equation}
where each fermion pair $\chi_{i_\alpha}\chi_{j_\alpha}$ corresponds to edge $e_\alpha = (i_\alpha, j_\alpha)$, and $I_{[k]}=\{i_1, j_1,\dots,i_k, j_k\}$. The Grassmann expectation value is defined as:
\begin{equation}
\left\langle\prod_{\alpha=1}^k\chi_{i_\alpha}\chi_{j_\alpha}\right\rangle_{\mathrm{D}} := \frac{1}{Z_D(\mathbf{K})}\int\mathbf{D}\boldsymbol{\chi}\left(\prod_{\alpha=1}^k\chi_{i_\alpha}\chi_{j_\alpha}\right)e^{\frac{1}{2}\boldsymbol{\chi}^\top\mathbf{K}\boldsymbol{\chi}}.
\end{equation}

The $2r$-point monomer correlation for planar graphs~\cite{AllegraFortin2014} is given by:
\begin{equation}
\langle v_1 v_2 \cdots v_{2r} \rangle_D=\left\langle\left(\prod_{\alpha=1}^{2r}\chi_{i_\alpha}\right)e^{2\sum_{m \in V_m} \sum_{p \in \mathcal{P}_m} \sum_{(a,b) \in p} K_{ab} \chi_a \chi_b} \right\rangle_{D}.
\end{equation}
The term in the exponential represents the sum over all paths from the monomers to the boundary of the graph, where:
\begin{itemize}
\item $V_m$: set of vertices where monomers are located
\item $\mathcal{P}_m$: set of all paths from monomer $m$ to the graph boundary
\item $(a,b)$: edges along path $p$
\item $K_{ab}$: Kasteleyn matrix elements for edge $(a,b)$
\item $\chi_a, \chi_b$: Grassmann variables at vertices $a$ and $b$
\end{itemize}
and the expectation value is
\begin{equation}\label{2rmonomercorrelation}
\begin{split}
&\left\langle\left(\prod_{\alpha=1}^{2r}\chi_{i_\alpha}\right)e^{\sum_{m \in V_m} \sum_{p \in \mathcal{P}_m}\sum_{(a,b) \in p} K_{ab} \chi_a \chi_b} \right\rangle_{D}\\
&\qquad\qquad:=\frac{1}{Z_D(\mathbf{K})}\int\mathbf{D}\boldsymbol{\chi} 
    \left(\prod_{\alpha=1}^{2r}\chi_{i_\alpha}\right)e^{\frac{1}{2}\boldsymbol{\chi}^\top\mathbf{K}'\boldsymbol{\chi}}=(-1)^r\frac{\pf(\mathbf{K'})}{\pf(\mathbf{K})}\pf\left((\mathbf{K}'^{-1})_{\left[I_{[2r]}\right]}\right),
\end{split}
\end{equation}
with $I_{[2r]}=\{i_1,\dots,i_{2r}\}$, and the term in the exponent is
\begin{equation}
    \frac{1}{2}\boldsymbol{\chi}^\top\mathbf{K}'\boldsymbol{\chi}=\frac{1}{2}\boldsymbol{\chi}^\top\mathbf{K}\boldsymbol{\chi}-2\sum_{m \in V_m} \sum_{p \in \mathcal{P}_m} \sum_{(a,b) \in p} K_{ab} \chi_a \chi_b.
\end{equation}
$\mathbf{K'}$ is the modified version of the Kasteleyn matrix $\mathbf{K}$; a simple way to obtain it is by changing the sign in the original matrix of all the components that one traverses from the monomer to the edge of the graph.

\textbf{Proof Sketch:} The derivation proceeds by direct expansion of the Berezin integrals over Grassmann variables, revealing the underlying Pfaffianinho structure. The identification of these quantities with physical dimer-dimer, monomer-monomer, and mixed correlations is established through the combinatorial machinery developed in Lemma~\ref{PropHafPf}, which provides the essential mapping between Grassmann expectations and combinatorial enumeration.
\end{theorem}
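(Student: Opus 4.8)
The plan is to handle the edge and monomer correlators separately, in each case expanding the defining Berezin integral, reducing it to a Pfaffian of a restricted inverse Kasteleyn matrix by the fermionic Wick theorem, and then matching the outcome to the combinatorial ratio of Hafnians supplied by Lemma~\ref{PropHafPf} together with the partition-function identity $\haf(\mathbf{A})=\pf(\mathbf{K})$ of Theorem~\ref{thm:main}.

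For the edge correlations I would first establish the Gaussian Grassmann moment formula: for invertible antisymmetric $\mathbf{K}$,
\[
\frac{1}{\pf(\mathbf{K})}\int\mathbf{D}\boldsymbol{\chi}\,\chi_{a_1}\cdots\chi_{a_{2m}}\,e^{\frac{1}{2}\boldsymbol{\chi}^\top\mathbf{K}\boldsymbol{\chi}}=\pf\!\left((\mathbf{K}^{-1})_{[\{a_1,\dots,a_{2m}\}]}\right),
\]
obtained by introducing a Grassmann source $\boldsymbol{\eta}$, completing the square via the shift $\boldsymbol{\chi}\to\boldsymbol{\chi}+\mathbf{K}^{-1}\boldsymbol{\eta}$, and differentiating; equivalently the propagator is $\langle\chi_a\chi_b\rangle_{\mathrm{D}}=(\mathbf{K}^{-1})_{ab}$ and fermionic Wick pairing reassembles the Pfaffian. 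Applying this with the $2k$ endpoints $I_{[k]}=\{i_1,j_1,\dots,i_k,j_k\}$ gives $\langle\prod_\alpha\chi_{i_\alpha}\chi_{j_\alpha}\rangle_{\mathrm{D}}=(-1)^k\pf((\mathbf{K}^{-1})_{[I_{[k]}]})$ after the reordering sign is absorbed, and multiplying by $\prod_\ell K_{i_\ell j_\ell}$ yields the stated closed form. To identify this with the combinatorial $\langle e_1\cdots e_k\rangle_D$, I would note that forcing $e_1,\dots,e_k$ to be dimers deletes their $2k$ endpoints, so the weighted count of such matchings is $\big(\prod_\ell w_{i_\ell j_\ell}\big)\haf(\mathbf{A}_{[I^c]})$; dividing by $\haf(\mathbf{A})=\pf(\mathbf{K})$ and inserting the first line of Lemma~\ref{PropHafPf}(i) reproduces the same $(-1)^k(\prod_\ell K_{i_\ell j_\ell})\pf((\mathbf{K}^{-1})_{[I_{[k]}]})$.

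For the monomer sector I would use the disordering-operator construction behind the modified matrix $\mathbf{K}'$: deleting the $2r$ monomer vertices $I_{[2r]}$ is implemented by flipping the signs of $K_{ab}$ along chosen paths from each monomer to the boundary, which is exactly the relation $\tfrac12\boldsymbol{\chi}^\top\mathbf{K}'\boldsymbol{\chi}=\tfrac12\boldsymbol{\chi}^\top\mathbf{K}\boldsymbol{\chi}-2\sum_{m}\sum_{p}\sum_{(a,b)}K_{ab}\chi_a\chi_b$ stated in the theorem. Consequently the unprimed expectation carrying the exponential insertion and the primed expectation with action $\mathbf{K}'$ coincide by definition, and applying the same Wick formula now with $\mathbf{K}'$ to the insertion $\prod_\alpha\chi_{i_\alpha}$ gives $(-1)^r(\pf(\mathbf{K}')/\pf(\mathbf{K}))\pf((\mathbf{K}'^{-1})_{[I_{[2r]}]})$. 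The final combinatorial step again invokes Lemma~\ref{PropHafPf}(i), here in the genuine non-adjacent monomer regime, where $\haf(\mathbf{A}_{[I^c]})=(-1)^r\pf(\mathbf{K}')\pf((\mathbf{K}'^{-1})_{[I]})$.

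The hard part will be the sign bookkeeping and path-independence in the monomer case. Since the extracted vertices share no common edge, the naive sign-completion of $\epsilon^{i_1\cdots i_{2r}j_1\cdots}S_{j_1j_2}\cdots$ fails, and one must verify that the disordering operator along any monomer-to-boundary path supplies precisely the missing signs needed to render this product $+1$ on every residual matching, and that different admissible path choices alter $\mathbf{K}'$ only by an overall sign on each residual configuration, so that the relevant Pfaffian is well defined. I expect to control this by Jacobi's Pfaffian identity (Appendix~\ref{PropPf}) to pass between $\pf(\mathbf{K}'_{[I^c]})$ and $\pf(\mathbf{K}')\pf((\mathbf{K}'^{-1})_{[I]})$, thereby reducing the monomer claim to the already-established partition-function correspondence for the modified planar graph.
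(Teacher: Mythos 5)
Your proposal follows essentially the same route as the paper's proof sketch: direct Wick/source-shift evaluation of the Berezin integrals to expose the Pfaffianinho structure, followed by the combinatorial identification via Lemma~\ref{PropHafPf} and Jacobi's identity, with the monomer case reduced to the modified-Kasteleyn (disordering-operator) construction. One small caveat: your intermediate moment formula should carry $\pf\left((\mathbf{K}^{-\top})_{[\cdot]}\right)$ rather than $\pf\left((\mathbf{K}^{-1})_{[\cdot]}\right)$ --- the overall $(-1)^k$ in the final answer arises from $\mathbf{K}^{-\top}=-\mathbf{K}^{-1}$, not from a reordering sign --- but your end result agrees with the theorem.
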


The monomer-monomer correlation function presented in equation~\eqref{2rmonomercorrelation} admits an alternative interpretation as the partition function for a dimer model with fixed monomer constraints. Specifically, this partition function can be expressed as:
\begin{equation}
    Z_{D-M} = \pf(\mathbf{K}) \left\langle\left(\prod_{\alpha=1}^{2r}\chi_{i_\alpha}\right)
    e^{\sum_{m \in V_m} \sum_{p \in \mathcal{P}_m}\sum_{(a,b) \in p} K_{ab} \chi_a \chi_b} \right\rangle_{D}=(-1)^r\pf(\mathbf{K'})\pf\left((\mathbf{K}'^{-1})_{\left[I_{[2r]}\right]}\right),
\end{equation}
where the Grassmann expectation value encodes the statistical weight of dimer configurations subject to the condition that monomers are fixed at vertices $\{i_1,\dots,i_{2r}\}$. This formulation reveals the deep connection between correlation functions and constrained partition functions in the dimer model. A comprehensive analysis of the dimer model with fixed monomers, including explicit constructions of the modified Kasteleyn matrix $\mathbf{K}'$ and detailed combinatorial interpretations, is developed in the subsequent section.

For the complete mathematical framework of graph planarity and Kasteleyn orientations, we direct the reader to the foundational works in this field. The extension of Kasteleyn's theorem to non-planar graphs of genus $g$ is presented in~\cite{CimasoniReshetikhin2007,CimasoniReshetikhin2008}, which derive partition and correlation functions for such surfaces; our work focuses specifically on the planar case $g=0$. Linear-time planarity testing algorithms are thoroughly treated in Hopcroft and Tarjan~\cite{HopcroftTarjan1974} and Boyer and Myrvold~\cite{BoyerMyrvold2004}, while the fundamental characterizations of planarity via forbidden subgraphs and minors are established in Kuratowski~\cite{Kuratowski1930} and Wagner~\cite{Wagner1937}. For Kasteleyn orientations in planar graphs, the original constructions by Kasteleyn~\cite{Kasteleyn1961,Kasteleyn1963} provide the essential methods, with extensions to non-planar bipartite graphs through Little's forbidden subgraph characterization~\cite{Little1973}. The complete theory of combinatorial embeddings, rotation systems, face enumeration, and practical implementation aspects can be found in these references, along with treatments of special cases including multigraphs and certification methods for Pfaffian orientations.

\section{The Monomer--Dimer Model, Planar Monomer-Dimer model, and Pfaffian Techniques II}\label{Sec:3}

Having established the dimer model framework in the previous section, we now extend our analysis to include monomer configurations. The general monomer-dimer partition function is formulated using the \textit{Monobisyzexant} (Definition~\ref{def:Mbsz}) of the adjacency matrix $\mathbf{A}$ and a diagonal matrix $\mathbf{D}$ containing monomer fugacities. The correlation functions can be obtained by the \textit{monobisyzexantinho} (Definition~\ref{def:Mbszinho}), which is the Monobisyzexant of the submatrix of the adjacency matrix $\mathbf{A}$ and a submatrix of the diagonal matrix $\mathbf{D}$. While Theorem~\ref{thm:main2} provides a theoretical foundation for the planar monomer-dimer model through the Kasteleyn matrix formalism, the computational complexity remains formidable even for planar graphs. This inherent difficulty motivates our focus on two computationally tractable special cases: dimer with fixed monomer configurations with fermionic sources and fixed monomer configurations with bosonic sources. Both cases represent constrained versions of the full planar monomer-dimer model that admit efficient computation through their proportionality to Pfaffians, thereby bypassing the computational intractability of the general problem while retaining physical relevance.

\subsection{The monomer--dimer model}

The monomer--dimer model's significance stems from its deep connections to the Ising model. The two-dimensional Ising model maps exactly to a dimer model on a decorated lattice~\cite{Kasteleyn1961,Fisher1961,TemperleyFisher1961,McCoy1973}, while the square-lattice dimer model itself maps to an eight-vertex model~\cite{Baxter1968,Wu1971}. Furthermore, the Ising model in an external magnetic field admits a direct reformulation as a monomer-dimer system~\cite{HeilmannLieb1972}.

For finite monomer densities, closed-form solutions exist only in one dimension (via Chebyshev polynomials~\cite{Alberici2012}) and on complete or tree-like graphs~\cite{Alberici2013}. The transfer-matrix method expresses the partition function via its dominant eigenvalue~\cite{Lieb1967}, while Baxter's corner transfer matrix approach gives accurate approximations for thermodynamic quantities like dimer density~\cite{Baxter1968}. In D-dimensions with $D\geq3$, no exact solution exists, even for the close-packed case. Analytical results are available for single monomers at the boundary~\cite{Wu_2011,Wu2006}, boundary monomer correlations~\cite{Priezzhev2008}, and bulk monomer localization~\cite{Bouttier2007,Poghosyan2011}. Rigorous studies include partition function zeros~\cite{Heilmann1970,HeilmannLieb1972}, series expansions~\cite{Nagle1966}, and recursion relations~\cite{Ahrens1981}.

The monomer--dimer model generalizes the dimer model by incorporating a finite monomer density, substantially increasing the combinatorial complexity of the system. For general graphs, the monomer-dimer partition function is \#P-complete~\cite{Jerrum1987}, rendering it computationally intractable and markedly more complex than the pure dimer case. Numerical studies of monomer-monomer correlations at finite density show an exponential decay, consistent with mean-field predictions from Grassmann-variable formulations~\cite{Papanikolaou2007,Krauth2003}.

\begin{definition}[Monomer-Dimer Partition Function]
The partition function of the monomer-dimer model on a graph $G = (V,E)$ with edge weights $\{w_e\}_{e \in E}$ and monomer weights $\{x_v\}_{v \in V}$ is defined as:
\[
Z_{MD}(G) = \sum_{M \in \mathcal{M}(G)} \left(\prod_{e \in M} w_e\right) \left(\prod_{v \in V \setminus V(M)} x_v\right),
\]
where $\mathcal{M}(G)$ denotes the set of all matchings (including the empty matching) of $G$, and $V(M)$ is the set of vertices covered by matching $M$. For unweighted graphs ($w_e = 1$, $x_v = 1$), this counts all monomer-dimer configurations.
\end{definition}

\begin{definition}[Monomer Correlation Function in Monomer-Dimer Model]
For a set of vertices $\{v_1, v_2, \dots, v_k\}$, the monomer correlation function gives the probability that all specified vertices are unoccupied by dimers:
\[
\langle v_1 v_2 \cdots v_k \rangle_{MD} = \frac{\left(\prod_{v \in S} x_v\right) Z_{MD}(G \setminus S)}{Z_{MD}(G)},
\]
where $S = \{v_1, \dots, v_k\}$ and $G \setminus S$ denotes the graph with vertices $S$ and all incident edges removed.
\end{definition}

\begin{definition}[Dimer Correlation Function in Monomer-Dimer Model]
For a set of distinct edges $\{e_1, e_2, \dots, e_m\}$, the dimer correlation function gives the probability that all specified edges are occupied by dimers:
\[
\langle e_1 e_2 \cdots e_m \rangle_{MD} = \frac{\left(\prod_{e \in T} w_e\right) Z_{MD}(G \setminus V(T))}{Z_{MD}(G)},
\]
where $T = \{e_1, \dots, e_m\}$ and $G \setminus V(T)$ denotes the graph with all vertices incident to edges in $T$ removed.
\end{definition}

\begin{definition}[Mixed Monomer-Dimer Correlation Function]
For a set of vertices $\{v_1, \dots, v_k\}$ and distinct edges $\{e_1, \dots, e_m\}$ with $\{v_1, \dots, v_k\} \cap V(\{e_1, \dots, e_m\}) = \emptyset$, the mixed correlation function gives the joint probability that all specified vertices are monomers and all specified edges are dimers:
\[
\langle v_1 \cdots v_k; e_1 \cdots e_m \rangle_{MD} = \frac{\left(\prod_{v \in S} x_v\right) \left(\prod_{e \in T} w_e\right) Z_{MD}(G \setminus (S \cup V(T)))}{Z_{MD}(G)},
\]
where $S = \{v_1, \dots, v_k\}$, $T = \{e_1, \dots, e_m\}$, and $G \setminus (S \cup V(T))$ denotes the graph with both specified vertices and edges removed.
\end{definition}

We now formulate the monomer-dimer model within the framework of Berezin integration over Grassmann variables:

\begin{theorem}[Berezin integral over Grassmann variables representation of the monomer--dimer model~\cite{Samuel1980, Samuel1980b}]
The partition function for the monomer--dimer model (or full monomer--dimer model) sums over all possible monomer-dimer coverings, from the empty configuration (all monomers) to perfect matchings (all dimers). This partition function is given by
\begin{equation}
    Z_{MD}(\mathbf{D},\mathbf{A})=\operatorname{Mbsz}(\mathbf{D},\mathbf{A}),
\end{equation}
where $\operatorname{Mbsz}(\mathbf{D},\mathbf{A})$ is the Monobisyzexant function (Definition~\ref{def:Mbsz}) of the Adjacency matrix $\mathbf{A}$ (for dimer connections) and the diagonal matrix $\mathbf{D}$ encoding monomer locations (1 for a monomer, 0 otherwise).

The partition function admits a Grassmann--Berezin representation:
\begin{equation}\label{Mbsz-int}
\operatorname{Mbsz}(\mathbf{D},\mathbf{A})=\int\mathbf{D}(\boldsymbol{\chi},\bar{\boldsymbol{\chi}}),
e^{\bar{\boldsymbol{\chi}}^{\top}\mathbf{D}\boldsymbol{\chi}
+\frac{1}{2}(\bar{\boldsymbol{\chi}}\boldsymbol{\chi})^{\top}\mathbf{A}(\bar{\boldsymbol{\chi}}\boldsymbol{\chi})},\quad\mathbf{D}(\boldsymbol{\chi},\bar{\boldsymbol{\chi}}):=\prod_{i=1}^L d\bar{\chi}_id\chi_i.
\end{equation}

The correlation function is defined as
\begin{equation} \left\langle\prod_{\alpha=1}^{r}\Bar{\chi}_{i_\alpha}\chi_{i_\alpha}\right\rangle_{MD}:=\frac{1}{Z_{MD}(\mathbf{D},\mathbf{A})}\int\mathbf{D}(\boldsymbol{\chi},\bar{\boldsymbol{\chi}})\left(\prod_{\alpha=1}^{r}\Bar{\chi}_{i_\alpha}\chi_{i_\alpha}\right)\, e^{\Bar{\boldsymbol{\chi}}^\top\mathbf{D}\boldsymbol{\chi} +\frac{1}{2} (\Bar{\boldsymbol{\chi}}\boldsymbol{\chi})^\top\mathbf{A} (\Bar{\boldsymbol{\chi}}\boldsymbol{\chi})} =\operatorname{Mbsz}\left(\left(\mathbf{D},\mathbf{A}\right)_{\left[I^c_{[r]}\right]}\right)
\end{equation}
where $I=\{i_1,\dots,i_r\}$, and is equivalent to a Monobysexantinho (Definition~\ref{def:Mbszinho}).

From this fundamental monomer correlation, we can derive all other correlation functions in the monomer--dimer model:

\begin{itemize}
\item \textbf{Monomer correlations}: For vertices $v_1,\dots,v_r$, the correlation $\langle v_1 v_2 \cdots v_r \rangle$ gives the probability that all specified vertices are unoccupied by dimers (i.e., host monomers). In the Grassmann representation, each monomer at vertex $v$ corresponds to an insertion of $\Bar{\chi}_v\chi_v$, representing the absence of dimer coverage at that site:
\[
\langle v_1 v_2 \cdots v_r \rangle_{MD}:= \left\langle \prod_{\alpha=1}^r \Bar{\chi}_{v_\alpha}\chi_{v_\alpha} \right\rangle_{MD}.
\]
These correlations can involve odd numbers of vertices, unlike in pure dimer models.

\item \textbf{Dimer correlations}: For edges $e_1=(i_1,j_1),\dots,e_k=(i_k,j_k)$, the correlation $\langle e_1 e_2 \cdots e_k \rangle$ measures the probability that all specified edges are occupied by dimers. In the Grassmann formalism, each dimer on edge $(i,j)$ corresponds to the quartic term $\Bar{\chi}_i\chi_i\Bar{\chi}_j\chi_j$:
\[
\langle e_1 e_2 \cdots e_k \rangle_{MD}:= \left\langle \prod_{\ell=1}^k \Bar{\chi}_{i_\ell}\chi_{i_\ell}\Bar{\chi}_{j_\ell}\chi_{j_\ell} \right\rangle_{MD}.
\]
These correlations are always between pairs of vertices and represent the presence of specific dimer configurations.

\item \textbf{Mixed correlations}: For combinations of monomers and dimers, the correlation $\langle v_1 \cdots v_r; e_1 \cdots e_k \rangle$ gives the joint probability that specified vertices are unoccupied while specified edges are dimer-covered. This hybrid correlation combines both monomer and dimer insertions:
\[
\langle v_1 \cdots v_r; e_1 \cdots e_k \rangle_{MD}:= \left\langle \left(\prod_{\alpha=1}^r \Bar{\chi}_{v_\alpha}\chi_{v_\alpha}\right) \left(\prod_{\ell=1}^k \Bar{\chi}_{i_\ell}\chi_{i_\ell}\Bar{\chi}_{j_\ell}\chi_{j_\ell}\right) \right\rangle_{MD}.
\]
These mixed correlations capture the interplay between monomer and dimer degrees of freedom, essential for understanding the full statistical mechanics of the system.
\end{itemize}

\textbf{Proof Strategy:} The verification proceeds through systematic expansion of the exponential in the partition function, followed by term-wise Berezin integration. This process reveals the combinatorial structure encoded in the Monobisyzexant function (Definition~\ref{def:Mbsz}), with correlation functions naturally emerging as Monobisyzexantinhos through analogous expansion and integration techniques.
\end{theorem}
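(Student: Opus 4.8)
The plan is to prove the two core identities—$Z_{MD}(\mathbf{D},\mathbf{A})=\operatorname{Mbsz}(\mathbf{D},\mathbf{A})$ together with its Berezin representation~\eqref{Mbsz-int}, and the correlation identity $\langle\prod_\alpha\bar\chi_{i_\alpha}\chi_{i_\alpha}\rangle_{MD}=\operatorname{Mbsz}((\mathbf{D},\mathbf{A})_{[I^c_{[r]}]})$—by direct expansion of the Grassmann exponential, exactly as flagged in the stated proof strategy. First I would use that $\mathbf{D}$ is diagonal and $\mathbf{A}$ symmetric to rewrite the exponent as a sum of even (hence mutually commuting) monomials, $\bar{\boldsymbol{\chi}}^\top\mathbf{D}\boldsymbol{\chi}=\sum_i D_{ii}\,\bar\chi_i\chi_i$ and $\tfrac12(\bar{\boldsymbol{\chi}}\boldsymbol{\chi})^\top\mathbf{A}(\bar{\boldsymbol{\chi}}\boldsymbol{\chi})=\sum_{i<j}A_{ij}\,\bar\chi_i\chi_i\,\bar\chi_j\chi_j$, the factor $\tfrac12$ being absorbed by the $i\leftrightarrow j$ symmetry and the diagonal $i=j$ terms vanishing by nilpotency.

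The central observation is that each building block $\bar\chi_i\chi_i$ is a commuting nilpotent element, $(\bar\chi_i\chi_i)^2=0$, so the exponential truncates to a finite sum in which every site index occurs at most once. Expanding the product, a surviving term assigns to each site either a single factor $D_{ii}\bar\chi_i\chi_i$ (a monomer) or a shared factor $A_{ij}\bar\chi_i\chi_i\bar\chi_j\chi_j$ (a dimer on $\{i,j\}$); the Berezin measure $\mathbf{D}(\boldsymbol{\chi},\bar{\boldsymbol{\chi}})$ annihilates any term that fails to contain $\bar\chi_i\chi_i$ for every $i\in[L]$. Hence the nonvanishing terms are in bijection with monomer–dimer coverings of $G$. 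Because the $\bar\chi_i\chi_i$ are even, the saturated Grassmann monomial of any covering can be reordered into the fixed product $\prod_{i=1}^L\bar\chi_i\chi_i$ with no configuration-dependent sign, so each covering contributes a single universal constant $\big(\int d\bar\chi_i\,d\chi_i\,\bar\chi_i\chi_i\big)^{L}$ times its weight $\prod_{\text{monomers}}D_{vv}\prod_{\text{dimers}}A_{ij}$. Fixing the sign convention so this constant is $+1$, the integral collapses to the weighted sum over coverings, which is precisely $\operatorname{Mbsz}(\mathbf{D},\mathbf{A})=Z_{MD}$.

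For the correlation functions I would repeat the expansion with the insertion $\prod_{\alpha=1}^r\bar\chi_{i_\alpha}\chi_{i_\alpha}$ prepended. Since this insertion already supplies $\bar\chi_{i_\alpha}\chi_{i_\alpha}$ for each $\alpha$, nilpotency forces every exponential term touching a site in $I_{[r]}=\{i_1,\dots,i_r\}$ to vanish, so the exponential effectively expands over the complementary site set $I^c_{[r]}$ only; the same saturation/bijection argument then identifies the result with the covering sum on the induced subgraph, i.e.\ $\operatorname{Mbsz}((\mathbf{D},\mathbf{A})_{[I^c_{[r]}]})$, the Monobisyzexantinho of Definition~\ref{def:Mbszinho}. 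Normalizing by $Z_{MD}$ yields the stated monomer correlation, and the listed monomer, dimer, and mixed correlations follow by specializing the inserted indices (two saturating factors per edge for dimers), each carrying the appropriate weight factor. The main obstacle I anticipate is bookkeeping rather than conceptual: one must verify that the single universal Berezin constant is indeed $+1$ under the paper's measure ordering (so that no residual $(-1)^L$ spoils the equality with the positive combinatorial sum), and confirm that the expansion weights coincide term-for-term with the Section~\ref{Sec:5} definition of $\operatorname{Mbsz}$, including the correct treatment of the $\tfrac12$ prefactor and of any loop/diagonal contributions encoded in $\mathbf{D}$.
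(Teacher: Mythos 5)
Your proposal follows exactly the route the paper indicates in its Proof Strategy---term-by-term expansion of the exponential of the commuting nilpotents $\bar\chi_i\chi_i$, saturation of the Berezin measure forcing a bijection between surviving terms and monomer--dimer coverings, and restriction of the expansion to the complementary index set for the correlators---and is in fact more detailed than the paper's own sketch. The two caveats you flag (verifying that the universal Berezin constant is $+1$ under the measure ordering, and matching the resulting covering sum term-for-term with the expansion of $\operatorname{Mbsz}$ in Lemma~\ref{Mbszexpanded}) are precisely the bookkeeping the paper leaves implicit, so nothing essential is missing.
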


The quartic term makes evaluation nontrivial, analogous to the Hafnian problem. Applying a fermionic dual transformation followed by bosonization yields more accurate approximations than direct methods~\cite{najafi2024, Sourlas2019}.

\subsection{From Monobisyzexant to Modified Kasteleyn Pfaffians and the Planar monomer-dimer model}

The Kasteleyn matrix formalism enables the enumeration of monomer--dimer coverings on planar graphs by extending techniques developed for the pure dimer model. Building on the established mappings from Hafnians to Pfaffians and Hafnianinhos to Pfaffianinhos, the Monobisyzexant function—which expands in terms of minors, Hafnianinhos, and Hafnians (Lemma~\ref{Mbszexpanded})—admits a Pfaffian representation through Theorems~\ref{thm:main} and~\ref{PropHafPf}. This representation expresses the Mbsz as a sum over all monomer configurations, where each term corresponds to a dimer model with fixed monomers. Despite this elegant mathematical formulation, the partition function expands into an exponential number of Pfaffian terms, rendering exact computation infeasible for large systems and reflecting the \#P-completeness of the monomer-dimer model~\cite{Jerrum1987}.

\begin{theorem}[Monobisyzexant-Pfaffians correspondence for Planar monomer-dimers]\label{thm:main2}
Let $G = (V,E)$ be a planar graph with $|V|$ even, equipped with a symmetric weight matrix $\mathbf{A}$ satisfying $A_{ij} \geq 0$ and $A_{ii} = 0$, and a diagonal matrix $\mathbf{D}$ (1 for a monomer, 0 otherwise). Let $\mathbf{K}$ be a Kasteleyn matrix for $G$ constructed via $K_{ij}=\epsilon^{ij}A_{ij}$, where $\mathbf{S}$ is a Kasteleyn sign matrix. Then the following are equivalent:

\begin{enumerate}[label=(\roman*)]
\item \textbf{Combinatorial form}: The monomer--dimer partition function using Kasteleyn matrices is:

(1) For even $L$:
\begin{equation}
\operatorname{Mbsz}\!\left(\mathbf{D},\mathbf{A}\right)
=\det(\mathbf{D})
+\sum_{r=1}^{L/2-1}
\sum_{i_1<\cdots<i_{2r}=1}^L
\det\!\left(\mathbf{D}_{[I_{[2r]}]}\right)
(-1)^r\pf(\mathbf{K'}_{[2r]})\pf\left((\mathbf{K'}^{-1}_{[2r]})_{\left[I_{[2r]}\right]}\right)
+\pf(\mathbf{K}),
\end{equation}
where $\mathbf{K'}_{[\alpha]}$ is the modified Kasteleyn matrix with $\alpha$ monomers, and $I_{[2r]}=\{i_1,\dots,i_{[2r]}\}$.

\noindent
(2) For odd $L$:
\begin{equation}
\operatorname{Mbsz}\left(\mathbf{D},\mathbf{A}\right)
=\det(\mathbf{D})
+\sum_{r=1}^{(L-1)/2}
\sum_{i_1<\cdots<i_{[2r-1]}=1}^L
\epsilon(I_{[2r-1]})\det\left(\mathbf{D}_{[I_{[2r-1]}]}\right)
\pf\left((\mathbf{K'}_{[2r-1]})_{\left[I^c_{[2r-1]}\right]}\right),
\end{equation}
where $I_{[2r-1]}=\{i_1,\dots,i_{[2r-1]}\}$.

\item \textbf{Grassmann-Berezin form}: The relation between the integrals is

(1) For even $L$ (\cite{AllegraFortin2014} suggest this application):
\begin{equation}
\begin{split}
\int\mathbf{D}(\boldsymbol{\chi},\bar{\boldsymbol{\chi}})&
e^{\Bar{\boldsymbol{\chi}}^\top\mathbf{D}\boldsymbol{\chi}+\frac{1}{2}(\bar{\boldsymbol{\chi}}\boldsymbol{\chi})^\top\mathbf{A}(\bar{\boldsymbol{\chi}}\boldsymbol{\chi})}\\
&=\prod_{i=1}^LD_{ii}+\sum_{r=1}^{L/2-1}
\sum_{i_1<\cdots<i_{2r}=1}^L
\left(\prod_{\alpha=1}^{2r}D_{i_\alpha i_\alpha}\right)\int\mathbf{D}\boldsymbol{\chi} 
\left(\prod_{\alpha=1}^{2r}\chi_{i_\alpha}\right)e^{\frac{1}{2}\boldsymbol{\chi}^\top\mathbf{K}'_{[2r]}\boldsymbol{\chi}}+\int \mathbf{D}\boldsymbol{\chi} e^{\frac{1}{2}\boldsymbol{\chi}^\top\mathbf{K}\boldsymbol{\chi}}.
\end{split}
\end{equation}

(2) For odd $L$:
\begin{equation}
\begin{split}
\int\mathbf{D}(\boldsymbol{\chi},\bar{\boldsymbol{\chi}})&
e^{\Bar{\boldsymbol{\chi}}^\top\mathbf{D}\boldsymbol{\chi}+\frac{1}{2}(\bar{\boldsymbol{\chi}}\boldsymbol{\chi})^\top\mathbf{A}(\bar{\boldsymbol{\chi}}\boldsymbol{\chi})}\\
&=\prod_{i=1}^LD_{ii}+\sum_{r=1}^{(L-1)/2}
\sum_{i_1<\cdots<i_{[2r-1]}=1}^L
\left(\prod_{\alpha=1}^{2r-1}D_{i_\alpha i_\alpha}\right)\int\mathbf{D}\boldsymbol{\chi} 
\left(\prod_{\alpha=1}^{2r-1}\chi_{i_\alpha}\right)e^{\frac{1}{2}\boldsymbol{\chi}^\top\mathbf{K}'_{[2r-1]}\boldsymbol{\chi}}.
\end{split}
\end{equation}
\end{enumerate}

For $\mathbf{D}=0$ we recover the Theorem~\ref{thm:main}.

The proof is in Appendix~\ref{Proof:thm:main2}.
\end{theorem}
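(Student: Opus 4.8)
The plan is to derive both the combinatorial and the Grassmann--Berezin statements from a single organizing principle: the resolution of the monomer--dimer partition function over all admissible monomer sets, followed by termwise application of the planar dictionary already in hand. The starting point is the expansion of Lemma~\ref{Mbszexpanded}, which writes the Monobisyzexant as a sum over subsets $I\subseteq[L]$ designated as monomers,
\begin{equation*}
\operatorname{Mbsz}(\mathbf{D},\mathbf{A})=\sum_{I}\det\!\left(\mathbf{D}_{[I]}\right)\haf\!\left(\mathbf{A}_{[I^c]}\right),
\end{equation*}
where each term factorizes into the monomer weight $\det(\mathbf{D}_{[I]})=\prod_{i\in I}D_{ii}$ (a diagonal minor) and the number of perfect matchings $\haf(\mathbf{A}_{[I^c]})$ of the subgraph induced on the unoccupied vertices $I^c$. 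Because every dimer covers two vertices, $|I^c|$ must be even, so only subsets $I$ whose cardinality shares the parity of $L$ contribute. Both assertions then follow by rewriting each Hafnianinho $\haf(\mathbf{A}_{[I^c]})$ through the correspondence of Lemma~\ref{PropHafPf}.

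For the combinatorial form, I would substitute the identity $\haf(\mathbf{A}_{[I^c_{[2r]}]})=(-1)^r\pf(\mathbf{K}')\,\pf\big((\mathbf{K}'^{-1})_{[I_{[2r]}]}\big)$ from Lemma~\ref{PropHafPf} into the expansion and group terms by the number of monomers. For even $L$ the monomer count is even; isolating the extremal sectors gives the pure-dimer term $\haf(\mathbf{A})=\pf(\mathbf{K})$ (the $r=0$ contribution, via Theorem~\ref{thm:main}) and the all-monomer term $\det(\mathbf{D})$ (the $r=L/2$ contribution, where $I^c=\emptyset$ and the empty Hafnian is $1$), while the intermediate sectors $1\le r\le L/2-1$ assemble into the stated double sum over $i_1<\cdots<i_{2r}$. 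For odd $L$ there is no pure-dimer term; the all-monomer sector again yields $\det(\mathbf{D})$, and the remaining sectors $1\le r\le(L-1)/2$ are rewritten with the alternative Pfaffianinho form $\haf(\mathbf{A}_{[I^c]})=\epsilon(I_{[2r-1]})\pf(\mathbf{K}'_{[I^c_{[2r-1]}]})$, reproducing the explicit signs $\epsilon(I_{[2r-1]})$.

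For the Grassmann--Berezin form, I would exploit the diagonality of $\mathbf{D}$ to factorize the monomer weight in the exponent. Since $(\bar\chi_i\chi_i)^2=0$, one has $e^{\bar{\boldsymbol{\chi}}^\top\mathbf{D}\boldsymbol{\chi}}=\prod_{i=1}^L\big(1+D_{ii}\bar\chi_i\chi_i\big)=\sum_I\big(\prod_{i\in I}D_{ii}\big)\prod_{i\in I}\bar\chi_i\chi_i$, which is precisely the subset expansion realized inside the integral. Inserting this and integrating termwise, each subset $I$ contributes $\prod_{i\in I}D_{ii}$ times the fixed-monomer dimer integral $\int\mathbf{D}(\boldsymbol{\chi},\bar{\boldsymbol{\chi}})\big(\prod_{i\in I}\bar\chi_i\chi_i\big)e^{\frac12(\bar{\boldsymbol{\chi}}\boldsymbol{\chi})^\top\mathbf{A}(\bar{\boldsymbol{\chi}}\boldsymbol{\chi})}$, which Lemma~\ref{PropHafPf} identifies with the reduced integral $\int\mathbf{D}\boldsymbol{\chi}\,\big(\prod_{i\in I}\chi_i\big)e^{\frac12\boldsymbol{\chi}^\top\mathbf{K}'\boldsymbol{\chi}}$ over the modified Kasteleyn matrix. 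Collecting these and matching parities gives the two displayed Grassmann identities; setting $\mathbf{D}=0$ annihilates every subset except $I=\emptyset$, recovering Theorem~\ref{thm:main}.

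The hard part will be the sign and consistency bookkeeping in the modified Kasteleyn sector. Whereas adjacent removed vertices (true dimers) obey the simple relation $\haf(\mathbf{A}_{[I^c]})=(\prod_\ell K_{i_\ell j_\ell})\epsilon(I)\pf(\mathbf{K}_{[I^c]})$, a generic monomer set consists of non-adjacent sites, so the naive sign product $\epsilon^{\cdots}S_{j_1j_2}\cdots$ cannot be completed to $+1$ and one must pass to the modified matrix $\mathbf{K}'_{[2r]}$ built from disordering operators along monomer-to-boundary paths. The delicate points are therefore that the correspondence of Lemma~\ref{PropHafPf} applies verbatim in each monomer sector with its own $\mathbf{K}'_{[2r]}$, and that the resulting Pfaffianinhos carry exactly the signs $\epsilon(I_{[2r]})$ (even $L$) or $\epsilon(I_{[2r-1]})$ (odd $L$) claimed in the statement; checking this against Jacobi's Pfaffian identity (Appendix~\ref{PropPf}) is where the real care is needed, the remainder being the routine reindexing already provided by Lemma~\ref{Mbszexpanded}.
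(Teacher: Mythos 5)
Your proposal follows essentially the same route as the paper's own proof: expand the Monobisyzexant via Lemma~\ref{Mbszexpanded} into monomer-sector contributions $\det(\mathbf{D}_{[I]})\haf(\mathbf{A}_{[I^c]})$, then convert each Hafnianinho termwise using Theorem~\ref{thm:main} and Lemma~\ref{PropHafPf}, with the Grassmann--Berezin form obtained by realizing the same subset expansion inside the integral. Your write-up is in fact more explicit than the paper's (which is a short sketch), particularly in spelling out the factorization $e^{\bar{\boldsymbol{\chi}}^\top\mathbf{D}\boldsymbol{\chi}}=\prod_i(1+D_{ii}\bar\chi_i\chi_i)$ and in flagging the sign bookkeeping for non-adjacent monomer sets, which is exactly where the care is needed.
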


To analyse fixed monomer configurations in the dimer model, we apply Theorem~\ref{thm:main2} to both even and odd monomer cases:

\begin{align*}
&\text{Even monomers (2r):} \quad \left(\prod_{\alpha=1}^{2r}D_{i_\alpha i_\alpha}\right)
\int\mathbf{D}\boldsymbol{\chi} 
\left(\prod_{\alpha=1}^{2r}\chi_{i_\alpha}\right)
e^{\frac{1}{2}\boldsymbol{\chi}^\top\mathbf{K}'_{[2r]}\boldsymbol{\chi}}, \\
&\text{Odd monomers (2r-1):} \quad \left(\prod_{\alpha=1}^{2r-1}D_{i_\alpha i_\alpha}\right)
\int\mathbf{D}\boldsymbol{\chi} 
\left(\prod_{\alpha=1}^{2r-1}\chi_{i_\alpha}\right)
e^{\frac{1}{2}\boldsymbol{\chi}^\top\mathbf{K}'_{[2r-1]}\boldsymbol{\chi}}.
\end{align*}

We can rewrite the term
\begin{equation*}
\prod_{\alpha=1}^{2r}D_{i_\alpha i_\alpha}\chi_{i_\alpha}=\int\mathbf{D}\boldsymbol{\psi}e^{\sum_{\alpha=1}^{2r}\psi_{i_\alpha}D_{i_\alpha i_\alpha}\chi_{i_\alpha}}=\int\mathbf{D}\boldsymbol{\psi}e^{\boldsymbol{\psi}^T\mathbf{H}\boldsymbol{\chi}},
\end{equation*}
where $\boldsymbol{\psi}^T\mathbf{H}\boldsymbol{\chi}=\sum_{\alpha=1}^{2r}\sum_{j=1}^{L}\psi_{i_\alpha}H_{i_\alpha j}\chi_{j}=\sum_{\alpha=1}^{2r}\psi_{i_\alpha}D_{i_\alpha i_\alpha}\chi_{i_\alpha}$, we define $\mathbf{H}$ as the proyection matrix:
\[
\mathbf{H}_{i_\alpha j}:= 
\begin{cases}
D_{i_\alpha i_\alpha} & \text{if monomer } i_\alpha \text{ is at vertex } j \\
0 & \text{otherwise}
\end{cases}
\]

Then, for an even dimension, we can write the partition function for the dimer model with $2r$-monomer fixed as\cite{AllegraFortin2014}: 
\begin{equation*}
\int\mathbf{D}\boldsymbol{\chi} 
\left(\prod_{\alpha=1}^{2r}\chi_{i_\alpha}\right)e^{\frac{1}{2}\boldsymbol{\chi}^\top\mathbf{K}'_{[2r]}\boldsymbol{\chi}}=\int\mathbf{D}\boldsymbol{\psi}\int\mathbf{D}\boldsymbol{\chi}e^{\frac{1}{2}\boldsymbol{\chi}^\top\mathbf{K}'_{[2r]}\boldsymbol{\chi}+\boldsymbol{\psi}^T\mathbf{H}\boldsymbol{\chi}},
\end{equation*}
The expression above appears in a general way in Corollary~\ref{exponential with fermionic sources}.

For odd dimensions, taking just one monomer $D_{ii}=u_i$, we have
\begin{equation*}
    u_i
\int\mathbf{D}\boldsymbol{\chi}\chi_{i}
e^{\frac{1}{2}\boldsymbol{\chi}^\top\mathbf{K}'_{[2r-1]}\boldsymbol{\chi}}.
\end{equation*}

If we start from the Corollary~\ref{exponential with bosonic sources}, we obtain
\begin{equation*}
\int\mathbf{D}\boldsymbol{\chi} e^{\frac{1}{2}\boldsymbol{\chi}^\top\mathbf{K}\boldsymbol{\chi}+\mathbf{u}^\top \boldsymbol{\chi}}=\begin{cases}
\int\mathbf{D}\boldsymbol{\chi} e^{\frac{1}{2}\boldsymbol{\chi}^\top\mathbf{K}\boldsymbol{\chi}}, & \text{even $L$}, \\
\sum_{i=1}^L(-1)^{i+1}u_{i}\int\mathbf{D}\boldsymbol{\chi} 
\chi_{i}e^{\frac{1}{2}\boldsymbol{\chi}^\top\mathbf{K}\boldsymbol{\chi}}, & \text{odd $L$}.
\end{cases}
\end{equation*}
This result encodes the dimer model with fixed monomers, where $u_i$ represents the monomer weight at vertex $i$. For even $L$, the absence of monomer terms reflects the perfect matching constraint of dimer theory. For odd $L$, the explicit monomer contributions enable the extension to monomer-dimer configurations. While fermionic sources could alternatively construct this model, the bosonic field approach demonstrates the formalism's versatility and provides the foundation for our subsequent results.

In what follows, we present two distinct approaches for handling monomer configurations: fermionic sources and bosonic sources. These simplified cases serve as motivation for the more general structures developed in Section~\ref{Sec:6}, where we establish comprehensive frameworks that may find applications in other constrained statistical models beyond the dimer systems considered here.

% \begin{proposition}\label{prop:nopf}
% There is no local antisymmetric kernel $\mathcal K=\mathcal K(A,D)$ on planar graphs such that
% \[
% Z_{\mathrm{MD}}(A,D)=|\pf(\mathcal K(A,D))|
% \quad\text{for all }(A,D).
% \]
% \end{proposition}

% \begin{proof}[Idea]
% If such $\mathcal K$ existed, exact monomer--dimer counting on planar graphs would be polynomial-time (Pfaffians are), contradicting known \#P-completeness of exact monomer--dimer counting in general planar graphs~\cite{Jerrum1987}. Thus, an all-fermion single-Pfaffian collapse for unfixed monomers is not expected. (One can still obtain a \textit{super-Gaussian} factorization by introducing commuting auxiliaries: a Pfaffian from fermions times a determinant from bosons.)
% \end{proof}
\subsubsection{Planar dimer with fixed monomers model}
A Grassmann formulation~\cite{AllegraFortin2014} provides an exact solution as a product of two explicit Pfaffians, first for the close-packed limit and later with nonuniform fugacities. Early fermionic representations~\cite{Samuel1980,Samuel1980b} used Grassmann variable pairs per site to enforce single occupancy, leading to a quartic interaction and diagrammatic expansions.

Building upon the approach introduced in~\cite{AllegraFortin2014}, we employ Corollary~\ref{exponential with fermionic sources} to motivate the more general results of Theorem~\ref{exponential with linear fermionic and bosonic sources}. Integrating out the Grassmann variables yields the partition function for the dimer model with fixed monomers:
\begin{equation}
    Z_{D-M}(\mathbf{K}'):=\int\mathbf{D}\boldsymbol{\psi}\int\mathbf{D}\boldsymbol{\chi} 
    e^{\frac{1}{2}\boldsymbol{\chi}^T\mathbf{K}'\boldsymbol{\chi}+\boldsymbol{\psi}^T\mathbf{H}\boldsymbol{\chi}}
    =\pf(\mathbf{K}')\pf(\mathbf{C}),
\end{equation}
where $\mathbf{D}\boldsymbol{\chi}:= d\chi_{L} \cdots d\chi_1$, $\mathbf{D}\boldsymbol{\psi}:= d\psi_{r} \cdots d\psi_1$, $L = X \times Y$ is the total number of vertices in the square lattice with dimensions $X \times Y$, and $r$ is the number of monomers (both even). Here $\boldsymbol{\psi}^T$ is an $r$-dimensional Grassmann vector, $\mathbf{H}$ is an $r\times L$ matrix, $\mathbf{K}'$ is an $L\times L$ antisymmetric matrix encoding dimer interactions, and $\mathbf{C}=\mathbf{H}\mathbf{K}'^{-1}\mathbf{H}^\top=(\mathbf{K}'^{-\top})_{\left[I_{[2r]}\right]}$. This factorization cleanly separates the dimer contribution through $\pf(\mathbf{K}')$ from the monomer contribution through $\pf(\mathbf{C})$, providing a compact algebraic representation of the constrained system.

\paragraph{Modified Kasteleyn Matrix $\mathbf{K}'$ for Rectangular Lattices:}
For an $X \times Y$ square lattice with vertices labeled by coordinates $(x,y)$ where $x = 1,\dots,X$ and $y = 1,\dots,Y$, we modify the original Kasteleyn matrix $\mathbf{K}$ to $\mathbf{K}'$ following \cite{AllegraFortin2014}:

\begin{enumerate}
    \item \textbf{Define the frontier} as the entire right boundary: $F = \{(X,y) \mid y = 1,\dots,Y\}$.
    
    \item \textbf{For each monomer} at position $(x_i, y_i)$:
    \begin{itemize}
        \item If $x_i = X$ (monomer on frontier), no modification occurs.
        \item If $x_i < X$, construct a horizontal defect line from $(x_i, y_i)$ to $(X, y_i)$.
        \item For each horizontal edge in this path, i.e., edges between $(k, y_i)$ and $(k+1, y_i)$ for $k = x_i, \dots, X-1$, multiply the corresponding elements $\mathbf{K}_{uv}$ and $\mathbf{K}_{vu}$ by $-1$.
    \end{itemize}
    
    \item If multiple defect lines traverse the same edge, the sign changes accumulate multiplicatively ($-1$ for odd crossings, $+1$ for even).
\end{enumerate}

This specific construction ensures $\mathbf{K}'$ remains invertible and yields the correct enumeration of dimer configurations with fixed monomers.

\paragraph{Monomer Projection Matrix $\mathbf{H}$:}
The $r \times L$ matrix $\mathbf{H}$ encodes monomer positions, where $L = X \times Y$. For a monomer at coordinate $(x_i,y_i)$ with linear index $s = (x_i-1)Y + y_i$:
\[
\mathbf{H}_{ij} = 
\begin{cases}
1 & \text{if monomer } i \text{ is at vertex } j \\
0 & \text{otherwise}
\end{cases}
\]
Each row of $\mathbf{H}$ corresponds to a monomer and has a single 1 at the column corresponding to its vertex position.

This approach directly implements the method of \cite{AllegraFortin2014} for rectangular lattices, using their designated right boundary as the frontier and horizontal defect lines. The product $\pf(\mathbf{K}')\pf(\mathbf{C})$ then gives the exact enumeration of dimer configurations with the specified monomer positions.

\paragraph{Correlation Function and Disorder Operators:}
The correlation functions between monomers are obtained by introducing disorder operators that account for the defect lines. We define the correlation function for the dimer with fixed monomers model as:

\begin{equation}
\left\langle\left(\prod_{\alpha=1}^r\chi_{i_\alpha}\right)e^{\frac{1}{2}\boldsymbol{\chi}^T\mathbf{V}\boldsymbol{\chi}} \right\rangle_{D}:=\frac{Z_{D-M}(\mathbf{K}')}{Z_D(\mathbf{K})},
\end{equation}

where $\mathbf{V}$ is the antisymmetric matrix encoding the defect lines from monomers to the frontier, explicitly $\mathbf{V}=\mathbf{K}'-\mathbf{K}$, and $Z_D = \pf(\mathbf{K})$ is the pure dimer partition function. The exponential term $e^{\frac{1}{2}\boldsymbol{\chi}^T\mathbf{V}\boldsymbol{\chi}}$ represents the disorder operator that introduces the necessary sign changes along the defect paths. We saw this term in a more general form, as a monomer correlation function, in eq.~\eqref{2rmonomercorrelation}.

\begin{figure}[]
    \centering
    \includegraphics[width=1\linewidth]{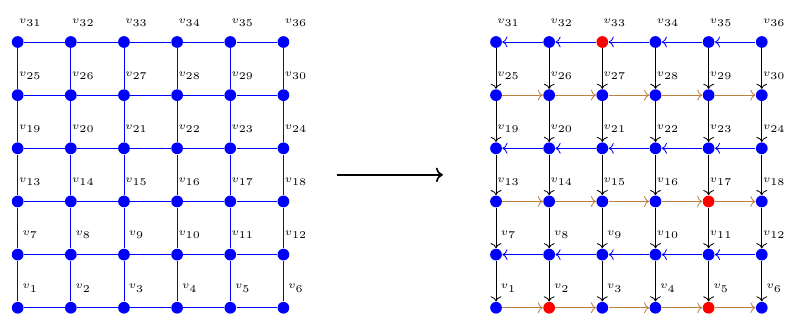}
    \caption{(Left) Original square lattice graph. (Right) Kasteleyn-oriented lattice with edge directions satisfying the odd clockwise orientation condition for each face. For square lattices, a convenient construction assigns all vertical edges downward and alternates horizontal edges between left and right directions. Monomers are inserted at vertices $v_2$, $v_5$, $v_{17}$, and $v_{33}$ in the right graph.}
    \label{Fig:KOrientationSquare}
\end{figure} 

Consider the $6\times6$ square lattice example shown in Figure~\ref{Fig:KOrientationSquare}. We define the frontier as the entire right boundary: $F=\{(6,y)\mid y=1,2,\dots,6\}$, corresponding to vertices $\{v_6,v_{12},v_{18},v_{24},v_{30},v_{36}\}$. 

The insertion of four monomers modifies the original Kasteleyn matrix through defect lines extending horizontally from each monomer to the frontier. Specifically:
\begin{itemize}
    \item From monomer $v_{33}$: flip signs of $K_{(33)(34)}$, $K_{(34)(35)}$, and $K_{(35)(36)}$
    \item From monomer $v_{17}$: flip sign of $K_{(17)(18)}$
    \item From monomers $v_2$ and $v_5$ (same row): flip signs from $v_2$ to $v_5$ only, as subsequent edge $K_{56}$ receives two sign flips and remains unchanged
\end{itemize}

The monomer projection matrix $\mathbf{H}$ is a $4\times 36$ matrix with nonzero entries at positions corresponding to monomer locations: $H_{12}=1$, $H_{25}=1$, $H_{3(17)}=1$, and $H_{4(33)}=1$, with all other entries zero.

Although the construction is illustrated for rectangular lattices, the method generalizes to arbitrary planar graphs. The specific path chosen from a monomer to the graph boundary is arbitrary and does not affect the final enumeration, as different paths related by gauge transformations yield equivalent sign patterns. The essential requirement is that defect lines terminate at the graph's natural boundary, ensuring proper counting of dimer configurations with fixed monomer positions.
\subsubsection{Planar dimer with monomer fugacity model}

For graphs with an even number of vertices $|V|$, the classical dimer model admits perfect matchings~\cite{LovaszPlummer1986,BondyMurty}. When $|V|$ is odd, perfect matchings cannot exist, and the maximal coverings leave exactly one vertex unmatched, forming configurations known as \textit{almost-perfect matchings}~\cite{LovaszPlummer1986}.

To incorporate monomer degrees of freedom, we introduce a bosonic source vector $\mathbf{u} \in \mathbb{C}^L$ and define the augmented Pfaffian partition function:
\begin{equation}
Z_{D-F}(\mathbf{K},\mathbf{u}):= \int \mathbf{D}\boldsymbol{\chi}\, e^{\frac{1}{2}\boldsymbol{\chi}^\top \mathbf{K} \boldsymbol{\chi} + \mathbf{u}^\top \boldsymbol{\chi}}
=
\begin{cases}
\pf(\mathbf{K}), & \text{even $L$}, \\
\pf(\mathbf{X}), & \text{odd $L$},
\end{cases}
\end{equation}
where for odd $L$, the extended antisymmetric matrix $\mathbf{X}$ is:
\begin{equation*}
\mathbf{X} = 
\begin{pmatrix}
0 & \mathbf{u}^\top \\
-\mathbf{u} & \mathbf{K}
\end{pmatrix}.
\end{equation*}

The Pfaffian of $\mathbf{X}$ admits a border expansion:
\begin{equation}\label{eq:pfaffian_border_expansion}
\pf(\mathbf{X}) = \sum_{j=1}^{L} (-1)^{j+1} u_j \, \pf\left(\mathbf{K}_{[\{j\}^c]}\right),
\end{equation}
where $\mathbf{K}_{[\{j\}^c]}$ denotes the principal submatrix obtained by deleting row and column $j$. Each term enumerates almost-perfect matchings with the monomer fixed at vertex $j$. The sign factor $(-1)^{j+1}$ emerges from the monomer insertions at positions $j$, consistent with our earlier discussion of the modified Kasteleyn matrix construction.

The source $\mathbf{u}$ serves as a \textit{monomer fugacity}, controlling the statistical weight of monomer positions. Correlation functions are defined by:
\begin{equation}
\left\langle \prod_{\alpha=1}^{r} \chi_{i_\alpha} \right\rangle_{D-F}
:=\frac{1}{Z_{D-F}} \int \mathbf{D}\boldsymbol{\chi} \left( \prod_{\alpha=1}^{r} \chi_{i_\alpha} \right) 
e^{\frac{1}{2}\boldsymbol{\chi}^\top \mathbf{K} \boldsymbol{\chi} + \mathbf{u}^\top \boldsymbol{\chi}},
\end{equation}
which evaluates to:
\begin{equation}
\left\langle \prod_{\alpha=1}^{r} \chi_{i_\alpha} \right\rangle_{D-F}
=
\begin{cases}
(-1)^{L}\epsilon(I_{[r]})\frac{\pf\left(\mathbf{K}_{\left[I_{[r]}^c\right]}\right)}{Z_{D-F}(\mathbf{K},\mathbf{u})}, & \text{even $r+L$}, \\
(-1)^{L+1}\epsilon(I_{[r]})\frac{\pf\left(\mathbf{X}_{\left[I_{[r]}^c\right]}\right)}{Z_{D-F}(\mathbf{K},\mathbf{u})}, & \text{odd $r+L$},
\end{cases}
\end{equation}
where $I_{[r]} = \{i_1, \dots, i_r\}$ denotes monomer positions and $\epsilon(I_{[r]})$ is the Grassmann reordering sign factor.

\paragraph{Even $L$ (Perfect Matchings)}
For even $L$, the partition function reproduces the pure dimer model, as all vertices can be perfectly paired~\cite{HeilmannLieb1972,LovaszPlummer1986,BondyMurty}. The bosonic source $\mathbf{u}$ couples only to unpaired vertices, which are absent in perfect matchings, leaving $Z_{D-F}(\mathbf{K},\mathbf{u})$ identical to the pure dimer partition function~\cite{Samuel1980, hayn1997}.

Correlation functions exhibit parity-dependent behavior:
\begin{itemize}
    \item \textbf{Odd-point correlators} ($r$ odd): Receive nontrivial contributions from $\mathbf{u}$, coupling to effectively unpaired sites
    \item \textbf{Even-point correlators} ($r$ even): Remain uncoupled from $\mathbf{u}$, reproducing pure dimer results~\cite{HeilmannLieb1972}
\end{itemize}

\paragraph{Odd $L$ (Almost-Perfect Matchings)}
For odd $L$, the bosonic source appears explicitly in both partition and correlation functions, as exactly one unpaired vertex exists for coupling. The partition function enumerates almost-perfect matchings: perfect matchings on $L-1$ vertices coupled to monomer fugacity $\mathbf{u}$ at the remaining vertex.

Correlation function behavior inverts:
\begin{itemize}
    \item \textbf{Even-point correlators} ($r$ even): Couple to $\mathbf{u}$ via the unpaired vertex
    \item \textbf{Odd-point correlators} ($r$ odd): Decouple from $\mathbf{u}$, reproducing pure dimer statistics
\end{itemize}

This formulation establishes a unified Pfaffian framework that interpolates continuously between pure dimer models and single-monomer configurations~\cite{Samuel1980,hayn1997,AllegraFortin2014}. 

Before proceeding to specific examples, it is crucial to address a fundamental aspect of the dimer model with fixed monomers. The introduction of monomers modifies the underlying graph structure, causing the standard Kasteleyn matrix to produce incorrect sign patterns in correlation functions. This is where the bosonic field plays an essential role: it systematically corrects these sign discrepancies, ensuring that the final calculation yields a proper sum over all configurations with positive weights. 

The bosonic source $\mathbf{u}$ provides the necessary degrees of freedom to compensate for the sign alterations induced by monomer insertions. When these fields are integrated out, the result is the exact enumeration of valid configurations. However, if the bosonic fields are retained in the calculation, they encode the full correlation structure rather than just the configuration count, offering additional physical insight into the monomer-dimer system.

\begin{figure}[]
    \centering
    \begin{subfigure}[b]{0.3\textwidth}
        \centering
        \includegraphics[width=\textwidth]{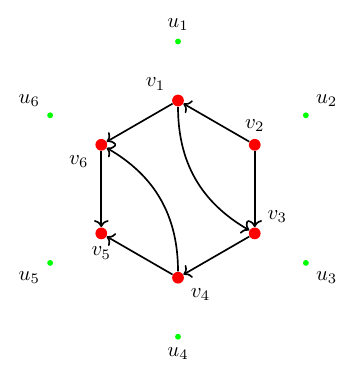}
        \caption{Kasteleyn-oriented graph $G$ with bosonic field vertices (green)}
        \label{Fig:adimerfugacity}
    \end{subfigure}
    \hfill
    \begin{subfigure}[b]{0.3\textwidth}
        \centering
        \includegraphics[width=\textwidth]{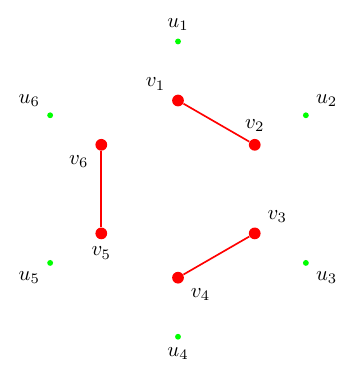}
        \caption{First perfect matching: bosonic vertices unpaired}
        \label{Fig:bdimerfugacity}
    \end{subfigure}
    \hfill
    \begin{subfigure}[b]{0.3\textwidth}
        \centering
        \includegraphics[width=\textwidth]{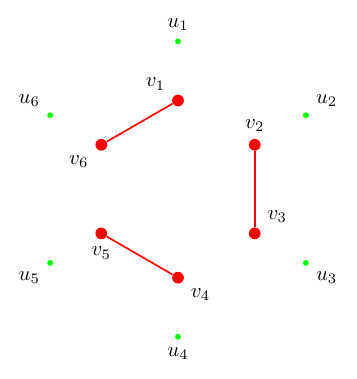}
        \caption{Second perfect matching: bosonic vertices unpaired}
        \label{Fig:cdimerfugacity}
    \end{subfigure}
    
    \vspace{1em}
    
    \begin{subfigure}[b]{0.3\textwidth}
        \centering
        \includegraphics[width=\textwidth]{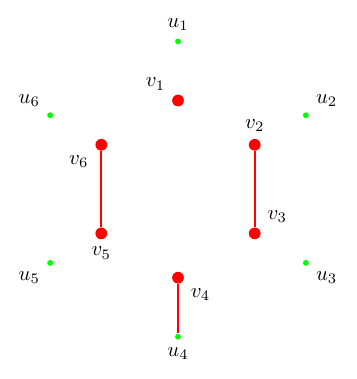}
        \caption{Almost perfect matching: bosonic dimer $(u_4,v_4)$}
        \label{Fig:ddimerfugacity}
    \end{subfigure}
    \hfill
    \begin{subfigure}[b]{0.3\textwidth}
        \centering
        \includegraphics[width=\textwidth]{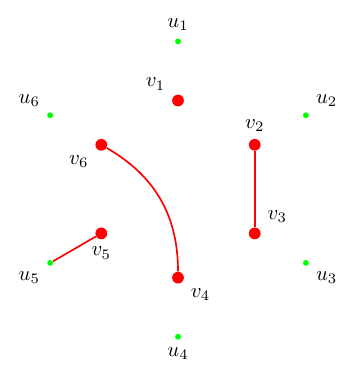}
        \caption{Almost perfect matching: bosonic dimer $(u_5,v_5)$}
        \label{Fig:edimerfugacity}
    \end{subfigure}
    \hfill
    \begin{subfigure}[b]{0.3\textwidth}
        \centering
        \includegraphics[width=\textwidth]{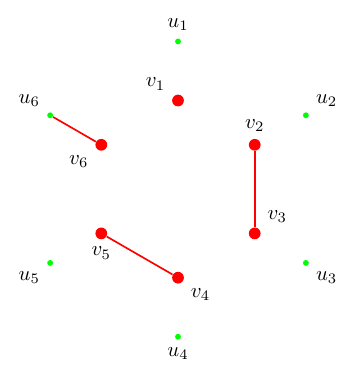}
        \caption{Almost perfect matching: bosonic dimer $(u_6,v_6)$}
        \label{Fig:fdimerfugacity}
    \end{subfigure}

    \hfill
    \begin{subfigure}[b]{0.3\textwidth}
        \centering
        \includegraphics[width=\textwidth]{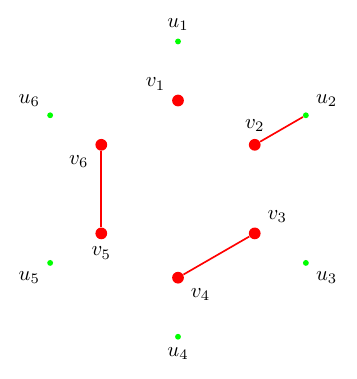}
        \caption{Almost perfect matching: bosonic dimer $(u_2,v_2)$}
        \label{Fig:gdimerfugacity}
    \end{subfigure}
    
    \caption{Dimer configurations with bosonic sources. (a-c) For even vertex count, perfect matchings leave bosonic vertices unpaired. (d-g) For odd vertex count with fixed monomer at $v_1$, bosonic vertices form dimers with fermionic vertices, generating almost perfect matchings.}
    \label{Fig:mainExample2}
\end{figure}

Consider the Kasteleyn-oriented graph in Figure~\ref{Fig:adimerfugacity} with bosonic field $\mathbf{u}=(u_1,u_2,u_3,u_4,u_5,u_6)$ represented by green vertices. The corresponding Kasteleyn matrix is
\begin{equation*}
    \mathbf{K}=
    \begin{pmatrix}
        0 & -K_{12} & K_{13} & 0 & 0 & K_{16} \\
        K_{12} & 0 & K_{23} & 0 & 0 & 0 \\
        -K_{13} & -K_{23} & 0 & K_{34} & 0 & 0 \\
        0 & 0 & -K_{34} & 0 & K_{45} & K_{46} \\
        0 & 0 & 0 & -K_{45} & 0 & -K_{56} \\
        -K_{16} & 0 & 0 & -K_{46} & K_{56} & 0
    \end{pmatrix},\quad \text{with } \pf(\mathbf{K})=K_{12}K_{34}K_{56}+K_{16}K_{23}K_{45}.
\end{equation*}
This yields two perfect matchings when we set $K_{ij}=1$ (Figures~\ref{Fig:bdimerfugacity} and~\ref{Fig:cdimerfugacity}), where bosonic vertices remain unpaired.

Now consider fixing a monomer at vertex $v_1$. In the classical dimer model, no perfect matching exists with an odd number of vertices. However, with bosonic sources, the bosonic field can couple with fermionic vertices to form almost perfect matchings, which means the Kasteleyn submatrix must be modified to preserve the sing. We compute these using the reduced matrices:
\begin{equation*}
\begin{split}
    \mathbf{K}'_{[\{1\}^c]}&=
    \begin{pmatrix}
        0 &{\color{red}-}K_{23} & 0 & 0 & 0 \\
        {\color{red}+}K_{23} & 0 &{\color{red}-}K_{34} & 0 & 0 \\
        0 &{\color{red}+}K_{34} & 0 &{\color{red}-}K_{45} & K_{46} \\
        0 & 0 &{\color{red}+}K_{45} & 0 &-K_{56} \\
        0 & 0 & -K_{46} &K_{56} & 0
    \end{pmatrix},\\
    \mathbf{X}_{[\{1\}^c]}&=
    \begin{pmatrix}
        0 & u_{2} & u_{3} & u_4 & u_5 & u_{6} \\
        -u_{2} & 0 &{\color{red}-}K_{23} & 0 & 0 & 0 \\
        -u_{3} &{\color{red}+}K_{23} & 0 &{\color{red}-}K_{34} & 0 & 0 \\
        -u_4 &0 &{\color{red}+}K_{34} & 0 &{\color{red}-}K_{45} & K_{46} \\
        -u_5 &0 & 0 &{\color{red}+}K_{45} & 0 &-K_{56} \\
        -u_{6} &0 & 0 & -K_{46} &K_{56} & 0
    \end{pmatrix}.
\end{split}
\end{equation*}
The Pfaffian "$\pf(\mathbf{X}_{[\{1\}^c]})=u_2K_{34}K_{56}+u_4K_{23}K_{56}+u_5K_{23}K_{46}+u_6K_{23}K_{45}$" enumerates almost perfect matchings. Choosing $u_i=1$ and setting $K_{ij}=1$ yields four almost perfect matchings illustrated in Figures~\ref{Fig:ddimerfugacity}--\ref{Fig:gdimerfugacity}, where bosonic vertices form dimers with fermionic vertices. 

It should be mentioned that we are treating the monomers as bosonic vertices that couple to the unpaired vertices; another way to tackle the problem is to place loops on all vertices.

\subsection{Hua Decomposition Method for Perfect and Almost Perfect Matchings}

Theorem~\ref{exponential with linear fermionic and bosonic sources} remains valid for singular skew-symmetric matrices $\mathbf{K}$, which commonly arise in physical systems through Kasteleyn matrices with nontrivial kernels. To handle these cases systematically, we employ a unitary block decomposition \cite{Hua1949} that separates the finite and null sectors of $\mathbf{K}$.

\begin{theorem}[Hua Decomposition for Fermionic Pfaffians with Mixed Sources]\label{HuaTheorem}
Let $\boldsymbol{\chi} = (\chi_1,\dots,\chi_{L})$ and $\boldsymbol{\psi} = (\psi_1,\dots,\psi_{L})$ be real Grassmann variables, and $\mathbf{u} = (u_1,\dots,u_L)$ be real or complex bosonic sources. For a singular skew-symmetric matrix $\mathbf{K}$ of rank $r$, consider the unitary transformation:
\[
\boldsymbol{\chi}' = \mathbf{U}^\ast \boldsymbol{\chi}, \quad
\mathbf{D}\boldsymbol{\chi}' = \det(\mathbf{U}) \mathbf{D}\boldsymbol{\chi}, \quad
\mathbf{U}^\ast = \begin{pmatrix} \tilde{\mathbf{U}} & \bar{\mathbf{U}} \end{pmatrix},
\]
where $\tilde{\mathbf{U}}$ and $\bar{\mathbf{U}}$ span the range and kernel of $\mathbf{K}$ respectively, yielding the block-diagonal form:
\[
\mathbf{U}^\dagger \mathbf{K} \mathbf{U}^\ast = \boldsymbol{\Sigma} =
\begin{pmatrix}
\boldsymbol{\Sigma}'_{r\times r} & \mathbf{0} \\
\mathbf{0} & \mathbf{0}
\end{pmatrix}, \quad
\boldsymbol{\Sigma}' = \bigoplus_{\alpha=1}^{r/2}
\begin{pmatrix}
0 & d_\alpha \\
-d_\alpha & 0
\end{pmatrix}.
\]

Then the Berezin integral with mixed sources decomposes as:
\begin{equation}\label{eq:practical_noninvertible}
\begin{split}
\int \mathbf{D}\boldsymbol{\chi}'\; & e^{\frac{1}{2}\boldsymbol{\chi}'^\top\mathbf{K}\boldsymbol{\chi}' + \mathbf{u}^\top\boldsymbol{\chi}' + \boldsymbol{\psi}^\top\boldsymbol{\chi}'}
= (\det \mathbf{U})\,\pf(\boldsymbol{\Sigma}') \times \\
& \Bigg[ \frac{(-1)^{\frac{(L-r)(L-r+1)}{2}}}{(L-r)!} 
\sum_{i_1,\dots,i_{L-r}=1}^{L} \det\!\bigl(\bar{\mathbf{U}}_{[I_{[L-r]},\star]}\bigr) 
\prod_{\alpha=1}^{L-r}\psi_{i_\alpha} \\
& \quad + \frac{(-1)^{\frac{(L-r)(L-r-1)}{2}}}{(L-r-1)!} 
\sum_{i_1,\dots,i_{L-r}=1}^{L} \det\!\bigl(\bar{\mathbf{U}}_{[I_{[L-r]},\star]}\bigr) 
u_{i_1}\!\!\prod_{\alpha=2}^{L-r}\psi_{i_\alpha} \Bigg]
e^{\frac{1}{2}\boldsymbol{\psi}^\top\mathcal{U}\boldsymbol{\psi} + \mathbf{u}^\top\mathcal{U}\boldsymbol{\psi}},
\end{split}
\end{equation}
where $\mathcal{U} = \tilde{\mathbf{U}}\boldsymbol{\Sigma}'^{-1}\tilde{\mathbf{U}}^\top$ is the pseudoinverse operator on the range space.
\end{theorem}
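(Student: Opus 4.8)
The plan is to split the singular integral into a nonsingular Gaussian integral over the range block of $\boldsymbol{\Sigma}$, to which Theorem~\ref{exponential with linear fermionic and bosonic sources} applies directly, and a purely linear integral over the kernel block that manufactures the Grassmann monomials built from $\bar{\mathbf{U}}$. The first thing I would record is that the bosonic coupling is nilpotent: since $\mathbf{u}$ is commuting and $\boldsymbol{\chi}'$ is Grassmann, $(\mathbf{u}^{\top}\boldsymbol{\chi}')^{2}=\sum_{i,j}u_{i}u_{j}\chi'_{i}\chi'_{j}=0$, so $e^{\mathbf{u}^{\top}\boldsymbol{\chi}'}=1+\mathbf{u}^{\top}\boldsymbol{\chi}'$ and the integral is globally linear in $\mathbf{u}$. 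I would then perform the change of variables $\boldsymbol{\chi}'=\mathbf{U}^{\ast}\boldsymbol{\chi}$; because $(\mathbf{U}^{\ast})^{\top}=\mathbf{U}^{\dagger}$, the quadratic form collapses to $\tfrac12\boldsymbol{\chi}^{\top}\boldsymbol{\Sigma}\boldsymbol{\chi}=\tfrac12\boldsymbol{\chi}_{R}^{\top}\boldsymbol{\Sigma}'\boldsymbol{\chi}_{R}$, the measure supplies the Jacobian $\det\mathbf{U}$, and the linear sources split as $(\mathbf{u}+\boldsymbol{\psi})^{\top}\mathbf{U}^{\ast}\boldsymbol{\chi}=(\tilde{\mathbf{U}}^{\top}(\mathbf{u}+\boldsymbol{\psi}))^{\top}\boldsymbol{\chi}_{R}+(\bar{\mathbf{U}}^{\top}(\mathbf{u}+\boldsymbol{\psi}))^{\top}\boldsymbol{\chi}_{0}$, where $\boldsymbol{\chi}_{R}$ and $\boldsymbol{\chi}_{0}$ denote the range ($r$) and kernel ($L-r$) blocks.

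Since the surviving exponent $\tfrac12\boldsymbol{\chi}^{\top}\boldsymbol{\Sigma}\boldsymbol{\chi}+\boldsymbol{\psi}^{\top}\boldsymbol{\chi}'$ is Grassmann-even, the integral factorizes over the two blocks. For the range block $\boldsymbol{\Sigma}'$ is invertible, so I would invoke Theorem~\ref{exponential with linear fermionic and bosonic sources} with fermionic source $\tilde{\mathbf{U}}^{\top}\boldsymbol{\psi}$ and bosonic source $\tilde{\mathbf{U}}^{\top}\mathbf{u}$; completing the square and substituting $\mathcal{U}=\tilde{\mathbf{U}}\boldsymbol{\Sigma}'^{-1}\tilde{\mathbf{U}}^{\top}$ produces precisely $\pf(\boldsymbol{\Sigma}')\,e^{\frac12\boldsymbol{\psi}^{\top}\mathcal{U}\boldsymbol{\psi}+\mathbf{u}^{\top}\mathcal{U}\boldsymbol{\psi}}$. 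For the kernel block there is no quadratic term, so $\int\mathbf{D}\boldsymbol{\chi}_{0}\,e^{(\bar{\mathbf{U}}^{\top}(\mathbf{u}+\boldsymbol{\psi}))^{\top}\boldsymbol{\chi}_{0}}$ is a top-degree Berezin extraction: only the monomial of full $\boldsymbol{\chi}_{0}$-degree $L-r$ survives, and by nilpotency of the bosonic piece it breaks into a term with $L-r$ fermionic factors $\bar{\mathbf{U}}^{\top}\boldsymbol{\psi}$ and a term carrying one bosonic factor $\bar{\mathbf{U}}^{\top}\mathbf{u}$ together with $L-r-1$ fermionic factors. Antisymmetrizing the products $\prod(\bar{\mathbf{U}}^{\top}\boldsymbol{\psi})$ in $\boldsymbol{\psi}$ converts them into the minors $\det(\bar{\mathbf{U}}_{[I_{[L-r]},\star]})$, reproducing the two bracketed sums of~\eqref{eq:practical_noninvertible}.

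The principal obstacle is the Grassmann sign accounting in the kernel block. Extracting the top monomial requires reordering the alternating string $(\bar{\mathbf{U}}^{\top}\boldsymbol{\psi})_{1}\chi_{0,1}\cdots(\bar{\mathbf{U}}^{\top}\boldsymbol{\psi})_{L-r}\chi_{0,L-r}$ into factored form (all $\boldsymbol{\psi}$-factors first, then the ordered $\boldsymbol{\chi}_{0}$-monomial), and combining the resulting shuffle sign with the orientation of $\mathbf{D}\boldsymbol{\chi}_{0}$ and with the antisymmetrization signs is exactly what pins down the exponents $\tfrac{(L-r)(L-r+1)}{2}$ and $\tfrac{(L-r)(L-r-1)}{2}$ together with the determinantal signs; these must be tracked term by term rather than guessed.

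Finally I would make the linearity in $\mathbf{u}$ explicit. Writing the answer with the single exponential $e^{\mathbf{u}^{\top}\mathcal{U}\boldsymbol{\psi}+\frac12\boldsymbol{\psi}^{\top}\mathcal{U}\boldsymbol{\psi}}$ multiplying both bracket terms is a compact bookkeeping device; because $(\mathbf{u}^{\top}\boldsymbol{\chi}')^{2}=0$ the true result is strictly linear in $\mathbf{u}$, so all genuinely $\mathbf{u}^{2}$ (and higher) contributions generated by that exponential are spurious and are to be discarded. One checks that the honest linear-order expression—the pure-$\boldsymbol{\psi}$ bracket term dressed by $e^{\mathbf{u}^{\top}\mathcal{U}\boldsymbol{\psi}}$ plus the single-$\mathbf{u}$ bracket term dressed only by $e^{\frac12\boldsymbol{\psi}^{\top}\mathcal{U}\boldsymbol{\psi}}$—agrees with the linear part of~\eqref{eq:practical_noninvertible}. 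Assembling the Jacobian $\det\mathbf{U}$, the range Pfaffian $\pf(\boldsymbol{\Sigma}')$, and the kernel bracket then yields the stated identity.
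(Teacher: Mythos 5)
Your decomposition is the right one, and it is essentially the route the paper itself relies on: the theorem is stated without a dedicated proof, but its right-hand side is visibly assembled from Corollary~\ref{exponential with linear fermionic and bosonic sources (invertible)}(a) applied to the invertible range block (giving $\pf(\boldsymbol{\Sigma}')\,e^{\frac12\boldsymbol{\psi}^\top\mathcal{U}\boldsymbol{\psi}+\mathbf{u}^\top\mathcal{U}\boldsymbol{\psi}}$ with $\mathcal{U}=\tilde{\mathbf{U}}\boldsymbol{\Sigma}'^{-1}\tilde{\mathbf{U}}^\top$, using that the rank $r$ of a skew-symmetric matrix is even) and from Corollary~\ref{pure fermionicbosonic sources} applied to the kernel block with $L\to L-r$ and $\mathbf{A}\to\bar{\mathbf{U}}$, which is verbatim the bracketed expression. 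The one place where your write-up stops short is the sign bookkeeping in the kernel integral, which you explicitly defer; rather than re-deriving the shuffle signs, you can close this immediately by citing Corollary~\ref{pure fermionicbosonic sources} (itself resting on Lemma~\ref{pure fermionic sources}, whose proof in Appendix~\ref{Proof pure fermionic sources} contains exactly the reordering factor $(-1)^{L(L+1)/2}$ you are after), leaving only the benign factorization $\mathbf{D}\boldsymbol{\chi}=\mathbf{D}\boldsymbol{\chi}_0\,\mathbf{D}\boldsymbol{\chi}_R$ and the parity of $\int\mathbf{D}\boldsymbol{\chi}_0\,e^{S_0}$ (which is $(-1)^{L-r}$) to track when commuting the two block integrals past the single bosonic insertion.

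Your observation about linearity in $\mathbf{u}$ is correct and is in fact sharper than the statement you are proving. Since $(\mathbf{u}^\top\boldsymbol{\chi}')^2=0$, the left-hand side is exactly affine in $\mathbf{u}$, whereas the packaged right-hand side of \eqref{eq:practical_noninvertible} contains the cross term obtained by multiplying the single-$\mathbf{u}$ bracket by the $\mathbf{u}^\top\mathcal{U}\boldsymbol{\psi}$ piece of the exponential; this term is genuinely quadratic in $\mathbf{u}$ and does not vanish identically (already for $L-r=1$ it equals $(\mathbf{u}^\top\bar{\mathbf{u}})(\mathbf{u}^\top\mathcal{U}\boldsymbol{\psi})e^{\frac12\boldsymbol{\psi}^\top\mathcal{U}\boldsymbol{\psi}}$ up to sign, with $\bar{\mathbf{u}}$ the kernel vector). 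So the displayed identity holds only after truncating the right-hand side to linear order in $\mathbf{u}$, exactly as you say; the honest statement is the three-term sum you write in your last paragraph. This is a minor imprecision in the theorem as packaged, not in your argument, and it would be worth making the truncation explicit if this proof is written out.
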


The Kasteleyn matrix $\mathbf{K}$ frequently becomes singular in physical scenarios involving open boundaries, zero-weight edges, or disconnected components. Applying the decomposition $\mathbf{K} = \mathbf{U}^\top \boldsymbol{\Sigma} \mathbf{U}$ provides both numerical stability and combinatorial insight:

- For full-rank $\mathbf{K}$ ($r=L$), only the Pfaffian term contributes, recovering standard dimer theory
- For singular $\mathbf{K}$ ($r<L$), Grassmann monomials describe almost-perfect matchings with $(L-r)$ unmatched vertices

Combinatorially, the kernel dimension $L-r$ identifies the minimal vertex set whose removal restores perfect matchings, while $\bar{\mathbf{U}}$ specifies the responsible vertices.

If we simplify Theorem~\ref{HuaTheorem} by setting the bosonic source to zero, and then integrating the sources, we obtain the partition function for the dimer model for fixed monomers via Hua decomposition:
\[
\int\mathbf{D}\boldsymbol{\psi}\int\mathbf{D}\boldsymbol{\chi}'\; e^{\frac{1}{2}\boldsymbol{\chi}'^\top \mathbf{K}\boldsymbol{\chi}'+\boldsymbol{\psi}^\top\boldsymbol{\chi}'}=(\det \mathbf{U})\, \pf(\boldsymbol{\Sigma}')\frac{(-1)^{\frac{(L-r)(L-r+1)}{2}}}{(L-r)!} \sum_{i_1,\dots,i_{L-r}=1}^{L} \det\!\bigl(\bar{\mathbf{U}}_{[I_{[L-r]},\star]}\bigr)\pf\left(\mathcal{U}_{\left[I^c_{[L-r]}\right]}\right),
\]
where $I=\{i_1,\dots,i_{L-r}\}$ indexes the singular sector or the monomers.

This unified framework preserves Pfaffian structure on reduced spaces and connects perfect/almost-perfect matchings through null subspace geometry, providing both analytical resolution and combinatorial interpretation for singular cases. This generalizes the result in~\cite{AllegraFortin2014} for singular matrices.

%%%%%%%%%%%%%%%%
\section{Rooted Spanning Forests, Spanning Trees, and Determinant Techniques}\label{Sec:4}
Spanning forests and trees are among the most fundamental combinatorial objects in graph theory, with roots in Kirchhoff's matrix–tree theorem and a long development in algebraic graph theory~\cite{Kirchhoff1847,Biggs1993}. Beyond their classical combinatorial role, spanning trees and forest ensembles play a central role in physics — they underpin electrical-network theory and random-walk representations, appear in scaling limits and conformal-invariance problems, and serve as tractable toy models for transport, avalanches, and self-organized criticality~\cite{DoyleSnell1984,LyonsPeres2016}. This section gives a concise, pedagogical review of these models with emphasis on their interpretation as partition functions and correlators, and it introduces an alternative Berezin integral representation of the spanning-tree partition function and correlation function (Theorem) that complements algebraic approaches such as the all-minors matrix–tree and matrix–forest theorems~\cite{Chaiken1982,Chebotarev2006}.

\subsection{The Weighted Laplacian Matrix}

The natural starting point is the weighted Laplacian matrix \(\mathbf{L}^w\)
\footnote{A more extensive explanation with examples on how to construct the weighted Laplacian matrix has been prepared in the Appendix \ref{Laplacianmatrix}.}. 
For a graph \(G=(V,E)\) with edge weights \(w_{ab}\), it is defined by
\[
L^w_{ab} = -w_{ab}, \qquad a\neq b, \qquad 
L^w_{aa} = \sum_{b=1}^N w_{ab}.
\]
This operator encodes how each vertex interacts with its neighbors and appears in diverse contexts, including electrical networks and diffusion phenomena~\cite{DoyleSnell1984,Chung1997}.
A crucial property is that \(\det(\mathbf{L}^w)=0\), reflecting the presence of a global symmetry or conservation law.

Fermionic formulations make this connection between Laplacians, determinants, and combinatorial enumeration explicit~\cite{Caracciolo2007,Abdesselam2004}, yielding compact functional-integral expressions for spanning structures.

\begin{figure}
    \centering
    \includegraphics[width=0.4\linewidth]{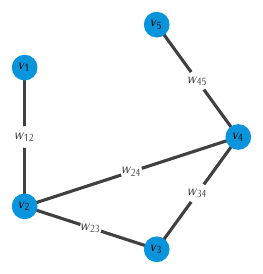}
    \caption{Weighted graph \(W\) with vertices \((v_1,v_2,v_3,v_4,v_5)\) and weighted edges \((w_{12},w_{23},w_{24},w_{34},w_{45})\).}
    \label{exmpG}
\end{figure}

\subsection{Rooted Spanning Forests}

A \textit{rooted spanning forest} is a disjoint union of trees, each with one distinguished root. 
These structures are enumerated by the principal minors of the Laplacian matrix (Theorem~\ref{teo:principalminorstree}). 
Physically, forests describe systems composed of multiple connected components anchored at fixed reference sites, appearing in models of diffusion, hierarchical organization, and in certain field-theoretic contexts where Laplacians and zero-mode regularization arise~\cite{ZinnJustin2002}.

The partition function of a rooted forest with root set \(I=\{i_1,\dots,i_r\}\) is given by
\begin{equation}
    Z_{SF}(I)=\int\mathbf{D}(\boldsymbol{\chi},\bar{\boldsymbol{\chi}})\,
    \Big(\prod_{\alpha=1}^{r}\bar{\chi}_{i_\alpha}\chi_{i_\alpha}\Big)
    e^{\bar{\boldsymbol{\chi}}^T \mathbf{L}^w\boldsymbol{\chi}}
    =\det\!\left(\mathbf{L}^w_{[I^c]}\right).
\end{equation}
When \(r=1\), this reduces to the case of a spanning tree. 

\begin{theorem}[Principal-Minors Tree Theorem~\cite{ChebotarevAgaev2002}]\label{teo:principalminorstree}
For $I=J$, we write $\mathbf{L}^w_{[I^c]}$. Then
\begin{equation}
     Z_{SF}(I)=\det\left(\mathbf{L}^w_{[I^c]}\right)
     \equiv \sum_{F\in \mathcal{F}}\prod_{e\in F}w_e,
\end{equation}
where the sum runs over all spanning forests $F$ composed of $r$ disjoint trees, each containing exactly one root vertex from $I$.
\end{theorem}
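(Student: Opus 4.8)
The plan is to prove the two equalities in turn: first the Grassmann identity $Z_{SF}(I)=\det(\mathbf{L}^w_{[I^c]})$, and then the combinatorial all-minors matrix--tree identity $\det(\mathbf{L}^w_{[I^c]})=\sum_{F}\prod_{e\in F}w_e$. For the first equality I would invoke the complex-fermion Gaussian integration rule already available through Theorem~\ref{theorem 2.2}: inserting the monomial $\prod_{\alpha=1}^{r}\bar\chi_{i_\alpha}\chi_{i_\alpha}$ saturates the Grassmann variables indexed by $I$, so that each factor $\bar\chi_{i_\alpha}\chi_{i_\alpha}$ forces the corresponding row and column of $\mathbf{L}^w$ to drop out of the Gaussian integration, leaving $\int \mathbf{D}(\boldsymbol{\chi},\bar{\boldsymbol{\chi}})_{I^c}\,e^{\bar{\boldsymbol{\chi}}^\top \mathbf{L}^w_{[I^c]}\boldsymbol{\chi}}=\det(\mathbf{L}^w_{[I^c]})$. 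Since $\det(\mathbf{L}^w)=0$ while the principal minors need not vanish, this is precisely the zero-mode regularization afforded by the root insertions.

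The substance of the theorem is the second equality, and my approach is the signed-incidence factorization combined with the Cauchy--Binet formula. First I would fix an arbitrary orientation of $G$ and write $\mathbf{L}^w=\mathbf{B}\mathbf{W}\mathbf{B}^\top$, where $\mathbf{B}\in\mathbb{R}^{N\times m}$ is the vertex--edge incidence matrix and $\mathbf{W}=\operatorname{diag}(w_e)$. Deleting the rows and columns indexed by $I$ gives $\mathbf{L}^w_{[I^c]}=\mathbf{B}_{[I^c,\star]}\,\mathbf{W}\,\mathbf{B}_{[I^c,\star]}^\top$, where $\mathbf{B}_{[I^c,\star]}$ is the reduced incidence matrix obtained by grounding the $r$ roots. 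As a product of an $(N-r)\times m$ matrix with its transpose, Cauchy--Binet yields $\det(\mathbf{L}^w_{[I^c]})=\sum_{S}\bigl(\det \mathbf{B}_{[I^c,S]}\bigr)^2\prod_{e\in S}w_e$, the sum running over edge subsets $S\subseteq E$ with $|S|=N-r$.

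The key step, and the main obstacle, is the combinatorial identification of which subsets $S$ contribute. I would prove the lemma that $\det \mathbf{B}_{[I^c,S]}\in\{-1,0,+1\}$ (total unimodularity of incidence matrices) and that it equals $\pm1$ precisely when the subgraph $(V,S)$ is a spanning forest whose $r$ connected components each contain exactly one root of $I$. Unimodularity follows by induction on the size of the square submatrix, using that every column of $\mathbf{B}$ carries at most one $+1$ and one $-1$. For the nonzero case I would argue in two directions: if $S$ contains a cycle, the corresponding columns obey the cycle-space dependency and the determinant vanishes; if $S$ is acyclic then $|S|=N-r$ forces exactly $r$ components, and deleting the $I$-rows leaves a full-rank matrix iff each component is grounded by exactly one root, since the rows of any rootless component sum to zero (every incident edge contributes a cancelling $+1$ and $-1$ within $I^c$). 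As there are $r$ roots and $r$ components, exactly one root per component is the only configuration avoiding a rootless (hence dependent) component. For such $S$ the square determinant equals $1$, so its contribution is exactly $\prod_{e\in S}w_e$, and summing over all rooted spanning forests $F\in\mathcal F$ establishes the claim.

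As an alternative consistent with the paper's fermionic philosophy, one could instead expand $e^{\bar{\boldsymbol{\chi}}^\top\mathbf{L}^w\boldsymbol{\chi}}$ directly and organize the surviving Grassmann monomials following the fermionic matrix--tree arguments of~\cite{Caracciolo2007,Abdesselam2004}; this route reproduces the same forest enumeration but trades the Cauchy--Binet bookkeeping for Grassmann sign cancellations, and I would keep it only as a cross-check. I expect the total-unimodularity and grounding argument to be the genuinely load-bearing part, while the Grassmann integration and Cauchy--Binet steps are essentially formal.
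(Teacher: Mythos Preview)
Your proof is correct, but there is nothing to compare against: the paper does not prove this statement. Theorem~\ref{teo:principalminorstree} is stated with a citation to~\cite{ChebotarevAgaev2002} and used as an imported result; the Grassmann identity $Z_{SF}(I)=\det(\mathbf{L}^w_{[I^c]})$ is likewise quoted as one of the known complex-fermion integrals from~\cite{caracciolo2013algebraic} (the second item in the list at the start of Section~\ref{Sec:6}, specialized to $I=J$), not derived afresh.

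Your argument is the classical one and is sound. The factorization $\mathbf{L}^w=\mathbf{B}\mathbf{W}\mathbf{B}^\top$, Cauchy--Binet, total unimodularity of the signed incidence matrix, and the characterization of nonsingular $(N-r)\times(N-r)$ minors as rooted spanning forests with one root per component are exactly the standard ingredients; the grounding argument you sketch (a rootless component forces a row dependency in $\mathbf{B}_{[I^c,S]}$) is the right mechanism. The alternative fermionic expansion you mention as a cross-check is closer in spirit to the references~\cite{Caracciolo2007,Abdesselam2004} that the paper also cites for this circle of ideas, but neither route is carried out in the paper itself.
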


In combinatorial literature, the total number (or weighted count) of spanning trees or forests is usually denoted by $\tau(G)$. Throughout this work, we have used the symbol $Z$ to emphasize its interpretation as a Grassmann or field-theoretic partition function. Both notations are completely equivalent, and the choice between them is only conventional.

\subsection{Spectral Decomposition Method for Spanning Forest Enumeration}
\label{subsec:spectral_forests}

The enumeration of spanning forests—disjoint unions of trees that together cover all vertices—requires extra care for disconnected graphs.  While the Matrix–Tree Theorem treats connected graphs via cofactors of the Laplacian \cite{Kirchhoff1847}, its extension to general (possibly disconnected) graphs is most naturally formulated in spectral or block form: by detecting disconnected components (either combinatorially or via the spectral projector onto the zero-eigenspace of the Laplacian), one permutes/blocks the Laplacian into independent submatrices and factorizes the forest enumeration into contributions from each component.  This spectral/block approach is closely related to the matrix–forest theorems and their all-minors formulations, which give explicit formulas for forest weights in terms of Laplacian minors and resolvents~\cite{Chebotarev2006,Chaiken1982}.  From a numerical and algebraic viewpoint the same idea is used when one isolates zero modes by block diagonalization or projection: standard matrix-decomposition techniques identify the null and non-null subspaces and allow one to treat zero modes through projection rather than by forming formal inverses~\cite{Chung1997}. Building on our unified spectral framework, we now automate the detection of graph components, which allows us to block-decompose the Laplacian and factor the spanning forest count. This technique parallels the earlier treatment of Kasteleyn matrices and offers a robust computational pathway for forest generating functions, thereby completing the picture.

\subsubsection*{Component Identification via Spectral Analysis.}

The Laplacian matrix $\mathbf{L}$ encodes the connectivity structure of a graph through its spectrum. For a graph with $k$ connected components $G_1, G_2, \ldots, G_k$, the Laplacian has exactly $k$ zero eigenvalues, one per component, and the corresponding eigenvectors act as \textit{component indicators}. 
Let
\[
\mathbf{L} = \mathbf{Q}\Lambda\mathbf{Q}^\top,
\qquad
\Lambda= \mathrm{diag}(\lambda_1, \ldots, \lambda_N),
\]
be its spectral decomposition, where $\mathbf{Q}$ is orthogonal and $\Lambda$ contains the eigenvalues of $\mathbf{L}$.
The number of connected components is then
\begin{equation}
k = \dim(\ker(\mathbf{L})) = \#\{\lambda_i = 0\}.
\end{equation}

We denote by $\mathbf{Q}_0$ the $N\times k$ submatrix of $\mathbf{Q}$ formed by the eigenvectors corresponding to the zero eigenvalues of $\mathbf{L}$, and by $\mathbf{Q}_+$ the submatrix of eigenvectors with nonzero eigenvalues:
\begin{equation}
\mathbf{Q} = \bigl[\mathbf{Q}_0 \;\; \mathbf{Q}_+\bigr],
\qquad
\Lambda = 
\begin{pmatrix}
0 & 0\\
0 & \Lambda_+
\end{pmatrix},
\end{equation}
so that $\mathbf{Q}_0$ spans $\ker(\mathbf{L})$ and $\mathbf{Q}_+$ spans its regular (non-null) subspace.

Each zero eigenvector $\mathbf{v}_\alpha$ is constant on the vertices of a connected component and zero elsewhere:
\begin{equation}
(\mathbf{v}_\alpha)_i = 
\begin{cases}
\displaystyle \frac{1}{\sqrt{|V(G_\alpha)|}} & \text{if } i \in V(G_\alpha), \\[6pt]
0 & \text{otherwise.}
\end{cases}
\end{equation}
Hence, by inspecting the rows of $\mathbf{Q}_0$, vertices that share identical entries belong to the same component. This provides an entirely algebraic way to identify connected components directly from $\mathbf{L}$, without reconstructing the graph topology.

\subsubsection*{Block Diagonalization and Factorization.}

Once the components are identified, one can reorder vertices so that all vertices of the same component appear consecutively. This permutation $\mathbf{P}$ transforms the Laplacian into an explicit block-diagonal form:
\begin{equation}
\label{eq:blockLaplacian}
\mathbf{L}_{\text{block}} = 
\mathbf{P}^\top \mathbf{L} \mathbf{P} =
\begin{pmatrix}
\mathbf{L}_{G_1} & 0 & \cdots & 0 \\
0 & \mathbf{L}_{G_2} & \cdots & 0 \\
\vdots & \vdots & \ddots & \vdots \\
0 & 0 & \cdots & \mathbf{L}_{G_k}
\end{pmatrix},
\end{equation}
where each block $\mathbf{L}_{G_\alpha}$ is the Laplacian of component $G_\alpha$. 

This structure reveals that a disconnected graph behaves as the disjoint union of its connected subgraphs. From this, the enumeration of spanning forests with exactly one tree per component factorizes naturally:
\begin{equation}
\label{eq:forest_factorization}
\tau_{\text{forest}}(G)
= \prod_{\alpha=1}^{k} \tau(G_\alpha),
\end{equation}
where each $\tau(G_\alpha)$ is the number of spanning trees of component $G_\alpha$. 

Equation~\eqref{eq:forest_factorization} is the algebraic manifestation of the generalized Matrix--Tree theorem: taking one vertex from each component (removing $k$ rows and columns) annihilates all zero modes and yields a finite determinant equal to the product of tree counts of the connected subgraphs.

\subsubsection{Spectral Projection and Regularization}

The spectral decomposition provides an algebraic representation of the block-diagonal structure through projection operators:
\[
\mathbf{L} = \mathbf{Q}_0\mathbf{0}\mathbf{Q}_0^\top + \mathbf{Q}_+\Lambda_+\mathbf{Q}_+^\top.
\]
The projection operator
\[
\mathbf{P} = \mathbf{Q}_+\mathbf{Q}_+^\top = \mathbf{I} - \mathbf{Q}_0\mathbf{Q}_0^\top
\]
filters out the zero modes associated with disconnected components. The regularized Laplacian
\[
\mathbf{L}_{\text{reg}} = \mathbf{P}\mathbf{L}\mathbf{P} = \mathbf{Q}_+\Lambda_+\mathbf{Q}_+^\top
\]
contains only the invertible sector, analogous to the reduced antisymmetric matrix in Kasteleyn theory. This spectral approach provides an operator-theoretic treatment of connected and disconnected graphs within a unified framework.

\subsubsection*{Algorithmic Implementation.}

\begin{enumerate}
\item \textbf{Diagonalize the Laplacian:} compute $\mathbf{L} = \mathbf{Q}\Lambda\mathbf{Q}^\top$.
\item \textbf{Identify zero modes:} select eigenvectors in $\mathbf{Q}_0$ with $|\lambda_i| < \epsilon$.
\item \textbf{Cluster rows of $\mathbf{Q}_0$:} identical rows correspond to vertices in the same component.
\item \textbf{Extract sub-Laplacians:} for each component $G_\alpha$ with vertex set $V_\alpha$, set $\mathbf{L}_\alpha = \mathbf{L}[V_\alpha,V_\alpha]$.
\item \textbf{Compute component tree numbers:} delete one row and column from each $\mathbf{L}_\alpha$, compute $\tau(G_\alpha)=\det\left(\mathbf{L}_{\alpha[\{i\}^c]}\right)$.
\item \textbf{Multiply:} obtain $\tau_{\text{forest}}(G)$ through Eq.~\eqref{eq:forest_factorization}.
\end{enumerate}

This procedure can be implemented entirely from $\mathbf{L}$ or its spectrum, requiring no prior knowledge of the graph’s topology. An illustration of this procedure, including the explicit construction of $Q_0$, the identification of connected components, and the factorization of the Laplacian, is presented in Appendix~\ref{sec:SpectralIllustration}.

\subsubsection*{Physical and Combinatorial Interpretation.}

The factorization \eqref{eq:forest_factorization} has several complementary interpretations:

\begin{itemize}
\item \textbf{Combinatorial:} Each connected component $G_\alpha$ contributes a factor equal to its own number of spanning trees $\tau(G_\alpha)$. The product $\prod_{\alpha=1}^k \tau(G_\alpha)$ therefore enumerates all possible combinations of one spanning tree per component, i.e., all spanning forests that collectively cover the entire vertex set of $G$.
\item \textbf{Spectral:} Each zero mode of $\mathbf{L}$ corresponds to a disconnected region; projecting onto the non-null subspace eliminates global shifts.
\item \textbf{Physical:} In the fermionic formalism, each component behaves as an independent sector whose Grassmann integral factorizes:
\[
e^{\bar{\boldsymbol{\chi}}^\top \mathbf{L} \boldsymbol{\chi}}
= \prod_{\alpha=1}^{k} 
e^{\bar{\boldsymbol{\chi}}_\alpha^\top \mathbf{L}_{G_\alpha} \boldsymbol{\chi}_\alpha},
\]
so that
\[
Z_{\mathrm{SF}}
= \prod_{\alpha=1}^{k}
\int \mathbf{D}(\boldsymbol{\chi}_\alpha,\bar{\boldsymbol{\chi}}_\alpha)\,
\bar{\chi}_{i_\alpha}\chi_{i_\alpha}\,
e^{\bar{\boldsymbol{\chi}}_\alpha^\top \mathbf{L}_{G_\alpha}\boldsymbol{\chi}_\alpha}.
\]
Each block integral counts the spanning trees of $G_\alpha$, and their product yields the total spanning-forest weight.
\end{itemize}

The spectral decomposition plays for the Laplacian the same regularizing role that the block-Pfaffian construction plays for the Kasteleyn matrix. Both separate the null and regular sectors, allowing well-defined evaluation of Determinants or Pfaffians even when the original matrices are non-invertible. This unified viewpoint provides a consistent algebraic and physical framework for treating both fermionic and combinatorial systems with disconnected structures.

\subsection{Spanning Trees}
A \textit{spanning tree} is a connected acyclic subgraph that contains every vertex of the original graph. The celebrated \textit{Matrix--Tree Theorem} (Theorem~\ref{teo:matrixtree}) states that the number (or weighted sum) of spanning trees equals any cofactor of the graph Laplacian $\mathbf{L}^w$; see the original formulation of Kirchhoff and standard algebraic treatments~\cite{Kirchhoff1847,Biggs1993}, as well as a combinatorial all-minors proof~\cite{Chaiken1982}. Physically, spanning trees represent the minimal connectivity backbone of a system and play a central role in network theory, random walks, and electrical-circuit models~\cite{DoyleSnell1984,LyonsPeres2016}.

The spanning-tree partition function can be expressed as
\begin{equation}\label{eq:SpanningTreefermionic}
    Z_{ST}(\mathbf{L}^w)=\int\mathbf{D}(\boldsymbol{\chi},\bar{\boldsymbol{\chi}})\,
    \bar{{\chi}} _{a}{\chi}_{a}\,
    e^{\bar{\boldsymbol{\chi}}^T \mathbf{L}^w\boldsymbol{\chi}}
    =\det\!\left(\mathbf{L}^w_{[\{a\}^c]}\right),
\end{equation}
which counts all spanning trees of \(G\), each weighted by the product of its edge weights.

\begin{theorem}[Matrix--Tree Theorem, connected graphs \cite{Kirchhoff1847}]\label{teo:matrixtree}
Let $G$ be a connected graph with weighted Laplacian $\mathbf{L}^w$. Then any cofactor of $\mathbf{L}^w$ generates spanning trees:
\begin{equation}
    Z_{ST}\mathbf{L}^w=(-1)^{a+b}\det\!\left(\mathbf{L}^w_{[\{b\}^c|\{a\}^c]}\right)
    \equiv \sum_{T\in\mathcal{T}} \prod_{e\in T} w_e,
\end{equation}
where $\mathcal{T}$ is the set of spanning trees of $G$. For $a=b$, this reduces to a principal minor.
\end{theorem}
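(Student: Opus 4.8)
The plan is to prove the combinatorial identity for the principal-minor case $a=b$ first—this is exactly the quantity $\det(\mathbf{L}^w_{[\{a\}^c]})$ already produced by the Grassmann integral in \eqref{eq:SpanningTreefermionic}—and then to promote it to arbitrary cofactors. First I would fix an arbitrary orientation of the edges of $G$ and introduce the signed incidence matrix $\mathbf{B}\in\mathbb{R}^{N\times|E|}$, with $B_{ie}=+1$ if $e$ leaves $i$, $B_{ie}=-1$ if $e$ enters $i$, and $0$ otherwise, together with the diagonal weight matrix $\mathbf{W}=\operatorname{diag}(w_e)$. A direct computation of $(\mathbf{B}\mathbf{W}\mathbf{B}^\top)_{ij}$ recovers the weighted degree on the diagonal and $-w_{ij}$ off the diagonal, giving the orientation-independent factorization $\mathbf{L}^w=\mathbf{B}\mathbf{W}\mathbf{B}^\top$.

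Deleting row and column $a$ yields $\mathbf{L}^w_{[\{a\}^c]}=\widetilde{\mathbf{B}}\mathbf{W}\widetilde{\mathbf{B}}^\top$, where $\widetilde{\mathbf{B}}$ is the $(N-1)\times|E|$ matrix obtained by deleting row $a$ from $\mathbf{B}$. The Cauchy–Binet formula then expands the determinant over all $(N-1)$-subsets $S\subseteq E$, the diagonal $\mathbf{W}$ letting the weights factor out of each term:
\begin{equation*}
\det\!\left(\mathbf{L}^w_{[\{a\}^c]}\right)=\sum_{\substack{S\subseteq E\\|S|=N-1}}\Big(\prod_{e\in S}w_e\Big)\,\det\!\big(\widetilde{\mathbf{B}}_{[\star,S]}\big)^2.
\end{equation*}
The crux is then the unimodularity lemma for graph incidence matrices: $\det(\widetilde{\mathbf{B}}_{[\star,S]})$ equals $\pm1$ when the edge set $S$ forms a spanning tree of $G$ and $0$ otherwise. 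Granting this, the squared minors act as indicators of spanning trees and the sum collapses exactly to $\sum_{T\in\mathcal{T}}\prod_{e\in T}w_e$.

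I expect the unimodularity lemma to be the main obstacle, and I would establish it in two parts. If $S$ contains a cycle, the signed edge-indicator vectors around that cycle furnish a linear dependence among the columns of $\widetilde{\mathbf{B}}_{[\star,S]}$, forcing the determinant to vanish; since $|S|=N-1$, the only alternative is that $S$ is acyclic and hence a spanning tree. In the spanning-tree case I would argue by leaf-stripping induction: a spanning tree always has a leaf distinct from the deleted root $a$, whose unique incident edge contributes a single $\pm1$ entry in its row, so Laplace-expanding along that row peels off one vertex and one edge at a time, leaving a triangular structure with unit entries and determinant $\pm1$. (Equivalently one may invoke total unimodularity of $\mathbf{B}$.)

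Finally, to promote the result from principal minors to arbitrary cofactors, I would use $\mathbf{L}^w\mathbf{1}=\mathbf{0}$ and $\mathbf{1}^\top\mathbf{L}^w=\mathbf{0}$, so that $\det(\mathbf{L}^w)=0$. The adjugate identity $\mathbf{L}^w\operatorname{adj}(\mathbf{L}^w)=\det(\mathbf{L}^w)\mathbf{I}=\mathbf{0}$ places every column of $\operatorname{adj}(\mathbf{L}^w)$ in $\ker(\mathbf{L}^w)$, which is one-dimensional by connectedness, and the transposed identity does the same for its rows; hence $\operatorname{adj}(\mathbf{L}^w)$ has all entries equal. Since $(-1)^{a+b}\det(\mathbf{L}^w_{[\{b\}^c|\{a\}^c]})$ is precisely the $(a,b)$ entry of $\operatorname{adj}(\mathbf{L}^w)$, all signed cofactors coincide and in particular equal the principal minor computed above. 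This closes the chain $Z_{ST}=(-1)^{a+b}\det(\mathbf{L}^w_{[\{b\}^c|\{a\}^c]})=\det(\mathbf{L}^w_{[\{a\}^c]})=\sum_{T\in\mathcal{T}}\prod_{e\in T}w_e$, consistent with the Grassmann evaluation in \eqref{eq:SpanningTreefermionic}.
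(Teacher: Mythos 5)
Your proof is correct. The paper itself offers no argument for this theorem: it is stated as a classical result and attributed to Kirchhoff, with the surrounding machinery (the Grassmann representation in \eqref{eq:SpanningTreefermionic} and the all-minors theorem of Chaiken, Theorem~\ref{theoAllminor}) simply invoking it. What you supply is the standard self-contained route: the factorization $\mathbf{L}^w=\mathbf{B}\mathbf{W}\mathbf{B}^\top$, Cauchy--Binet over $(N-1)$-edge subsets, and the unimodularity lemma (cycle $\Rightarrow$ dependent columns; tree $\Rightarrow$ leaf-stripping gives $\pm1$), which together collapse the principal minor to the weighted tree sum. Your promotion from the principal minor to arbitrary cofactors via $\mathbf{L}^w\operatorname{adj}(\mathbf{L}^w)=\operatorname{adj}(\mathbf{L}^w)\mathbf{L}^w=\mathbf{0}$ and the one-dimensionality of $\ker(\mathbf{L}^w)$ for connected $G$ is exactly the right way to obtain the ``any cofactor'' clause of the statement, which the combinatorial expansion alone does not give. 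The only points worth making explicit in a polished write-up are the $N=1$ convention (empty determinant equals $1$) and the observation that an acyclic $(N-1)$-edge subgraph on $N$ vertices is automatically spanning and connected; both are routine.
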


The correlation function defined by
\begin{equation}
\begin{split}
&\left\langle\prod_{\alpha=1}^{r}\bar{\chi}_{i_{\alpha}}\chi_{j_{\alpha}}\right\rangle_{\mathrm{ST}}\\
&\qquad:=\frac{1}{Z_{ST}(\mathbf{L}^w)}\int\mathbf{D}(\boldsymbol{\chi},\bar{\boldsymbol{\chi}})\left(\prod_{\alpha=1}^{r}\bar{\chi}_{i_{\alpha}}\chi_{j_{\alpha}}\right)\bar{{\chi}} _{a}{\chi}_{a}
    e^{\bar{\boldsymbol{\chi}}^T \mathbf{L}^w\boldsymbol{\chi}}=\epsilon\left(I_{[r]},J_{[r]}\right)\frac{\det\!\left(\mathbf{L}^w_{\left[(I_{[r]}\cup\{a\})^c|(J_{[r]}\cup\{a\})^c\right]}\right)}{\det\!\left(\mathbf{L}^w_{[\{a\}^c]}\right)}
\end{split}
\end{equation}
carries fundamental physical significance in the statistical mechanics of spanning trees.

Each Grassmann bilinear $\bar{\chi}_{i_\alpha}\chi_{j_\alpha}$ represents the presence of a specific edge $(i_\alpha,j_\alpha)$ in the spanning tree configuration. The product over all $\alpha$ from 1 to $r$ therefore corresponds to requiring that all edges in the set $E' = \{(i_1,j_1), \dots, (i_r,j_r)\}$ appear simultaneously in the tree. The expectation value computes the joint probability of this occurrence in the ensemble of uniformly random spanning trees.

From a network theory perspective, this correlation function measures connectivity patterns in minimal connected networks. In electrical network contexts, it represents the probability that all specified edges carry current in a random resistor network configuration. For communication networks, it quantifies the likelihood that multiple communication links are simultaneously active in a minimally connected infrastructure.

The regularization term $\bar{\chi}_a\chi_a$ plays a crucial physical role by removing the zero mode of the weighted Laplacian $\mathbf{L}^w$, which arises from its singular nature. This regularization corresponds to fixing a root vertex $a$ in the spanning tree enumeration, effectively breaking the global symmetry and making the partition function and correlations well-defined.
\subsection{Berezin integral over Grassmann variables representation of the spanning tree}
\begin{theorem}[All-minors Tree theorem~\cite{Chaiken1982}]\label{theoAllminor}
Let $G$ be a connected graph with weighted Laplacian $\mathbf{L}^w$. For any subsets $I,J$ of equal cardinality $r$,
\begin{equation}
Z_{SF}(I,J)=\epsilon\left(I_{[r]},J_{[r]}\right)\det\left(\mathbf{L}_{\left[I_{[r]}^c,J_{[r]}^c\right]}\right)
=\sum_{F\in \mathcal{F}}\prod_{e\in F}w_e,
\end{equation}
where the sum runs over spanning forests $F$ composed of $r$ disjoint trees, each containing exactly one vertex from $I$ and one from $J$; $\epsilon\left(I_{[r]},J_{[r]}\right)=(-1)^{\sum_{i \in I} i+\sum_{j\in J} j}$.
\end{theorem}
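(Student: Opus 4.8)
The plan is to prove the statement entirely within the Grassmann--Berezin framework, since the edge decomposition of the Laplacian quadratic form turns the signed minor into a manifestly positive sum over forests. First I would represent the signed minor as a Berezin integral with boundary insertions. Using the standard identity for Grassmann minors, and the same reordering mechanism already used for the correlation function preceding the theorem (now with the root insertion $\bar\chi_a\chi_a$ absent), one has
\begin{equation*}
\epsilon\left(I_{[r]},J_{[r]}\right)\det\left(\mathbf{L}^w_{\left[I_{[r]}^c,J_{[r]}^c\right]}\right)
=\int\mathbf{D}(\boldsymbol{\chi},\bar{\boldsymbol{\chi}})\left(\prod_{i\in I_{[r]}}\bar{\chi}_i\right)\left(\prod_{j\in J_{[r]}}\chi_j\right)e^{\bar{\boldsymbol{\chi}}^\top\mathbf{L}^w\boldsymbol{\chi}},
\end{equation*}
where the signature $\epsilon(I_{[r]},J_{[r]})$ is precisely the permutation sign introduced in the Notation section that arises from commuting the inserted variables past the saturating measure.

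The second step is the edge decomposition. Inserting $L^w_{aa}=\sum_b w_{ab}$ and $L^w_{ab}=-w_{ab}$ and grouping terms edge by edge gives
\begin{equation*}
\bar{\boldsymbol{\chi}}^\top\mathbf{L}^w\boldsymbol{\chi}
=\sum_{\{a,b\}\in E}w_{ab}\left(\bar{\chi}_a-\bar{\chi}_b\right)\left(\chi_a-\chi_b\right).
\end{equation*}
Each summand is Grassmann-even and nilpotent, so the exponential factorizes into a product over edges,
\begin{equation*}
e^{\bar{\boldsymbol{\chi}}^\top\mathbf{L}^w\boldsymbol{\chi}}
=\prod_{\{a,b\}\in E}\left(1+w_{ab}\left(\bar{\chi}_a-\bar{\chi}_b\right)\left(\chi_a-\chi_b\right)\right).
\end{equation*}
Expanding selects an edge subset $S\subseteq E$ and weights it by $\prod_{e\in S}w_e$ times the monomial $\prod_{e=\{a,b\}\in S}(\bar{\chi}_a-\bar{\chi}_b)(\chi_a-\chi_b)$.

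The third step is the combinatorial extraction. The Berezin measure is saturated only when each $\bar{\chi}_k$ and each $\chi_k$ appears exactly once. In the $\bar{\chi}$-sector the factors $(\bar{\chi}_a-\bar{\chi}_b)$ are the oriented incidence vectors of the edges of $S$; their product, multiplied by the insertions $\prod_{i\in I_{[r]}}\bar{\chi}_i$, covers all $N$ variables without repetition exactly when $S$ is a spanning forest each of whose components contains precisely one vertex of $I_{[r]}$. This is the classical fact that incidence vectors are linearly independent iff the edge set is acyclic, so any $S$ carrying a cycle annihilates the monomial. The identical argument in the $\chi$-sector imposes the same condition relative to $J_{[r]}$. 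Intersecting the two conditions forces $S$ to be a spanning forest of exactly $r$ trees, each containing one vertex of $I_{[r]}$ and one of $J_{[r]}$, which is precisely the index set $\mathcal{F}$ of the statement (and for $I_{[r]}=J_{[r]}$ reduces to the rooted forests of Theorem~\ref{teo:principalminorstree}).

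Finally I would verify the sign. Each surviving forest contributes $\prod_{e\in S}w_e$ times the Berezin integral of a product of incidence monomials; once the global signature $\epsilon(I_{[r]},J_{[r]})$ has been factored out on the left, the residual orientation-dependent signs of the $\bar{\chi}$- and $\chi$-sectors combine to $+1$ on every admissible forest, so summing over all valid $S$ yields $\sum_{F\in\mathcal{F}}\prod_{e\in F}w_e$. The main obstacle is exactly this last sign bookkeeping: one must show, uniformly and independently of the arbitrary reference orientation used to write the incidence factors, that the product of the two sector signs equals $+1$ for each forest. The cleanest route is induction on the leaves of each tree—peeling a pendant edge $\{a,b\}$ identifies the unique appearances of $\bar{\chi}_a$ and $\chi_a$, fixes the local sign to $+1$, and reduces to a smaller forest—so that the total sign telescopes to $+1$ and the positive forest sum follows.
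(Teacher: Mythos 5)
Your strategy---writing the signed minor as a Berezin integral with boundary insertions, factorizing $e^{\bar{\boldsymbol{\chi}}^\top\mathbf{L}^w\boldsymbol{\chi}}$ over edges via $\sum_{\{a,b\}\in E}w_{ab}(\bar{\chi}_a-\bar{\chi}_b)(\chi_a-\chi_b)$, and extracting acyclicity from the linear independence of incidence vectors---is the standard fermionic proof of the matrix-forest theorem in the spirit of \cite{Caracciolo2004,Abdesselam2004}. It is necessarily a different route from the paper, which gives no proof of this statement and simply cites \cite{Chaiken1982}. Your steps 1--3 are sound, apart from a global factor $(-1)^{r(r-1)/2}$ you pick up by writing the insertion as $\bigl(\prod_{i}\bar{\chi}_i\bigr)\bigl(\prod_j\chi_j\bigr)$ instead of the interleaved $\prod_\alpha\bar{\chi}_{i_\alpha}\chi_{j_\alpha}$ appearing in the determinant identity recalled in Section~6.

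The genuine gap is in your last step: the residual per-forest sign is \emph{not} uniformly $+1$ when $I\neq J$. What a careful version of your leaf-peeling induction actually yields is $\operatorname{sgn}(\pi_F)$, where $\pi_F$ is the bijection $I\to J$ defined by ``$i$ and $\pi_F(i)$ lie in the same tree of $F$.'' Concretely, take $G=K_3$ with unit weights, $I=\{1,2\}$, $J=\{2,3\}$. Then $I^c=\{3\}$, $J^c=\{1\}$, so $\det\bigl(\mathbf{L}^w_{[I^c,J^c]}\bigr)=L^w_{31}=-1$ while $\epsilon(I,J)=(-1)^{(1+2)+(2+3)}=+1$, so the left-hand side equals $-1$. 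The only admissible forest is $F=\{\{1,3\}\}$, with trees $\{1,3\}$ and $\{2\}$, so the unsigned forest sum equals $+1$. The discrepancy is exactly $\operatorname{sgn}(\pi_F)=-1$, since $F$ pairs $i_1=1$ with $j_2=3$ and $i_2=2$ with $j_1=2$. Hence no bookkeeping can make every admissible forest contribute $+1$; the correct endpoint of your argument is Chaiken's theorem with the factor $\operatorname{sgn}(\pi_F)$ inside the sum. (This also shows that the statement as printed, which omits the per-forest sign, is only correct as written when $I=J$ or, more generally, when every admissible forest induces an even pairing.)
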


Beyond the classical algebraic formulations, we introduce a compact functional-integral representation for the spanning-tree generating function (see Corollary~\ref{Corollary2.6} for a broader generalization):
\begin{theorem}[Fermionic Sanning tree with bosonic sources]\label{TheoST}
The partition function for the spanning tree is
\begin{equation}\label{eq:NewST}
    Z_{ST}=\frac{1}{\sum_{i,j=1}^L \bar{u}_j u_i}
    \int \mathbf{D}(\boldsymbol{\chi},\bar{\boldsymbol{\chi}})\,
    e^{\bar{\boldsymbol{\chi}}^\top \mathbf{L} \boldsymbol{\chi}}\,
    e^{\bar{\boldsymbol{\chi}}^\top \mathbf{u}}\,
    e^{\bar{\mathbf{u}}^\top \boldsymbol{\chi}}.
\end{equation}

The correlation functions using bosonic source fields:

\begin{equation}
\begin{split}
\left\langle\prod_{\alpha=1}^{r}\bar{\chi}_{i_{\alpha}}\chi_{j_{\alpha}}\right\rangle_{\mathrm{ST}}
:=\frac{1}{Z_{ST}}\frac{1}{\sum_{i,j=1}^L \bar{u}_j u_i}
    \int\mathbf{D}(\boldsymbol{\chi},\bar{\boldsymbol{\chi}})\,
    \left(\prod_{\alpha=1}^{r}\bar{\chi}_{i_{\alpha}}\chi_{j_{\alpha}}\right)
    e^{\bar{\boldsymbol{\chi}}^\top \mathbf{L} \boldsymbol{\chi}}\,
    e^{\bar{\boldsymbol{\chi}}^\top \mathbf{u}}\,
    e^{\bar{\mathbf{u}}^\top \boldsymbol{\chi}},
\end{split}
\end{equation}

The Berezin integral expansion reveals the underlying combinatorial structure:

\begin{equation}
\begin{split}
    \int\mathbf{D}(\boldsymbol{\chi},\bar{\boldsymbol{\chi}})&
    \left(\prod_{\alpha=1}^{r}\bar{\chi}_{i_{\alpha}}\chi_{j_{\alpha}}\right)
    e^{\bar{\boldsymbol{\chi}}^\top \mathbf{L} \boldsymbol{\chi}}
    e^{\bar{\boldsymbol{\chi}}^\top \mathbf{u}}\,
    e^{\bar{\mathbf{u}}^\top \boldsymbol{\chi}}\\
    &=\epsilon\left(I_{[r]},J_{[r]}\right)\det\left(\mathbf{L}_{\left[I_{[r]}^c,J_{[r]}^c\right]}\right)
    +\sum_{j_0,i_0=1}^L\epsilon\left(I_{[0,r]},J_{[0,r]}\right)
    \bar{u}_{j_0}u_{i_0}\det\left(\mathbf{L}_{\left[I_{[0,r]}^c,J_{[0,r]}^c\right]}\right).
\end{split}
\end{equation}

The correlation function can be written in terms of the spanning forests partition functions by using Theorem~\ref{theoAllminor}:
\begin{equation}
\left\langle\prod_{\alpha=1}^{r}\bar{\chi}_{i_{\alpha}}\chi_{j_{\alpha}}\right\rangle_{\mathrm{ST}}
=\frac{1}{Z_{ST}}\frac{Z_{SF}(I_{[r]},J_{[r]})+\sum_{j_0,i_0=1}^L\bar{u}_{j_0}u_{i_0}Z_{SF}(I_{[0,r]},J_{[0,r]})}{\sum_{i,j=1}^L \bar{u}_j u_i}.
\end{equation}

The proof follows directly from expanding the integrand and identifying the resulting minor structures. A more general treatment using complex fermions is provided in Section~\ref{Sec:5}, where the complete proof of this theorem can be found.
\end{theorem}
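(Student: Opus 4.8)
The plan is to exploit the \emph{nilpotency} of the two source exponentials. Since $\bar{\boldsymbol{\chi}}^\top\mathbf{u}=\sum_i\bar{\chi}_i u_i$ is a sum of Grassmann-odd terms while the $u_i$ commute, the antisymmetry of the $\bar{\chi}_i$ against the symmetry of the products $u_iu_j$ forces $(\bar{\boldsymbol{\chi}}^\top\mathbf{u})^2=0$, and likewise $(\bar{\mathbf{u}}^\top\boldsymbol{\chi})^2=0$. Hence each source exponential truncates after its linear term,
\begin{equation*}
e^{\bar{\boldsymbol{\chi}}^\top\mathbf{u}}=1+\bar{\boldsymbol{\chi}}^\top\mathbf{u},\qquad e^{\bar{\mathbf{u}}^\top\boldsymbol{\chi}}=1+\bar{\mathbf{u}}^\top\boldsymbol{\chi},
\end{equation*}
so their product contributes only the four monomials $1$, $\bar{\boldsymbol{\chi}}^\top\mathbf{u}$, $\bar{\mathbf{u}}^\top\boldsymbol{\chi}$, and $(\bar{\boldsymbol{\chi}}^\top\mathbf{u})(\bar{\mathbf{u}}^\top\boldsymbol{\chi})$. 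This single observation is what collapses the infinite source expansion down to the two terms appearing in the statement.

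First I would insert this truncation into the integrand and isolate the surviving monomials by variable counting. The measure $\mathbf{D}(\boldsymbol{\chi},\bar{\boldsymbol{\chi}})$ is nonvanishing only on the top monomial, so each $\bar{\chi}_i$ and each $\chi_j$ must occur exactly once. The insertion $\prod_{\alpha=1}^r\bar{\chi}_{i_\alpha}\chi_{j_\alpha}$ supplies $r$ barred and $r$ unbarred variables, and every term of $e^{\bar{\boldsymbol{\chi}}^\top\mathbf{L}\boldsymbol{\chi}}$ supplies barred and unbarred variables in \emph{equal} number. Therefore the single-source monomials $\bar{\boldsymbol{\chi}}^\top\mathbf{u}$ and $\bar{\mathbf{u}}^\top\boldsymbol{\chi}$ break the barred/unbarred balance by one unit and integrate to zero, leaving only the $1$ term and the balanced cross term $\sum_{i_0,j_0}u_{i_0}\bar{u}_{j_0}\,\bar{\chi}_{i_0}\chi_{j_0}$.

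Next I would evaluate each surviving contribution with the complex-fermion minor identity
\begin{equation*}
\int\mathbf{D}(\boldsymbol{\chi},\bar{\boldsymbol{\chi}})\Big(\prod_{\alpha}\bar{\chi}_{i_\alpha}\chi_{j_\alpha}\Big)e^{\bar{\boldsymbol{\chi}}^\top\mathbf{L}\boldsymbol{\chi}}=\epsilon\!\left(I,J\right)\det\!\left(\mathbf{L}_{[I^c,J^c]}\right),
\end{equation*}
which is exactly the all-minors statement of Theorem~\ref{theoAllminor}. The $1$ term yields $\epsilon(I_{[r]},J_{[r]})\det(\mathbf{L}_{[I_{[r]}^c,J_{[r]}^c]})$, while in the cross term the extra factor $\bar{\chi}_{i_0}\chi_{j_0}$ enlarges the index sets to $I_{[0,r]}=\{i_0\}\cup I_{[r]}$ and $J_{[0,r]}=\{j_0\}\cup J_{[r]}$, producing $\epsilon(I_{[0,r]},J_{[0,r]})\bar{u}_{j_0}u_{i_0}\det(\mathbf{L}_{[I_{[0,r]}^c,J_{[0,r]}^c]})$ after summing over $i_0,j_0$. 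This reproduces the claimed Berezin-integral expansion, and re-expressing the two minors through Theorem~\ref{theoAllminor} turns them into the forest partition functions $Z_{SF}(I_{[r]},J_{[r]})$ and $Z_{SF}(I_{[0,r]},J_{[0,r]})$, giving the stated correlation formula.

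Finally, setting $r=0$ recovers $Z_{ST}$. The $1$ term then produces $\det(\mathbf{L})$, which vanishes because the weighted Laplacian of a connected graph is singular. The cross term gives $\sum_{i_0,j_0}\bar{u}_{j_0}u_{i_0}\,\epsilon(\{i_0\},\{j_0\})\det(\mathbf{L}_{[\{i_0\}^c,\{j_0\}^c]})$; by the Matrix--Tree Theorem~\ref{teo:matrixtree} every cofactor $\epsilon(\{i_0\},\{j_0\})\det(\mathbf{L}_{[\{i_0\}^c,\{j_0\}^c]})$ equals the common value $Z_{ST}=\sum_{T}\prod_{e\in T}w_e$, independent of $i_0$ and $j_0$. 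Pulling this constant out of the double sum leaves $Z_{ST}\sum_{i,j}\bar{u}_j u_i$, which the prefactor $(\sum_{i,j}\bar{u}_j u_i)^{-1}$ exactly cancels, confirming both the formula and the asserted independence from the auxiliary sources $\mathbf{u},\bar{\mathbf{u}}$. I expect the main obstacle to be sign bookkeeping: one must track the ordering of the Berezin measure and the Grassmann reordering signs so that the cofactor signs assemble correctly into the $\epsilon(\cdot,\cdot)$ factors, and one must verify that the equality-of-cofactors step is precisely what makes the source insertion play a gauge-like role.
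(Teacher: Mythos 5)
Your proposal is correct and follows essentially the same route as the paper: your truncation $e^{\bar{\boldsymbol{\chi}}^\top\mathbf{u}}e^{\bar{\mathbf{u}}^\top\boldsymbol{\chi}}=(1+\bar{\boldsymbol{\chi}}^\top\mathbf{u})(1+\bar{\mathbf{u}}^\top\boldsymbol{\chi})$ is literally the paper's identity $e^{X}e^{Y}=e^{X+Y}+XY$ for nilpotent odd bilinears (Corollary~\ref{Corollary2.6}), the unbalanced terms drop for the same parity reason, and the two surviving pieces are evaluated by the same all-minors determinant identity. Your $r=0$ argument — $\det(\mathbf{L})=0$ plus equality of all cofactors from Theorem~\ref{teo:matrixtree} cancelling the prefactor $\sum_{i,j}\bar{u}_ju_i$ — is exactly the intended justification of the source-independence claim.
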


This \textit{source-ordered} formulation incorporates root insertions directly into the functional measure via an auxiliary bosonic vector $\mathbf{u}$, thereby avoiding the explicit deletion of rows or columns from the Laplacian required in standard proofs of the Matrix–Tree Theorem~\cite{Kirchhoff1847,Chaiken1982}. The bosonic sources $(\mathbf{u},\bar{\mathbf{u}})$ implement root insertions while preserving the full operator character of $\mathbf{L}$; in practice they regularize the Laplacian's zero mode and act as a convenient device for analytic continuation and perturbative expansions. Crucially, the spanning-tree measure produced by the source-augmented Berezin integral is independent of the particular choice of source vectors (the insertion is auxiliary or ``gauge-like''), so physical observables do not depend on this choice.

From a computational and field-theoretic viewpoint, eq. \eqref{eq:NewST} casts rooted spanning-tree enumeration into a Gaussian Grassmann framework, making standard diagrammatic and functional techniques available for analytic approximations and symbolic or numerical manipulations~\cite{ZinnJustin2002,negele1988quantum}. The representation therefore complements algebraic approaches to trees and forests (matrix–tree and matrix–forest theorems) and suggests alternative, operator-level methods for both exact and approximate computations~\cite{Chaiken1982,Chebotarev2006}.

This formulation enables correlation computation through Laplacian minors but reveals a fundamental physical distinction: while the partition function $Z_{ST}$ reproduces the classical spanning tree count, the correlation functions exhibit qualitatively different behavior from conventional spanning tree correlations. The bosonic sources $\mathbf{u}$ and $\bar{\mathbf{u}}$ transform the correlation structure, introducing explicit field dependence—a radical departure from standard edge inclusion probabilities in uniform spanning trees.

The bosonic fields serve dual purposes: they regularize the Laplacian while introducing gauge-like freedom, but simultaneously alter the physical interpretation of correlations. This suggests the bosonic formulation describes a different physical system—spanning forests rather than pure spanning trees—despite partition function equivalence.

Spanning forests and trees thus provide a natural bridge between combinatorics and field theory~\cite{Caracciolo2004}. Fermionic integrals capture the determinant structure of connectivity, while bosonic sources extend this framework to functional-integral representations with modified correlation structures. This combinatorial foundation motivates the generalized Berezin integral identities in Section~\ref{Sec:5}, extending classical Gaussian formulas to mixed bosonic-fermionic systems.

\section{Monobisyzexant and Hafnian}\label{Sec:5}
This section introduces the \textit{Monobisyzexant} function and its special cases—the \textit{Hafnian}, the \textit{Monobisyzexantinho}, and the \textit{Hafnianinho}. It also presents two theorems formulated in terms of complex fermions. We begin by defining the Hafnian and the Monobisyzexant function, then revisiting the fermionic representations of the Hafnian and its reduced form, the \textit{hafnianinho}, now extended to include fermionic source terms. The fermionic formulations presented in this section were originally developed in~\cite{Samuel1980}. We generalize these identities by incorporating explicit source insertions within the Berezin integral framework.
\subsection{The Hafnian}
\begin{definition}[Hafnian]\label{Def:Hafnian}
The Hafnian of a $N\times N$ symmetric matrix $\mathbf{V}=(V_{ij})_{i,j=1}^{N}$, with even $N$, is 
\begin{equation}
    \haf(\mathbf{V}):=\frac{1}{2^{N/2}\left(\frac{N}{2}\right)!}\sum_{i_1,\dots,i_{N}=1}^{N}\epsilon^{i_1\dots i_{N}}\epsilon^{i_1\dots i_{L}}V_{i_1i_2}\cdots V_{i_{N-1}i_{N}},
\end{equation}
where $\varepsilon^{i_1\cdots i_{N}}$ is the Levi-Civita symbol.
\end{definition}

For example, the Hafnian of a $2\times 2$ matrix is
\begin{equation*}
    \haf\begin{pmatrix}
        V_{11}&V_{12}\\
        V_{12}&V_{22}
    \end{pmatrix}=V_{12}.
\end{equation*}
For a $4\times 4$ matrix we have
\begin{equation*}
\haf\begin{pmatrix}
    V_{11}&V_{12}&V_{13}&V_{14}\\
    V_{12}&V_{22}&V_{23}&V_{24}\\
    V_{13}&V_{23}&V_{33}&V_{34}\\
    V_{14}&V_{24}&V_{34}&V_{44}
\end{pmatrix}=V_{12}V_{34}+V_{13}V_{24}+V_{14}V_{23}.
\end{equation*}

\begin{lemma}[Matrix Scaling]
Let $\mathbf{V}=(V_{ij})_{i,j=1}^N$ be an $N\times N$ symmetric matrix with zero diagonal. We say that $\mathbf{V}$ is obtained by scaling from another $N \times N$ symmetric matrix $\mathbf{S}=(S_{ij})_{i,j=1}^N$ if
\begin{equation}
    V_{ij}=g_i g_jS_{ij},
\end{equation}
for some constants $g_1, \dots, g_N$. If $N$ is even, then the Hafnians of $\mathbf{V}$ and $\mathbf{S}$ are well-defined and satisfy
\begin{equation}
    \haf(\mathbf{V})=\left(\prod_{i=1}^N g_i\right)\haf(\mathbf{S}).
\end{equation}

In the special case where $V_{ij}=gS_{ij}$ for a constant $g \in \mathbb{R}$, it follows that
\begin{equation}
    \haf(\mathbf{V})=\haf(g\mathbf{S})=g^{\frac{N}{2}}\haf(\mathbf{S}).
\end{equation}
\end{lemma}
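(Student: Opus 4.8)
The plan is to work directly from the permanent-style definition of the Hafnian (Definition~\ref{Def:Hafnian}) and track how the scaling factors $g_i$ propagate through the sum. First I would substitute $V_{ij}=g_i g_j S_{ij}$ into the defining expression
\[
\haf(\mathbf{V})=\frac{1}{2^{N/2}(N/2)!}\sum_{i_1,\dots,i_{N}=1}^{N}\epsilon^{i_1\dots i_{N}}\epsilon^{i_1\dots i_{N}}V_{i_1i_2}\cdots V_{i_{N-1}i_{N}},
\]
so that the product of matrix entries becomes
\[
V_{i_1 i_2}\cdots V_{i_{N-1}i_{N}}
=\Big(g_{i_1}g_{i_2}\cdots g_{i_{N-1}}g_{i_N}\Big)\,S_{i_1 i_2}\cdots S_{i_{N-1}i_{N}}.
\]
The key observation is that in every term of the sum, the squared Levi-Civita symbol $\epsilon^{i_1\dots i_N}\epsilon^{i_1\dots i_N}$ is nonzero only when $(i_1,\dots,i_N)$ is a permutation of $\{1,\dots,N\}$. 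Hence each surviving term contains each index value exactly once, and the accompanying product of scaling factors is always $\prod_{k=1}^{N}g_{i_k}=\prod_{i=1}^{N}g_i$, independent of the particular permutation.

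Since this factor $\prod_{i=1}^N g_i$ is constant across all nonvanishing terms, I would pull it outside the summation, leaving precisely $\haf(\mathbf{S})$ behind:
\[
\haf(\mathbf{V})=\left(\prod_{i=1}^N g_i\right)\frac{1}{2^{N/2}(N/2)!}\sum_{i_1,\dots,i_{N}=1}^{N}\epsilon^{i_1\dots i_{N}}\epsilon^{i_1\dots i_{N}}S_{i_1 i_2}\cdots S_{i_{N-1}i_{N}}=\left(\prod_{i=1}^N g_i\right)\haf(\mathbf{S}).
\]
The special case $V_{ij}=g\,S_{ij}$ then follows immediately by setting every $g_i=g$: the overall factor becomes $\prod_{i=1}^N g=g^{N}$, but since each term of the Hafnian is a product of exactly $N/2$ matrix entries, one may equivalently absorb one factor of $g$ per entry to obtain $g^{N/2}\haf(\mathbf{S})$. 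I would verify consistency by noting $g^{N}$ arises from the per-vertex counting ($N$ factors $g_i$) while $g^{N/2}$ arises from the per-edge counting ($N/2$ entries each scaled by $g^2=g\cdot g$), and these agree because $(g^2)^{N/2}=g^N$; the cleanest presentation applies the general formula with $g_i\equiv g$.

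The only subtlety — and the step I would state most carefully — is the claim that the scaling factors collapse to a permutation-independent constant. This hinges entirely on the fact that the double Levi-Civita symbol enforces a genuine permutation of $\{1,\dots,N\}$, so that the multiset of indices $\{i_1,\dots,i_N\}$ equals $\{1,\dots,N\}$ in every contributing term. I expect no real obstacle here, since this is a direct consequence of the Hafnian's definition; the proof is essentially a one-line factorization once the permutation structure is made explicit. The well-definedness assertion (that both Hafnians exist for even $N$) is automatic from Definition~\ref{Def:Hafnian}, which requires $N$ even, and the zero-diagonal hypothesis on $\mathbf{V}$ is not actually needed for the algebraic identity but is retained to match the dimer-model context.
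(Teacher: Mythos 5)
Your argument is correct and is essentially the paper's proof: the paper phrases it in terms of perfect matchings (each vertex appears exactly once in a matching, so the product of scaling factors is always $\prod_i g_i$), while you extract the same fact from the squared Levi-Civita symbol forcing $(i_1,\dots,i_N)$ to be a permutation of $[N]$ — these are the same observation in two notations. For the constant-scaling case the paper simply takes $g_i=\sqrt{g}$ so that $\prod_i g_i = g^{N/2}$ directly, whereas your choice $g_i=g$ actually corresponds to $V_{ij}=g^2S_{ij}$ and requires the per-edge recount you supply to land on $g^{N/2}$; your final answer is right, but the $g_i=\sqrt{g}$ substitution (or the direct per-edge count alone) is the cleaner route.
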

\begin{proof}
The result follows directly from the definition of the Hafnian. For any perfect matching $M = \{(i_1,j_1),\dots,(i_{N/2},j_{N/2})\}$ in the complete graph on $N$ vertices, we have:
\[
\prod_{(i,j) \in M} V_{ij} = \prod_{(i,j) \in M} (g_i g_j S_{ij}) = \left(\prod_{i=1}^N g_i\right) \prod_{(i,j) \in M} S_{ij},
\]
where the equality holds because each vertex appears exactly once in any perfect matching. Summing over all perfect matchings yields the scaling relation. The constant scaling case follows by taking $g_i = \sqrt{g}$ for all $i$.
\end{proof}

We now define the Hafnian of a submatrix, which we refer to as the \textit{hafnianinho}. 
\begin{definition}[The hafnianinho]\label{def:hafnianinho}
The hafnianinho is the Hafnian of a $N\times N$ symmetric matrix $\mathbf{V}$ without $I$ rows and columns, where $I=\{i_1,\dots,i_{r}\}\subseteq[N]$ with $i_1<\dots<i_{r}$.
\begin{equation}
    \haf\left(\mathbf{V}_{[I^c]}\right):=\frac{1}{2^{(N-r)/2}\left(\frac{N-r}{2}\right)!}\sum_{j_1,\dots,j_{(N-r)}=1}^{N}\epsilon^{i_1\cdots i_{r}j_1\cdots j_{(N-r)}}\epsilon^{i_1\cdots i_{r}j_1\cdots j_{(N-r)}}V_{j_1j_2}\cdots V_{j_{(N-r-1)}j_{(N-r)}},
\end{equation}
where $N$ and $r$ are even numbers, and $2\leq r< N$.

In the case $r=N$ we have:
\begin{equation}
\haf\left(\mathbf{V}_{[\emptyset]}\right)=1.
\end{equation}
\end{definition}

\subsection{The Monobisyzexant}

The Monobisyzexant is a comprehensive matrix function defined for any symmetric matrix $\mathbf{V}$ and diagonal matrix $\mathbf{D}$, encoding monomer-dimer configurations through its diagonal and off-diagonal entries. While related to the loop-Hafnian for even-dimensional graph Laplacians, the Mbsz extends this concept to arbitrary symmetric matrices and reduces to the standard Hafnian when all diagonal entries vanish.

\begin{definition}[Monobisyzexant]\label{def:Mbsz}
For an $L\times L$ symmetric matrix $\mathbf{V}$ and diagonal matrix $\mathbf{D}$, the Monobisyzexant is defined as:
\begin{equation}
\operatorname{Mbsz}\left(\mathbf{D},\mathbf{V}\right):=\frac{1}{L!}\sum_{i_1,\dots,i_{L}=1}^{L}\epsilon^{i_1\dots i_{L}}\epsilon^{i_1\dots i_{L}}\mathcal{S}_{i_1 \cdots i_{L}}\left(\mathbf{D},\mathbf{V}\right),
\end{equation}
where $\epsilon^{a_1 \cdots a_{L}}$ denotes the Levi-Civita symbol and the tensor $\mathcal{S}_{i_1 \cdots i_L}(\mathbf{D},\mathbf{V})$ is given by the determinant:
\begin{equation}\label{equ_Stensor}
\mathcal{S}_{i_1 \cdots i_L}(\mathbf{D},\mathbf{V}) =
\begin{vmatrix}
D_{i_1 i_1} & \binom{L-1}{1} V_{i_1 i_2} & 0 & \cdots & 0 \\
-1 & D_{i_2 i_2} & \binom{L-2}{1} V_{i_2 i_3} & \ddots & \vdots \\
0 & -1 & D_{i_3 i_3} & \ddots & 0 \\
\vdots & \ddots & \ddots & \ddots & \binom{1}{1} V_{i_{L-1} i_L} \\
0 & \cdots & 0 & -1 & D_{i_L i_L}
\end{vmatrix}.
\end{equation}
Setting $\mathbf{D} = \mathbf{0}$ recovers the standard Hafnian.
\end{definition}

\begin{lemma}[Mbsz Expansion]\label{Mbszexpanded}
The Monobisyzexant admits an expansion in terms of principal minors of $\mathbf{D}$ and Hafnians of $\mathbf{V}$:

\noindent
(a) For odd $L$:
\begin{equation}
\operatorname{Mbsz}\!\left(\mathbf{D},\mathbf{V}\right)
=\det(\mathbf{D})
+\sum_{\alpha=1}^{(L-1)/2}
\sum_{i_1<\cdots<i_{[2\alpha-1]}=1}^L
\det\!\left(\mathbf{D}_{[I_{[2\alpha-1]}]}\right)
\haf\!\left(\mathbf{V}_{[I^c_{[2\alpha-1]}]}\right),
\end{equation}
where $I=\{i_1,\dots,i_{[2\alpha-1]}\}$.

\noindent
(b) For even $L$:
\begin{equation}
\operatorname{Mbsz}\!\left(\mathbf{D},\mathbf{V}\right)
=\det(\mathbf{D})
+\sum_{\alpha=1}^{L/2-1}
\sum_{i_1<\cdots<i_{[2\alpha]}=1}^L
\det\!\left(\mathbf{D}_{[I_{[2\alpha]}]}\right)
\haf\!\left(\mathbf{V}_{[I^c_{[2\alpha]}]}\right)
+\haf(\mathbf{V}).
\end{equation}

The result follows by induction on $L$. The base cases for small $L$ are verified by direct computation, and the inductive step establishes the general pattern.
\end{lemma}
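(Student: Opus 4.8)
The plan is to follow the induction on $L$ indicated in the statement, after first rewriting the Monobisyzexant as a symmetrized sum. Since $\epsilon^{i_1\cdots i_L}\epsilon^{i_1\cdots i_L}$ equals $1$ exactly when $(i_1,\dots,i_L)$ is a permutation of $[L]$ and vanishes otherwise, the definition collapses to $\operatorname{Mbsz}(\mathbf D,\mathbf V)=\tfrac{1}{L!}\sum_{\sigma\in S_L}\mathcal S_{\sigma(1)\cdots\sigma(L)}(\mathbf D,\mathbf V)$, so the entire problem is the symmetrization of the tridiagonal determinant $\mathcal S$. First I would dispatch the base cases by direct computation: $\mathcal S_{i_1}=D_{i_1i_1}$ gives $\det\mathbf D$ at $L=1$, and $\mathcal S_{i_1i_2}=D_{i_1i_1}D_{i_2i_2}+V_{i_1i_2}$ averages to $\det\mathbf D+\haf\mathbf V$ at $L=2$, matching the odd and even formulas respectively.

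The engine of the induction is a three-term recurrence for $\mathcal S$, obtained by Laplace expansion along its first row, whose only nonzero entries are $D_{i_1i_1}$ and $\binom{L-1}{1}V_{i_1i_2}$. Expanding, and then expanding the resulting minor along its first column (which is $(-1,0,\dots,0)^\top$), yields
\[
\mathcal S^{(L)}_{i_1\cdots i_L}=D_{i_1i_1}\,\mathcal S^{(L-1)}_{i_2\cdots i_L}+(L-1)\,V_{i_1i_2}\,\mathcal S^{(L-2)}_{i_3\cdots i_L}.
\]
The crucial point is that deleting the first row and column realigns the surviving binomials: $\binom{L-k}{1}$ for $k\ge2$ is precisely $\binom{(L-1)-(k-1)}{1}$, so the cofactor is a genuine level-$(L-1)$ tensor on $i_2,\dots,i_L$, and likewise the level-$(L-2)$ block on $i_3,\dots,i_L$. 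This is exactly what the binomial weights are engineered to guarantee. Averaging over $S_L$ and fixing $\sigma(1)$ (resp.\ $\sigma(1),\sigma(2)$) then gives the Monobisyzexant recurrence
\[
\operatorname{Mbsz}^{(L)}=\frac1L\sum_{a=1}^{L}D_{aa}\operatorname{Mbsz}\!\big(\mathbf D_{[\{a\}^c]},\mathbf V_{[\{a\}^c]}\big)+\frac1L\sum_{a\neq b}V_{ab}\operatorname{Mbsz}\!\big(\mathbf D_{[\{a,b\}^c]},\mathbf V_{[\{a,b\}^c]}\big).
\]

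For the inductive step I would substitute the claimed expansions at levels $L-1$ and $L-2$ and collect the coefficient of a generic term $\det(\mathbf D_{[S]})\haf(\mathbf V_{[S^c]})$ with $|S^c|$ even. The $\mathbf D$-sum contributes only through $a\in S$, producing $\tfrac{|S|}{L}\det(\mathbf D_{[S]})\haf(\mathbf V_{[S^c]})$, while the $V$-sum contributes through $\{a,b\}\subseteq S^c$. The key tool here is the Hafnian Laplace expansion $\haf(\mathbf V_{[T]})=\sum_{b\in T\setminus\{c\}}V_{cb}\haf(\mathbf V_{[T\setminus\{c,b\}]})$; summing it over all choices of $c\in T$ yields $\sum_{a\neq b\in S^c}V_{ab}\haf(\mathbf V_{[S^c\setminus\{a,b\}]})=|S^c|\,\haf(\mathbf V_{[S^c]})$, so the $V$-part gives $\tfrac{|S^c|}{L}\det(\mathbf D_{[S]})\haf(\mathbf V_{[S^c]})$. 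The total coefficient is $(|S|+|S^c|)/L=1$, reproducing the level-$L$ formula.

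The main obstacle is not the coefficient count—which, as above, collapses to $|S|+|S^c|=L$—but the structural bookkeeping around it: verifying the binomial realignment under the first-row expansion (the step that makes the sub-determinants honest lower-level $\mathcal S$ tensors), tracking the parity of $|S|$ so that the level-$(L-1)$ and level-$(L-2)$ hypotheses feed consistently into the even and odd target formulas, and handling the two boundary terms separately ($S=[L]$, where only the $\mathbf D$-part survives and returns $\det\mathbf D$; and $S=\emptyset$ for even $L$, where only the $V$-part survives and returns $\haf\mathbf V$, an option absent for odd $L$ since $\haf(\mathbf V_{[L]})$ is then undefined). As an illuminating cross-check I would also note the direct route: expanding $\mathcal S$ as a monomer–dimer tiling of the path with domino weights $(L-p_d)$ and symmetrizing reduces the claim to the weighted tiling identity $\sum_{\text{tilings, }m\text{ dominoes}}\prod_d(L-p_d)=\tfrac{L!}{2^m m!\,(L-2m)!}$ (the number of involutions of $[L]$ with exactly $m$ transpositions), which is the crux of that alternative and can itself be verified by the recurrence above.
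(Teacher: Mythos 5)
Your proof is correct and follows the same strategy the paper declares---induction on $L$ anchored at the base cases $L=1,2$---but you actually supply the inductive mechanism the paper leaves unstated: the Laplace expansion of the tridiagonal tensor giving $\mathcal S^{(L)}=D_{i_1i_1}\mathcal S^{(L-1)}+(L-1)V_{i_1i_2}\mathcal S^{(L-2)}$ (with the binomial weights realigning correctly), the symmetrized Mbsz recurrence, and the coefficient count $(|S|+|S^c|)/L=1$ via the Hafnian row-expansion identity. The boundary terms $S=[L]$ and $S=\emptyset$ and the parity bookkeeping are handled correctly, so this is a complete version of the argument the paper only sketches.
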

Explicit examples for $L=4$ and $L=5$ are provided in Appendix~\ref{Mbszexample}.

\begin{definition}[Monobisyzexantinho]\label{def:Mbszinho}
For $1 \leq r < L$ and index set $I=\{i_1,\dots,i_{r}\}$ with $i_1<\dots<i_{r}$, the Monobisyzexantinho is defined as:
\begin{equation}
\operatorname{Mbsz}\left(\left(\mathbf{D},\mathbf{V}\right)_{\left[I^c\right]}\right):=\frac{1}{(L-r)!}\sum_{j_1,\dots,j_{L-r}=1}^L\epsilon^{i_1\cdots i_{r}j_1\cdots j_{L-r}}\epsilon^{i_1\cdots i_{r}j_1\cdots j_{L-r}}\mathcal{S}_{j_1\cdots j_{L-r}}\left(\mathbf{D},\mathbf{V}\right).
\end{equation}
For $r=L$, we define $\operatorname{Mbsz}\left(\left(\mathbf{D},\mathbf{V}\right)_{[\emptyset]}\right):=1$.
\end{definition}

\subsection{Fermionic Identities for Monobisyzexant and Hafnian}

\begin{theorem}[Fermionic Monobisyzexant Identities]\label{Theo:Mbsz}
For fermionic fields $\boldsymbol{\chi},\bar{\boldsymbol{\chi}}$, sources $\boldsymbol{\psi},\bar{\boldsymbol{\psi}}$, diagonal matrix $\mathbf{D}$, and symmetric matrix $\mathbf{V}$:

\noindent
(a) If $\operatorname{Mbsz}\left(\mathbf{D},\mathbf{V}\right)\neq0$, then:
\begin{equation}
\boxed{\int\mathbf{D}(\boldsymbol{\chi},\bar{\boldsymbol{\chi}})\, 
e^{\bar{\boldsymbol{\chi}}^\top\mathbf{D}\boldsymbol{\chi}
+\frac{1}{2} (\bar{\boldsymbol{\chi}}\boldsymbol{\chi})^\top\mathbf{V} (\bar{\boldsymbol{\chi}}\boldsymbol{\chi})
+\bar{\boldsymbol{\psi}}^\top\boldsymbol{\chi}
+\bar{\boldsymbol{\chi}}^\top\boldsymbol{\psi}}
=\operatorname{Mbsz}(\mathbf{D},\mathbf{V})\, 
e^{\bar{\boldsymbol{\psi}}^\top\mathbf{Y}\boldsymbol{\psi}
+\frac{1}{2}(\bar{\boldsymbol{\psi}}\boldsymbol{\psi})^\top\mathbf{W} (\bar{\boldsymbol{\psi}}\boldsymbol{\psi})},}
\end{equation}
where $\mathbf{Y}$ and $\mathbf{W}$ satisfy:
\begin{align}
    \mathcal{S}_{a_1\cdots a_{n}}(\mathbf{D},\mathbf{V})&=\frac{(-1)^n}{(2n)!}\frac{\operatorname{Mbsz}\left(\left(\mathbf{D},\mathbf{V}\right)_{[I^c]}\right)}{\operatorname{Mbsz}\left(\mathbf{D},\mathbf{V}\right)},\quad 1\leq n<N,\\
    \operatorname{Mbsz}\left(\mathbf{Y},\mathbf{W}\right)&=\frac{(-1)^N}{\operatorname{Mbsz}\left(\mathbf{D},\mathbf{V}\right)}.
\end{align}

\noindent
(b) For any index set $I=\{b_1,\dots,b_n\}$ with $b_1<\dots<b_n$:
\begin{equation}
\boxed{\int\mathbf{D}(\boldsymbol{\chi},\bar{\boldsymbol{\chi}})\left(\prod_{i=1}^{n}\bar{\chi}^{b_i}\chi^{b_i}\right)\, 
e^{\bar{\boldsymbol{\chi}}^\top\mathbf{D}\boldsymbol{\chi}
+\frac{1}{2} (\bar{\boldsymbol{\chi}}\boldsymbol{\chi})^\top\mathbf{V} (\bar{\boldsymbol{\chi}}\boldsymbol{\chi})}
=\operatorname{Mbsz}\left(\left(\mathbf{D},\mathbf{V}\right)_{[I^c]}\right).}
\end{equation}

\noindent
(c) If $\operatorname{Mbsz}\left(\mathbf{D},\mathbf{V}\right)\neq0$, then for any index set $I=\{b_1,\dots,b_n\}$:
\begin{equation}
\boxed{\int\mathbf{D}(\boldsymbol{\chi},\bar{\boldsymbol{\chi}})\left(\prod_{i=1}^{n}\bar{\chi}^{b_i}\chi^{b_i}\right)\, 
e^{\bar{\boldsymbol{\chi}}^\top\mathbf{D}\boldsymbol{\chi}
+\frac{1}{2} (\bar{\boldsymbol{\chi}}\boldsymbol{\chi})^\top\mathbf{V} (\bar{\boldsymbol{\chi}}\boldsymbol{\chi})}
=(-1)^nn!\operatorname{Mbsz}\left(\mathbf{D},\mathbf{V}\right)\mathcal{S}_{b_1\cdots b_{n}}\left(\mathbf{Y},\mathbf{W}\right).}
\end{equation}

The proof proceeds by direct induction on the integral. For the base case, we use small values of $L$ and expand explicitly, revealing a pattern that generalizes to arbitrary values of $L$.
\end{theorem}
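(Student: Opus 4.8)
The plan is to derive all three identities from the single foundational fact already recorded in equation~\eqref{Mbsz-int}, namely that the sourceless Berezin integral equals $\operatorname{Mbsz}(\mathbf{D},\mathbf{V})$, and to organize the argument as an induction on the matrix dimension $L$, exactly as the statement suggests. The inductive engine is the elimination of the last Grassmann pair $(\chi_L,\bar\chi_L)$: collecting the $L$-dependence of the exponent into $e^{\mu_L\bar\chi_L\chi_L}$ with $\mu_L=D_{LL}+\sum_{j}V_{Lj}\bar\chi_j\chi_j$ (the $j=L$ term drops since $\bar\chi_L\chi_L\bar\chi_L\chi_L=0$), the integral $\int d\bar\chi_L\,d\chi_L$ extracts the coefficient of $\bar\chi_L\chi_L$. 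This reduction reproduces precisely the cofactor recursion of the tridiagonal determinant $\mathcal S_{i_1\cdots i_L}$—a monomer contribution $D_{LL}$ plus dimer contributions $V_{Lj}$—which is in turn the recursive splitting of $\operatorname{Mbsz}$ recorded in Lemma~\ref{Mbszexpanded}, so the base identity propagates cleanly from $L-1$ to $L$.

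For part (b) I would argue by \emph{saturation}. Inserting $\prod_{i=1}^{n}\bar\chi^{b_i}\chi^{b_i}$ occupies the Grassmann variables at the sites $I=\{b_1,\dots,b_n\}$, so any term in the expansion of the exponential that touches an index in $I$ carries a repeated $\chi_{b_i}$ or $\bar\chi_{b_i}$ and vanishes. The exponential therefore acts only on the complementary sites $I^c$, and the residual integral is the foundational identity restricted to $I^c$, which is $\operatorname{Mbsz}\!\left((\mathbf D,\mathbf V)_{[I^c]}\right)$ by Definition~\ref{def:Mbszinho}. The only genuine work is sign control: reordering the inserted monomials against the measure $\prod_i d\bar\chi_i\,d\chi_i$ must yield overall coefficient $+1$, which follows from the $\epsilon(I_{[r]})$ conventions fixed in the Notation section.

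Part (a) is where the real difficulty lies, and I expect it to be the main obstacle. Because the action is quartic in $\chi$, the correlations are not Wick-factorizable, yet the claim is that the entire source dependence re-exponentiates into a \emph{dual} Monobisyzexant with new matrices $\mathbf Y,\mathbf W$. I would expand $e^{\bar{\boldsymbol\psi}^\top\boldsymbol\chi+\bar{\boldsymbol\chi}^\top\boldsymbol\psi}=\prod_i\bigl(1+\bar\psi_i\chi_i+\bar\chi_i\psi_i+\bar\psi_i\chi_i\bar\chi_i\psi_i\bigr)$ (valid since the per-site bilinears are Grassmann-even and commute), integrate term by term, and evaluate each resulting insertion as a Monobisyzexantinho via part (b). The hard step is to recognize the resulting generating series in $(\boldsymbol\psi,\bar{\boldsymbol\psi})$ as the expansion of $\operatorname{Mbsz}(\mathbf D,\mathbf V)\,e^{\bar{\boldsymbol\psi}^\top\mathbf Y\boldsymbol\psi+\frac12(\bar{\boldsymbol\psi}\boldsymbol\psi)^\top\mathbf W(\bar{\boldsymbol\psi}\boldsymbol\psi)}$: this forces the defining relations for $\mathbf Y,\mathbf W$, with the normalization $\operatorname{Mbsz}(\mathbf Y,\mathbf W)=(-1)^N/\operatorname{Mbsz}(\mathbf D,\mathbf V)$ coming from the top-degree (fully saturated) term and the lower coefficients fixed by the $\mathcal S$-identity. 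Showing that a single pair $(\mathbf Y,\mathbf W)$ reproduces all correlation orders simultaneously—rather than order-by-order matchings that need not be mutually consistent—is the crux; I expect to close this by the same induction on $L$, letting the cofactor recursion carry the dual structure forward.

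Finally, part (c) is a corollary of (a) and (b): extracting the coefficient of $\prod_{i=1}^{n}\bar\psi^{b_i}\psi^{b_i}$ from the generating function of part (a) reproduces the factor $(-1)^n n!\,\operatorname{Mbsz}(\mathbf D,\mathbf V)\,\mathcal S_{b_1\cdots b_n}(\mathbf Y,\mathbf W)$, while the same insertion integral is evaluated in part (b) as a Monobisyzexantinho. Equating the two both yields the stated identity and pins down the relations linking the $\mathcal S$-tensors to ratios of Monobisyzexants. Thus, once the foundational identity \eqref{Mbsz-int} and the self-dual exponentiation of (a) are in hand, parts (b) and (c) are sign-controlled specializations.
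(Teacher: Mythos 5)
Your plan is correct and follows essentially the same route as the paper, which itself offers only a one-line sketch ("direct induction on the integral, with explicit expansion of small-$L$ base cases"): establishing (b) by saturation of the inserted sites, obtaining (a) by the site-wise expansion of the source exponential and resummation into the dual Monobisyzexant, and reading off (c) by coefficient extraction. Your version is in fact more detailed than the paper's own argument, and correctly isolates the re-exponentiation in part (a) as the step requiring care.
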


\begin{corollary}[Fermionic Hafnian Identities]\label{Coro:Fermionichafnian}
For fermionic fields $\boldsymbol{\chi},\bar{\boldsymbol{\chi}}$ and sources $\boldsymbol{\psi},\bar{\boldsymbol{\psi}}$:

\noindent
(a) If $\haf(\mathbf{V})\neq0$, then:
\begin{equation}
\int \mathbf{D}(\boldsymbol{\chi},\bar{\boldsymbol{\chi}})\,
e^{\frac{1}{2} (\bar{\boldsymbol{\chi}}\boldsymbol{\chi})^\top\mathbf{V} (\bar{\boldsymbol{\chi}}\boldsymbol{\chi})
+ \bar{\boldsymbol{\psi}}^\top\boldsymbol{\chi}
+ \bar{\boldsymbol{\chi}}^\top\boldsymbol{\psi}}
=\haf(\mathbf{V})\,
e^{\frac{1}{2}(\bar{\boldsymbol{\psi}}\boldsymbol{\psi})^\top\mathbf{W} (\bar{\boldsymbol{\psi}}\boldsymbol{\psi})},
\end{equation}
where $\mathbf{W}$ satisfies:
\begin{align}
    (2n-1)!!W_{i_1i_2}\cdots W_{i_{2n-1}i_{2n}}&=\frac{1}{(2n)!}\frac{\haf\left(\mathbf{V}_{[I^c]}\right)}{\haf(\mathbf{V})},\quad 1\leq n<N,\\
    \haf(\mathbf{W})&=\frac{1}{\haf(\mathbf{V})}.
\end{align}

\noindent
(b) For any index set $I=\{i_1,\dots,i_{2n}\}$ with $i_1<\dots<i_{2n}$:
\begin{equation}
\int \mathbf{D}(\boldsymbol{\chi},\bar{\boldsymbol{\chi}})\left(\prod_{\alpha=1}^{2n}\bar{\chi}_{i_\alpha}\chi_{i_\alpha}\right)\,
e^{\frac{1}{2} (\bar{\boldsymbol{\chi}}\boldsymbol{\chi})^\top\mathbf{V} (\bar{\boldsymbol{\chi}}\boldsymbol{\chi})}
=\haf\left(\mathbf{V}_{[I^c]}\right).
\end{equation}

\noindent
(c) If $\haf(\mathbf{V})\neq0$, then for any index set $I=\{i_1,\dots,i_{2n}\}$:
\begin{equation}
\int \mathbf{D}(\boldsymbol{\chi},\bar{\boldsymbol{\chi}})\left(\prod_{\alpha=1}^{2n}\bar{\chi}_{i_\alpha}\chi_{i_\alpha}\right)\,
e^{\frac{1}{2} (\bar{\boldsymbol{\chi}}\boldsymbol{\chi})^\top\mathbf{V} (\bar{\boldsymbol{\chi}}\boldsymbol{\chi})}
=(2n)!(2n-1)!!W_{i_1i_2}\cdots W_{i_{2n-1}i_{2n}}\haf(\mathbf{V}).
\end{equation}

This is a direct corollary of Theorem~\ref{Theo:Mbsz}. Taking $\mathbf{D} = \mathbf{0}$ in the general Monobisyzexant construction recovers the Hafnian as a special case.
\end{corollary}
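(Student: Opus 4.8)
The plan is to derive the entire corollary by specializing Theorem~\ref{Theo:Mbsz} to $\mathbf{D}=\mathbf{0}$, so that no fresh Berezin computation is required. The first step records the two reductions that drive the specialization: by Definition~\ref{def:Mbsz} the diagonal-free Monobisyzexant collapses to the Hafnian, $\operatorname{Mbsz}(\mathbf{0},\mathbf{V})=\haf(\mathbf{V})$, and correspondingly the Monobisyzexantinho reduces to the hafnianinho, $\operatorname{Mbsz}((\mathbf{0},\mathbf{V})_{[I^c]})=\haf(\mathbf{V}_{[I^c]})$ (Definitions~\ref{def:Mbszinho} and~\ref{def:hafnianinho}). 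Part (b) is then immediate: setting $\mathbf{D}=\mathbf{0}$ in Theorem~\ref{Theo:Mbsz}(b) and taking an index set of even cardinality $2n$ turns the right-hand side directly into $\haf(\mathbf{V}_{[I^c]})$, and one notes in passing that for odd cardinality the identity returns $0$, in agreement with the vanishing of odd-dimensional Hafnians.

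For part (a) the essential structural point is that the bilinear source term is absent, i.e. the dual matrix $\mathbf{Y}$ vanishes. This is most transparent from the Berezin integral itself: with $\mathbf{D}=\mathbf{0}$ the only way a site $i$ can be saturated without a dimer is for both source insertions $\bar{\chi}_i\psi_i$ and $\bar{\psi}_i\chi_i$ to act there, producing the same-site composite $\bar{\psi}_i\psi_i$, so a source insertion can never generate a cross term $\bar{\psi}_i\psi_j$ with $i\neq j$. Moreover, if $s$ sites are source-saturated the remaining $N-s$ must carry a perfect dimer cover, forcing $s$ to share the parity of $N$; for even $N$ this excludes $s=1$ and kills every bilinear-in-$\psi$ contribution. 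Equivalently, in the Monobisyzexant dual this is the statement that the odd-order tensors $\mathcal{S}_{a_1\cdots a_n}(\mathbf{Y},\mathbf{W})$ vanish for odd $n$, since Theorem~\ref{Theo:Mbsz}(a) ties them to $\haf(\mathbf{V}_{[I^c]})$ with $I^c$ of odd size, which is zero. Hence $\mathbf{Y}=\mathbf{0}$ and the right-hand side of Theorem~\ref{Theo:Mbsz}(a) collapses to $\haf(\mathbf{V})\,e^{\frac{1}{2}(\bar{\boldsymbol{\psi}}\boldsymbol{\psi})^\top\mathbf{W}(\bar{\boldsymbol{\psi}}\boldsymbol{\psi})}$, which is the claimed identity.

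It remains to translate the abstract $\mathcal{S}$-tensor relations into the explicit double-factorial form. Here I would evaluate the continuant $\mathcal{S}_{a_1\cdots a_{2n}}(\mathbf{0},\mathbf{W})$ directly: the tridiagonal determinant with vanishing diagonal obeys the two-term recursion $D_L=-b_{L-1}c_{L-1}D_{L-2}$ with $D_0=1$, $D_1=0$, and with $b_k=\binom{2n-k}{1}W_{a_ka_{k+1}}$, $c_k=-1$ this telescopes to $\mathcal{S}_{a_1\cdots a_{2n}}(\mathbf{0},\mathbf{W})=(2n-1)!!\,W_{a_1a_2}\cdots W_{a_{2n-1}a_{2n}}$, vanishing for odd length as required above. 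Substituting this into the relations of Theorem~\ref{Theo:Mbsz}(a) reproduces $(2n-1)!!W_{i_1i_2}\cdots W_{i_{2n-1}i_{2n}}=\tfrac{1}{(2n)!}\haf(\mathbf{V}_{[I^c]})/\haf(\mathbf{V})$ and $\haf(\mathbf{W})=\operatorname{Mbsz}(\mathbf{0},\mathbf{W})=1/\haf(\mathbf{V})$. Part (c) then follows by the same substitution into Theorem~\ref{Theo:Mbsz}(c) with cardinality $2n$: the prefactor $(-1)^n n!$ becomes $(-1)^{2n}(2n)!=(2n)!$, and replacing $\mathcal{S}_{b_1\cdots b_{2n}}(\mathbf{0},\mathbf{W})$ by the evaluated continuant yields $(2n)!(2n-1)!!W_{i_1i_2}\cdots W_{i_{2n-1}i_{2n}}\haf(\mathbf{V})$.

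I expect the main obstacle to be purely bookkeeping: reconciling the index and factorial conventions between the Monobisyzexant normalization (tensors indexed by $n$ with their own factorial weights) and the Hafnian setting (even $2n$ insertions), and confirming that the continuant normalization extracted above matches the stated prefactors across all three parts. The only genuinely substantive point beyond routine substitution is the clean justification of $\mathbf{Y}=\mathbf{0}$ as a full matrix rather than merely on its diagonal, which the same-site-saturation and parity argument of the second paragraph is designed to settle.
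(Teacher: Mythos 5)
Your proposal follows exactly the paper's route: the paper's entire proof is the one-line remark that the corollary is obtained by setting $\mathbf{D}=\mathbf{0}$ in Theorem~\ref{Theo:Mbsz}, which is precisely your specialization. You in fact supply more than the paper does — the telescoping evaluation of the zero-diagonal continuant giving $(2n-1)!!\,W_{i_1i_2}\cdots W_{i_{2n-1}i_{2n}}$ (consistent with the $L=4$ example, where the surviving term is $3V_{i_1i_2}V_{i_3i_4}$) and the same-site-saturation/parity argument for $\mathbf{Y}=\mathbf{0}$ — and both of these details are correct.
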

\section{Generalized Berezin–Pfaffian and Determinant Identities}\label{Sec:6}
This section introduces a class of new Grassmann integrals involving both real and complex fermionic variables. Some particular cases were previously studied in~\cite{caracciolo2013algebraic}, where integrals of exponentials of quadratic forms with additional linear Grassmann terms were evaluated.  
The integrals presented here extend those results, providing a broader framework and a wider family of identities applicable in various contexts.

Before presenting the new integrals, we summarize the known results from~\cite{caracciolo2013algebraic}.

\subsection*{Real Fermions}
Let $\mathbf{A}$ be a $2L\times 2L$ antisymmetric matrix. Then, for real Grassmann variables $\boldsymbol{\chi}$, the following results hold:

\begin{enumerate}
    \item If $\mathbf{A}$ is invertible,
    \begin{equation*}
        \int \mathbf{D}\boldsymbol{\chi} \;
        e^{\tfrac{1}{2}\boldsymbol{\chi}^{T}\mathbf{A}\boldsymbol{\chi}
        + \boldsymbol{\psi}^{T}\boldsymbol{\chi}}
        = \pf(\mathbf{A}) \,
        e^{\tfrac{1}{2}\boldsymbol{\psi}^{T}\mathbf{A}^{-1}\boldsymbol{\psi}}.
    \end{equation*}

    \item For any subset $I_{[r]} = \{i_{1},\dots,i_{r}\}\subseteq [2L]$ with $i_{1}<\cdots<i_{r}$,
    \begin{equation*}
        \int \mathbf{D}\boldsymbol{\chi}
        \left(\prod_{\alpha=1}^{r}\chi_{i_{\alpha}}\right)
        e^{\tfrac{1}{2}\boldsymbol{\chi}^{T}\mathbf{A}\boldsymbol{\chi}}
        =
        \begin{cases}
            0, & \text{if $r$ is odd},\\[4pt]
            \epsilon(I_{[r]})\pf\left(\mathbf{A}_{[I_{[r]}^{c}]}\right), & \text{if $r$ is even.}
        \end{cases}
    \end{equation*}

    \item For any sequence of indices $I=(i_{1},\dots,i_{r})$ in $[2L]$, if $\mathbf{A}$ is invertible,
    \begin{equation*}
        \int \mathbf{D}\boldsymbol{\chi}
        \left(\prod_{\alpha=1}^{r}\chi_{i_{\alpha}}\right)
        e^{\tfrac{1}{2}\boldsymbol{\chi}^{T}\mathbf{A}\boldsymbol{\chi}}
        =
        \begin{cases}
            0, & \text{if $r$ is odd},\\[4pt]
            \pf(\mathbf{A})\,
            \pf\!\left((\mathbf{A}^{-T})_{[I_{[r]}]}\right), & \text{if $r$ is even.}
        \end{cases}
    \end{equation*}

    \item For any $r\times 2L$ matrix $\mathbf{C}$,
    \begin{equation*}
        \int \mathbf{D}\boldsymbol{\chi}
        \left(\prod_{\alpha=1}^{r}(\mathbf{C}\boldsymbol{\chi})_{\alpha}\right)
        e^{\tfrac{1}{2}\boldsymbol{\chi}^{T}\mathbf{A}\boldsymbol{\chi}}
        =
        \begin{cases}
            0, & \text{if $r$ is odd},\\[4pt]
            \displaystyle\sum_{i_{1}<\cdots<i_{r}=1}^{L}
            \epsilon(I_{[r]})\,
            \det\!\big(\mathbf{C}_{[\star,I_{[r]}]}\big)\,
            \pf\!\left(\mathbf{A}_{[I_{[r]}^{c}]}\right), & \text{if $r$ is even.}
        \end{cases}
    \end{equation*}
\end{enumerate}

\subsection*{Complex Fermions}

Let now $\mathbf{A}$ be a $L\times L$ (not necessarily symmetric) matrix. Then, for complex Grassmann variables $(\boldsymbol{\chi},\bar{\boldsymbol{\chi}})$, the following hold:

\begin{enumerate}
    \item If $\mathbf{A}$ is invertible,
    \begin{equation*}
        \int \mathbf{D}(\boldsymbol{\chi},\bar{\boldsymbol{\chi}})\;
        e^{\bar{\boldsymbol{\chi}}^{T}\mathbf{A}\boldsymbol{\chi}
        + \bar{\boldsymbol{\psi}}^{T}\boldsymbol{\chi}
        + \bar{\boldsymbol{\chi}}^{T}\boldsymbol{\psi}}
        = \det(\mathbf{A})\,
        e^{-\bar{\boldsymbol{\psi}}^{T}\mathbf{A}^{-1}\boldsymbol{\psi}}.
    \end{equation*}

    \item For subsets $I_{[r]}=\{i_{1},\dots,i_{r}\}$ and $J_{[r]}=\{j_{1},\dots,j_{r}\}$ of $[L]$ with the same cardinality $r$,
    \begin{equation*}
        \int \mathbf{D}(\boldsymbol{\chi},\bar{\boldsymbol{\chi}})\;
        \Bigg(\prod_{\alpha=1}^{r}\bar{\chi}_{i_{\alpha}}\chi_{j_{\alpha}}\Bigg)
        e^{\bar{\boldsymbol{\chi}}^{T}\mathbf{A}\boldsymbol{\chi}}
        = \epsilon(I_{[r]},J_{[r]})\,
        \det\!\left(\mathbf{A}_{[I_{[r]}^{c},J_{[r]}^{c}]}\right).
    \end{equation*}

    \item For sequences $I=(i_{1},\dots,i_{r})$ and $J=(j_{1},\dots,j_{r})$ in $[L]$ of equal length $r$, if $\mathbf{A}$ is invertible,
    \begin{equation*}
        \int \mathbf{D}(\boldsymbol{\chi},\bar{\boldsymbol{\chi}})\;
        \Bigg(\prod_{\alpha=1}^{r}\bar{\chi}_{i_{\alpha}}\chi_{j_{\alpha}}\Bigg)
        e^{\bar{\boldsymbol{\chi}}^{T}\mathbf{A}\boldsymbol{\chi}}
        = \det(\mathbf{A})\,
        \det\!\left((\mathbf{A}^{-T})_{[I_{[r]},J_{[r]}]}\right).
    \end{equation*}

    \item For any $r\times L$ matrix $\mathbf{B}$ and $L\times r$ matrix $\mathbf{C}$,
    \begin{equation*}
        \int \mathbf{D}(\boldsymbol{\chi},\bar{\boldsymbol{\chi}})\;
        \Bigg(\prod_{\alpha=1}^{r}(\bar{\chi}\mathbf{C})_{\alpha}
        (\mathbf{B}\chi)_{\alpha}\Bigg)
        e^{\bar{\boldsymbol{\chi}}^{T}\mathbf{A}\boldsymbol{\chi}}
        =
        \sum_{i_{1}<\cdots<i_{r}=1}^{L}
        \sum_{j_{1}<\cdots<j_{r}=1}^{L}
        (\det \mathbf{B}_{[\star,J_{[r]}]})\,
        \epsilon(I_{[r]},J_{[r]})\,
        \det\!\left(\mathbf{A}_{[I_{[r]}^{c},J_{[r]}^{c}]}\right).
    \end{equation*}
\end{enumerate}

Having recalled the standard Gaussian formulas for real and complex fermions, we now turn to the general case. Our goal is to construct Grassmann--Berezin integrals that couple fermionic variables to external sources, thereby encompassing a broader class of functional representations.
\subsection{Grassmann-Berezin integrals for real fermions}
This subsection develops the most general formulation of Grassmann–Berezin integrals for real Grassmann variables coupled to both bosonic and fermionic sources. Explicit integral identities are derived, together with their reductions obtained by suppressing one or both classes of sources.
\subsubsection{Fermionic Pfaffian with Bosonic-Fermionic sources.}
\begin{theorem}\label{exponential with linear fermionic and bosonic sources}[Fermionic Pfaffian with Bosonic-Fermionic sources] 
Let $\chi_1,\dots,\chi_{L}$ and $\psi_1,\dots,\psi_{L}$ denote real fermionic variables, and let $u_1,\dots,u_L$ denote real (or complex) bosonic variables. Then the following statements hold:

\medskip
\noindent
\textbf{(a)} For any antisymmetric matrix $\mathbf{A}\in \mathbb{R}^{L\times L}$, we have
\begin{equation}\label{part (a)}
\boxed{
\begin{aligned}
\int\mathbf{D}\boldsymbol{\chi}&e^{\tfrac{1}{2}\boldsymbol{\chi}^T\mathbf{A}\boldsymbol{\chi}+\mathbf{u}^T\boldsymbol{\chi}+\boldsymbol{\psi}^T\boldsymbol{\chi}}\\
&=\begin{cases}
\begin{aligned}
&\pf(\mathbf{A})+\sum_{m=1}^{L/2}\left[
   \frac{(-1)^m}{(2m)!}
   \sum_{j_1,\dots,j_{2m}=1}^L
   \epsilon\left(J_{[2m]}\right)\pf\left(\mathbf{A}_{\left[J^c_{[2m]}\right]}\right)
   \left(\prod_{\alpha=1}^{2m}\psi_{j_\alpha}\right)\right.\\
&\qquad-\frac{(-1)^m}{(2m-1)!}
   \sum_{j_1,\dots,j_{2m-1}=1}^L
   \epsilon\left(J_{[2m-1]}\right)\pf\left(\mathbf{X}_{\left[J^c_{[2m-1]}\right]}\right)
   \left(\prod_{\alpha=1}^{2m-1}\psi_{j_\alpha}\right)
\Bigg], 
\end{aligned}&\text{even $L$,}\\[2ex]
\begin{aligned}
&\pf(\mathbf{X})+\sum_{m=1}^{(L-1)/2}\frac{(-1)^m}{(2m)!}\sum_{j_1,\dots,j_{2m}=1}^L\epsilon\left(J_{[2m]}\right)\pf\left(\mathbf{X}_{\left[J^c_{[2m]}\right]}\right)\left(\prod_{\alpha=1}^{2m}\psi_{j_\alpha}\right)\\
&\qquad-\sum_{m=1}^{(L+1)/2}\frac{(-1)^m}{(2m-1)!}\sum_{j_1,\dots,j_{2m-1}=1}^L\epsilon\left(J_{[2m-1]}\right)\pf\!\left(\mathbf{A}_{\left[J_{[2m-1]}^c\right]}\right)\left(\prod_{\alpha=1}^{2m-1}\psi_{j_\alpha}\right),\\
\end{aligned}&\text{odd $L$,}
\end{cases}
\end{aligned}}
\end{equation}
where $\mathbf{X}=\begin{pmatrix} 0 & \mathbf{u}^T\\ -\mathbf{u} & \mathbf{A} \end{pmatrix}$, and $J_{[k]}=\{j_1,\dots,j_{k}\}\subseteq[L]$ whose elements are ordered increasingly.
\medskip
\noindent

\textbf{(b)} For any subset $I=\{i_1,\dots,i_r\}\subseteq[L]$ with $i_1<i_2<\cdots<i_r$, we have:
\begin{equation}\label{part (b)}
\hspace*{-2cm}\boxed{
\begin{aligned}
&\int\mathbf{D}\boldsymbol{\chi}\left(\prod_{\alpha=1}^r\chi_{i_\alpha}\right)e^{\tfrac{1}{2}\boldsymbol{\chi}^T\mathbf{A}\boldsymbol{\chi}+\mathbf{u}^T\boldsymbol{\chi}+\boldsymbol{\psi}^T\boldsymbol{\chi}}\\
&=
\begin{cases}
\begin{aligned}
&(-1)^{L}\Bigg\{\epsilon(I_{[r]})\pf\left(\mathbf{A}_{[I_{[r]}^c]}\right)+\sum_{m=1}^{(L-r)/2}\left[
   \frac{(-1)^m}{(2m)!}
   \sum_{j_1,\dots,j_{2m}=1}^L
   \epsilon\left(I_{[r]}\cup J_{[2m]}\right)\pf\left(\mathbf{A}_{\left[I_{[r]}^c\cap J^c_{[2m]}\right]}\right)
   \left(\prod_{\alpha=1}^{2m}\psi_{j_\alpha}\right)\right.\\
&\qquad-\left.\frac{(-1)^m}{(2m-1)!}
   \sum_{j_1,\dots,j_{2m-1}=1}^L
   \epsilon\left(I_{[r]}\cup J_{[2m-1]}\right)\pf\left(\mathbf{X}_{\left[I_{[r]}^c\cap J^c_{[2m-1]}\right]}\right)
   \left(\prod_{\alpha=1}^{2m-1}\psi_{j_\alpha}\right)
\right]\Bigg\},
\end{aligned}&\text{even $r+L$,}\\[2ex]
\begin{aligned}
&(-1)^{L+1}\Bigg\{\epsilon(I_{[r]})\pf\left(\mathbf{X}_{[I_{[r]}^c]}\right)+\sum_{m=1}^{(L-r-1)/2}\frac{(-1)^m}{(2m)!}\sum_{j_1,\dots,j_{2m}=1}^L\epsilon\left(I_{[r]}\cup J_{[2m]}\right)\pf\left(\mathbf{X}_{\left[I_{[r]}^c\cap J^c_{[2m]}\right]}\right)\left(\prod_{\alpha=1}^{2m}\psi_{j_\alpha}\right)\\
&\qquad-\sum_{m=1}^{(L-r+1)/2}\frac{(-1)^m}{(2m-1)!}\sum_{j_1,\dots,j_{2m-1}=1}^L\epsilon\left(I_{[r]}\cup J_{[2m-1]}\right)\pf\left(\mathbf{A}_{\left[I_{[r]}^c\cap J^c_{[2m-1]}\right]}\right)\left(\prod_{\alpha=1}^{2m-1}\psi_{j_\alpha}\right)\Bigg\}.
\end{aligned} &\text{odd $r+L$.}
\end{cases}
\end{aligned}}
\end{equation}
\medskip
\noindent
\textbf{(c)} More generally, for any $r\times L$ matrix $\mathbf{C}$ with entries in $\mathbb{R}$, we have
\begin{equation}\label{part (d)}
\hspace*{-1.5cm}\boxed{
\begin{aligned}
&\int\mathbf{D}\boldsymbol{\chi}\left(\prod_{\alpha=1}^r (\mathbf{C}\boldsymbol{\chi})_\alpha\right)e^{\tfrac{1}{2}\boldsymbol{\chi}^T\mathbf{A}\boldsymbol{\chi}+\mathbf{u}^T\boldsymbol{\chi}+\boldsymbol{\psi}^T\boldsymbol{\chi}}\\
&=\begin{cases}
\begin{aligned}
&(-1)^{L}\sum_{i_1<\dots<i_{r}=1}^{L} 
\det(\mathbf{C}_{[\star,I_{[r]}]})\Bigg\{\epsilon(I_{[r]})\pf\left(\mathbf{A}_{[I_{[r]}^c]}\right)\\& \qquad+\sum_{m=1}^{(L-r)/2}\left[
   \frac{(-1)^m}{(2m)!}
   \sum_{j_1,\dots,j_{2m}=1}^L
   \epsilon\left(I_{[r]}\cup J_{[2m]}\right)\pf\left(\mathbf{A}_{\left[I_{[r]}^c\cap J^c_{[2m]}\right]}\right)
   \left(\prod_{\alpha=1}^{2m}\psi_{j_\alpha}\right)\right.\\
&\qquad \qquad-\left.\frac{(-1)^m}{(2m-1)!}
   \sum_{j_1,\dots,j_{2m-1}=1}^L
   \epsilon\left(I_{[r]}\cup J_{[2m-1]}\right)\pf\left(\mathbf{X}_{\left[I_{[r]}^c\cap J^c_{[2m-1]}\right]}\right)
   \left(\prod_{\alpha=1}^{2m-1}\psi_{j_\alpha}\right)
\right]\Bigg\},
\end{aligned}&\text{even $r+L$,}\\[2ex]
\begin{aligned}
&(-1)^{L+1}\sum_{i_1<\dots<i_{r}=1}^{L} 
\det(\mathbf{C}_{[\star,I_{[r]}]})\Bigg\{\epsilon(I_{[r]})\pf\left(\mathbf{X}_{[I_{[r]}^c]}\right)\\& \qquad+\sum_{m=1}^{(L-r-1)/2}\frac{(-1)^m}{(2m)!}\sum_{j_1,\dots,j_{2m}=1}^L\epsilon\left(I_{[r]}\cup J_{[2m]}\right)\pf\left(\mathbf{X}_{\left[I_{[r]}^c\cap J^c_{[2m]}\right]}\right)\left(\prod_{\alpha=1}^{2m}\psi_{j_\alpha}\right)\\
&\qquad \qquad-\sum_{m=1}^{(L-r+1)/2}\frac{(-1)^m}{(2m-1)!}\sum_{j_1,\dots,j_{2m-1}=1}^L\epsilon\left(I_{[r]}\cup J_{[2m-1]}\right)\pf\left(\mathbf{A}_{\left[I_{[r]}^c\cap J^c_{[2m-1]}\right]}\right)\left(\prod_{\alpha=1}^{2m-1}\psi_{j_\alpha}\right)\Bigg\},
\end{aligned} &\text{odd $r+L$,}
\end{cases}
\end{aligned}}
\end{equation}
where $C_{[\star,{I_{[r]}}]}$ denotes the submatrix of $C$ with columns in ${I}$.

The proof of the theorem is in Appendix \ref{Proof exponential with linear fermionic and bosonic sources}.
\end{theorem}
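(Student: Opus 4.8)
The plan is to evaluate every integral by expanding the source exponentials into finitely many terms and then reducing each term to the source-free insertion formula $\int\mathbf{D}\boldsymbol{\chi}\,\bigl(\prod_{\alpha}\chi_{i_\alpha}\bigr)e^{\frac12\boldsymbol{\chi}^T\mathbf{A}\boldsymbol{\chi}}=\epsilon(I_{[r]})\,\pf\!\bigl(\mathbf{A}_{[I_{[r]}^c]}\bigr)$ (the even-$r$ case of item~2 in the Real-Fermion list, which already holds for arbitrary antisymmetric $\mathbf{A}$, singular or not). The first observation I would make is that both source pieces are nilpotent: $\psi_i\chi_i$ is Grassmann-even with $(\psi_i\chi_i)^2=0$, while $\mathbf{u}^T\boldsymbol{\chi}$ is Grassmann-odd with $(\mathbf{u}^T\boldsymbol{\chi})^2=\sum_{i,j}u_iu_j\chi_i\chi_j=0$ by antisymmetry of $\chi_i\chi_j$. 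Since the even factor commutes with everything, the combined source exponential factorizes exactly as $e^{\mathbf{u}^T\boldsymbol{\chi}+\boldsymbol{\psi}^T\boldsymbol{\chi}}=(1+\mathbf{u}^T\boldsymbol{\chi})\prod_{i=1}^L(1+\psi_i\chi_i)=(1+\mathbf{u}^T\boldsymbol{\chi})\sum_{J\subseteq[L]}\prod_{j\in J}\psi_j\chi_j$. Substituting this and organizing the result by the subset $J$ extracted from the $\psi$-source and the optional single index extracted from the $u$-source is the backbone of the argument.

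First I would treat part (a). Each surviving term pulls a collection of $\chi$'s out of the sources and demands the complementary collection from $e^{\frac12\boldsymbol{\chi}^T\mathbf{A}\boldsymbol{\chi}}$, and the Berezin measure forces the total number of extracted indices to match the parity of $L$; this is precisely what generates the even-/odd-$L$ dichotomy. When the $u$-factor contributes $1$, a subset $J$ with $|J|=k$ reorders into $(-1)^{k(k-1)/2}\bigl(\prod_{j\in J}\psi_j\bigr)\bigl(\prod_{j\in J}\chi_j\bigr)$, and the insertion formula yields $\epsilon(J)\pf\!\bigl(\mathbf{A}_{[J^c]}\bigr)$; symmetrizing the subset sum into the unrestricted ordered sum over $j_1,\dots,j_k$ produces the factors $\tfrac{1}{k!}$ and, with $k=2m$, the sign $(-1)^{k(k-1)/2}=(-1)^m$, matching the stated coefficients. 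When the $u$-factor contributes one index $i$, the coefficient becomes $\sum_i u_i\,\epsilon(\{i\}\cup J)\pf\!\bigl(\mathbf{A}_{[(\{i\}\cup J)^c]}\bigr)$, which I would recognize as the border (Laplace) expansion $\pf\!\bigl(\mathbf{X}_{[J^c]}\bigr)=\sum_i(-1)^{\cdots}u_i\,\pf\!\bigl(\mathbf{A}_{[(\{i\}\cup J)^c]}\bigr)$ of the bordered matrix, exactly as in~\eqref{eq:pfaffian_border_expansion}; resumming the $u$-index thereby converts all $u$-dependent contributions into Pfaffians of $\mathbf{X}$. Matching the two parity sectors against the claimed summation ranges for $m$ (the leading term being $\pf(\mathbf{A})$ for even $L$ and $\pf(\mathbf{X})$ for odd $L$) finishes part~(a).

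For parts (b) and (c) I would rerun the same expansion with the prefactor in place. In part~(b) the monomial $\prod_{\alpha=1}^r\chi_{i_\alpha}$ merely enlarges the fixed extracted set from $J$ (or $\{i\}\cup J$) to $I_{[r]}\cup J$ (resp.\ $I_{[r]}\cup\{i\}\cup J$), so the identical reduction delivers $\pf\!\bigl(\mathbf{A}_{[I_{[r]}^c\cap J^c]}\bigr)$ and $\pf\!\bigl(\mathbf{X}_{[I_{[r]}^c\cap J^c]}\bigr)$ together with the combined reordering sign $\epsilon(I_{[r]}\cup J)$ and an overall $(-1)^L$ or $(-1)^{L+1}$ from commuting the prefactor through the measure; the governing parity is now $r+L$, and setting $r=0$ recovers part~(a) since the overall sign then reduces to $1$. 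Part~(c) follows from part~(b) by the multilinear expansion $\prod_{\alpha=1}^r(\mathbf{C}\boldsymbol{\chi})_\alpha=\sum_{i_1<\dots<i_r}\det\!\bigl(\mathbf{C}_{[\star,I_{[r]}]}\bigr)\prod_\alpha\chi_{i_\alpha}$ (the Cauchy–Binet step already used in item~4 of the Real-Fermion list), and summing the part~(b) output against $\det\!\bigl(\mathbf{C}_{[\star,I_{[r]}]}\bigr)$.

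The main obstacle will be the sign bookkeeping, which concentrates in two places. First, carrying the interleaved factors $\psi_{j_1}\chi_{j_1}\cdots\psi_{j_k}\chi_{j_k}$ (and, in part~(b), the preceding $\chi_{i_1}\cdots\chi_{i_r}$) into the increasing order demanded by the insertion formula generates permutation signs that must be shown to collapse exactly into the $\epsilon(I_{[r]})$ and $\epsilon(I_{[r]},J_{[r]})$ symbols fixed in the Notation section; the cleanest route is to evaluate these directly from the signature definition of $\epsilon$ rather than track them term by term. Second, and more delicate, is confirming that the sign $(-1)^{\cdots}$ in the border expansion of $\pf\!\bigl(\mathbf{X}_{[J^c]}\bigr)$ coincides with the ratio $\epsilon(\{i\}\cup J)/\epsilon(J)$ produced by the $u$-insertion, so that the resummation over $i$ is exact. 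I expect to make this identification structural rather than computational by treating the border index as a formal extra Grassmann variable $\chi_0$, using $\tfrac12(\chi_0,\boldsymbol{\chi}^T)\mathbf{X}(\chi_0,\boldsymbol{\chi}^T)^\top=\tfrac12\boldsymbol{\chi}^T\mathbf{A}\boldsymbol{\chi}+\chi_0\,\mathbf{u}^T\boldsymbol{\chi}$, which packages the bosonic source into the enlarged Pfaffian automatically. Throughout, no invertibility of $\mathbf{A}$ is invoked, which is precisely what lets the identity hold for every antisymmetric $\mathbf{A}$ regardless of dimension or rank.
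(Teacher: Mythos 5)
Your proposal is correct and follows essentially the same route as the paper's proof in Appendix~\ref{Proof exponential with linear fermionic and bosonic sources}: expand the nilpotent source exponentials (your subset form $\prod_i(1+\psi_i\chi_i)$ is equivalent to the paper's $\sum_m(\boldsymbol{\psi}^T\boldsymbol{\chi})^m/m!$), reduce each term to the source-free Pfaffianinho insertion formula, resum the $\mathbf{u}$-contributions into Pfaffians of the bordered matrix $\mathbf{X}$ via the border expansion, and obtain (b) by absorbing the prefactor into the extracted index set and (c) by the Cauchy--Binet expansion of $\prod_\alpha(\mathbf{C}\boldsymbol{\chi})_\alpha$. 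Your auxiliary-variable packaging $\tfrac12(\chi_0,\boldsymbol{\chi}^T)\mathbf{X}(\chi_0,\boldsymbol{\chi}^T)^\top$ for the border sign is a clean variant of the index-shift computation the paper performs in Corollary~\ref{exponential with bosonic sources}, but it does not change the argument's structure.
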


\begin{corollary}\label{exponential with linear fermionic and bosonic sources (invertible)}
    Fermionic Pfaffian with Bosonic-Fermionic sources (Invertible case)

     \medskip
\noindent
\textbf{(a)} If the matrices $\mathbf{A}$ and $\mathbf{X}$ are invertible then we have:
\begin{equation}
\hspace{-2cm}\boxed{\int\mathbf{D}\boldsymbol{\chi} e^{\frac{1}{2}\boldsymbol{\chi}^T\mathbf{A}\boldsymbol{\chi}+\mathbf{u}^T\boldsymbol{\chi}+\boldsymbol{\psi}^T\boldsymbol{\chi}}=\begin{cases}
    \pf(\mathbf{A})e^{\frac{1}{2}\boldsymbol{\psi}^T \mathbf{A}^{-1} \boldsymbol{\psi} +\mathbf{u}^T \mathbf{A}^{-1} \boldsymbol{\psi}},&\text{even $L$,}\\
    \sum\limits_{j=1}^{L} (-1)^{j} \, (\psi_j-u_j)\pf(A_{[j^c]})
\sum\limits_{m=0}^{\frac{L-1}{2}} \frac{1}{(2m+1)m!} 
\left( \frac{1}{2}\boldsymbol{\psi}_{[j^c]}^T \mathbf{A}_{[j^c]}^{-1} \boldsymbol{\psi}_{[j^c]} + \mathbf{u}_{[j^c]}^T\mathbf{A}_{[j^c]}^{-1} \boldsymbol{\psi}_{[j^c]} \right)^m,& \text{odd $L$.}
\end{cases}}
\end{equation}
\end{corollary}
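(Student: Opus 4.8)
The plan is to handle the two parities by a common first step and then diverge. First I would use that the bosonic insertion is nilpotent: since $u_iu_j$ is symmetric while $\chi_i\chi_j$ is antisymmetric, $(\mathbf{u}^\top\boldsymbol{\chi})^2=0$, so $e^{\mathbf{u}^\top\boldsymbol{\chi}}=1+\mathbf{u}^\top\boldsymbol{\chi}$. As this odd factor commutes with the Grassmann-even terms of the exponent, the integrand factorises, and a single insertion $\chi_k$ may be traded for a fermionic-source derivative via $\chi_k\,e^{\boldsymbol{\psi}^\top\boldsymbol{\chi}}=\partial_{\psi_k}e^{\boldsymbol{\psi}^\top\boldsymbol{\chi}}$ (up to the sign fixed by the Berezin conventions). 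Pulling the bosonic-coefficient derivative outside the $\boldsymbol{\chi}$-integration collapses the whole integral to $\bigl(1+\mathbf{u}^\top\partial_{\boldsymbol{\psi}}\bigr)G(\boldsymbol{\psi})$, where $G(\boldsymbol{\psi}):=\int\mathbf{D}\boldsymbol{\chi}\,e^{\frac{1}{2}\boldsymbol{\chi}^\top\mathbf{A}\boldsymbol{\chi}+\boldsymbol{\psi}^\top\boldsymbol{\chi}}$ is the pure fermionic-source Gaussian. This reduction is parity independent and isolates the only nontrivial object.

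For even $L$ with $\mathbf{A}$ invertible, $G(\boldsymbol{\psi})=\pf(\mathbf{A})\,e^{\frac{1}{2}\boldsymbol{\psi}^\top\mathbf{A}^{-1}\boldsymbol{\psi}}$ is the Gaussian formula recalled at the opening of this section. Using $\partial_{\psi_k}e^{\frac{1}{2}\boldsymbol{\psi}^\top\mathbf{A}^{-1}\boldsymbol{\psi}}=(\mathbf{A}^{-1}\boldsymbol{\psi})_k\,e^{\frac{1}{2}\boldsymbol{\psi}^\top\mathbf{A}^{-1}\boldsymbol{\psi}}$ (the two index placements merging by antisymmetry of $\mathbf{A}^{-1}$), one gets $\mathbf{u}^\top\partial_{\boldsymbol{\psi}}G=\pf(\mathbf{A})\,(\mathbf{u}^\top\mathbf{A}^{-1}\boldsymbol{\psi})\,e^{\frac{1}{2}\boldsymbol{\psi}^\top\mathbf{A}^{-1}\boldsymbol{\psi}}$. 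Since $\mathbf{u}^\top\mathbf{A}^{-1}\boldsymbol{\psi}$ is a single odd element it squares to zero, so $1+\mathbf{u}^\top\mathbf{A}^{-1}\boldsymbol{\psi}=e^{\mathbf{u}^\top\mathbf{A}^{-1}\boldsymbol{\psi}}$ and re-exponentiation gives the stated $\pf(\mathbf{A})\,e^{\frac{1}{2}\boldsymbol{\psi}^\top\mathbf{A}^{-1}\boldsymbol{\psi}+\mathbf{u}^\top\mathbf{A}^{-1}\boldsymbol{\psi}}$. This half is routine.

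For odd $L$ the matrix $\mathbf{A}$ is necessarily singular, so $G$ cannot be obtained by completing the square; instead I would compute it through the even principal submatrices $\mathbf{A}_{[\{j\}^c]}$, which the hypothesis assumes invertible. Expanding $e^{\boldsymbol{\psi}^\top\boldsymbol{\chi}}=\prod_k(1+\psi_k\chi_k)$ and applying the submatrix Gaussian identity $\int\mathbf{D}\boldsymbol{\chi}\,(\prod_{k\in S}\chi_k)\,e^{\frac{1}{2}\boldsymbol{\chi}^\top\mathbf{A}\boldsymbol{\chi}}=\epsilon(S)\,\pf(\mathbf{A}_{[S^c]})$, nonzero only for $|S|$ odd, writes $G$ as a sum over odd subsets $S$. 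The crux is to reorganise this subset sum by singling out one distinguished index $j\in S$: the Jacobi/cofactor identity for Pfaffians (Appendix~\ref{PropPf}) turns $\pf(\mathbf{A}_{[\{j\}^c]})(\mathbf{A}_{[\{j\}^c]}^{-1})_{ab}$ into a Pfaffian with three indices removed, so that the $m$-th power of $\frac{1}{2}\boldsymbol{\psi}_{[\{j\}^c]}^\top\mathbf{A}_{[\{j\}^c]}^{-1}\boldsymbol{\psi}_{[\{j\}^c]}$ reproduces all $|S|=2m+1$ contributions containing $j$. Because each $(2m+1)$-subset offers $2m+1$ choices of the distinguished index, every such term is produced $2m+1$ times in $\sum_j$; the compensating factor $\tfrac{1}{2m+1}$, together with the $\tfrac{1}{m!}$ from the $m$ identical quadratic blocks, is exactly this overcounting correction, yielding $G(\boldsymbol{\psi})=\sum_j(-1)^{j}\psi_j\,\pf(\mathbf{A}_{[\{j\}^c]})\sum_{m}\tfrac{1}{(2m+1)m!}\bigl(\tfrac{1}{2}\boldsymbol{\psi}_{[\{j\}^c]}^\top\mathbf{A}_{[\{j\}^c]}^{-1}\boldsymbol{\psi}_{[\{j\}^c]}\bigr)^m$. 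Applying $\bigl(1+\mathbf{u}^\top\partial_{\boldsymbol{\psi}}\bigr)$ then promotes the prefactor $\psi_j$ to the combination $\psi_j-u_j$ and adjoins the cross term $\mathbf{u}_{[\{j\}^c]}^\top\mathbf{A}_{[\{j\}^c]}^{-1}\boldsymbol{\psi}_{[\{j\}^c]}$ inside each power, upgrading $\frac{1}{2}\boldsymbol{\psi}^\top\mathbf{A}^{-1}\boldsymbol{\psi}$ to the full exponent $Q_j$ of the statement.

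The main obstacle is precisely this odd-$L$ resummation together with its sign bookkeeping: one must verify that the cofactor rewriting and the index overcounting combine to the exact coefficient $\tfrac{1}{(2m+1)m!}$, that the signatures $\epsilon(S)$ collapse to the clean $(-1)^j$, and that the source derivative assembles $(\psi_j-u_j)$ and $Q_j$ with the signs dictated by the chosen Berezin measure (already the $L=1$ case shows these signs depend on whether one extracts the coefficient of $\chi$ on the left or on the right). As an independent check I would border the system by an auxiliary Grassmann variable $\chi_0$, so that $\mathbf{u}$ becomes the first row and column of $\mathbf{X}$; applying the even-dimensional result to the invertible $\mathbf{X}$ and integrating out $\chi_0$ gives the compact closed form $\pf(\mathbf{X})\,e^{\mathbf{m}^\top\boldsymbol{\psi}+\frac{1}{2}\boldsymbol{\psi}^\top(\mathbf{X}^{-1})_{[\{0\}^c]}\boldsymbol{\psi}}$ with $\mathbf{m}=(\mathbf{X}^{-1})_{0\bullet}$, whose expansion through the border formula $\pf(\mathbf{X})=\sum_j(-1)^{j+1}u_j\,\pf(\mathbf{A}_{[\{j\}^c]})$ must agree with the submatrix series and thereby pin down all signs. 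Alternatively, since the Corollary is merely the invertible specialisation of Theorem~\ref{exponential with linear fermionic and bosonic sources}, one may start directly from its odd-$L$ expansion and resum the $\pf(\mathbf{X}_{[J^c]})$ and $\pf(\mathbf{A}_{[J^c]})$ sums under invertibility.
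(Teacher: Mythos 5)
Your strategy is sound and genuinely different from the paper's. The paper proves only the even case explicitly: it starts from the general expansion in Theorem~\ref{exponential with linear fermionic and bosonic sources}(a), converts each ratio $\epsilon(J)\pf(\mathbf{A}_{[J^c]})/\pf(\mathbf{A})$ into products of entries of $\mathbf{A}^{-1}$ via Jacobi's identity (Appendix~\ref{PropPf}), and resums the purely fermionic and the mixed series separately into the exponential; the odd case is not written out. You instead factor out $(1+\mathbf{u}^{\top}\boldsymbol{\chi})$, trade the insertion $\chi_k$ for a source derivative, and reduce everything to the pure fermionic-source Gaussian $G(\boldsymbol{\psi})$. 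This is more economical for the even case and actually supplies the odd case the paper omits, so the route is worth keeping.

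Two points must be pinned down before the argument closes. First, $\partial_{\psi_k}$ is a Grassmann-odd operator and therefore \emph{anticommutes} with the Berezin measure when $L$ is odd: the correct reduction is $\bigl(1+(-1)^{L}\,\mathbf{u}^{\top}\partial_{\boldsymbol{\psi}}\bigr)G(\boldsymbol{\psi})$, not uniformly $\bigl(1+\mathbf{u}^{\top}\partial_{\boldsymbol{\psi}}\bigr)G$. Already for $L=1$ one has $\int d\chi_1\,e^{(u_1+\psi_1)\chi_1}=u_1-\psi_1$, whereas $(1+u_1\partial_{\psi_1})(-\psi_1)=-\psi_1-u_1$; the sign $(1-u_1\partial_{\psi_1})$ is what produces the $(\psi_j-u_j)$ combination. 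You flag this but should fix it, since it is the whole content of the relative sign in the boxed odd formula. Second, in the odd case the operator $\bigl(1-\mathbf{u}^{\top}\partial_{\boldsymbol{\psi}}\bigr)$ acting on $\sum_j(-1)^j\psi_j\pf(\mathbf{A}_{[\{j\}^c]})\sum_m\tfrac{1}{(2m+1)m!}Q_j^m$ does \emph{not} match the boxed expression term by term: expanding $(\psi_j-u_j)(Q_j+R_j)^m$ with $R_j=\mathbf{u}_{[\{j\}^c]}^{\top}\mathbf{A}_{[\{j\}^c]}^{-1}\boldsymbol{\psi}_{[\{j\}^c]}$ generates terms $u_jR_jQ_j^{m-1}$ quadratic in $\mathbf{u}$ that the derivative computation never produces. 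These must, and do, cancel in the sum over $j$ (by Jacobi's identity the coefficient is antisymmetric in the two $u$-indices while $u_ju_k$ is symmetric, consistent with the integral being affine in $\mathbf{u}$ because $(\mathbf{u}^{\top}\boldsymbol{\chi})^2=0$), but stating this cancellation is the missing half-step that your phrase ``adjoins the cross term'' glosses over. With those two items made explicit, your derivation is complete; the bordered-matrix check via $\pf(\mathbf{X})$ that you propose is a good independent way to lock in the signs and agrees with Corollary~\ref{exponential with bosonic sources}.
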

\medskip
\noindent
\textbf{(b)} For any subset $I=\{i_1,\dots,i_r\}\subseteq[L]$ with $i_1<i_2<\cdots<i_r$, we have
\begin{equation}\label{part (c)}
\hspace*{-1.5cm}\boxed{
\begin{aligned}
&\int\mathbf{D}\boldsymbol{\chi}\left(\prod_{\alpha=1}^r\chi_{i_\alpha}\right)e^{\tfrac{1}{2}\boldsymbol{\chi}^T\mathbf{A}\boldsymbol{\chi}+\mathbf{u}^T\boldsymbol{\chi}+\boldsymbol{\psi}^T\boldsymbol{\chi}}\\&=
\begin{cases}
\begin{aligned}
&(-1)^{L}\epsilon(I_{[r]})\pf\left(\mathbf{A}_{[I_{[r]}^c]}\right)e^{\frac{1}{2}\boldsymbol{\psi}_{[I_{[r]}^c]}^T \mathbf{A}_{[I_{[r]}^c]}^{-1} \boldsymbol{\psi}_{[I_{[r]}^c]} +\mathbf{u}_{[I_{[r]}^c]}^T \mathbf{A}_{[I_{[r]}^c]}^{-1} \boldsymbol{\psi}_{[I_{[r]}^c]}},
\end{aligned}&\text{even $r+L$,}\\[2ex]
\begin{aligned}
&(-1)^{L+1} \sum\limits_{j=1}^{L} (-1)^{j} \, (\psi_j-u_j) \; \epsilon(I_{[r]}\setminus\{j\}) \operatorname{pf}(A_{[I_{[r]}^c\cap \{j\}^c]})\\
&\times
\sum\limits_{m=0}^{\left\lfloor \frac{L-1}{2} \right\rfloor} \frac{1}{(2m+1)\,  m!} 
\left( \frac{1}{2}\boldsymbol{\psi}_{[I_{[r]}^c\cap \{j\}^c]}^T \mathbf{A}_{[I_{[r]}^c\cap \{j\}^c]}^{-1} \boldsymbol{\psi}_{[I_{[r]}^c\cap \{j\}^c]}+\mathbf{u}_{[I_{[r]}^c\cap \{j\}^c]}^T \mathbf{A}_{[I_{[r]}^c\cap \{j\}^c]}^{-1} \boldsymbol{\psi}_{[I_{[r]}^c\cap \{j\}^c]} \right)^m,
\end{aligned} &\text{odd $r+L$.}
\end{cases}
\end{aligned}}
\end{equation}

\medskip
\noindent
\textbf{(c)} More generally, for any $r\times L$ matrix $\mathbf{C}$ with entries in $\mathbb{R}$, we have
\begin{equation}
\hspace*{-1.5cm}\boxed{
\begin{aligned}
&\int\mathbf{D}\boldsymbol{\chi}\left(\prod_{\alpha=1}^r (\mathbf{C}\boldsymbol{\chi})_\alpha\right)e^{\tfrac{1}{2}\boldsymbol{\chi}^T\mathbf{A}\boldsymbol{\chi}+\mathbf{u}^T\boldsymbol{\chi}+\boldsymbol{\psi}^T\boldsymbol{\chi}}\\
&=\begin{cases}
\begin{aligned}
&(-1)^{L}\sum_{i_1<\dots<i_{r}=1}^{L} 
\det(\mathbf{C}_{[\star,I_{[r]}]})\epsilon(I_{[r]})\pf\left(\mathbf{A}_{[I_{[r]}^c]}\right)e^{\frac{1}{2}\boldsymbol{\psi}_{[I_{[r]}^c]}^T \mathbf{A}_{[I_{[r]}^c]}^{-1} \boldsymbol{\psi}_{[I_{[r]}^c]} +\mathbf{u}_{[I_{[r]}^c]}^T \mathbf{A}_{[I_{[r]}^c]}^{-1} \boldsymbol{\psi}_{[I_{[r]}^c]}},
\end{aligned}&\text{even $r+L$,}\\[2ex]
\begin{aligned}
&(-1)^{L+1}\sum_{i_1<\dots<i_{r}=1}^{L} 
\det(\mathbf{C}_{[\star,I_{[r]}]}) \sum\limits_{j=1}^{L} (-1)^{j} \, (\psi_j-u_j) \; \epsilon(I_{[r]}\setminus\{j\}) \operatorname{pf}(A_{[I_{[r]}^c\cap \{j\}^c]}) \; \\
& \times 
\sum\limits_{m=0}^{\left\lfloor \frac{L-1}{2} \right\rfloor} \frac{1}{(2m+1)\,  m!} 
\left( \frac{1}{2}\boldsymbol{\psi}_{[I_{[r]}^c\cap \{j\}^c]}^T \mathbf{A}_{[I_{[r]}^c\cap \{j\}^c]}^{-1} \boldsymbol{\psi}_{[I_{[r]}^c\cap \{j\}^c]} + \mathbf{u}_{[I_{[r]}^c\cap \{j\}^c]}^T \mathbf{A}_{[I_{[r]}^c\cap \{j\}^c]}^{-1} \boldsymbol{\psi}_{[I_{[r]}^c\cap \{j\}^c]} \right)^m,
\end{aligned} &\text{odd $r+L$.}
\end{cases}
\end{aligned}}
\end{equation}
The proof of the corollary is in Appendix \ref{Proof exponential with linear fermionic and bosonic sources (invertible)}.

Theorem~\ref{exponential with linear fermionic and bosonic sources} thus provides the general Berezin integral for fermionic variables coupled to bosonic and fermionic sources. 
All subsequent corollaries follow by restricting to purely fermionic or bosonic sources, or by setting the sources to zero.

\subsubsection{Fermionic Pfaffian with Fermionic sources.}

The first simple case we address is setting the bosonic source to zero.
\begin{corollary}[Fermionic Pfaffian with Fermionic sources]\label{exponential with fermionic sources}
For any real fermions $\chi_1,\dots,\chi_{L}$, and $\psi_1,\dots,\psi_{L}$.

\medskip
\noindent
\textbf{(a)} For an $L\times L$ antisymmetric matrix $\mathbf{A}$, we have 
\begin{equation}
\hspace{-1cm}\boxed{
\begin{aligned}
\int\mathbf{D}\boldsymbol{\chi}&e^{\tfrac{1}{2}\boldsymbol{\chi}^T\mathbf{A}\boldsymbol{\chi}+\boldsymbol{\psi}^T\boldsymbol{\chi}}=
\begin{cases}
\begin{aligned}
&\pf(\mathbf{A})+\sum_{m=1}^{L/2}
   \frac{(-1)^m}{(2m)!}
   \sum_{j_1,\dots,j_{2m}=1}^L
   \epsilon\left(J_{[2m]}\right)\pf\left(\mathbf{A}_{\left[J^c_{[2m]}\right]}\right)
   \left(\prod_{\alpha=1}^{2m}\psi_{j_\alpha}\right),
\end{aligned}&\text{even $L$,}\\[2ex]
\begin{aligned}
&-\sum_{m=1}^{(L+1)/2}\frac{(-1)^m}{(2m-1)!}\sum_{j_1,\dots,j_{2m-1}=1}^L\epsilon\left(J_{[2m-1]}\right)\pf\!\left(\mathbf{A}_{\left[J_{[2m-1]}^c\right]}\right)\left(\prod_{\alpha=1}^{2m-1}\psi_{j_\alpha}\right),
\end{aligned}&\text{odd $L$,}
\end{cases}
\end{aligned}}
\end{equation}
where $J_{[k]}=\{j_1,\dots,j_{k}\}\subseteq[L]$, whose elements are ordered increasingly.

\medskip
\noindent
\textbf{(a$^{\prime}$)} When the matrices $\mathbf{A}$ and $\mathbf{X}$ are invertible, we have:
\begin{equation}
\boxed{\int\mathbf{D}\boldsymbol{\chi} e^{\frac{1}{2}\boldsymbol{\chi}^T\mathbf{A}\boldsymbol{\chi}+\boldsymbol{\psi}^T\boldsymbol{\chi}}=\begin{cases}
\pf(\mathbf{A})e^{\tfrac{1}{2}\boldsymbol{\psi}^T \mathbf{A}^{-1} \boldsymbol{\psi}},&\text{even $L$,}\\
\sum\limits_{j=1}^{L} (-1)^{j} \, \psi_j \; \operatorname{pf}(A_{[j^c]}) \; 
\sum\limits_{m=0}^{\frac{L-1}{2}} \frac{1}{(2m+1)\,  m!} 
\left( \frac{1}{2}\boldsymbol{\psi}_{[j^c]}^T \mathbf{A}_{[j^c]}^{-1} \boldsymbol{\psi}_{[j^c]} \right)^m,&\text{odd $L$.}
\end{cases}
    }
\end{equation}

\medskip
\noindent
\textbf{(b)} For any subset $I=\{i_1,\dots,i_r\}\subseteq[L]$ with $i_1<i_2<\cdots<i_r$, we have
\begin{equation}
\hspace{-2cm}\boxed{
\begin{aligned}
&\int\mathbf{D}\boldsymbol{\chi}\left(\prod_{\alpha=1}^r\chi_{i_\alpha}\right)e^{\tfrac{1}{2}\boldsymbol{\chi}^T\mathbf{A}\boldsymbol{\chi}+\boldsymbol{\psi}^T\boldsymbol{\chi}}\\
&=
\begin{cases}
\begin{aligned}
&(-1)^{L}\Bigg[\epsilon(I_{[r]})\pf\left(\mathbf{A}_{[I_{[r]}^c]}\right)+\sum_{m=1}^{(L-r)/2}\frac{(-1)^m}{(2m)!}\sum_{j_1,\dots,j_{2m}=1}^L\epsilon\left(I_{[r]}\cup J_{[2m]}\right)\pf\left(\mathbf{A}_{\left[I_{[r]}^c\cap J^c_{[2m]}\right]}\right)\left(\prod_{\alpha=1}^{2m}\psi_{j_\alpha}\right)\Bigg],
\end{aligned}&\text{even $r+L$,}\\[2ex]
\begin{aligned}
&(-1)^{L}\Bigg[\sum_{m=1}^{(L-r+1)/2}\frac{(-1)^m}{(2m-1)!}\sum_{j_1,\dots,j_{2m-1}=1}^L\epsilon\left(I_{[r]}\cup J_{[2m-1]}\right)\pf\left(\mathbf{A}_{\left[I_{[r]}^c\cap J^c_{[2m-1]}\right]}\right)\left(\prod_{\alpha=1}^{2m-1}\psi_{j_\alpha}\right)\Bigg],
\end{aligned} &\text{odd $r+L$.}
\end{cases}
\end{aligned}}
\end{equation}

\medskip
\noindent
\textbf{(b$^{\prime}$)} For the invertible $\mathbf{A}$ and $\mathbf{X}$ matrices, we have
\begin{equation}
\boxed{
\begin{aligned}
&\int\mathbf{D}\boldsymbol{\chi}\left(\prod_{\alpha=1}^r\chi_{i_\alpha}\right)e^{\tfrac{1}{2}\boldsymbol{\chi}^T\mathbf{A}\boldsymbol{\chi}+\boldsymbol{\psi}^T\boldsymbol{\chi}}\\&=
\begin{cases}
\begin{aligned}
&(-1)^{L}\epsilon(I_{[r]})\pf\left(\mathbf{A}_{[I_{[r]}^c]}\right)e^{\frac{1}{2}\boldsymbol{\psi}_{[I_{[r]}^c]}^T \mathbf{A}_{[I_{[r]}^c]}^{-1} \boldsymbol{\psi}_{[I_{[r]}^c]} },
\end{aligned}&\text{even $r+L$,}\\[2ex]
\begin{aligned}
&(-1)^{L+1}\sum\limits_{j=1}^{L} (-1)^{j} \, \epsilon(I_{[r]}\setminus\{j\}) \psi_j \;  \operatorname{pf}(A_{[I_{[r]}^c\cap \{j\}^c]}) \; \\& \times
\sum\limits_{m=0}^{\left\lfloor \frac{L-1}{2} \right\rfloor} \frac{1}{(2m+1)\,  m!} 
\left( \frac{1}{2}\boldsymbol{\psi}_{[I_{[r]}^c\cap \{j\}^c]}^T \mathbf{A}_{[I_{[r]}^c\cap \{j\}^c]}^{-1} \boldsymbol{\psi}_{[I_{[r]}^c\cap \{j\}^c]} \right)^m,
\end{aligned} &\text{odd $r+L$.}
\end{cases}
\end{aligned}}
\end{equation}

\medskip
\noindent
\textbf{(c)} More generally, for any $r\times L$ matrix $\mathbf{C}$ with entries in $\mathbb{R}$, we have
\begin{equation}
\boxed{
\begin{aligned}
&\int\mathbf{D}\boldsymbol{\chi}\left(\prod_{\alpha=1}^r (\mathbf{C}\boldsymbol{\chi})_\alpha\right)e^{\tfrac{1}{2}\boldsymbol{\chi}^T\mathbf{A}\boldsymbol{\chi}+\boldsymbol{\psi}^T\boldsymbol{\chi}}\\&=\begin{cases}
\begin{aligned}
&(-1)^{L}\sum_{i_1<\dots<i_{r}=1}^{L} 
\det(\mathbf{C}_{[\star,I_{[r]}]})\Bigg[\epsilon(I_{[r]})\pf\left(\mathbf{A}_{[I_{[r]}^c]}\right)\\&\qquad +\sum_{m=1}^{(L-r)/2}
   \frac{(-1)^m}{(2m)!}
   \sum_{j_1,\dots,j_{2m}=1}^L
   \epsilon\left(I_{[r]}\cup J_{[2m]}\right) \pf\left(\mathbf{A}_{\left[I_{[r]}^c\cap J^c_{[2m]}\right]}\right)
   \left(\prod_{\alpha=1}^{2m}\psi_{j_\alpha}\right)\Bigg],
\end{aligned}&\text{even $r+L$,}\\[2ex]
\begin{aligned}
(-1)^{L}\sum_{i_1<\dots<i_{r}=1}^{L} 
\det(\mathbf{C}_{[\star,I_{[r]}]})\Bigg[&\sum_{m=1}^{(L-r+1)/2}\frac{(-1)^m}{(2m-1)!}\sum_{j_1,\dots,j_{2m-1}=1}^L\epsilon\left(I_{[r]}\cup J_{[2m-1]}\right) \\ &\times\pf\left(\mathbf{A}_{\left[I_{[r]}^c\cap J^c_{[2m-1]}\right]}\right)\left(\prod_{\alpha=1}^{2m-1}\psi_{j_\alpha}\right)\Bigg],
\end{aligned} &\text{odd $r+L$.}
\end{cases}
\end{aligned}}
\end{equation}

\medskip
\noindent
\textbf{(c$^{\prime}$)} For the invertible $\mathbf{A}$ and $\mathbf{X}$ matrices, we have
\begin{equation}
\boxed{
\begin{aligned}
&\int\mathbf{D}\boldsymbol{\chi}\left(\prod_{\alpha=1}^r (\mathbf{C}\boldsymbol{\chi})_\alpha\right)e^{\tfrac{1}{2}\boldsymbol{\chi}^T\mathbf{A}\boldsymbol{\chi}+\boldsymbol{\psi}^T\boldsymbol{\chi}}\\&=\begin{cases}
\begin{aligned}
&(-1)^{L}\sum_{i_1<\dots<i_{r}=1}^{L} 
\det(\mathbf{C}_{[\star,I_{[r]}]})\epsilon(I_{[r]})\pf\left(\mathbf{A}_{[I_{[r]}^c]}\right)e^{\frac{1}{2}\boldsymbol{\psi}_{[I_{[r]}^c]}^T \mathbf{A}_{[I_{[r]}^c]}^{-1} \boldsymbol{\psi}_{[I_{[r]}^c]} },
\end{aligned}&\text{even $r+L$,}\\[2ex]
\begin{aligned}
&(-1)^{L+1}\sum_{i_1<\dots<i_{r}=1}^{L} 
\det(\mathbf{C}_{[\star,I_{[r]}]}) \sum\limits_{j=1}^{L} (-1)^{j} \, \epsilon(I_{[r]}\setminus\{j\}) \psi_j \;  \operatorname{pf}(A_{[I_{[r]}^c\cap \{j\}^c]}) \; \\& \times
\sum\limits_{m=0}^{\left\lfloor \frac{L-1}{2} \right\rfloor} \frac{1}{(2m+1)\,  m!} 
\left( \frac{1}{2}\boldsymbol{\psi}_{[I_{[r]}^c\cap \{j\}^c]}^T \mathbf{A}_{[I_{[r]}^c\cap \{j\}^c]}^{-1} \boldsymbol{\psi}_{[I_{[r]}^c\cap \{j\}^c]} \right)^m,
\end{aligned} &\text{odd $r+L$.}
\end{cases}
\end{aligned}}
\end{equation}

\end{corollary}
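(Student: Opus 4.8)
The plan is to obtain this corollary as the specialization $\mathbf{u}=\mathbf{0}$ of the master identities in Theorem~\ref{exponential with linear fermionic and bosonic sources} (for the general parts (a), (b), (c)) and of Corollary~\ref{exponential with linear fermionic and bosonic sources (invertible)} (for the invertible parts (a$'$), (b$'$), (c$'$)). Since every integrand appearing here is exactly the integrand of the corresponding master statement evaluated at $\mathbf{u}=\mathbf{0}$, it suffices to show that the right-hand sides collapse to the stated forms under this substitution. First I would set $\mathbf{u}=\mathbf{0}$ in the bordered matrix $\mathbf{X}=\begin{pmatrix} 0 & \mathbf{u}^T \\ -\mathbf{u} & \mathbf{A}\end{pmatrix}$, so that $\mathbf{X}$ becomes $\begin{pmatrix} 0 & \mathbf{0} \\ \mathbf{0} & \mathbf{A}\end{pmatrix}$, whose bordered row and column (the index $0$) vanish identically.

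The key observation is that every principal Pfaffianinho of $\mathbf{X}$ appearing in the master theorem retains this bordered index, because the index sets $J\subseteq[L]$ and $I\subseteq[L]$ never delete it. A skew-symmetric matrix possessing an identically zero row together with its matching zero column has vanishing Pfaffian: expanding along that row in the Pfaffian sum over perfect matchings shows that the border index cannot be paired with any partner, so every term is zero. Consequently, upon setting $\mathbf{u}=\mathbf{0}$, all factors $\pf(\mathbf{X}_{[\cdots]})$—and $\pf(\mathbf{X})$ itself—drop out, and only the $\pf(\mathbf{A}_{[\cdots]})$ contributions survive.

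What then remains is purely bookkeeping, which I expect to be the only delicate step. In each parity sector I would track the surviving summation ranges and signs: for even $L$ in part (a) the pruning leaves the single sum over even blocks $J_{[2m]}$ together with the leading $\pf(\mathbf{A})$; for odd $L$ both $\pf(\mathbf{X})$ and the even-block $\mathbf{X}$ sum vanish, leaving only the odd-block sum $\sum_{m=1}^{(L+1)/2}$ over $\pf(\mathbf{A}_{[J^c_{[2m-1]}]})$. The same reasoning applied to parts (b) and (c) removes all $\mathbf{X}$ terms; in the odd $r+L$ sector the overall prefactor $(-1)^{L+1}$ combines with the leading minus sign of the surviving odd-block sum to produce the stated $(-1)^{L}$, since $(-1)^{L+1}\cdot(-1)=(-1)^{L}$. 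For the invertible parts (a$'$), (b$'$), (c$'$), setting $\mathbf{u}=\mathbf{0}$ in Corollary~\ref{exponential with linear fermionic and bosonic sources (invertible)} directly eliminates the mixed source term $\mathbf{u}^T\mathbf{A}^{-1}\boldsymbol{\psi}$ inside each exponential and the $-u_j$ inside the factor $(\psi_j-u_j)$, reproducing the closed exponential forms. Verifying that each of the six cases matches term-by-term after this substitution completes the argument.
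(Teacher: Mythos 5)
Your proposal is correct and follows essentially the same route as the paper, which obtains this corollary by setting the bosonic source $\mathbf{u}=\mathbf{0}$ in Theorem~\ref{exponential with linear fermionic and bosonic sources} and Corollary~\ref{exponential with linear fermionic and bosonic sources (invertible)}. Your explicit observation that every $\pf(\mathbf{X}_{[\cdots]})$ retains the bordered index (which the sets $J\subseteq[L]$ never delete) and therefore vanishes when that row and column are zero is exactly the step the paper leaves implicit, and your sign bookkeeping $(-1)^{L+1}\cdot(-1)=(-1)^{L}$ in the odd $r+L$ sector is the right check.
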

\subsubsection{Fermionic Pfaffian with Bosonic sources.}
We can set the fermionic sources to zero and have pure bosonic sources.

\begin{corollary}[Fermionic Pfaffian with Bosonic sources]\label{exponential with bosonic sources}
For any real fermions $\chi_1,\dots,\chi_{L}$, and real (or complex) bosons $u_1,\dots,u_L$.

\medskip
\noindent
\textbf{(a)} For an $L\times L$ antisymmetric matrix $\mathbf{A}$, we have 
\begin{equation}
\boxed{
\begin{aligned}
\int\mathbf{D}\boldsymbol{\chi}e^{\tfrac{1}{2}\boldsymbol{\chi}^T\mathbf{A}\boldsymbol{\chi}+\mathbf{u}^T\boldsymbol{\chi}}=
\begin{cases}
\begin{aligned}
&\pf(\mathbf{A}), 
\end{aligned}&\text{even $L$,}\\[2ex]
\begin{aligned}
&-\sum_{j_0=1}^L(-1)^{j_0}u_{j_0}\pf\left(\mathbf{A}_{[\{j_0\}^c]}\right) \equiv \operatorname{pf}(\mathbf{X}),
\end{aligned}&\text{odd $L$.}
\end{cases}
\end{aligned}}
\end{equation}
For the odd case, we can rewrite:
\begin{equation}
    \begin{split}
        -\sum_{j_0=1}^L(-1)^{j_0}u_{j_0}\pf\left(\mathbf{A}_{[\{j_0\}^c]}\right)=\sum_{j_0=1}^L(-1)^{j_0+1}u_{j_0}\pf\left(\mathbf{A}_{[\{j_0\}^c]}\right)=\sum_{k=2}^{L+1}(-1)^{k}u_{k-1}\pf\left(\mathbf{A}_{[\{k-1\}^c]}\right),
    \end{split}
\end{equation}
where $k=j_0+1$. From the structure of the matrix \( \mathbf{X} \), it follows that $u_{k-1}=X_{1k}$ and $A_{[\{k-1\}^c]}=X_{[\{1,k\}^c]}$, then we obtain:
\begin{equation}
    \begin{split}
    \sum_{k=2}^{L+1}(-1)^{k}X_{1k}\pf\left(\mathbf{X}_{[\{1,k\}^c]}\right)=\operatorname{pf}(\mathbf{X}).
    \end{split}
\end{equation}

\medskip
\noindent
\textbf{(b)} For any subset $I=\{i_1,\dots,i_r\}\subseteq[L]$ with $i_1<i_2<\cdots<i_r$, we have
\begin{equation}
\boxed{
\begin{aligned}
\int\mathbf{D}\boldsymbol{\chi}\left(\prod_{\alpha=1}^r\chi_{i_\alpha}\right)e^{\tfrac{1}{2}\boldsymbol{\chi}^T\mathbf{A}\boldsymbol{\chi}+\mathbf{u}^T\boldsymbol{\chi}}=
\begin{cases}
(-1)^{L}\epsilon(I_{[r]})\pf\left(\mathbf{A}_{[I_{[r]}^c]}\right),& \text{even $r+L$,}\\[2ex]
(-1)^{L+1}\epsilon(I_{[r]})\pf\left(\mathbf{X}_{[I_{[r]}^c]}\right),&\text{odd $r+L$.}
\end{cases}  
\end{aligned}}
\end{equation}

\medskip
\noindent
\textbf{(c)} More generally, for any $r\times L$ matrix $\mathbf{C}$ with entries in $\mathbb{R}$, we have
\begin{equation}
\boxed{
\int\mathbf{D}\boldsymbol{\chi}\left(\prod_{\alpha=1}^r (\mathbf{C}\boldsymbol{\chi})_\alpha\right)e^{\tfrac{1}{2}\boldsymbol{\chi}^T\mathbf{A}\boldsymbol{\chi}+\mathbf{u}^T\boldsymbol{\chi}}=\begin{cases}
\begin{aligned}
&(-1)^{L}\sum_{i_1<\dots<i_{r}=1}^{L} \epsilon(I_{[r]}) 
\det(\mathbf{C}_{[\star,I_{[r]}]}) \pf\left(\mathbf{A}_{[I_{[r]}^c]}\right),
\end{aligned}\text{even $r+L$,}\\[2ex]
\begin{aligned}
&(-1)^{L+1}\sum_{i_1<\dots<i_{r}=1}^{L}  \epsilon(I_{[r]})
\det(\mathbf{C}_{[\star,I_{[r]}]}) \pf\left(\mathbf{X}_{[I_{[r]}^c]}\right),
\end{aligned}\text{odd $r+L$.}
\end{cases} }
\end{equation}
\end{corollary}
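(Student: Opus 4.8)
The plan is to obtain every part of this corollary as the $\boldsymbol{\psi}=\mathbf 0$ specialization of Theorem~\ref{exponential with linear fermionic and bosonic sources}. Since that theorem holds for an arbitrary fermionic source, setting $\boldsymbol{\psi}=\mathbf 0$ is a legitimate restriction, and the only thing to check is which terms survive. In each branch of parts (a)--(c) of the theorem the summands are organized by an index $m\ge 1$ and carry an explicit Grassmann factor $\prod_{\alpha=1}^{2m}\psi_{j_\alpha}$ or $\prod_{\alpha=1}^{2m-1}\psi_{j_\alpha}$. Every such product contains at least one variable $\psi_{j_\alpha}$ and therefore vanishes identically once $\boldsymbol{\psi}=\mathbf 0$; in particular the $m=1$ term of the odd-$L$ branch of part (a), which carries the single factor $\psi_{j_1}$, also drops out. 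Hence only the source-free ``leading'' terms remain.

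First I would read off these leading terms. For part (a) they are $\pf(\mathbf A)$ for even $L$ and $\pf(\mathbf X)$ for odd $L$, exactly the boxed right-hand side. For part (b) the surviving term is $(-1)^{L}\epsilon(I_{[r]})\pf(\mathbf A_{[I_{[r]}^c]})$ when $r+L$ is even and $(-1)^{L+1}\epsilon(I_{[r]})\pf(\mathbf X_{[I_{[r]}^c]})$ when $r+L$ is odd. For part (c) the same mechanism leaves the single sum over increasing multi-indices $I_{[r]}$ weighted by the minors $\det(\mathbf C_{[\star,I_{[r]}]})$, with the $\mathbf A$-Pfaffian in the even case and the $\mathbf X$-Pfaffian in the odd case. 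Thus all three boxed identities follow at once.

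It remains to justify the alternative ``monomer'' form of the odd-$L$ branch of part (a). For odd $L$ the bordered matrix $\mathbf X$ has even size $L+1$, so $\pf(\mathbf X)$ is defined, and I would expand it by the Laplace (row) rule for Pfaffians along its first row (a standard identity, collected in Appendix~\ref{PropPf}):
\begin{equation*}
\pf(\mathbf X)=\sum_{k=2}^{L+1}(-1)^{k}X_{1k}\,\pf\!\left(\mathbf X_{[\{1,k\}^c]}\right).
\end{equation*}
By the block structure $\mathbf X=\begin{pmatrix}0&\mathbf u^{\top}\\-\mathbf u&\mathbf A\end{pmatrix}$ one has $X_{1k}=u_{k-1}$ and $\mathbf X_{[\{1,k\}^c]}=\mathbf A_{[\{k-1\}^c]}$; the substitution $j_0=k-1$ then reproduces $\sum_{j_0=1}^{L}(-1)^{j_0+1}u_{j_0}\pf(\mathbf A_{[\{j_0\}^c]})$, which is the claimed expression. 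This is precisely the chain of equalities displayed after part (a).

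As a cross-check, and an independent route that bypasses the full theorem, I would observe that $\mathbf u^{\top}\boldsymbol{\chi}=\sum_i u_i\chi_i$ is Grassmann-odd with $(\mathbf u^{\top}\boldsymbol{\chi})^2=\sum_{i,j}u_iu_j\chi_i\chi_j=0$ (the weight $u_iu_j$ is symmetric while $\chi_i\chi_j$ is antisymmetric), so the bosonic exponential truncates, $e^{\mathbf u^{\top}\boldsymbol{\chi}}=1+\mathbf u^{\top}\boldsymbol{\chi}$. Substituting this linear form and using that $e^{\frac12\boldsymbol{\chi}^{\top}\mathbf A\boldsymbol{\chi}}$ is Grassmann-even reduces each integral to the source-free Gaussian identities recalled at the start of this section together with single-insertion minors $\int\mathbf D\boldsymbol{\chi}\,\chi_i\,e^{\frac12\boldsymbol{\chi}^{\top}\mathbf A\boldsymbol{\chi}}$. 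The main obstacle in either route is purely the sign bookkeeping: one must reconcile the measure convention $\mathbf D\boldsymbol{\chi}=d\chi_{L}\cdots d\chi_{1}$, the reordering signs $\epsilon(I_{[r]})$, the global factors $(-1)^{L}$ and $(-1)^{L+1}$, and the Laplace-expansion signs $(-1)^{k}$, so that the two presentations of part (a) and the parities in parts (b) and (c) match exactly; the combinatorial content is otherwise immediate.
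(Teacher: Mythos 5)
Your proposal is correct and matches the paper's treatment: the corollary is obtained by setting $\boldsymbol{\psi}=\mathbf{0}$ in Theorem~\ref{exponential with linear fermionic and bosonic sources}, which kills every $m\ge 1$ summand since each carries at least one factor $\psi_{j_\alpha}$, and the alternative form of the odd-$L$ case is exactly the Pfaffian border expansion along the first row of $\mathbf{X}$ (Appendix~\ref{PropPf}). Your direct cross-check via $e^{\mathbf{u}^{\top}\boldsymbol{\chi}}=1+\mathbf{u}^{\top}\boldsymbol{\chi}$ is a useful addition, because the appendix proof of the theorem itself invokes this corollary to repackage the $u$-weighted sums as bordered Pfaffians, and your independent route removes any appearance of circularity.
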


\subsubsection{Special pure source bosonic and fermionic term.}
\begin{corollary}\label{pure fermionicbosonic sources}
Let $\mathbf{A}$ be an $N \times L$ matrix with $N \geq L$, and consider the Grassmann vectors and the complex vector, respectively
\[
\boldsymbol{\psi}=\begin{pmatrix} \psi_1 & \cdots & \psi_N \end{pmatrix}, 
\qquad 
\boldsymbol{\chi}=\begin{pmatrix} \chi_1 \\ \vdots \\ \chi_L \end{pmatrix},\qquad\boldsymbol{u}=\begin{pmatrix} u_1& \cdots & u_N \end{pmatrix}.
\]
Then, the following identity holds:
\begin{equation}
\hspace{-1.5cm}\boxed{\int\mathbf{D}\boldsymbol{\chi} \, e^{(\boldsymbol{\psi}+\boldsymbol{u})\mathbf{A}\boldsymbol{\chi}}
    = \frac{(-1)^{\tfrac{L(L+1)}{2}}}{L!}
      \sum_{i_1,\dots,i_L=1}^N
      \det\!\bigl(\mathbf{A}_{[I_{[L]},\star]}\bigr)\,
      \prod_{\alpha=1}^L\psi_{i_\alpha}+\frac{(-1)^{\tfrac{L(L-1)}{2}}}{(L-1)!}
      \sum_{i_1,\dots,i_L=1}^N
      \det\!\bigl(\mathbf{A}_{[I_{[L]},\star]}\bigr)u_{i_1}\prod_{\alpha=2}^L\psi_{i_\alpha},}
\end{equation}
where $I = \{i_1,\dots,i_L\} \subset [N]$, and $\mathbf{A}_{[I_{[L]},\star]}$ denotes the submatrix of $\mathbf{A}$ formed by selecting all columns and the rows indexed by $I$.
\end{corollary}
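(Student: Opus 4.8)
The plan is to evaluate the Berezin integral directly by splitting the exponent into its Grassmann-even and Grassmann-odd parts and exploiting a nilpotency that truncates one of the resulting exponentials. Write the scalar exponent as $(\boldsymbol{\psi}+\boldsymbol{u})\mathbf{A}\boldsymbol{\chi}=S_2+S_1$, where $S_2=\boldsymbol{\psi}\mathbf{A}\boldsymbol{\chi}=\sum_{i,j}\psi_i A_{ij}\chi_j$ is bilinear in the Grassmann variables (hence Grassmann-even and central) and $S_1=\boldsymbol{u}\mathbf{A}\boldsymbol{\chi}=\sum_j b_j\chi_j$ with $b_j=\sum_i u_i A_{ij}$ is linear in $\chi$ with commuting coefficients (hence Grassmann-odd). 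Since $S_2$ is even it commutes with $S_1$, so $e^{S_2+S_1}=e^{S_2}e^{S_1}$. The first key step is the observation that $S_1^2=\sum_{j,k}b_j b_k\chi_j\chi_k=0$, because $b_j b_k$ is symmetric in $(j,k)$ while $\chi_j\chi_k$ is antisymmetric; consequently $e^{S_1}=1+S_1$ and the integrand collapses to $e^{S}=e^{S_2}+e^{S_2}S_1$. This truncation is what explains why only a $\psi$-homogeneous term and a single-$u$ term appear on the right-hand side.

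Next I would use that $\int\mathbf{D}\boldsymbol{\chi}$ over $\chi_1,\dots,\chi_L$ extracts the coefficient of the top monomial, normalized by $\int\mathbf{D}\boldsymbol{\chi}\,\chi_1\cdots\chi_L=1$ (consistent with $\mathbf{D}\boldsymbol{\chi}=d\chi_L\cdots d\chi_1$). Because every factor of $S_2$ carries exactly one $\chi$ and $S_1$ carries one $\chi$, only two powers survive: $e^{S_2}$ contributes through $\tfrac{1}{L!}(S_2)^L$, and $e^{S_2}S_1$ through $\tfrac{1}{(L-1)!}(S_2)^{L-1}S_1$. Every other term carries either too few $\chi$'s (vanishing measure) or a repeated $\chi$ (nilpotency), so the integral reduces to exactly these two pieces, which already match the two terms in the claim.

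To evaluate the first piece I would expand $(S_2)^L=\sum_{\vec\imath,\vec\jmath}\prod_{\alpha}\psi_{i_\alpha}A_{i_\alpha j_\alpha}\chi_{j_\alpha}$, note that a nonzero integral forces $\vec\jmath$ to be a permutation $\sigma$ of $(1,\dots,L)$, pull out the scalar entries $A_{i_\alpha j_\alpha}$, and reorder the Grassmann-even blocks $\psi_{i_\alpha}\chi_{j_\alpha}$ into all $\psi$'s followed by all $\chi$'s. Summing the signed products $\sgn(\sigma)\prod_\alpha A_{i_\alpha\sigma(\alpha)}$ over column permutations assembles $\det(\mathbf{A}_{[I_{[L]},\star]})$, and the signs collapse to the prefactor $(-1)^{L(L+1)/2}/L!$. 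The second piece is handled identically: the single $u$-carrying block yields one factor $u_{i_1}$ and $L-1$ factors $\psi_{i_\alpha}$, and after moving the $u$-indexed row to the front of the determinant the signs combine to $(-1)^{L(L-1)/2}/(L-1)!$. Since $N\ge L$, the index sets $I_{[L]}\subseteq[N]$ of cardinality $L$ exist and the determinants are well defined.

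The hard part will be the sign bookkeeping, which has three independent sources that must be tracked jointly: (i) the Grassmann reordering of the $L$ (respectively $L-1$) even blocks $\psi_{i_\alpha}\chi_{j_\alpha}$, contributing $(-1)^{L(L-1)/2}$ (respectively $(-1)^{(L-1)(L-2)/2}$); (ii) the sign incurred when the integration measure is passed through an odd coefficient, encoded by $\int\mathbf{D}\boldsymbol{\chi}\,F\,\chi_1\cdots\chi_L=(-1)^{L|F|}F$ for a homogeneous $F$ of parity $|F|$ assembled from the sources; and (iii) for the second piece only, the row permutation that places the $u$-index first, giving $(-1)^{L-1}$. Only after combining all three does the first exponent collapse to $L(L+1)/2$ and the second to $L(L-1)/2$. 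I would pin these down by checking the identity explicitly for $L=1$ and $L=2$ (where the $\psi$-monomials are odd and even respectively, so both parity cases in (ii) are exposed) and then completing the general permutation-sign argument.
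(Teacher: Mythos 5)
Your proposal is correct and follows essentially the route the paper takes for the companion Lemma~\ref{pure fermionic sources} (direct expansion of the exponential, survival of only the top-degree term in $\boldsymbol{\chi}$, assembly of the determinant from the Levi-Civita symbol), supplemented by the one extra observation this corollary needs: the odd linear part $S_1=\mathbf{u}\mathbf{A}\boldsymbol{\chi}$ squares to zero because its coefficients commute, so $e^{S_1}=1+S_1$ and at most one factor of $u$ survives. Your sign bookkeeping is consistent with the boxed formula (the $L=1$ and $L=2$ checks you propose do come out right), so this matches the intended argument.
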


\begin{lemma}\label{pure fermionic sources}
Let $\mathbf{A}$ be an $N \times L$ matrix with $N \geq L$, and consider the Grassmann vectors
\[
\boldsymbol{\psi}=\begin{pmatrix} \psi_1 & \cdots & \psi_N \end{pmatrix}, 
\qquad 
\boldsymbol{\chi}=\begin{pmatrix} \chi_1 \\ \vdots \\ \chi_L \end{pmatrix}.
\]
Then, the following identity holds:
\begin{equation}
    \boxed{\int \mathbf{D}\boldsymbol{\chi} \, e^{\boldsymbol{\psi}\mathbf{A}\boldsymbol{\chi}}
    = (-1)^{\tfrac{L(L+1)}{2}}
      \sum_{i_1<\dots<i_L=1}^N
      \det\!\bigl(\mathbf{A}_{[I_{[L]},\star]}\bigr)\,
      \psi_{i_1}\cdots\psi_{i_L},}
\end{equation}
where $I = \{i_1,\dots,i_L\} \subset [N]$, and $\mathbf{A}_{[I_{[L]},\star]}$ denotes the submatrix of $\mathbf{A}$ formed by selecting all columns and the rows indexed by $I$. In the special case $N=L$, this reduces to
\begin{equation}
\int \mathbf{D}\boldsymbol{\chi} \, e^{\boldsymbol{\psi}\mathbf{A}\boldsymbol{\chi}}
= (-1)^{\tfrac{L(L+1)}{2}} \det(\mathbf{A}) \prod_{i=1}^L \psi_i.
\end{equation}

The proof is in Appendix \ref{Proof pure fermionic sources}.
\end{lemma}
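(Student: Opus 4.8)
The plan is to evaluate the Berezin integral directly by exploiting the fact that the bilinear exponent factorizes into mutually commuting nilpotent pieces, rather than by induction. First I would set $b_j := (\boldsymbol{\psi}\mathbf{A})_j = \sum_{i=1}^N \psi_i A_{ij}$, so that $\boldsymbol{\psi}\mathbf{A}\boldsymbol{\chi} = \sum_{j=1}^L b_j \chi_j$. Each summand $b_j \chi_j$ is a product of two Grassmann-odd objects, hence Grassmann-even; consequently the summands commute among themselves and each satisfies $(b_j\chi_j)^2 = 0$ because $\chi_j^2 = 0$. Therefore the exponential truncates to a finite product,
\[
e^{\boldsymbol{\psi}\mathbf{A}\boldsymbol{\chi}} = \prod_{j=1}^L\bigl(1 + b_j \chi_j\bigr).
\]

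Next I would observe that, upon integrating against $\mathbf{D}\boldsymbol{\chi} = d\chi_L\cdots d\chi_1$, only the top-degree monomial (the one containing every $\chi_j$) survives, so the integral reduces to $\int \mathbf{D}\boldsymbol{\chi}\, \prod_{j=1}^L b_j\chi_j$. The two sign factors that assemble into the stated prefactor arise precisely here. Reordering the interleaved product $b_1\chi_1 b_2\chi_2\cdots b_L\chi_L$ into the separated form $(b_1\cdots b_L)(\chi_1\cdots\chi_L)$ requires moving each $\chi_k$ past the $L-k$ odd factors $b_{k+1},\dots,b_L$, giving a sign $(-1)^{\sum_{k}(L-k)} = (-1)^{L(L-1)/2}$. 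Then, pulling the $\psi$-block $b_1\cdots b_L$ (parity $(-1)^L$) through the measure $d\chi_L\cdots d\chi_1$ (also of parity $(-1)^L$) contributes a further $(-1)^{L^2} = (-1)^L$. Using the normalization $\int \mathbf{D}\boldsymbol{\chi}\,\chi_1\cdots\chi_L = 1$ and combining the exponents $\tfrac{L(L-1)}{2} + L = \tfrac{L(L+1)}{2}$ yields
\[
\int \mathbf{D}\boldsymbol{\chi}\, e^{\boldsymbol{\psi}\mathbf{A}\boldsymbol{\chi}} = (-1)^{L(L+1)/2}\, b_1 b_2\cdots b_L.
\]

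Finally I would expand $b_1\cdots b_L = \sum_{i_1,\dots,i_L} \psi_{i_1}\cdots\psi_{i_L}\, A_{i_1 1}\cdots A_{i_L L}$ and recognize the determinant. Nonvanishing terms require distinct row indices; grouping them by the increasing set $I = \{i_1 < \cdots < i_L\}$ and summing over the $L!$ assignments $\sigma$ of rows to columns, the anticommutation of the $\psi$'s supplies exactly the signature $\mathrm{sgn}(\sigma)$, so that $\sum_\sigma \mathrm{sgn}(\sigma)\prod_j A_{i_{\sigma(j)} j} = \det\bigl(\mathbf{A}_{[I_{[L]},\star]}\bigr)$ and hence $b_1\cdots b_L = \sum_{i_1<\cdots<i_L} \det\bigl(\mathbf{A}_{[I_{[L]},\star]}\bigr)\, \psi_{i_1}\cdots\psi_{i_L}$. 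Substituting gives the claimed identity, and the case $N=L$ follows immediately since the only admissible set is $\{1,\dots,L\}$, for which $\det\bigl(\mathbf{A}_{[I_{[L]},\star]}\bigr)=\det(\mathbf{A})$. I expect the main obstacle to be the disciplined bookkeeping of the three independent sign contributions—the interleaving reordering $(-1)^{L(L-1)/2}$, the block commutation $(-1)^{L}$ through the measure, and the Leibniz signature—since an error in any one would corrupt the overall prefactor; an induction on $L$ offers a natural cross-check should the direct sign accounting prove delicate.
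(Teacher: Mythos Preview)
Your proof is correct and follows essentially the same route as the paper: expand the exponential, isolate the top-degree $\chi$-monomial, reorder the Grassmann factors to extract the sign $(-1)^{L(L+1)/2}$, integrate, and recognise the Leibniz determinant. The only cosmetic differences are that the paper uses the Taylor expansion $\frac{1}{L!}(\boldsymbol{\psi}\mathbf{A}\boldsymbol{\chi})^L$ (carrying the $1/L!$ until it cancels against the $L!$ reorderings of the row indices), whereas you use the product factorisation $\prod_j(1+b_j\chi_j)$; and the paper reorders $\psi\chi\cdots\psi\chi\to(\chi\cdots\chi)(\psi\cdots\psi)$ to obtain $(-1)^{L(L+1)/2}$ in one step, while you reorder to $(b\cdots b)(\chi\cdots\chi)$ for $(-1)^{L(L-1)/2}$ and then pick up the remaining $(-1)^L$ by commuting the odd $b$-block through the measure. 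Both bookkeepings are valid and land on the same result.
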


\begin{lemma}\label{generalized pure fermionic sources}
Let $\mathbf{A}^{(j)}$ be an $n^{(j)}\times m$ matrix with $j=1,\dots,N$, and $n^{(j)}\geq m$. Consider the Grassmann vectors
\[
\boldsymbol{\psi}^{(j)}= \begin{pmatrix} \psi^{(j)}_1 & \cdots & \psi^{(j)}_n \end{pmatrix}, 
\qquad 
\boldsymbol{\chi} = \begin{pmatrix} \chi_1 \\ \vdots \\ \chi_m \end{pmatrix}.
\]
Then,
\begin{equation}
\hspace{-1.5cm}\boxed{\begin{split}
  \int&\mathbf{D}\boldsymbol{\chi} e^{\boldsymbol{\psi}^{(1)}\mathbf{A}^{(1)}\boldsymbol{\chi}+\boldsymbol{\psi}^{(2)}\mathbf{A}^{(2)}\boldsymbol{\chi}+\cdots+\boldsymbol{\psi}^{(N)}\mathbf{A}^{(N)}\boldsymbol{\chi}}\\
  &=(-1)^{\tfrac{m(m+1)}{2}}\sum_{i_1^{(1)}<\dots<i_m^{(1)}=1}^{n^{(1)}}\cdots\sum_{i_1^{(N)}<\dots<i_m^{(N)}=1}^{n^{(N)}}\det\left(\mathbb{A}_{\left[\left\{i_1^{(1)},\dots,i_m^{(1)},\dots,i_1^{(N)},\dots,i_m^{(N)}\right\},\star\right]}\right)\psi_{i_1^{(1)}}^{(1)}\cdots\psi_{i_m^{(1)}}^{(1)}\cdots\psi_{i_1^{(N)}}^{(N)}\cdots\psi_{i_m^{(N)}}^{(N)},
\end{split}}
\end{equation}
where $\mathbb{A} = \begin{pmatrix}
\mathbf{A}^{(1)} \\ \mathbf{A}^{(2)} \\ \vdots \\ \mathbf{A}^{(N)}
\end{pmatrix}$. The proof is in Appendix \ref{proof generalized pure fermionic sources}
\end{lemma}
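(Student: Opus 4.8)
The plan is to collapse the $N$ separate bilinear forms appearing in the exponent into a single one, and then invoke Lemma~\ref{pure fermionic sources}. Concretely, I would introduce the concatenated Grassmann row vector $\boldsymbol{\Psi} = \begin{pmatrix}\boldsymbol{\psi}^{(1)} & \cdots & \boldsymbol{\psi}^{(N)}\end{pmatrix}$ of length $\tilde N := \sum_{j=1}^N n^{(j)}$, together with the stacked matrix $\mathbb{A}$ of size $\tilde N \times m$ already defined in the statement. Since $\boldsymbol{\psi}^{(j)}\mathbf{A}^{(j)}\boldsymbol{\chi}$ couples only the block-$j$ components of $\boldsymbol{\Psi}$ to $\boldsymbol{\chi}$, the full exponent telescopes to $\sum_{j=1}^N \boldsymbol{\psi}^{(j)}\mathbf{A}^{(j)}\boldsymbol{\chi} = \boldsymbol{\Psi}\,\mathbb{A}\,\boldsymbol{\chi}$. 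First I would verify this identity entrywise, which is immediate from the block structure of $\mathbb{A}$ and the bilinearity of the form.

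With the exponent rewritten in the single-source form $\boldsymbol{\Psi}\mathbb{A}\boldsymbol{\chi}$, the integral is exactly of the type treated in Lemma~\ref{pure fermionic sources}, now with the ambient row count equal to $\tilde N$ and the number of integrated Grassmann variables equal to $m$ (note $\tilde N = \sum_j n^{(j)} \geq m$, so the hypothesis is met). Applying that lemma verbatim yields
\begin{equation*}
\int\mathbf{D}\boldsymbol{\chi}\;e^{\boldsymbol{\Psi}\mathbb{A}\boldsymbol{\chi}}
= (-1)^{\frac{m(m+1)}{2}}\sum_{\substack{K\subseteq[\tilde N]\\ |K|=m}}\det\!\bigl(\mathbb{A}_{[K,\star]}\bigr)\prod_{a\in K}\Psi_a,
\end{equation*}
where the product runs in increasing order of the combined index $a$. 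This already produces the prefactor $(-1)^{m(m+1)/2}$, the minors of $\mathbb{A}$, and a degree-$m$ monomial in the $\Psi$'s.

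The remaining work is the combinatorial reorganization of this single $m$-subset sum into the block-indexed form on the right-hand side. For each $K$ I would set $K^{(j)} = K \cap (\text{block-}j\text{ rows})$ and write $K^{(j)} = \{i_1^{(j)} < \dots < i_{m_j}^{(j)}\}$ with $\sum_j m_j = m$; summing over all $K$ is then the same as summing over all such block decompositions. Because the concatenation lists the blocks in the order $1,\dots,N$ and preserves the increasing order within each block, the increasing combined product $\prod_{a\in K}\Psi_a$ factorizes as $\prod_{j=1}^N \psi^{(j)}_{i_1^{(j)}}\cdots\psi^{(j)}_{i_{m_j}^{(j)}}$ with no extra reordering sign. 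I expect the main obstacle to be precisely this sign and ordering bookkeeping: one must confirm that the total order placed on $[\tilde N]$ in the application of Lemma~\ref{pure fermionic sources} is compatible with the block concatenation so that no stray $(-1)$ is generated when passing to the block-ordered product, and one must track that the genuine constraint on the summation indices is $\sum_j m_j = m$ (the $m$ integrated variables distributed across the blocks), which governs which rows of $\mathbb{A}$ actually enter the square minor $\det(\mathbb{A}_{[K,\star]})$.
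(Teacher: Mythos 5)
Your proposal follows exactly the paper's own proof: concatenate the sources into a single row vector $\boldsymbol{\Psi}$, stack the matrices into $\mathbb{A}$, and apply Lemma~\ref{pure fermionic sources} to the single bilinear form $\boldsymbol{\Psi}\mathbb{A}\boldsymbol{\chi}$. Your closing remark that the surviving terms are governed by the constraint $\sum_j m_j = m$ (the $m$ integrated variables distributed across the blocks, rather than $m$ indices per block) is in fact more careful than the paper's displayed formula, which as written selects $Nm$ rows of $\mathbb{A}$ and hence a non-square minor.
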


\subsection{Grassmann-Berezin integrals for complex fermions}
This subsection develops the Grassmann–Berezin formalism for complex fermionic variables. We address the most general case, where complex Grassmann fields are coupled simultaneously to bosonic and fermionic sources. Building on the general derivation, we examine several instructive special cases, emphasizing how the presence of a complex structure alters the properties of the integrals compared with the real-fermion formulation, and clarify the correspondence between both frameworks.
\subsubsection{Fermionic Determinant with Bosonic-Fermionic sources}
\begin{theorem}[Fermionic Determinant with Bosonic-Fermionic sources]\label{theorem 2.2}
Let $\chi^1,\dots,\chi^N$, $\bar{\chi}^1,\dots,\bar{\chi}^N$, and $\psi^1,\dots,\psi^N$, $\bar{\psi}^1,\dots,\bar{\psi}^N$ denote complex fermionic variables, and let $\bar{u}_1,\dots,\bar{u}_L,u_1,\dots,u_L$ denote real (or complex) bosonic variables.

\medskip
\noindent
\textbf{(a)} For an $L\times L$ matrix $\mathbf{A}$, the following relations hold:
 \begin{equation}\label{complex fermion with all linear term}
\hspace{-2cm}\boxed{\begin{aligned}
&\int \mathbf{D}(\boldsymbol{\chi},\bar{\boldsymbol{\chi}}) e^{\bar{\boldsymbol{\chi}}^T \mathbf{A}\boldsymbol{\chi}+\bar{\boldsymbol{\psi}}^T\boldsymbol{\chi}+\bar{\boldsymbol{\chi}}^T \boldsymbol{\psi}+\bar{\mathbf{u}}^T\boldsymbol{\chi}+\bar{\boldsymbol{\chi}}^T \mathbf{u}}\\&=\det(\mathbf{A})+
   \sum_{m= 1}^L
   \frac{(-1)^{m}}{(m!)^2}
   \sum_{j_{1},\dots,j_{m}=1}^{L}
   \sum_{i_{1},\dots,i_{m}=1}^{L}
   \epsilon\left(I_{[m]},J_{[m]}\right) \det\left(\mathbf{A}_{\left[I_{[m]}^c,J_{[m]}^c\right]}\right)
     \prod_{\alpha=1}^{m} \bar{\psi}_{j_\alpha}
     \prod_{\beta=1}^{m}
     \psi_{i_\beta}\notag\\
&
 -\det(\bar{\mathbf{Y}})-\sum_{m= 1}^L
   \frac{1}{m!(m-1)!}
   \sum_{j_{1},\dots,j_{m-1}=1}^{L}
   \sum_{i_{2},\dots,i_{m}=1}^{L}
   \epsilon\left(I_{[m]}\setminus\{i_1\},J_{[m-1]}\right) \det\left(\bar{\mathbf{Y}}_{\left[(I_{[m]}\setminus \{i_1\})^c,J_{[m-1]}^c\right]}\right)
     \prod_{\alpha=1}^{m-1} \bar{\psi}_{j_\alpha}
     \prod_{\beta=2}^{m}
     \psi_{i_\beta} \notag\\
     &-\det(\mathbf{Y})+ \sum_{m= 1}^L
   \frac{(-1)^{m}}{m!(m-1)!}
   \sum_{j_{2},\dots,j_{m}=1}^{L}
   \sum_{i_{1},\dots,i_{m-1}=1}^{L}
   \epsilon\left(I_{[m-1]},J_{[m]}\setminus \{j_1\}\right)\det\left(\mathbf{Y}_{\left[I_{[m-1]}^c,(J_{[m]}\setminus \{j_1\})^c\right]}\right)
     \prod_{\alpha=2}^{m} \bar{\psi}_{j_\alpha}
     \prod_{\beta=1}^{m-1}
     \psi_{i_\beta},
\end{aligned}}
\end{equation}

where 
\begin{equation}
   \bar{\mathbf{Y}}=\begin{pmatrix} 0 & \bar{\mathbf{u}}^T \\ \boldsymbol{\psi} & \mathbf{A} \end{pmatrix}, \quad  \mathbf{Y}=\begin{pmatrix} 0 & \boldsymbol{\bar{\psi}}^T \\ \mathbf{u} & \mathbf{A} \end{pmatrix}.
\end{equation}

\medskip
\noindent
\textbf{(b)} For any subsets ${I}=\{i_1,...,i_r\}$ and ${J}=\{j_1,...,j_r\} \subseteq [L]$ having the same cardinality $r$, with $i_1<...<i_r$ and $j_1<...<j_r$, we have:
 \begin{equation}
 \hspace{-2.7cm}
\boxed{\begin{aligned}
&\int \mathbf{D}(\boldsymbol{\chi},\bar{\boldsymbol{\chi}}) \bigg(\prod_{\alpha=1}^{r} \bar{{\chi}} _{i_{\alpha}} {\chi}_{j_{\alpha}}\bigg)e^{\bar{\boldsymbol{\chi}}^T \mathbf{A}\boldsymbol{\chi}+\bar{\boldsymbol{\psi}}^T\boldsymbol{\chi}+\bar{\boldsymbol{\chi}}^T \boldsymbol{\psi}+\bar{\mathbf{u}}^T\boldsymbol{\chi}+\bar{\boldsymbol{\chi}}^T \mathbf{u}}=\epsilon(I_{[r]},J_{[r]})\det[\mathbf{A}_{[I_{[r]}^c,J_{[r]}^c]}]\\[6pt] &+
   \sum_{m= 1}^{L-r}
   \frac{(-1)^{m}(-1)^{r(L-r)}}{(m!)^2}
   \sum_{k_{1},\dots,k_{m}=1}^{L}
   \sum_{\ell_{1},\dots,\ell_{m}=1}^{L}
   \epsilon(I_{[r]}\cup \mathscr{L}_{[m]},J_{[r]}\cup K_{[m]}) \det[\mathbf{A}_{[I_{[r]}^c\cap \mathscr{L}_{[m]}^c ,J_{[r]}^c\cap K_{[m]}^c]}]
     \prod_{\alpha=1}^{m} \bar{\psi}_{k_\alpha}
     \prod_{\beta=1}^{m}
     \psi_{\ell_\beta}\\[6pt]
&-(-1)^{r}  \epsilon(I_{[r]},J_{[r]}) \det[\bar{\mathbf{Y}}_{[I_{[r]}^c,J_{[r]}^c]}] -(-1)^{r}  \epsilon(I_{[r]},J_{[r]})\det[\mathbf{Y}_{[I_{[r]}^c,J_{[r]}^c]} ]\\[6pt]
& - \sum_{m= 1}^{L-r}
   \frac{(-1)^{r(L-r)}}{m!(m-1)!}
   \sum_{k_{1},\dots,k_{m-1}=1}^{L}
   \sum_{\ell_{2},\dots,\ell_{m}=1}^{L}
   \epsilon(I_{[r]}\cup \mathscr{L}_{[m]}\setminus \{\ell_1\},J_{[r]}\cup K_{[m-1]}) \det[\bar{\mathbf{Y}}_{[I_{[r]}^c\cap (\mathscr{L}_{[m]}\setminus \{\ell_1\})^c ,J_{[r]}^c\cap K_{[m-1]}^c]}]
     \prod_{\alpha=1}^{m-1} \bar{\psi}_{k_\alpha}
     \prod_{\beta=2}^{m}
     \psi_{\ell_\beta}\\[6pt]
& + \sum_{m= 1}^{L-r}
   \frac{(-1)^{m}(-1)^{r(L-r)}}{m!(m-1)!}
   \sum_{k_{2},\dots,k_{m}=1}^{L}
   \sum_{\ell_{1},\dots,\ell_{m-1}=1}^{L}
   \epsilon(I_{[r]}\cup \mathscr{L}_{[m-1]},J_{[r]}\cup K_{[m]}\setminus \{k_1\}) \det[\mathbf{Y}_{[I_{[r]}^c\cap \mathscr{L}_{[m-1]}^c ,J_{[r]}^c\cap (K_{[m]}\setminus \{k_1\})^c]}]\times\\
   &\qquad\qquad\qquad\qquad\qquad\qquad\times\prod_{\alpha=2}^{m} \bar{\psi}_{k_\alpha}
     \prod_{\beta=1}^{m-1}
     \psi_{\ell_\beta}.
\end{aligned}}
\end{equation}

\medskip
\noindent
\textbf{(c)} For any $r\times L$ matrix $B$ and $L\times r$ matrix $C$, we have:
 \begin{equation}\label{eq: sixth integral}
\hspace{-2.4cm}
\boxed{\begin{split}
&\int \mathbf{D}(\boldsymbol{\chi},\bar{\boldsymbol{\chi}}) \bigg(\prod_{\alpha=1}^{r} (\bar{{\chi}}C)_{\alpha}  (B{\chi})_{\alpha}\bigg) e^{\bar{\boldsymbol{\chi}}^T \mathbf{A}\boldsymbol{\chi}+\bar{\boldsymbol{\psi}}^T\boldsymbol{\chi}+\bar{\boldsymbol{\chi}}^T \boldsymbol{\psi}+\bar{\mathbf{u}}^T\boldsymbol{\chi}+\bar{\boldsymbol{\chi}}^T \mathbf{u}}\\&=
   \sum_{i_1<\dots<i_{r}=1}^{L} \sum_{j_1<\dots<j_{r}=1}^{L} (\det B_{[\star,{J}_{[r]}]})  \Bigg\{ \epsilon(I_{[r]},J_{[r]})\det[\mathbf{A}_{[I_{[r]}^c,J_{[r]}^c]}]\\[6pt] &+
   \sum_{m= 1}^{L-r}
   \frac{(-1)^{m}(-1)^{r(L-r)}}{(m!)^2}
   \sum_{k_{1},\dots,k_{m}=1}^{L}
   \sum_{\ell_{1},\dots,\ell_{m}=1}^{L}
   \epsilon(I_{[r]}\cup \mathscr{L}_{[m]},J_{[r]}\cup K_{[m]}) \det[\mathbf{A}_{[I_{[r]}^c\cap \mathscr{L}_{[m]}^c ,J_{[r]}^c\cap K_{[m]}^c]}]
     \prod_{\alpha=1}^{m} \bar{\psi}_{k_\alpha}
     \prod_{\beta=1}^{m}
     \psi_{\ell_\beta}\notag\\[6pt]
&-(-1)^{r}  \epsilon(I_{[r]},J_{[r]}) \det[\bar{\mathbf{Y}}_{[I_{[r]}^c,J_{[r]}^c]}] \notag\\[6pt]
& - \sum_{m= 1}^{L-r}
   \frac{(-1)^{r(L-r)}}{m!(m-1)!}
   \sum_{k_{1},\dots,k_{m-1}=1}^{L}
   \sum_{\ell_{2},\dots,\ell_{m}=1}^{L}
   \epsilon(I_{[r]}\cup \mathscr{L}_{[m]}\setminus \{\ell_1\},J_{[r]}\cup K_{[m-1]}) \det[\bar{\mathbf{Y}}_{[I_{[r]}^c\cap (\mathscr{L}_{[m]}\setminus \{\ell_1\})^c ,J_{[r]}^c\cap K_{[m-1]}^c]}]
     \prod_{\alpha=1}^{m-1} \bar{\psi}_{k_\alpha}
     \prod_{\beta=2}^{m}
     \psi_{\ell_\beta}\notag\\[6pt]
& + \sum_{m= 1}^{L-r}
   \frac{(-1)^{m}(-1)^{r(L-r)}}{m!(m-1)!}
   \sum_{k_{2},\dots,k_{m}=1}^{L}
   \sum_{\ell_{1},\dots,\ell_{m-1}=1}^{L}
   \epsilon(I_{[r]}\cup \mathscr{L}_{[m-1]},J_{[r]}\cup K_{[m]}\setminus \{k_1\}) \det[\mathbf{Y}_{[I_{[r]}^c\cap \mathscr{L}_{[m-1]}^c ,J_{[r]}^c\cap (K_{[m]}\setminus \{k_1\})^c]}]\times\\[6pt] 
     &\qquad\qquad\times\prod_{\alpha=2}^{m} \bar{\psi}_{k_\alpha}
     \prod_{\beta=1}^{m-1}
     \psi_{\ell_\beta}-(-1)^{r}  \epsilon(I_{[r]},J_{[r]})\det[\mathbf{Y}_{[I_{[r]}^c,J_{[r]}^c]} ] \Bigg\} (\det C_{[{I}_{[r]} ,\star]}).
\end{split}}
\end{equation}
\end{theorem}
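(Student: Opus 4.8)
The plan is to reduce every case to the source-free complex-fermion identity recalled just above, namely $\int\mathbf{D}(\boldsymbol{\chi},\bar{\boldsymbol{\chi}})\big(\prod_{\alpha=1}^{r}\bar{\chi}_{i_\alpha}\chi_{j_\alpha}\big)e^{\bar{\boldsymbol{\chi}}^\top\mathbf{A}\boldsymbol{\chi}}=\epsilon(I_{[r]},J_{[r]})\det(\mathbf{A}_{[I_{[r]}^c,J_{[r]}^c]})$, by expanding the source exponential into a finite polynomial in the Grassmann and bosonic data. The decisive preliminary observation is that the two bosonic couplings truncate at linear order. Writing $P:=\bar{\mathbf{u}}^\top\boldsymbol{\chi}$, $Q:=\bar{\boldsymbol{\chi}}^\top\mathbf{u}$ and $R:=\bar{\boldsymbol{\psi}}^\top\boldsymbol{\chi}+\bar{\boldsymbol{\chi}}^\top\boldsymbol{\psi}$, the elements $P,Q$ are Grassmann-odd with commuting bosonic coefficients, so $P^2=Q^2=0$, while $\{P,Q\}=\sum_{k,\ell}\bar u_k u_\ell\{\chi_k,\bar\chi_\ell\}=0$ because $\chi_k$ and $\bar\chi_\ell$ are independent generators. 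Hence $(P+Q)^2=0$, and since $\bar{\boldsymbol{\chi}}^\top\mathbf{A}\boldsymbol{\chi}$ and $R$ are even (hence central), $e^{\bar{\boldsymbol{\chi}}^\top\mathbf{A}\boldsymbol{\chi}+R+P+Q}=e^{\bar{\boldsymbol{\chi}}^\top\mathbf{A}\boldsymbol{\chi}}\,e^{R}\,(1+P+Q)$. This at once splits the integral in (a) as $Z=Z_0+Z_P+Z_Q$ and explains, structurally, why the answer carries no term containing both $\bar{\mathbf{u}}$ and $\mathbf{u}$.

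For $Z_0=\int \mathbf{D}(\boldsymbol{\chi},\bar{\boldsymbol{\chi}})\,e^{\bar{\boldsymbol{\chi}}^\top\mathbf{A}\boldsymbol{\chi}}e^{R}$ I would use $e^{R}=e^{\bar{\boldsymbol{\psi}}^\top\boldsymbol{\chi}}e^{\bar{\boldsymbol{\chi}}^\top\boldsymbol{\psi}}=\prod_i(1+\bar\psi_i\chi_i)\prod_j(1+\bar\chi_j\psi_j)$, collect the degree-$m$ contribution $\tfrac{1}{(m!)^2}(\bar{\boldsymbol{\psi}}^\top\boldsymbol{\chi})^m(\bar{\boldsymbol{\chi}}^\top\boldsymbol{\psi})^m$, and apply the base identity to each resulting monomial $\prod_\alpha\bar\chi_{i_\alpha}\chi_{j_\alpha}$. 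Reordering the source factors into canonical increasing order produces the signature $\epsilon(I_{[m]},J_{[m]})$ and the global $(-1)^m$, yielding precisely Group~1, the $\det(\mathbf{A})$-series. The analyses of $Z_P$ and $Z_Q$ run in parallel: the extra factor $P=\sum_k\bar u_k\chi_k$ (resp.\ $Q=\sum_\ell\bar\chi_\ell u_\ell$) inserts one further $\chi$ (resp.\ $\bar\chi$), so expanding $e^R$ again lands on base integrals, now carrying a single unmatched bosonic index.

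The crux is to recognize the $Z_P$ and $Z_Q$ sums as cofactor/Laplace expansions of the bordered matrices $\bar{\mathbf{Y}}=\bigl(\begin{smallmatrix}0&\bar{\mathbf{u}}^\top\\ \boldsymbol{\psi}&\mathbf{A}\end{smallmatrix}\bigr)$ and $\mathbf{Y}=\bigl(\begin{smallmatrix}0&\bar{\boldsymbol{\psi}}^\top\\ \mathbf{u}&\mathbf{A}\end{smallmatrix}\bigr)$. Expanding $\det(\bar{\mathbf{Y}})$ and its principal minors along the bordering row $\bar{\mathbf{u}}$ and column $\boldsymbol{\psi}$ generates exactly the pattern ``one $\bar{\mathbf{u}}$, arbitrarily many $\boldsymbol{\psi}$, the rest from $\mathbf{A}$,'' which is the content of $Z_P$; matching bordered cofactors with the expanded integral term by term then forces the mismatched factorial weights $\tfrac{1}{m!(m-1)!}$, the signatures $\epsilon(I_{[m]}\setminus\{i_1\},J_{[m-1]})$, and the attendant signs. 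This bordering is the complex counterpart of the auxiliary row and column used to build $\mathbf{X}$ in Theorem~\ref{exponential with linear fermionic and bosonic sources}. I would establish the bordered-minor identity, and hence the full $L$-dependent pattern, by induction on $L$, mirroring the induction in the appendix proof of the real case, with the small cases $L=1,2$ checked by direct Berezin integration.

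Finally, parts (b) and (c) reduce to (a). For (b) I would carry the prefactor $\prod_{\alpha=1}^{r}\bar\chi_{i_\alpha}\chi_{j_\alpha}$ through the same expansion; these $2r$ Grassmann factors must be anticommuted past the $R,P,Q$ insertions before the base identity applies, which is the source of the overall sign $(-1)^{r(L-r)}$ and of the enlarged index sets $I_{[r]}\cup\cdots$, $J_{[r]}\cup\cdots$ appearing in the determinants and signatures. Part (c) then follows from (b) by a multilinear (Cauchy--Binet) expansion, writing $\prod_{\alpha=1}^r(\bar{\boldsymbol{\chi}}C)_\alpha(B\boldsymbol{\chi})_\alpha=\sum_{I_{[r]},J_{[r]}}\det(C_{[I_{[r]},\star]})\det(B_{[\star,J_{[r]}]})\prod_\alpha\bar\chi_{i_\alpha}\chi_{j_\alpha}$ and substituting the (b) result inside the braces. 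The main obstacle throughout is bookkeeping: tracking the reordering signs and verifying that the linear-in-bosonic-source contributions assemble precisely into the bordered determinants $\bar{\mathbf{Y}},\mathbf{Y}$ with the stated factorial normalizations. The nilpotency identity $(P+Q)^2=0$ is exactly what keeps this bookkeeping finite and excises any mixed $\bar{\mathbf{u}}\mathbf{u}$ terms from the outset.
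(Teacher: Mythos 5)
Your proposal is correct and follows essentially the same route as the paper's own proof: expand the bosonic source exponential to linear order (the paper keeps the factor $1+\bar{\mathbf{u}}^{\top}\boldsymbol{\chi}+\bar{\boldsymbol{\chi}}^{\top}\mathbf{u}$, which is exactly your $1+P+Q$), expand $e^{R}$, kill the unsaturated terms, apply the source-free determinant identity monomial by monomial, and finally recognize the single-$\mathbf{u}$ sums as Laplace expansions of the bordered matrices $\bar{\mathbf{Y}},\mathbf{Y}$, with (b) obtained by carrying the prefactor through and (c) by the multilinear expansion. Your explicit nilpotency argument $(P+Q)^{2}=0$ is a cleaner justification of the truncation than the paper's remark that the bosonic sources ``cannot be disentangled,'' but it is the same computation.
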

\begin{corollary}[Fermionic Determinant with Bosonic-Fermionic sources (invertible case)]
Let $\chi^1,\dots,\chi^N$, $\bar{\chi}^1,\dots,\bar{\chi}^N$, and $\psi^1,\dots,\psi^N$, $\bar{\psi}^1,\dots,\bar{\psi}^N$ denote complex fermionic variables, and let $\bar{u}_1,\dots,\bar{u}_L,u_1,\dots,u_L$ denote real (or complex) bosonic variables.

\medskip
\noindent
\textbf{(a)} For an invertible $L\times L$ matrix $\mathbf{A}$, the following relation hold:
\begin{equation}
\boxed{\begin{aligned}
\int \mathbf{D}(\boldsymbol{\chi},\bar{\boldsymbol{\chi}}) e^{\bar{\boldsymbol{\chi}}^T \mathbf{A}\boldsymbol{\chi}+\bar{\boldsymbol{\psi}}^T\boldsymbol{\chi}+\bar{\boldsymbol{\chi}}^T \boldsymbol{\psi}+\bar{\mathbf{u}}^T\boldsymbol{\chi}+\bar{\boldsymbol{\chi}}^T \mathbf{u}}=\det(\mathbf{A})e^{-\bar{\boldsymbol{\psi}}^T \mathbf{A}^{-1}\boldsymbol{\psi}-\bar{\mathbf{u}}^T \mathbf{A}^{-1}\boldsymbol{\psi}-\bar{\boldsymbol{\psi}}^T \mathbf{A}^{-1}\mathbf{u}}.
\end{aligned}}
\end{equation}

\medskip
\noindent
\textbf{(b)} For an invertible $L\times L$ matrix $\mathbf{A}$, and for any subsets ${I}=\{i_1,...,i_r\}$ and ${J}=\{j_1,...,j_r\} \subseteq [L]$ having the same cardinality $r$, with $i_1<...<i_r$ and $j_1<...<j_r$, we have:
\begin{equation}
\hspace{-2cm}\boxed{\begin{aligned}
\int \mathbf{D}(\boldsymbol{\chi},\bar{\boldsymbol{\chi}}) \bigg(\prod_{\alpha=1}^{r} \bar{{\chi}} _{i_{\alpha}} {\chi}_{j_{\alpha}}\bigg)e^{\bar{\boldsymbol{\chi}}^T \mathbf{A}\boldsymbol{\chi}+\bar{\boldsymbol{\psi}}^T\boldsymbol{\chi}+\bar{\boldsymbol{\chi}}^T \boldsymbol{\psi}+\bar{\mathbf{u}}^T\boldsymbol{\chi}+\bar{\boldsymbol{\chi}}^T \mathbf{u}}&= \epsilon(I_{[r]},J_{[r]})\det[\mathbf{A}_{[I_{[r]}^c,J_{[r]}^c]}]\\& \times e^{-\bar{\boldsymbol{\psi}}_{[J_{[r]}^c]}^T \mathbf{A}^{-1}_{[I_{[r]}^c,J_{[r]}^c]}\boldsymbol{\psi}_{[I_{[r]}^c]}-\bar{\bold{u}}_{[J_{[r]}^c]}^T \mathbf{A}^{-1}_{[I_{[r]}^c,J_{[r]}^c]}\boldsymbol{\psi}_{[I_{[r]}^c]}-\bar{\boldsymbol{\psi}}_{[J_{[r]}^c]}^T \mathbf{A}^{-1}_{[I_{[r]}^c,J_{[r]}^c]}\bold{u}_{[I_{[r]}^c]}}.
\end{aligned}}
\end{equation}

\medskip
\noindent
\textbf{(c)} For an invertible $L\times L$ matrix $\mathbf{A}$, and for any $r\times L$ matrix $B$ and $L\times r$ matrix $C$, we have:
\begin{equation} \hspace{-1.2cm}
\hspace{-1cm}\boxed{\begin{aligned}
\int \mathbf{D}(\boldsymbol{\chi},\bar{\boldsymbol{\chi}}) \bigg(\prod_{\alpha=1}^{r} (\bar{{\chi}}C)_{\alpha}  (B{\chi})_{\alpha}\bigg) &e^{\bar{\boldsymbol{\chi}}^T \mathbf{A}\boldsymbol{\chi}+\bar{\boldsymbol{\psi}}^T\boldsymbol{\chi}+\bar{\boldsymbol{\chi}}^T \boldsymbol{\psi}+\bar{\mathbf{u}}^T\boldsymbol{\chi}+\bar{\boldsymbol{\chi}}^T \mathbf{u}}=\sum_{i_1<\dots<i_{r}=1}^{L} \sum_{j_1<\dots<j_{r}=1}^{L} (\det B_{[\star,{J}_{[r]}]}) \epsilon(I_{[r]},J_{[r]})\det[\mathbf{A}_{[I_{[r]}^c,J_{[r]}^c]}] \\& \times e^{-\bar{\boldsymbol{\psi}}_{[J_{[r]}^c]}^T \mathbf{A}^{-1}_{[I_{[r]}^c,J_{[r]}^c]}\boldsymbol{\psi}_{[I_{[r]}^c]}-\bar{\bold{u}}_{[J_{[r]}^c]}^T \mathbf{A}^{-1}_{[I_{[r]}^c,J_{[r]}^c]}\boldsymbol{\psi}_{[I_{[r]}^c]}-\bar{\boldsymbol{\psi}}_{[J_{[r]}^c]}^T \mathbf{A}^{-1}_{[I_{[r]}^c,J_{[r]}^c]}\bold{u}_{[I_{[r]}^c]}} (\det C_{[{I}_{[r]} ,\star]}).
\end{aligned}}
\end{equation}
\end{corollary}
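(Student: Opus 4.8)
The plan is to prove the three identities by reducing the source-augmented complex-fermion integral to the standard Gaussian formula for complex Grassmann variables (item~1 of the ``Complex Fermions'' list opening Section~\ref{Sec:6}), invertibility of $\mathbf{A}$ being exactly what allows the polynomial expansion of Theorem~\ref{theorem 2.2} to be resummed into closed exponential form. The most economical route to part~(a) is to complete the square directly. Observing that the bosonic and fermionic linear couplings enter only through the combinations $\bar{\boldsymbol{\chi}}^T(\boldsymbol{\psi}+\mathbf{u})$ and $(\bar{\boldsymbol{\psi}}+\bar{\mathbf{u}})^T\boldsymbol{\chi}$, I would set $\boldsymbol{\eta}=\boldsymbol{\psi}+\mathbf{u}$ and $\bar{\boldsymbol{\eta}}=\bar{\boldsymbol{\psi}}+\bar{\mathbf{u}}$ and perform the translation $\boldsymbol{\chi}\mapsto\boldsymbol{\chi}-\mathbf{A}^{-1}\boldsymbol{\eta}$, $\bar{\boldsymbol{\chi}}\mapsto\bar{\boldsymbol{\chi}}-\mathbf{A}^{-T}\bar{\boldsymbol{\eta}}$. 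Since the Berezin measure is invariant under a shift by any constant vector, including one with bosonic components, this decouples the sources and yields $\det(\mathbf{A})\,e^{-\bar{\boldsymbol{\eta}}^T\mathbf{A}^{-1}\boldsymbol{\eta}}$. Expanding $\bar{\boldsymbol{\eta}}^T\mathbf{A}^{-1}\boldsymbol{\eta}$ into its fermion--fermion, fermion--boson, and boson--boson pieces reproduces the cross terms in the stated exponent. As a consistency check one can instead resum Theorem~\ref{theorem 2.2}(a): the three families of determinant terms collapse, through the bordered-determinant (Schur complement) evaluations $\det(\bar{\mathbf{Y}})=-\det(\mathbf{A})\,\bar{\mathbf{u}}^T\mathbf{A}^{-1}\boldsymbol{\psi}$ and $\det(\mathbf{Y})=-\det(\mathbf{A})\,\bar{\boldsymbol{\psi}}^T\mathbf{A}^{-1}\mathbf{u}$ together with their principal-submatrix analogues, onto the same exponential.

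For part~(b) I would carry the identical shift through the inserted monomial $\prod_{\alpha}\bar{\chi}_{i_\alpha}\chi_{j_\alpha}$, so that each factor splits as $\bar{\chi}_{i_\alpha}-(\mathbf{A}^{-T}\bar{\boldsymbol{\eta}})_{i_\alpha}$ and $\chi_{j_\alpha}-(\mathbf{A}^{-1}\boldsymbol{\eta})_{j_\alpha}$. Expanding the product and integrating the remaining Gaussian with item~2 (the sub-determinant formula $\epsilon(I_{[r]},J_{[r]})\det(\mathbf{A}_{[I^c,J^c]})$) reorganizes the surviving contractions into a Gaussian in the complementary source components $\boldsymbol{\psi}_{[I^c]},\bar{\boldsymbol{\psi}}_{[J^c]},\mathbf{u}_{[I^c]},\bar{\mathbf{u}}_{[J^c]}$ governed by the restricted inverse $\mathbf{A}^{-1}_{[I^c_{[r]},J^c_{[r]}]}$. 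Equivalently, part~(b) can be extracted from the generating functional of part~(a) by Grassmann differentiation in the fermionic sources. Part~(c) then follows from part~(b) by pure multilinearity: writing $(\bar{\boldsymbol{\chi}}C)_\alpha=\sum_i \bar{\chi}_i C_{i\alpha}$ and $(B\boldsymbol{\chi})_\alpha=\sum_j B_{\alpha j}\chi_j$, expanding, and collecting the antisymmetrized sums over ordered index sets into $\det B_{[\star,J_{[r]}]}$ and $\det C_{[I_{[r]},\star]}$ by Cauchy--Binet, exactly as item~4 is deduced from item~2.

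The main obstacle I expect is the combinatorial bookkeeping in part~(b): keeping track of the reordering signs $\epsilon(I_{[r]},J_{[r]})$ generated when the inserted Grassmann factors are commuted past the shifted fields and past one another, and confirming that the exponent involves the submatrix of the inverse $\mathbf{A}^{-1}_{[I^c,J^c]}$ rather than the inverse of the submatrix $(\mathbf{A}_{[I^c,J^c]})^{-1}$, which differ in general. Reconciling these two objects is precisely where Jacobi's complementary-minor identity enters, and it must be applied consistently with the index restrictions imposed on the bosonic and fermionic sources. A secondary point requiring care is the purely bosonic scalar $\bar{\mathbf{u}}^T\mathbf{A}^{-1}\mathbf{u}$: completion of the square produces it as an overall factor $e^{-\bar{\mathbf{u}}^T\mathbf{A}^{-1}\mathbf{u}}$, so I would need to fix the normalization convention and verify that the signs of the mixed boson--fermion terms are rendered consistently between the direct computation and the resummation of Theorem~\ref{theorem 2.2}.
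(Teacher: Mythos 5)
Your central step for part~(a) fails. You assert that the Berezin measure is invariant under a shift ``by any constant vector, including one with bosonic components,'' and on that basis translate $\boldsymbol{\chi}\mapsto\boldsymbol{\chi}-\mathbf{A}^{-1}(\boldsymbol{\psi}+\mathbf{u})$. Translation invariance of Grassmann integration holds only for shifts by \emph{odd} elements: once $\mathbf{u}$ enters the shift, $\boldsymbol{\chi}+\mathbf{A}^{-1}(\boldsymbol{\psi}+\mathbf{u})$ is no longer odd, the quantity $(\bar{\boldsymbol{\chi}}+\mathbf{A}^{-T}\bar{\boldsymbol{\eta}})^{T}\mathbf{A}(\boldsymbol{\chi}+\mathbf{A}^{-1}\boldsymbol{\eta})$ is no longer nilpotent, its exponential does not truncate, and the Gaussian formula you invoke does not apply. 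The symptom is exactly the factor you flag at the end: your argument produces $\det(\mathbf{A})\,e^{-\bar{\boldsymbol{\eta}}^{T}\mathbf{A}^{-1}\boldsymbol{\eta}}$, which carries an extra $e^{-\bar{\mathbf{u}}^{T}\mathbf{A}^{-1}\mathbf{u}}$ relative to the stated identity. That factor is a nontrivial function of $\mathbf{A}$ and the sources and cannot be absorbed into a ``normalization convention.'' A one-site check settles it: for $L=1$ with $\mathbf{A}=(a)$, expanding $e^{E}$ with $E=a\bar{\chi}\chi+\bar{\psi}\chi+\bar{\chi}\psi+\bar{u}\chi+\bar{\chi}u$ gives, at second order, the purely bosonic-source contribution
\begin{equation*}
\tfrac{1}{2}\bigl[(\bar{u}\chi)(\bar{\chi}u)+(\bar{\chi}u)(\bar{u}\chi)\bigr]
=\tfrac{1}{2}\bigl[\bar{u}u\,\chi\bar{\chi}+\bar{u}u\,\bar{\chi}\chi\bigr]=0,
\end{equation*}
because the two odd cross-terms anticommute. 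The exact answer is $a-\bar{\psi}\psi-\bar{u}\psi-\bar{\psi}u$, in agreement with the corollary and in disagreement with your shifted Gaussian, which would add $-\bar{u}u$. Since parts~(b) and~(c) are built on the same translation, they inherit the error.

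What you relegate to a ``consistency check'' is in fact the proof: the corollary must be obtained by resumming the finite expansion of Theorem~\ref{theorem 2.2}, using Jacobi's complementary-minor identity $\det\bigl((\mathbf{A}^{-T})_{[I,J]}\bigr)=\epsilon(I_{[r]},J_{[r]})\det(\mathbf{A}_{[I^{c},J^{c}]})/\det(\mathbf{A})$ together with the Schur evaluations $\det(\bar{\mathbf{Y}})$, $\det(\mathbf{Y})$ and their minors; the absence of a $\bar{\mathbf{u}}^{T}\mathbf{A}^{-1}\mathbf{u}$ term is then automatic, because no such term exists anywhere in the finite sum being resummed. This mirrors exactly how the paper derives the real-fermion analogue (Corollary~\ref{exponential with linear fermionic and bosonic sources (invertible)}) in Appendix~\ref{Proof exponential with linear fermionic and bosonic sources (invertible)}, where the mixed boson--fermion series collapses onto $\mathbf{u}^{T}\mathbf{A}^{-1}\boldsymbol{\psi}\,e^{\frac{1}{2}\boldsymbol{\psi}^{T}\mathbf{A}^{-1}\boldsymbol{\psi}}$ with no $\mathbf{u}$--$\mathbf{u}$ piece. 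If you want to keep a completion-of-squares flavour, you may legitimately shift by the purely fermionic $\mathbf{A}^{-1}\boldsymbol{\psi}$ and then expand the remaining bosonic linear couplings by hand (note $(\bar{\mathbf{u}}^{T}\boldsymbol{\chi})^{2}=0$, so that expansion is short), but the bosonic sources cannot be folded into the translation. Your treatment of (b) via differentiation of the generating functional, and of (c) via Cauchy--Binet, would be fine once (a) and (b) are established by the correct route, with the caveat you correctly raise that Jacobi's identity is what converts $(\mathbf{A}^{-1})_{[J^{c},I^{c}]}$ into the $\mathbf{A}^{-1}_{[I^{c},J^{c}]}$ appearing in the statement.
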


\subsubsection{Fermionic Determinant with fermionic sources}
\begin{corollary}
    (Fermionic Determinant with fermionic sources).
    
\medskip
\noindent
\textbf{(a)} For an $L\times L$ matrix $\mathbf{A}$, the following relations hold:
 \begin{equation}
\boxed{\begin{aligned}
&\int \mathbf{D}(\boldsymbol{\chi},\bar{\boldsymbol{\chi}}) e^{\bar{\boldsymbol{\chi}}^T \mathbf{A}\boldsymbol{\chi}+\bar{\boldsymbol{\psi}}^T\boldsymbol{\chi}+\bar{\boldsymbol{\chi}}^T \boldsymbol{\psi}}\\&=\det[\mathbf{A}]+
   \sum_{m= 1}^L
   \frac{(-1)^{m}}{(m!)^2}
   \sum_{j_{1},\dots,j_{m}=1}^{L}
   \sum_{i_{1},\dots,i_{m}=1}^{L}
   \epsilon(I_{[m]},J_{[m]}) \det[\mathbf{A}_{[I_{[m]}^c,J_{[m]}^c]}]
     \prod_{\alpha=1}^{m} \bar{\psi}_{j_\alpha}
     \prod_{\beta=1}^{m}
     \psi_{i_\beta}.
\end{aligned}}
\end{equation}

\medskip
\noindent
\textbf{(a$^{\prime}$)} For invertible $L\times L$ matrix $\mathbf{A}$, we have:
 \begin{equation}
\boxed{\begin{aligned}
\int \mathbf{D}(\boldsymbol{\chi},\bar{\boldsymbol{\chi}}) e^{\bar{\boldsymbol{\chi}}^T \mathbf{A}\boldsymbol{\chi}+\bar{\boldsymbol{\psi}}^T\boldsymbol{\chi}+\bar{\boldsymbol{\chi}}^T \boldsymbol{\psi}}=\det[\mathbf{A}]
   e^{-\bar{\boldsymbol{\psi}}^T \mathbf{A}^{-1}\boldsymbol{\psi}}.
\end{aligned}}
\end{equation}

\medskip
\noindent
\textbf{(b)} For any subsets ${I}=\{i_1,...,i_r\}$ and ${J}=\{j_1,...,j_r\} \subseteq [L]$ having the same cardinality $r$, with $i_1<...<i_r$ and $j_1<...<j_r$, we have:
 \begin{equation}
\hspace{-2cm}\boxed{\begin{aligned}
&\int \mathbf{D}(\boldsymbol{\chi},\bar{\boldsymbol{\chi}}) \bigg(\prod_{\alpha=1}^{r} \bar{{\chi}} _{i_{\alpha}} {\chi}_{j_{\alpha}}\bigg)e^{\bar{\boldsymbol{\chi}}^T \mathbf{A}\boldsymbol{\chi}+\bar{\boldsymbol{\psi}}^T\boldsymbol{\chi}+\bar{\boldsymbol{\chi}}^T \boldsymbol{\psi}}=\epsilon(I_{[r]},J_{[r]})\det[\mathbf{A}_{[I_{[r]}^c,J_{[r]}^c]}]\\[6pt] &+
   \sum_{m= 1}^{L-r}
   \frac{(-1)^{m}(-1)^{r(L-r)}}{(m!)^2}
   \sum_{k_{1},\dots,k_{m}=1}^{L}
   \sum_{\ell_{1},\dots,\ell_{m}=1}^{L}
   \epsilon(I_{[r]}\cup \mathscr{L}_{[m]},J_{[r]}\cup K_{[m]}) \det[\mathbf{A}_{[I_{[r]}^c\cap \mathscr{L}_{[m]}^c ,J_{[r]}^c\cap K_{[m]}^c]}]
     \prod_{\alpha=1}^{m} \bar{\psi}_{k_\alpha}
     \prod_{\beta=1}^{m}
     \psi_{\ell_\beta}.
\end{aligned}}
\end{equation}

\medskip
\noindent
\textbf{(b$^{\prime}$)} For invertible $L \times L$ matrix $\mathbf{A}$ and for any subsets ${I}=\{i_1,...,i_r\}$ and ${J}=\{j_1,...,j_r\} \subseteq [L]$ having the same cardinality $r$, with $i_1<...<i_r$ and $j_1<...<j_r$, we have:
 \begin{equation}
\boxed{\begin{aligned}
&\int \mathbf{D}(\boldsymbol{\chi},\bar{\boldsymbol{\chi}}) \bigg(\prod_{\alpha=1}^{r} \bar{{\chi}} _{i_{\alpha}} {\chi}_{j_{\alpha}}\bigg)e^{\bar{\boldsymbol{\chi}}^T \mathbf{A}\boldsymbol{\chi}+\bar{\boldsymbol{\psi}}^T\boldsymbol{\chi}+\bar{\boldsymbol{\chi}}^T \boldsymbol{\psi}}=\epsilon(I_{[r]},J_{[r]})\det[\mathbf{A}_{[I_{[r]}^c,J_{[r]}^c]}]e^{-\bar{\boldsymbol{\psi}}_{[J_{[r]}^c]}^T \mathbf{A}^{-1}_{[I_{[r]}^c,J_{[r]}^c]}\boldsymbol{\psi}_{[I_{[r]}^c]}}.
\end{aligned}}
\end{equation}

\medskip
\noindent
\textbf{(c)} For any $r\times L$ matrix $B$ and $L\times r$ matrix $C$, we have:
 \begin{equation}
 \hspace{-2.5cm}
\boxed{\begin{split}
&\int \mathbf{D}(\boldsymbol{\chi},\bar{\boldsymbol{\chi}}) \bigg(\prod_{\alpha=1}^{r} (\bar{{\chi}}C)_{\alpha}  (B{\chi})_{\alpha}\bigg) e^{\bar{\boldsymbol{\chi}}^T \mathbf{A}\boldsymbol{\chi}+\bar{\boldsymbol{\psi}}^T\boldsymbol{\chi}+\bar{\boldsymbol{\chi}}^T \boldsymbol{\psi}}=
   \sum_{i_1<\dots<i_{r}=1}^{L} \sum_{j_1<\dots<j_{r}=1}^{L} (\det B_{[\star,{J}_{[r]}]})  \Bigg\{ \epsilon(I_{[r]},J_{[r]})\det[\mathbf{A}_{[I_{[r]}^c,J_{[r]}^c]}]\\[6pt] &+
   \sum_{m= 1}^{L-r}
   \frac{(-1)^{m}(-1)^{r(L-r)}}{(m!)^2}
   \sum_{k_{1},\dots,k_{m}=1}^{L}
   \sum_{\ell_{1},\dots,\ell_{m}=1}^{L}
   \epsilon(I_{[r]}\cup \mathscr{L}_{[m]},J_{[r]}\cup K_{[m]}) \det[\mathbf{A}_{[I_{[r]}^c\cap \mathscr{L}_{[m]}^c ,J_{[r]}^c\cap K_{[m]}^c]}]
     \prod_{\alpha=1}^{m} \bar{\psi}_{k_\alpha}
     \prod_{\beta=1}^{m}
     \psi_{\ell_\beta} \Bigg\} (\det C_{[{I}_{[r]} ,\star]}).
\end{split}}
\end{equation}

\medskip
\noindent
\textbf{(c$^{\prime}$)} For invertible $L \times L$ matrix $\mathbf{A}$ and for any $r\times L$ matrix $B$ and $L\times r$ matrix $C$, we have:
 \begin{equation}
\hspace{-2cm}\boxed{\begin{split}
\int \mathbf{D}(\boldsymbol{\chi},\bar{\boldsymbol{\chi}}) \bigg(\prod_{\alpha=1}^{r} (\bar{{\chi}}C)_{\alpha}  (B{\chi})_{\alpha}\bigg) e^{\bar{\boldsymbol{\chi}}^T \mathbf{A}\boldsymbol{\chi}+\bar{\boldsymbol{\psi}}^T\boldsymbol{\chi}+\bar{\boldsymbol{\chi}}^T \boldsymbol{\psi}}&=
   \sum_{i_1<\dots<i_{r}=1}^{L} \sum_{j_1<\dots<j_{r}=1}^{L} (\det B_{[\star,{J}_{[r]}]})  \epsilon(I_{[r]},J_{[r]})\det[\mathbf{A}_{[I_{[r]}^c,J_{[r]}^c]}]\\& \times e^{-\bar{\boldsymbol{\psi}}_{[J_{[r]}^c]}^T \mathbf{A}^{-1}_{[I_{[r]}^c,J_{[r]}^c]}\boldsymbol{\psi}_{[I_{[r]}^c]}} (\det C_{[{I}_{[r]} ,\star]}).
\end{split}}
\end{equation}

\end{corollary}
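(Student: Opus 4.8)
The plan is to read off this corollary as the specialization of Theorem~\ref{theorem 2.2} obtained by switching off both bosonic sources, $\mathbf{u}=\bar{\mathbf{u}}=\mathbf{0}$, and then to recover the primed (invertible) statements either from the Gaussian identities recalled at the start of Section~\ref{Sec:6} or by completing the square.

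First I would substitute $\mathbf{u}=\bar{\mathbf{u}}=\mathbf{0}$ into parts (a), (b), (c) of Theorem~\ref{theorem 2.2}. Under this substitution the bordered matrices degenerate to
$$\bar{\mathbf{Y}}=\begin{pmatrix} 0 & \mathbf{0}^{T} \\ \boldsymbol{\psi} & \mathbf{A}\end{pmatrix},\qquad \mathbf{Y}=\begin{pmatrix} 0 & \bar{\boldsymbol{\psi}}^{T} \\ \mathbf{0} & \mathbf{A}\end{pmatrix},$$
so $\bar{\mathbf{Y}}$ carries an identically zero border row and $\mathbf{Y}$ an identically zero border column. The key observation is that every $\bar{\mathbf{Y}}$- and $\mathbf{Y}$-determinant appearing in the three identities is evaluated on a submatrix whose deleted rows and columns are indexed by subsets of $[L]$ only; since the distinguished border index lies outside $[L]$, it is never deleted, and each such submatrix therefore retains the full border line. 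Expanding along that retained line---which is the zero vector once the bosonic sources vanish---shows that all $\det(\bar{\mathbf{Y}}_{[\cdots]})$ and $\det(\mathbf{Y}_{[\cdots]})$ terms vanish identically. What survives in each part is exactly the pure-$\mathbf{A}$ block: $\det(\mathbf{A})$ plus the single $\bar{\psi}$--$\psi$ double sum in (a), the head term $\epsilon(I_{[r]},J_{[r]})\det(\mathbf{A}_{[I_{[r]}^{c},J_{[r]}^{c}]})$ plus its $\mathbf{A}$-tail in (b), and the $B$/$C$-weighted version in (c). This reproduces the unprimed formulas.

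For the primed cases I would argue in the invertible regime. Statement (a$'$) is literally the Gaussian formula for complex fermions recalled at the beginning of Section~\ref{Sec:6}; alternatively, one resums the series in (a) by applying the Jacobi complementary-minor identity $\epsilon(I_{[m]},J_{[m]})\det(\mathbf{A}_{[I_{[m]}^{c},J_{[m]}^{c}]})=\det(\mathbf{A})\det((\mathbf{A}^{-1})_{[J_{[m]},I_{[m]}]})$ and then recognizing the resulting power series as the Grassmann exponential $\det(\mathbf{A})\,e^{-\bar{\boldsymbol{\psi}}^{T}\mathbf{A}^{-1}\boldsymbol{\psi}}$. For (b$'$) and (c$'$) I would complete the square directly in the integral: for invertible $\mathbf{A}$ the shift $\boldsymbol{\chi}\mapsto\boldsymbol{\chi}-\mathbf{A}^{-1}\boldsymbol{\psi}$, $\bar{\boldsymbol{\chi}}\mapsto\bar{\boldsymbol{\chi}}-\bar{\boldsymbol{\psi}}^{T}\mathbf{A}^{-1}$ removes the linear source terms at the cost of the factor $e^{-\bar{\boldsymbol{\psi}}^{T}\mathbf{A}^{-1}\boldsymbol{\psi}}$, and the insertions $\prod\bar{\chi}_{i_\alpha}\chi_{j_\alpha}$ (resp. $\prod(\bar{\chi}C)_\alpha(B\chi)_\alpha$) become shifted insertions whose sourceless Berezin integral is the minor identity already recalled in Section~\ref{Sec:6}; restricting the exponent to the surviving indices $I_{[r]}^{c},J_{[r]}^{c}$ then yields the stated reduced quadratic form.

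The main obstacle will be the bookkeeping in the two reductions rather than any analytic difficulty. For the unprimed cases the delicate point is to justify, against the paper's index conventions for the bordered matrices, that the border line is indeed retained in every surviving submatrix (equivalently, that each $\bar{\mathbf{Y}}$/$\mathbf{Y}$ contribution is proportional to a component of $\bar{\mathbf{u}}$/$\mathbf{u}$); for the primed cases the fiddly step is matching the combinatorial weights $1/(m!)^2$ of the double tuple sums against the $1/m!$ of the exponential, which hinges on the Grassmann identity $(\bar{\boldsymbol{\psi}}^{T}M\boldsymbol{\psi})^{m}=m!\sum_{|I|=|J|=m}\det(M_{[I,J]})\prod_{\alpha}\bar{\psi}_{i_\alpha}\prod_{\beta}\psi_{j_\beta}$ together with the reordering sign $\epsilon(I,J)$ generated by the anticommuting products.
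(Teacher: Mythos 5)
Your proposal is correct and follows the route the paper intends: the unprimed identities are Theorem~\ref{theorem 2.2} with $\mathbf{u}=\bar{\mathbf{u}}=\mathbf{0}$, where your observation that every surviving $\bar{\mathbf{Y}}$- and $\mathbf{Y}$-submatrix retains the (now identically zero) border row or column is exactly the point that kills all bordered-determinant terms, and the primed identities are the classical invertible-case Gaussian formulas already recalled at the head of Section~\ref{Sec:6} (equivalently, a Jacobi-identity resummation of the unprimed series). No gap.
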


\subsubsection{Fermionic Determinant with bosonic sources}
\begin{corollary}(Fermionic Determinant with bosonic sources).

\medskip
\noindent
\textbf{(a)} For an $L\times L$ matrix $\mathbf{A}$, the following relations hold:
 \begin{equation}
\boxed{\begin{aligned}
&\int \mathbf{D}(\boldsymbol{\chi},\bar{\boldsymbol{\chi}}) e^{\bar{\boldsymbol{\chi}}^T \mathbf{A}\boldsymbol{\chi}+\bar{\mathbf{u}}^T\boldsymbol{\chi}+\bar{\boldsymbol{\chi}}^T \mathbf{u}}=\det(\mathbf{A}).
\end{aligned}}
\end{equation}

\medskip
\noindent
\textbf{(b)} For any subsets ${I}=\{i_1,...,i_r\}$ and ${J}=\{j_1,...,j_r\} \subseteq [L]$ having the same cardinality $r$, with $i_1<...<i_r$ and $j_1<...<j_r$, we have:
 \begin{equation}
\boxed{\begin{aligned}
&\int \mathbf{D}(\boldsymbol{\chi},\bar{\boldsymbol{\chi}}) \bigg(\prod_{\alpha=1}^{r}\bar{{\chi}} _{i_{\alpha}} {\chi}_{j_{\alpha}}\bigg)e^{\bar{\boldsymbol{\chi}}^T \mathbf{A}\boldsymbol{\chi}+\bar{\mathbf{u}}^T\boldsymbol{\chi}+\bar{\boldsymbol{\chi}}^T \mathbf{u}}=\epsilon(I_{[r]},J_{[r]})\det[\mathbf{A}_{[I_{[r]}^c,J_{[r]}^c]}].
\end{aligned}}
\end{equation}

\medskip
\noindent
\textbf{(c)} For any $r\times L$ matrix $B$ and $L\times r$ matrix $C$, we have:
 \begin{equation}
 \hspace{-2.4cm}
\boxed{\begin{split}
\int \mathbf{D}(\boldsymbol{\chi},\bar{\boldsymbol{\chi}}) \bigg(\prod_{\alpha=1}^{r} (\bar{{\chi}}C)_{\alpha}  (B{\chi})_{\alpha}\bigg) e^{\bar{\boldsymbol{\chi}}^T \mathbf{A}\boldsymbol{\chi}+\bar{\mathbf{u}}^T\boldsymbol{\chi}+\bar{\boldsymbol{\chi}}^T \mathbf{u}}=
   \sum_{i_1<\dots<i_{r}=1}^{L} \sum_{j_1<\dots<j_{r}=1}^{L}\epsilon(I_{[r]},J_{[r]}) (\det B_{[\star,{J}_{[r]}]})  \det[\mathbf{A}_{[I_{[r]}^c,J_{[r]}^c]}] (\det C_{[{I}_{[r]} ,\star]}).
\end{split}}
\end{equation}

\end{corollary}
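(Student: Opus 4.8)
The plan is to obtain this corollary as the specialization of Theorem~\ref{theorem 2.2} (Fermionic Determinant with Bosonic--Fermionic sources) in which both Grassmann sources are switched off, $\boldsymbol{\psi}=\bar{\boldsymbol{\psi}}=0$. The left-hand sides of parts (a)--(c) above are exactly the left-hand sides of the corresponding parts of Theorem~\ref{theorem 2.2} evaluated at $\boldsymbol{\psi}=\bar{\boldsymbol{\psi}}=0$, so it suffices to track which terms on the right-hand sides survive this limit. First I would dispose of the sums: every summand in the $m\geq 1$ series carries at least one factor $\bar{\psi}_{k_\alpha}$ or $\psi_{\ell_\beta}$ (respectively $\bar\psi_{j_\alpha},\psi_{i_\beta}$ in part~(a)), hence vanishes identically once the sources are zero. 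This eliminates all series contributions in (a), (b), and (c) at a single stroke.

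What then remains are the \emph{sourceless} standalone pieces: the leading determinant (respectively its minor $\epsilon(I_{[r]},J_{[r]})\det[\mathbf{A}_{[I_{[r]}^c,J_{[r]}^c]}]$, respectively the double sum of minors), together with the bordered determinants $\det(\bar{\mathbf{Y}})$ and $\det(\mathbf{Y})$ and their submatrix analogues. The key---and only mildly non-obvious---step is to show these bordered determinants vanish in the limit. From $\bar{\mathbf{Y}}=\begin{pmatrix} 0 & \bar{\mathbf{u}}^T\\ \boldsymbol{\psi} & \mathbf{A}\end{pmatrix}$ and $\mathbf{Y}=\begin{pmatrix} 0 & \bar{\boldsymbol{\psi}}^T\\ \mathbf{u} & \mathbf{A}\end{pmatrix}$, setting $\boldsymbol{\psi}=0$ makes the entire border column of $\bar{\mathbf{Y}}$ zero, while setting $\bar{\boldsymbol{\psi}}=0$ makes the entire border row of $\mathbf{Y}$ zero. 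Since the submatrix convention always retains the border index (only $\mathbf{A}$-block rows and columns are deleted), this zero column/row persists in every $\bar{\mathbf{Y}}_{[I_{[r]}^c,J_{[r]}^c]}$ and $\mathbf{Y}_{[I_{[r]}^c,J_{[r]}^c]}$ as well. Thus all bordered determinants are zero by rank-deficiency, and (a)--(c) collapse precisely to $\det(\mathbf{A})$, to $\epsilon(I_{[r]},J_{[r]})\det[\mathbf{A}_{[I_{[r]}^c,J_{[r]}^c]}]$, and to the claimed double sum, respectively.

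I expect the main conceptual subtlety to lie not in this algebra but in the fact it encodes: unlike an ordinary bosonic Gaussian integral, the purely bosonic sources generate \emph{no} exponential factor, so no spurious $e^{-\bar{\mathbf{u}}^T\mathbf{A}^{-1}\mathbf{u}}$ survives. The naive ``complete the square'' manipulation is illegitimate because one cannot shift Grassmann variables by commuting amounts; the putative quadratic correction is instead annihilated by the anticommutation identity $\bar{\chi}_i\chi_i=-\chi_i\bar{\chi}_i$. As an independent sanity check, and as an alternative proof bypassing Theorem~\ref{theorem 2.2} entirely, I would expand $e^{\bar{\mathbf{u}}^T\boldsymbol{\chi}}\,e^{\bar{\boldsymbol{\chi}}^T\mathbf{u}}$ directly inside the Berezin integral against $e^{\bar{\boldsymbol{\chi}}^T\mathbf{A}\boldsymbol{\chi}}$ and verify that, upon saturation of the measure $\mathbf{D}(\boldsymbol{\chi},\bar{\boldsymbol{\chi}})$, the $\mathbf{u}$- and $\bar{\mathbf{u}}$-dependent cross terms cancel pairwise (the $L=1$ case already exhibits this through $\chi\bar{\chi}+\bar{\chi}\chi=0$), leaving only the sourceless Gaussian and reproducing the three determinant formulas. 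This direct route requires more bookkeeping but makes transparent why the bosonic sources are, for the partition function itself, entirely inert.
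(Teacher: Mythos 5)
Your proposal is correct and takes essentially the same route the paper intends: the corollary is the specialization of Theorem~\ref{theorem 2.2} at $\boldsymbol{\psi}=\bar{\boldsymbol{\psi}}=0$, where every source-carrying series drops out and the bordered determinants $\det(\bar{\mathbf{Y}})$, $\det(\mathbf{Y})$ and their submatrix analogues vanish because the zero border column (resp.\ row) survives the deletion of $\mathbf{A}$-block indices. Your closing remark that no Gaussian factor $e^{-\bar{\mathbf{u}}^T\mathbf{A}^{-1}\mathbf{u}}$ can appear---because the bosonic sources couple linearly to nilpotent variables and cannot be shifted away---is also the correct reading of why the purely bosonic sources are inert here.
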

\subsubsection{Fermionic determinant with fermionic sources and separated bosonic sources.}

In this part, we establish the basic identities involving linear fermionic exponentials, which will serve as the foundation for the construction of fermionic determinants coupled to both fermionic and bosonic sources. The following relations are straightforward consequences of the nilpotency of Grassmann variables:
\begin{subequations}
    \begin{align}
        e^{\bar{\boldsymbol{\chi}}^T \mathbf{u}} e^{\bar{\mathbf{u}}^T\boldsymbol{\chi}}
        &= e^{\bar{\boldsymbol{\chi}}^T \mathbf{u}+\bar{\mathbf{u}}^T\boldsymbol{\chi}}
        +\big(\bar{\boldsymbol{\chi}}^T \mathbf{u}\big)\big(\bar{\mathbf{u}}^T\boldsymbol{\chi}\big), \\
        e^{\bar{\mathbf{u}}^T\boldsymbol{\chi}} e^{\bar{\boldsymbol{\chi}}^T \mathbf{u}}
        &= e^{\bar{\boldsymbol{\chi}}^T \mathbf{u}+\bar{\mathbf{u}}^T\boldsymbol{\chi}}
        -\big(\bar{\boldsymbol{\chi}}^T \mathbf{u}\big)\big(\bar{\mathbf{u}}^T\boldsymbol{\chi}\big), \\
        \tfrac{1}{2}\left\{ e^{\bar{\boldsymbol{\chi}}^T \mathbf{u}}, \, e^{\bar{\mathbf{u}}^T\boldsymbol{\chi}} \right\}
        &= e^{\bar{\boldsymbol{\chi}}^T \mathbf{u}+\bar{\mathbf{u}}^T\boldsymbol{\chi}}, \\
        \tfrac{1}{2}\left[ e^{\bar{\boldsymbol{\chi}}^T \mathbf{u}}, \, e^{\bar{\mathbf{u}}^T\boldsymbol{\chi}} \right]
        &= \big(\bar{\boldsymbol{\chi}}^T \mathbf{u}\big)\big(\bar{\mathbf{u}}^T\boldsymbol{\chi}\big).
    \end{align}
\end{subequations}

\begin{corollary}[Fermionic determinant with fermionic sources and separated bosonic sources]\label{Corollary2.6}
Let $\chi^1,\dots,\chi^N$, $\bar{\chi}^1,\dots,\bar{\chi}^N$, and $\psi^1,\dots,\psi^N$, $\bar{\psi}^1,\dots,\bar{\psi}^N$ denote complex fermionic variables, and let $\bar{u}_1,\dots,\bar{u}_L,u_1,\dots,u_L$ denote real (or complex) bosonic variables.

\medskip
\noindent
\textbf{(a)} For an $L\times L$ matrix $\mathbf{A}$, the following integral identities hold:
\begin{equation}
    \begin{split}
    \int \mathbf{D}(\boldsymbol{\chi},\bar{\boldsymbol{\chi}})\,
    e^{\bar{\boldsymbol{\chi}}^T \mathbf{A}\boldsymbol{\chi}+\bar{\boldsymbol{\psi}}^T\boldsymbol{\chi}+\bar{\boldsymbol{\chi}}^T \boldsymbol{\psi}}
    e^{\bar{\boldsymbol{\chi}}^T \mathbf{u}} e^{\bar{\mathbf{u}}^T\boldsymbol{\chi}}
    &= \int \mathbf{D}(\boldsymbol{\chi},\bar{\boldsymbol{\chi}})\,
    e^{\bar{\boldsymbol{\chi}}^T \mathbf{A}\boldsymbol{\chi}
    +\bar{\boldsymbol{\psi}}^T\boldsymbol{\chi}
    +\bar{\boldsymbol{\chi}}^T \boldsymbol{\psi}
    +\bar{\mathbf{u}}^T\boldsymbol{\chi}
    +\bar{\boldsymbol{\chi}}^T \mathbf{u}} \\
    &\quad+\int \mathbf{D}(\boldsymbol{\chi},\bar{\boldsymbol{\chi}})\,
    \big(\bar{\boldsymbol{\chi}}^T \mathbf{u}\big)\,
    \big(\bar{\mathbf{u}}^T\boldsymbol{\chi}\big)\,
    e^{\bar{\boldsymbol{\chi}}^T \mathbf{A}\boldsymbol{\chi}
    +\bar{\boldsymbol{\psi}}^T\boldsymbol{\chi}
    +\bar{\boldsymbol{\chi}}^T \boldsymbol{\psi}},
    \end{split}
\end{equation}
and
\begin{equation}
    \begin{split}
    \int \mathbf{D}(\boldsymbol{\chi},\bar{\boldsymbol{\chi}})\,
    e^{\bar{\boldsymbol{\chi}}^T \mathbf{A}\boldsymbol{\chi}+\bar{\boldsymbol{\psi}}^T\boldsymbol{\chi}+\bar{\boldsymbol{\chi}}^T \boldsymbol{\psi}}
    e^{\bar{\mathbf{u}}^T\boldsymbol{\chi}} e^{\bar{\boldsymbol{\chi}}^T \mathbf{u}}
    &= \int \mathbf{D}(\boldsymbol{\chi},\bar{\boldsymbol{\chi}})\,
    e^{\bar{\boldsymbol{\chi}}^T \mathbf{A}\boldsymbol{\chi}
    +\bar{\boldsymbol{\psi}}^T\boldsymbol{\chi}
    +\bar{\boldsymbol{\chi}}^T \boldsymbol{\psi}
    +\bar{\mathbf{u}}^T\boldsymbol{\chi}
    +\bar{\boldsymbol{\chi}}^T \mathbf{u}} \\
    &\quad-\int \mathbf{D}(\boldsymbol{\chi},\bar{\boldsymbol{\chi}})\,
    \big(\bar{\boldsymbol{\chi}}^T \mathbf{u}\big)\,
    \big(\bar{\mathbf{u}}^T\boldsymbol{\chi}\big)\,
    e^{\bar{\boldsymbol{\chi}}^T \mathbf{A}\boldsymbol{\chi}
    +\bar{\boldsymbol{\psi}}^T\boldsymbol{\chi}
    +\bar{\boldsymbol{\chi}}^T \boldsymbol{\psi}}.
    \end{split}
\end{equation}

\noindent
In both expressions, the first term on the right-hand side corresponds to the extension of Theorem~\ref{theorem 2.2}, while the second term accounts for the bilinear contribution of the bosonic sources. The latter is given explicitly by
\begin{equation}
    \begin{split}
    \int \mathbf{D}(\boldsymbol{\chi},\bar{\boldsymbol{\chi}})\,
    \big(\bar{\boldsymbol{\chi}}^T \mathbf{u}\big)\big(\bar{\mathbf{u}}^T\boldsymbol{\chi}\big)\,
    e^{\bar{\boldsymbol{\chi}}^T \mathbf{A}\boldsymbol{\chi}
    +\bar{\boldsymbol{\psi}}^T\boldsymbol{\chi}
    +\bar{\boldsymbol{\chi}}^T \boldsymbol{\psi}}
    &= \sum_{i_0,j_0=1}^{L} \epsilon(I_{[0]},J_{[0]})\,\bar{u}_{j_0}u_{i_0}\,
       \det\!\big(\mathbf{A}_{[I_{[0]}^c,J_{[0]}^c]}\big) \\
    &\quad+\sum_{m=1}^{L-1}\frac{(-1)^m(-1)^{L-1}}{(m!)^2}
      \sum_{i_0,i_1,\dots,i_m=1}^L \sum_{j_0,j_1,\dots,j_m=1}^L
      \epsilon\!\left(I_{[0,m]},J_{[0,m]}\right) \\
    &\qquad\times \det\!\left(\mathbf{A}_{[K_{[0,m]}^c,J_{[0,m]}^c]}\right)
      \Big(\bar{u}_{j_0}u_{i_0}\prod_{\alpha=1}^m \bar{\psi}_{j_\alpha}\psi_{i_\alpha}\Big).
    \end{split}
\end{equation}

By simplifying this expression, we obtain:
\begin{equation}
   \boxed{ \begin{split}
   & \int \mathbf{D}(\boldsymbol{\chi},\bar{\boldsymbol{\chi}})\,
    \big(\bar{\boldsymbol{\chi}}^T \mathbf{u}\big)\big(\bar{\mathbf{u}}^T\boldsymbol{\chi}\big)\,
    e^{\bar{\boldsymbol{\chi}}^T \mathbf{A}\boldsymbol{\chi}
    +\bar{\boldsymbol{\psi}}^T\boldsymbol{\chi}
    +\bar{\boldsymbol{\chi}}^T \boldsymbol{\psi}}
    \\&= \det\!\big(\bar{\mathbf{X}}\big) \quad+\sum_{m=1}^{L-1}\frac{(-1)^m(-1)^{L-1}}{(m!)^2}
      \sum_{i_1,\dots,i_m=1}^L \sum_{j_1,\dots,j_m=1}^L
      \epsilon\!\left(I_{[m]},J_{[m]}\right)  \det\!\left(\bar{\mathbf{X}}_{[K_{[m]}^c,J_{[m]}^c]}\right)
      \Big(\prod_{\alpha=1}^m \bar{\psi}_{j_\alpha}\psi_{i_\alpha}\Big),
    \end{split}}
\end{equation}
where 
\begin{equation}
   \bar{\mathbf{X}}=\begin{pmatrix} 0 & \bar{\mathbf{u}}^T \\ \mathbf{u} & \mathbf{A}^T \end{pmatrix}.
\end{equation}
For the invertible matrix $\bold{A}$, this formula can be written as:
\begin{equation}
   \boxed{ \begin{split}
   & \int \mathbf{D}(\boldsymbol{\chi},\bar{\boldsymbol{\chi}})\,
    \big(\bar{\boldsymbol{\chi}}^T \mathbf{u}\big)\big(\bar{\mathbf{u}}^T\boldsymbol{\chi}\big)\,
    e^{\bar{\boldsymbol{\chi}}^T \mathbf{A}\boldsymbol{\chi}
    +\bar{\boldsymbol{\psi}}^T\boldsymbol{\chi}
    +\bar{\boldsymbol{\chi}}^T \boldsymbol{\psi}}
    = \sum\limits_{i,j=1}^L (-1)^{i+j}u_i\bar{u}_j\det[\mathbf{A}_{[i^c,j^c]}]e^{-\bar{\boldsymbol{\psi}}_{[j^c]}^T \mathbf{A}^{-1}_{[i^c,j^c]}\boldsymbol{\psi}_{[i^c]}}.
    \end{split}}
\end{equation}
% The proof of this Corollary is in Appendix \ref{ProofCorollary2.6}.
\end{corollary}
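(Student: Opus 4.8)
The statement divides cleanly into the two algebraic identities of part (a), which only reorganize ordered products of source exponentials, and the genuine computational core: the Berezin integral of the bilinear insertion $(\bar{\boldsymbol{\chi}}^T \mathbf{u})(\bar{\mathbf{u}}^T\boldsymbol{\chi})$ against the fermionic-source Gaussian. The plan is to dispose of part (a) by nilpotency, and then reduce the insertion integral to the $r=1$ case of Theorem~\ref{theorem 2.2}. For part (a), I would first note that each linear form squares to zero, $(\bar{\boldsymbol{\chi}}^T\mathbf{u})^2=(\bar{\mathbf{u}}^T\boldsymbol{\chi})^2=0$, so that $e^{\bar{\boldsymbol{\chi}}^T\mathbf{u}}=1+\bar{\boldsymbol{\chi}}^T\mathbf{u}$ and likewise for $\bar{\mathbf{u}}^T\boldsymbol{\chi}$; a short check shows the combined form $\ell=\bar{\boldsymbol{\chi}}^T\mathbf{u}+\bar{\mathbf{u}}^T\boldsymbol{\chi}$ also satisfies $\ell^2=0$, because its two cross terms cancel, whence $e^{\ell}=1+\ell$. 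Expanding the two ordered products and subtracting $e^{\ell}$ then isolates exactly the $\pm(\bar{\boldsymbol{\chi}}^T\mathbf{u})(\bar{\mathbf{u}}^T\boldsymbol{\chi})$ corrections. Integrating these operator identities against $e^{\bar{\boldsymbol{\chi}}^T\mathbf{A}\boldsymbol{\chi}+\bar{\boldsymbol{\psi}}^T\boldsymbol{\chi}+\bar{\boldsymbol{\chi}}^T\boldsymbol{\psi}}$ splits the right-hand side into a single-exponential piece, which is precisely Theorem~\ref{theorem 2.2}, plus the bilinear insertion integral, giving the two decompositions of part (a).

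For the insertion integral itself I would expand $(\bar{\boldsymbol{\chi}}^T\mathbf{u})(\bar{\mathbf{u}}^T\boldsymbol{\chi})=\sum_{i_0,j_0=1}^{L}u_{i_0}\bar{u}_{j_0}\,\bar{\chi}_{i_0}\chi_{j_0}$ and pull the bosonic weights outside, reducing the integral to $\sum_{i_0,j_0}u_{i_0}\bar{u}_{j_0}\int\mathbf{D}(\boldsymbol{\chi},\bar{\boldsymbol{\chi}})\,\bar{\chi}_{i_0}\chi_{j_0}\,e^{\bar{\boldsymbol{\chi}}^T\mathbf{A}\boldsymbol{\chi}+\bar{\boldsymbol{\psi}}^T\boldsymbol{\chi}+\bar{\boldsymbol{\chi}}^T\boldsymbol{\psi}}$. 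Each summand is the $r=1$ instance, with $I=\{i_0\}$ and $J=\{j_0\}$, of the specialization of Theorem~\ref{theorem 2.2}(b) obtained by setting the bosonic sources $\mathbf{u},\bar{\mathbf{u}}$ in the exponent to zero; here $\epsilon(I_{[0]},J_{[0]})=(-1)^{i_0+j_0}$. Substituting the resulting $m$-graded expansion reproduces the first (intermediate) formula of the corollary term by term, the $m=0$ contribution yielding $\det(\mathbf{A}_{[\{i_0\}^c,\{j_0\}^c]})$ and the $m\ge 1$ contributions yielding the sums carrying $\prod_{\alpha}\bar{\psi}_{j_\alpha}\psi_{i_\alpha}$.

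The remaining step is the repackaging into bordered-determinant form. I would recognize that the $i_0,j_0$ summation, weighted by $u_{i_0}\bar{u}_{j_0}$ and the sign $(-1)^{i_0+j_0}$, is exactly a double Laplace expansion, along one bordering row and one bordering column, of $\det(\bar{\mathbf{X}})$ with $\bar{\mathbf{X}}=\begin{pmatrix}0 & \bar{\mathbf{u}}^T\\ \mathbf{u} & \mathbf{A}^T\end{pmatrix}$, and analogously of each submatrix $\bar{\mathbf{X}}_{[K_{[m]}^c,J_{[m]}^c]}$; this collapses the $m=0$ piece into $\det(\bar{\mathbf{X}})$ and each $m\ge1$ piece into the stated $\det(\bar{\mathbf{X}}_{[\,\cdot\,]})$ form. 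For the invertible statement I would instead invoke the invertible $r=1$ identity, Theorem~\ref{theorem 2.2}(b) with $\mathbf{A}$ invertible and bosonic sources removed, which resums the source dependence into $e^{-\bar{\boldsymbol{\psi}}_{[\{j_0\}^c]}^T\mathbf{A}^{-1}_{[\{i_0\}^c,\{j_0\}^c]}\boldsymbol{\psi}_{[\{i_0\}^c]}}$; multiplying by $(-1)^{i_0+j_0}u_{i_0}\bar{u}_{j_0}$ and relabeling $i_0\to i$, $j_0\to j$ gives the boxed invertible formula immediately.

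The main obstacle I expect is the sign and transpose bookkeeping of the bordered-determinant repackaging: verifying that bordering $\mathbf{A}^T$ rather than $\mathbf{A}$, together with the overall factor $(-1)^{L-1}$, correctly reproduces the cofactor signs $\epsilon$ attached to the index sets, and that the submatrix supports $K_{[m]}^c$ emerging from the double Laplace expansion match those appearing on the left. A good consistency check is the case $L=1$, where the bilinear integral collapses to a single term $u_1\bar{u}_1$ and can be compared directly against $\det(\bar{\mathbf{X}})$, fixing the overall sign convention. By contrast, the invertible endpoint requires no such combinatorics and follows directly once the $r=1$ invertible identity is in hand.
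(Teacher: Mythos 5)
Your proposal is correct and is essentially the paper's own route: part (a) is the pair of nilpotency identities displayed immediately before the corollary, integrated against the Gaussian, and the bilinear insertion is handled exactly as you describe, by expanding $(\bar{\boldsymbol{\chi}}^T\mathbf{u})(\bar{\mathbf{u}}^T\boldsymbol{\chi})=\sum_{i_0,j_0}u_{i_0}\bar u_{j_0}\,\bar\chi_{i_0}\chi_{j_0}$ and invoking the $r=1$ case of the fermionic-source determinant identity (and its invertible version for the final boxed formula).

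One remark on the step you flagged as the main obstacle: carrying out your own $L=1$ consistency check shows that the printed $\det(\bar{\mathbf{X}})$ repackaging is off. The insertion integral at $L=1$ equals $+u_1\bar u_1$, in agreement with the intermediate and invertible formulas, whereas $\det(\bar{\mathbf{X}})=\det\big(\begin{smallmatrix}0&\bar u_1\\ u_1&A_{11}\end{smallmatrix}\big)=-\bar u_1u_1$. More generally, the double Laplace expansion gives $\det(\bar{\mathbf{X}})=-\sum_{i,j}(-1)^{i+j}\bar u_j u_i\det\!\big(\mathbf{A}_{[j^c,i^c]}\big)$, which differs from the $m=0$ term of the intermediate formula by an overall sign and by which index labels the deleted row versus the deleted column. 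So the parts of the corollary your argument fully establishes are mutually consistent, and your strategy, pushed to completion, shows that the boxed bordered-determinant form should read $-\det\big(\begin{smallmatrix}0&\bar{\mathbf{u}}^T\\ \mathbf{u}&\mathbf{A}\end{smallmatrix}\big)$ rather than $\det(\bar{\mathbf{X}})$ with the stated $\bar{\mathbf{X}}$; the approach itself is the right one and in fact exposes this discrepancy.
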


\begin{lemma} (Fermionic Determinant with separated Bosonic sources).
Let $\chi^1,\dots,\chi^N$, $\bar{\chi}^1,\dots,\bar{\chi}^N$ denote complex fermionic variables, and let $\bar{u}_1,\dots,\bar{u}_L,u_1,\dots,u_L$ denote real (or complex) bosonic variables.

\medskip
\noindent
\textbf{(a)} For an $L\times L$ matrix $\mathbf{A}$, the following relations hold:
\begin{equation}
    \begin{split}
    \int\mathbf{D}(\boldsymbol{\chi},\bar{\boldsymbol{\chi}})\,
    e^{\bar{\boldsymbol{\chi}}^T \mathbf{A}\boldsymbol{\chi}}
    e^{\bar{\boldsymbol{\chi}}^T \mathbf{u}} e^{\bar{\mathbf{u}}^T\boldsymbol{\chi}}
    &=\int\mathbf{D}(\boldsymbol{\chi},\bar{\boldsymbol{\chi}})\,
    e^{\bar{\boldsymbol{\chi}}^T\mathbf{A}\boldsymbol{\chi}
    +\bar{\mathbf{u}}^T\boldsymbol{\chi}
    +\bar{\boldsymbol{\chi}}^T\mathbf{u}}+\int\mathbf{D}(\boldsymbol{\chi},\bar{\boldsymbol{\chi}})\,
    \big(\bar{\boldsymbol{\chi}}^T \mathbf{u}\big)\,
    \big(\bar{\mathbf{u}}^T\boldsymbol{\chi}\big)\,
    e^{\bar{\boldsymbol{\chi}}^T\mathbf{A}\boldsymbol{\chi}},
    \end{split}
\end{equation}
and similarly
\begin{equation}
    \begin{split}
    \int\mathbf{D}(\boldsymbol{\chi},\bar{\boldsymbol{\chi}})\,
    e^{\bar{\boldsymbol{\chi}}^T \mathbf{A}\boldsymbol{\chi}}
    e^{\bar{\mathbf{u}}^T\boldsymbol{\chi}}e^{\bar{\boldsymbol{\chi}}^T \mathbf{u}}
    &=\int\mathbf{D}(\boldsymbol{\chi},\bar{\boldsymbol{\chi}})\,
    e^{\bar{\boldsymbol{\chi}}^T\mathbf{A}\boldsymbol{\chi}
    +\bar{\mathbf{u}}^T\boldsymbol{\chi}
    +\bar{\boldsymbol{\chi}}^T\mathbf{u}}-\int\mathbf{D}(\boldsymbol{\chi},\bar{\boldsymbol{\chi}})\,
    \big(\bar{\boldsymbol{\chi}}^T \mathbf{u}\big)\,
    \big(\bar{\mathbf{u}}^T\boldsymbol{\chi}\big)\,
    e^{\bar{\boldsymbol{\chi}}^T\mathbf{A}\boldsymbol{\chi}}.
    \end{split}
\end{equation}

\noindent
The second term on the right-hand side of each expression is given explicitly by:  

\begin{equation}
    \begin{split}
    \int\mathbf{D}&(\boldsymbol{\chi},\bar{\boldsymbol{\chi}})\big(\bar{\boldsymbol{\chi}}^T \mathbf{u}\big)\big(\bar{\mathbf{u}}^T\boldsymbol{\chi}\big)
    e^{\bar{\boldsymbol{\chi}}^T \mathbf{A}\boldsymbol{\chi}}=\sum_{i_0,j_0=1}^{L}\epsilon(I_{[0]},J_{[0]})\bar{u}_{j_0}u_{k_0}\det\left(\mathbf{A}_{[I_{[0]}^c,J_{[0]}^c]}\right)=\det\left(\bar{\mathbf{X}}\right).
    \end{split}
\end{equation}
\end{lemma}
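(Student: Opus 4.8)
The plan is to reduce the lemma to two ingredients: the purely algebraic (integration-free) operator identities for the exponentials of the linear bosonic sources, and the single-insertion Berezin identity for complex fermions already recorded in the list of known results. First I would establish the four displayed algebraic identities at the head of this subsection. Writing $a=\bar{\boldsymbol{\chi}}^{T}\mathbf{u}$ and $b=\bar{\mathbf{u}}^{T}\boldsymbol{\chi}$, both are odd elements of the Grassmann algebra, and a short computation using the anticommutation of $\boldsymbol{\chi},\bar{\boldsymbol{\chi}}$ together with the commutativity of the bosonic coefficients shows $a^{2}=b^{2}=0$ and $ab=-ba$, hence $(a+b)^{2}=0$. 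Consequently $e^{a}=1+a$, $e^{b}=1+b$, $e^{a+b}=1+a+b$, and therefore $e^{a}e^{b}=e^{a+b}+ab$ while $e^{b}e^{a}=e^{a+b}-ab$. These are exact identities in the algebra, valid before any integration.

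Next I would multiply these identities on the left by the even element $e^{\bar{\boldsymbol{\chi}}^{T}\mathbf{A}\boldsymbol{\chi}}$ (which commutes with everything) and apply the Berezin integral $\int\mathbf{D}(\boldsymbol{\chi},\bar{\boldsymbol{\chi}})$. By linearity this immediately produces the two boxed relations of part (a): the combined-source integral $\int\mathbf{D}(\boldsymbol{\chi},\bar{\boldsymbol{\chi}})\,e^{\bar{\boldsymbol{\chi}}^{T}\mathbf{A}\boldsymbol{\chi}+\bar{\mathbf{u}}^{T}\boldsymbol{\chi}+\bar{\boldsymbol{\chi}}^{T}\mathbf{u}}$ plus (respectively minus) the bilinear-source integral $\int\mathbf{D}(\boldsymbol{\chi},\bar{\boldsymbol{\chi}})\,(\bar{\boldsymbol{\chi}}^{T}\mathbf{u})(\bar{\mathbf{u}}^{T}\boldsymbol{\chi})\,e^{\bar{\boldsymbol{\chi}}^{T}\mathbf{A}\boldsymbol{\chi}}$. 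This step is purely formal and carries no combinatorial content.

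The substance is the explicit evaluation of the bilinear-source term. I would expand the prefactor as $(\bar{\boldsymbol{\chi}}^{T}\mathbf{u})(\bar{\mathbf{u}}^{T}\boldsymbol{\chi})=\sum_{i,j=1}^{L}u_{i}\bar{u}_{j}\,\bar{\chi}_{i}\chi_{j}$ and pull the bosonic coefficients outside the integral. Each remaining integral $\int\mathbf{D}(\boldsymbol{\chi},\bar{\boldsymbol{\chi}})\,\bar{\chi}_{i}\chi_{j}\,e^{\bar{\boldsymbol{\chi}}^{T}\mathbf{A}\boldsymbol{\chi}}$ is the $r=1$ case of the known complex-fermion identity, equal to $\epsilon(\{i\},\{j\})\det(\mathbf{A}_{[\{i\}^{c},\{j\}^{c}]})=(-1)^{i+j}\det(\mathbf{A}_{[\{i\}^{c},\{j\}^{c}]})$. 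Summing yields the stated double sum $\sum_{i_{0},j_{0}}\epsilon(\{i_{0}\},\{j_{0}\})\,u_{i_{0}}\bar{u}_{j_{0}}\det(\mathbf{A}_{[\{i_{0}\}^{c},\{j_{0}\}^{c}]})$ (the $u_{k_{0}}$ in the statement being a typo for $u_{i_{0}}$).

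The last and most delicate step is recognizing this double sum as $\det(\bar{\mathbf{X}})$. I would expand $\det(\bar{\mathbf{X}})$ by a double Laplace (cofactor) expansion: first along the border row carrying $\bar{\mathbf{u}}^{T}$, extracting the factor $\bar{u}_{j_{0}}$ together with the minor deleting that row and column, then along the border column carrying $\mathbf{u}$, extracting $u_{i_{0}}$ together with the complementary minor of $\mathbf{A}$. The two Laplace signs are intended to combine into $(-1)^{i_{0}+j_{0}}$, matching term by term. The main obstacle is precisely this sign and transpose bookkeeping: one must verify that the cofactor structure reproduces the pairing $u_{i_{0}}\bar{u}_{j_{0}}\leftrightarrow\det(\mathbf{A}_{[\{i_{0}\}^{c},\{j_{0}\}^{c}]})$ with the overall sign fixed by the chosen measure $\mathbf{D}(\boldsymbol{\chi},\bar{\boldsymbol{\chi}})$, and to confirm that the ordering of the border row and column is what dictates whether $\mathbf{A}$ or $\mathbf{A}^{T}$ enters $\bar{\mathbf{X}}$. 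I would pin down these conventions by checking $L=1,2$ explicitly and then invoke the general Laplace expansion to conclude for all $L$; the invertible-case formula then follows by the Schur-complement evaluation of the bordered determinant.
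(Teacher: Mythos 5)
Your proposal follows essentially the same route as the paper: the nilpotency identities $e^{a}e^{b}=e^{a+b}+ab$ and $e^{b}e^{a}=e^{a+b}-ab$ for the odd elements $a=\bar{\boldsymbol{\chi}}^{T}\mathbf{u}$, $b=\bar{\mathbf{u}}^{T}\boldsymbol{\chi}$ give part (a) immediately, and the bilinear term reduces to the $r=1$ complex-fermion identity $\int\mathbf{D}(\boldsymbol{\chi},\bar{\boldsymbol{\chi}})\,\bar{\chi}_{i}\chi_{j}\,e^{\bar{\boldsymbol{\chi}}^{T}\mathbf{A}\boldsymbol{\chi}}=(-1)^{i+j}\det(\mathbf{A}_{[\{i\}^{c},\{j\}^{c}]})$, exactly as the paper does for the analogous formula in Corollary~\ref{Corollary2.6}. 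One remark: the $L=1$ check you propose for the final identification is not a formality — it shows that $\sum_{i_0,j_0}(-1)^{i_0+j_0}u_{i_0}\bar{u}_{j_0}\det(\mathbf{A}_{[\{i_0\}^{c},\{j_0\}^{c}]})$ equals $-\det\bigl(\begin{smallmatrix}0&\bar{\mathbf{u}}^{T}\\ \mathbf{u}&\mathbf{A}\end{smallmatrix}\bigr)$ rather than $\det(\bar{\mathbf{X}})$ with $\bar{\mathbf{X}}=\bigl(\begin{smallmatrix}0&\bar{\mathbf{u}}^{T}\\ \mathbf{u}&\mathbf{A}^{T}\end{smallmatrix}\bigr)$ as defined in Corollary~\ref{Corollary2.6} (for $L=1$ the sum is $+u_{1}\bar{u}_{1}$ while $\det(\bar{\mathbf{X}})=-u_{1}\bar{u}_{1}$, and for non-symmetric $\mathbf{A}$ the transpose also pairs the wrong minor with $u_{i_0}\bar{u}_{j_0}$); so your flagged sign and transpose bookkeeping is precisely where the stated formula needs a correction.
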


\section*{Conclusions}

In this work, we have developed a comprehensive Grassmann--Berezin framework that establishes fundamental connections between combinatorial enumeration, statistical mechanics, and functional integration methods. Our main contributions can be summarized as follows:

\paragraph*{Unified Mathematical Framework}
We established two master Berezin integral over Grassmann variables identities (Theorems~\ref{exponential with linear fermionic and bosonic sources} and \ref{theorem 2.2}) that incorporate both bosonic and fermionic source terms for real and complex fermions. These identities remain valid in arbitrary dimensions and for singular matrices, generalizing the standard Gaussian integral framework. A key technical innovation is the practical block-decomposition method for handling non-invertible matrices, which isolates null and non-null subspaces while preserving the Pfaffian structure.

\paragraph*{Mapping Between Complex and Real Fermions}
In this work, we have established a comprehensive translation of combinatorial structures into the framework of Berezin integrals over Grassmann variables. Our central contributions include:

\begin{itemize}
\item The Hafnian (naturally expressed with complex fermions) is mapped to the Pfaffian (real fermions) through Grassmann--Berezin representations (Theorem~\ref{thm:main}).
\item Similarly, the Monobisyzexant (Mbsz) function for complex fermions admits an equivalent representation in terms of real fermions (Theorem~\ref{thm:main2}).
\item These mappings establish a fundamental connection between the Hafnian and Pfaffian formulations of the dimer model through Kasteleyn orientation, while simultaneously providing a real fermionic integral representation of the Mbsz. Crucially, our analysis demonstrates that the Mbsz admits no Pfaffian representation, highlighting the essential complexity of the monomer-dimer problem compared to the pure dimer case.
\end{itemize}
\begin{itemize}
\item The Hafnian (naturally expressed with complex fermions) is mapped to the Pfaffian (real fermions) through Grassmann--Berezin representations (Theorem~\ref{thm:main}).
\item Similarly, the Monobisyzexant (Mbsz) function for complex fermions admits an equivalent representation in terms of real fermions (Theorem~\ref{thm:main2}).
\item These mappings establish a fundamental connection between the Hafnian and Pfaffian formulations of the dimer model through Kasteleyn orientation, while simultaneously providing a real fermionic integral representation of the Mbsz. Crucially, our analysis demonstrates that the Mbsz admits no Pfaffian representation, highlighting the essential complexity of the monomer-dimer problem compared to the pure dimer case.
\end{itemize}

\paragraph*{Combinatorial Applications}
Our framework yields advances when applied to fundamental combinatorial models:
\begin{itemize}
\item A generalized partition function for planar dimer systems of arbitrary parity, incorporating fixed monomers through bosonic source terms and expressible via Pfaffians of extended Kasteleyn matrices
\item The introduction of the Monobiszczant (Mbsz) function, which generalizes the Hafnian to provide complete combinatorial encoding of monomer-dimer configurations, with potential extensions to more complex structures like Hyperhafnians through systematic modifications of the $\mathcal{S}_{i_1 \cdots i_L}(\mathbf{D},\mathbf{V})$ tensor~\eqref{equ_Stensor}
\item An alternative Berezin integral representation for spanning trees and forests utilizing complex bosonic sources, where these gauge-like terms fundamentally transform the correlation structure to reveal spanning forests rather than pure spanning trees
\end{itemize}

The Grassmann-Berezin framework developed in this work provides powerful mathematical tools for combinatorial enumeration and functional integration in statistical mechanics. The master identities and block-decomposition methods extend Gaussian integral techniques to singular matrices, while mappings between fermion representations offer alternative formulations of combinatorial problems. 

Furthermore, the fermionic representation employed here enables the application of advanced techniques such as fermionic duality, which yields improved approximations for complex systems. Notably, this approach is not limited to quartic interactions; ongoing research demonstrates its extensibility to interactions of arbitrary order.

This work potentially contributes to the mathematical foundations of lattice field theory and statistical mechanics.

\section*{Acknowledgements}
MAR acknowledges partial support by CNPq. Eddy acknowledges support from CNPq (Process No. 141672/2023-4).

\newpage

\appendix
\section*{\centering \Huge \textbf{Appendix}}
\addcontentsline{toc}{section}{Appendix}
\vspace*{2em}  % adjust the length (2em) to taste
\addtocontents{toc}{\protect\setcounter{tocdepth}{1}}
\renewcommand{\theequation}{\thesection.\arabic{equation}}

\setcounter{equation}{0}
\setcounter{table}{0}
\renewcommand{\thetable}{A\arabic{table}}

\section{Graph Matrices: Adjacency and Laplacian}\label{Laplacianmatrix}

This section introduces the fundamental matrices associated with graphs: the adjacency matrix, the degree matrix, and the Laplacian matrix. These matrices encode the connectivity structure of graphs and play crucial roles in various physical and combinatorial contexts.

\subsection{Adjacency Matrix}

Let $G = (V,E)$ be a graph with vertex set $V = \{1,\dots,L\}$ and edge set $E$. The \textit{weighted adjacency matrix} $\mathbf{W}$ is defined by
\begin{equation}
W_{ij} = \text{weight of the edge between } i \text{ and } j.
\end{equation}
For undirected graphs, $\mathbf{W}$ is symmetric: $W_{ij} = W_{ji}$. If there is no edge between $i$ and $j$ (and $i \neq j$), then $W_{ij} = 0$.

\subsection{Degree Matrix}

The \textit{degree matrix} $\mathbf{D}$ is a diagonal matrix defined by
\begin{equation}
D_{ii} = \sum_{j} W_{ij},
\end{equation}
which represents the total weight of edges incident to vertex $i$.

\subsection{Laplacian Matrix}

The \textit{Laplacian matrix} $\mathbf{L}$ is defined as
\begin{equation}
\mathbf{L} = \mathbf{D} - \mathbf{W}.
\end{equation}
In components,
\begin{equation}
L_{ij} =
\begin{cases}
\displaystyle \sum_{k \neq i} W_{ik}, & \text{if } i=j, \\[6pt]
-W_{ij}, & \text{if } i \neq j \text{ and } (i,j) \in E, \\[4pt]
0, & \text{otherwise.}
\end{cases}
\end{equation}
The Laplacian matrix is symmetric and positive semidefinite. Each row and column sums to zero, so $\mathbf{L}$ has a zero eigenvalue corresponding to the eigenvector $(1,1,\dots,1)^T$.

\subsection{Laplacian for Graphs with Loops}

For graphs containing loops (edges from a vertex to itself), the definition of the Laplacian requires careful consideration. Let $w_{ii}$ denote the weight of a loop at vertex $i$. The adjacency matrix entry $W_{ii}$ equals the loop weight $w_{ii}$, and the degree $D_{ii}$ includes this loop weight. However, in the Laplacian matrix, the diagonal entry becomes:
\begin{equation}
L_{ii} = D_{ii} - W_{ii} = \left(\sum_{k \neq i} W_{ik} + W_{ii}\right) - W_{ii} = \sum_{k \neq i} W_{ik}.
\end{equation}
Thus, loops do not appear explicitly in the Laplacian matrix, though they contribute to the vertex degree in the adjacency structure.

\subsection{Physical Interpretation}

In physical contexts, the Laplacian matrix acts as a discrete version of the Laplace operator. For a function $\phi$ defined on the vertices,
\begin{equation}
(\mathbf{L} \, \phi)_i = \sum_{j} W_{ij} (\phi_i - \phi_j),
\end{equation}
which measures the difference between $\phi_i$ and its neighbors. This interpretation remains valid for graphs with loops, as the $j=i$ term vanishes.

\subsection{Construction Rules}

The Laplacian can be constructed by the following steps:

\begin{enumerate}
    \item \textbf{List all vertices.} Label the vertices as $1, \dots, L$.
    \item \textbf{Compute degrees.} For each vertex $i$, compute $D_{ii} = \sum_{j} W_{ij}$ (including loops).
    \item \textbf{Assign off-diagonal entries.} For each pair $(i,j)$ with $i \neq j$:
    \begin{itemize}
        \item If $i$ and $j$ are connected by an edge with weight $w_{ij}$, set $L_{ij} = -w_{ij}$.
        \item Otherwise, set $L_{ij} = 0$.
    \end{itemize}
    \item \textbf{Assign diagonal entries.} Set $L_{ii} = D_{ii} - W_{ii} = \sum_{j \neq i} W_{ij}$.
    \item \textbf{Verify the row-sum property.} Each row must sum to zero.
\end{enumerate}

\subsection{Examples}

\paragraph{(i) Graph with a Loop.}
Consider a graph with two vertices and a loop at vertex 1. Let the edges be: (1,1) with weight $a$, (1,2) with weight $b$. Then:
- Adjacency matrix:
\[
\mathbf{W} = \begin{pmatrix}
a & b \\
b & 0
\end{pmatrix}.
\]
- Degree matrix:
\[
\mathbf{D} = \begin{pmatrix}
a+b & 0 \\
0 & b
\end{pmatrix}.
\]
- Laplacian matrix:
\[
\mathbf{L} = \mathbf{D} - \mathbf{W} = \begin{pmatrix}
a+b - a & -b \\
-b & b - 0
\end{pmatrix} = \begin{pmatrix}
b & -b \\
-b & b
\end{pmatrix}.
\]
Note that the loop does not appear in the Laplacian.

\paragraph{(ii) Triangle Graph with a Loop.}
For three vertices connected pairwise with unit weights and a loop at vertex 1 with weight $c$:
- Adjacency matrix:
\[
\mathbf{W} = \begin{pmatrix}
c & 1 & 1 \\
1 & 0 & 1 \\
1 & 1 & 0
\end{pmatrix}.
\]
- Degree matrix:
\[
\mathbf{D} = \begin{pmatrix}
c+2 & 0 & 0 \\
0 & 2 & 0 \\
0 & 0 & 2
\end{pmatrix}.
\]
- Laplacian matrix:
\[
\mathbf{L} = \begin{pmatrix}
(c+2)-c & -1 & -1 \\
-1 & 2-0 & -1 \\
-1 & -1 & 2-0
\end{pmatrix} = \begin{pmatrix}
2 & -1 & -1 \\
-1 & 2 & -1 \\
-1 & -1 & 2
\end{pmatrix}.
\]
Again, the loop does not affect the Laplacian.

\subsection{Spanning Tree Interpretation}

Kirchhoff's matrix-tree theorem states that for a connected graph, the number of spanning trees is given by
\begin{equation}
\tau(G) = \det \mathbf{L}_{(i)} ,
\end{equation}
where $\mathbf{L}_{(i)}$ is the matrix obtained by deleting the $i$-th row and column of $\mathbf{L}$. This result extends to weighted graphs, giving the sum of the weights of all spanning trees. For graphs with loops, the theorem remains valid as loops are never included in spanning trees.

\subsection{Comparison with the Kasteleyn Matrix}

Both the Laplacian and Kasteleyn matrices encode topological information of graphs, but they emphasize different physical structures:
\begin{itemize}
    \item The Laplacian arises naturally from \textit{bosonic} fields and diffusion-type dynamics.
    \item The Kasteleyn matrix originates from \textit{fermionic} degrees of freedom and Pfaffian combinatorics.
\end{itemize}
Both matrices share local rule-based constructions and yield determinant or Pfaffian quantities that count fundamental combinatorial configurations.

\section{Properties of the Pfaffian}\label{PropPf}
Some basic properties of Pfaffians are shown below:
\begin{lemma}[Properties of the Pfaffian]\label{LemmaPf}
Let $\mathbf{A}$ be a real skew-symmetric $2N\times 2N$ matrix. Then: 
\begin{enumerate}
\item The Pfaffian of a tridiagonal matrix is
    \begin{equation}
        \pf\begin{pmatrix}
            0&a_{1}&0&0&&&\\
            -a_{1}&0&0&0&&&\\
            0&0&0&a_{2}&&&\\
            0&0&-a_{2}&0&\ddots&&\\
            &&&\ddots&\ddots&&\\
            &&&&&0&a_n\\
            &&&&&-a_n&0
    \end{pmatrix}=\prod_{i=1}^n a_i,
    \end{equation}
    if $a_1=\cdots=a_n=1$ we have
    \begin{equation}
        \pf\begin{pmatrix}
            0&1&0&0&&&\\
            -1&0&0&0&&&\\
            0&0&0&1&&&\\
            0&0&-1&0&\ddots&&\\
            &&&\ddots&\ddots&&\\
            &&&&&0&1\\
            &&&&&-1&0
    \end{pmatrix}=1.
    \end{equation}
    \item $\left[\pf(\mathbf{A})\right]^2=\det(\mathbf{A})$.
    \item $\pf\left(\mathbf{M}\mathbf{A}\mathbf{M}^T\right)=\det(\mathbf{M})\pf(\mathbf{A})$ for any $2N\times2N$ matrix $\mathbf{M}$.
    \item \textsc{(Minor summation formula for Pfaffians)} More generally, we have
    \begin{equation}
    \pf\left(\mathbf{M}\mathbf{A}\mathbf{M}^T\right)=\sum_{\substack{I^c\subseteq [2N]\\|I^c|=2m}}\det(\mathbf{M}_{[\star|I^c]})\pf\left(\mathbf{A}_{[I^c]}\right),\quad for~any~2m\times 2N~matrix~\mathbf{M}~(m\leq N),
    \end{equation}
    where $[2N]=\{1,\dots,2N\}$. Here $\mathbf{M}_{[\star|I^c]}$ denotes the subatrix of $\mathbf{M}$ with columns $I^c$ (and all its rows).
    \item \textsc{(Jacobi's identity for pfaffians)} If $\mathbf{A}$ is invertible, then $\pf\left(\left(\mathbf{A}^{-1}\right)^T\right)=\left[\pf(\mathbf{A})\right]^{-1}$ and more generally
    \begin{equation}
        \pf\left(\left(\mathbf{A}^{-1}\right)^T_{[I]}\right)=\frac{\epsilon(I)\pf\left(\mathbf{A}_{[I^c]}\right)}{\pf(\mathbf{A})},
    \end{equation}
    for any $I\subset [2N]$, where $\epsilon(I)=(-1)^{|I|\frac{|I|-1}{2}}(-1)^{\sum_{i\in I}i}$.
\end{enumerate}
\end{lemma}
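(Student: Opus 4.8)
The plan is to take the Berezin/Grassmann characterization $\pf(\mathbf{A})=\int\mathbf{D}\boldsymbol{\chi}\,e^{\frac12\boldsymbol{\chi}^T\mathbf{A}\boldsymbol{\chi}}$ (established around Theorem~\ref{thm:main}) as the single backbone from which all five items flow, proving them in the order (1), (3), (2), (4), (5) so that no step relies on a later one. Item (1) is immediate from the expansion of the Pfaffian as a signed sum over perfect matchings: for the block-diagonal $\boldsymbol{\Sigma}$ the only matching whose pair-entries are all nonzero is $\{(1,2),(3,4),\dots\}$, whose sign is $+1$, so $\pf(\boldsymbol{\Sigma})=\prod_\alpha a_\alpha$, specializing to $1$ when all $a_\alpha=1$.

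For item (3) I would not change variables in the measure but instead expand directly. Writing $\eta_k=(\mathbf{M}^T\boldsymbol{\chi})_k=\sum_i M_{ik}\chi_i$, the key algebraic fact is the Grassmann determinant identity $\eta_1\eta_2\cdots\eta_{2N}=\det(\mathbf{M})\,\chi_1\chi_2\cdots\chi_{2N}$, obtained by expanding the product and collecting the surviving (all-distinct) monomials with their permutation signs. Since $\tfrac12\boldsymbol{\chi}^T\mathbf{M}\mathbf{A}\mathbf{M}^T\boldsymbol{\chi}=\tfrac12\boldsymbol{\eta}^T\mathbf{A}\boldsymbol{\eta}$, the top-degree part of $e^{\frac12\boldsymbol{\eta}^T\mathbf{A}\boldsymbol{\eta}}$ equals $\pf(\mathbf{A})\,\eta_1\cdots\eta_{2N}=\det(\mathbf{M})\pf(\mathbf{A})\,\chi_1\cdots\chi_{2N}$, and Berezin integration yields $\pf(\mathbf{M}\mathbf{A}\mathbf{M}^T)=\det(\mathbf{M})\pf(\mathbf{A})$. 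Item (2) then follows without extra work: by the real skew-symmetric canonical form (a real congruence $\mathbf{A}=\mathbf{M}\boldsymbol{\Sigma}\mathbf{M}^T$ into $2\times2$ blocks, the real analogue of the block decomposition in Theorem~\ref{HuaTheorem}), items (3) and (1) give $\pf(\mathbf{A})=\det(\mathbf{M})\prod_\alpha d_\alpha$, while $\det(\mathbf{A})=\det(\mathbf{M})^2\prod_\alpha d_\alpha^2$; comparing, $[\pf(\mathbf{A})]^2=\det(\mathbf{A})$.

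Item (4) is the same expansion as (3) but with a rectangular $\mathbf{M}\in\R^{2m\times 2N}$: now $\boldsymbol{\eta}=\mathbf{M}^T\boldsymbol{\xi}$ is a family of $2N$ linear forms in the $2m$ Grassmann variables $\boldsymbol{\xi}$, and one extracts the coefficient of $\xi_1\cdots\xi_{2m}$ from $\tfrac{1}{m!}(\tfrac12\boldsymbol{\eta}^T\mathbf{A}\boldsymbol{\eta})^m$. Grouping the degree-$2m$ part of $e^{\frac12\boldsymbol{\eta}^T\mathbf{A}\boldsymbol{\eta}}$ by index support $I^c$ produces $\sum_{|I^c|=2m}\pf(\mathbf{A}_{[I^c]})\prod_{k\in I^c}\eta_k$, and $\prod_{k\in I^c}\eta_k$ contributes exactly $\det(\mathbf{M}_{[\star|I^c]})\,\xi_1\cdots\xi_{2m}$ by the same determinant identity applied to the $2m$ chosen columns; this is the Pfaffian Cauchy--Binet formula. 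Finally, item (5) follows by comparing the two source-free evaluations recalled for real fermions in Section~\ref{Sec:6}: items (2) and (3) of that list give, for even $|I|$ and invertible $\mathbf{A}$, both $\epsilon(I)\pf(\mathbf{A}_{[I^c]})$ and $\pf(\mathbf{A})\pf((\mathbf{A}^{-T})_{[I]})$ for the same integral $\int\mathbf{D}\boldsymbol{\chi}(\prod_\alpha\chi_{i_\alpha})e^{\frac12\boldsymbol{\chi}^T\mathbf{A}\boldsymbol{\chi}}$, and equating them is precisely $\pf((\mathbf{A}^{-1})^T_{[I]})=\epsilon(I)\pf(\mathbf{A}_{[I^c]})/\pf(\mathbf{A})$; the special case $I=[2N]$ (where $\epsilon([2N])=1$ and $\pf(\mathbf{A}_{[\emptyset]})=1$) gives $\pf((\mathbf{A}^{-1})^T)=\pf(\mathbf{A})^{-1}$.

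The step I expect to demand the most care is the sign and orientation bookkeeping suppressed in the phrases \emph{top-degree part} and \emph{collecting surviving monomials}: one must fix a single convention for the Berezin measure and the ordering of Grassmann factors, then check that the permutation signs assemble into the claimed $\det(\mathbf{M})$, $\epsilon(I)$, and---most delicately in item (4)---the sign attached to each complementary minor. A secondary subtlety is pinning the overall sign in item (2): the canonical-form route sidesteps the $\pm$ ambiguity of a naive square root, but one should still verify that the \emph{same} invertible congruence $\mathbf{M}$ is used for both $\pf$ and $\det$, and that the existence of the real block form is invoked correctly (it holds for every real skew-symmetric matrix, with the vanishing blocks contributing trivially when $\mathbf{A}$ is singular, in which case both sides of (2) vanish).
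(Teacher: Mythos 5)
Your proposal is correct, but strictly speaking it is not comparable to a proof in the paper, because the paper does not prove Lemma~\ref{LemmaPf}: it only remarks that ``the proof of some of the properties above can be found here'' and cites \cite{lomont1985properties}. What you supply is therefore a genuine addition rather than an alternative to an existing argument. Your route --- taking the Berezin characterization $\pf(\mathbf{A})=\int\mathbf{D}\boldsymbol{\chi}\,e^{\frac12\boldsymbol{\chi}^T\mathbf{A}\boldsymbol{\chi}}$ as the single backbone, getting (1) from the matching expansion, (3) from the substitution $\boldsymbol{\eta}=\mathbf{M}^T\boldsymbol{\chi}$ together with the identity $\eta_1\cdots\eta_{2N}=\det(\mathbf{M})\,\chi_1\cdots\chi_{2N}$, (2) from the real skew-symmetric congruence normal form combined with (1) and (3), and (4) from the same substitution with a rectangular $\mathbf{M}$ --- is the standard Grassmann derivation and is fully consistent with the paper's own formalism and measure convention ($\mathbf{D}\boldsymbol{\chi}=d\chi_{2L}\cdots d\chi_1$, so $\int\mathbf{D}\boldsymbol{\chi}\,\chi_1\cdots\chi_{2L}=1$). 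The ordering $(1),(3),(2),(4),(5)$ correctly avoids forward references, and your observation that the singular case of (2) is harmless because both sides vanish is the right way to dispose of the square-root sign ambiguity.

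The one place you should tighten is item (5). You obtain it by equating the two evaluations of $\int\mathbf{D}\boldsymbol{\chi}\,\bigl(\prod_{\alpha}\chi_{i_\alpha}\bigr)e^{\frac12\boldsymbol{\chi}^T\mathbf{A}\boldsymbol{\chi}}$ recalled at the top of Section~\ref{Sec:6} from \cite{caracciolo2013algebraic}. The first (the $\epsilon(I_{[r]})\pf(\mathbf{A}_{[I_{[r]}^c]})$ form) is proved by direct expansion and is unproblematic; the second (the $\pf(\mathbf{A})\,\pf((\mathbf{A}^{-T})_{[I_{[r]}]})$ form) is, in much of the literature, itself deduced \emph{from} Jacobi's identity, so importing it as a black box risks circularity. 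To close the loop you should establish it independently, e.g.\ by the source-shift argument: $\int\mathbf{D}\boldsymbol{\chi}\,e^{\frac12\boldsymbol{\chi}^T\mathbf{A}\boldsymbol{\chi}+\boldsymbol{\psi}^T\boldsymbol{\chi}}=\pf(\mathbf{A})\,e^{\frac12\boldsymbol{\psi}^T\mathbf{A}^{-1}\boldsymbol{\psi}}$ follows from translation invariance of the Berezin integral under $\boldsymbol{\chi}\to\boldsymbol{\chi}-\mathbf{A}^{-1}\boldsymbol{\psi}$, and differentiating with respect to the sources indexed by $I$ then yields $\pf(\mathbf{A})\,\pf((\mathbf{A}^{-T})_{[I]})$ via the Grassmann Wick theorem. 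With that supplied, your derivation of (5), including the check that $\epsilon([2N])=1$ and $\pf(\mathbf{A}_{[\emptyset]})=1$ in the special case, is complete. Be careful not to invoke Corollary~\ref{exponential with linear fermionic and bosonic sources (invertible)} as an input here, since the paper's proof of that corollary explicitly uses item (5) of this lemma.
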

The proof of some of the properties above can be found here \cite{lomont1985properties}.

\section{Spectral Illustration: Disconnected Graph Examples}\label{sec:SpectralIllustration}

To complement Section~\ref{subsec:spectral_forests}, we present two examples showing how the spectral
decomposition of the Laplacian identifies disconnected components and factorizes the enumeration of
spanning forests. The procedure is entirely algebraic and applies equally well to graphs with any number
of components.

\subsection*{Example 1: Two Components}

Consider a weighted graph $G$ with three vertices $V = \{1,2,3\}$ and edges
\(
E = \{(1,2)\}
\),
with uniform weight $w_{12}=1$. Vertex $3$ is isolated, so the graph consists of two connected
components:
\[
G_1 = (\{1,2\}, \{(1,2)\}), \qquad
G_2 = (\{3\}, \emptyset).
\]

\paragraph{Step 1. Laplacian and spectrum.}
The weighted Laplacian is
\[
L =
\begin{pmatrix}
 1 & -1 & 0 \\[3pt]
-1 &  1 & 0 \\[3pt]
 0 &  0 & 0
\end{pmatrix}.
\]
Its eigenvalues and normalized eigenvectors are
\[
\lambda_1 = 0, \quad \lambda_2 = 0, \quad \lambda_3 = 2, \qquad
Q =
\begin{pmatrix}
\frac{1}{\sqrt{2}} & 0 & \frac{1}{\sqrt{2}}\\[3pt]
\frac{1}{\sqrt{2}} & 0 & -\frac{1}{\sqrt{2}}\\[3pt]
0 & 1 & 0
\end{pmatrix}.
\]
The two zero eigenvalues correspond to two disconnected components.

\paragraph{Step 2. Extracting zero modes.}
The matrix $Q_0$ formed by the zero-mode eigenvectors is
\[
Q_0 =
\begin{pmatrix}
\frac{1}{\sqrt{2}} & 0 \\[3pt]
\frac{1}{\sqrt{2}} & 0 \\[3pt]
0 & 1
\end{pmatrix}.
\]
Each row of $Q_0$ corresponds to a vertex:
\[
r_1 = (\tfrac{1}{\sqrt{2}}, 0), \quad
r_2 = (\tfrac{1}{\sqrt{2}}, 0), \quad
r_3 = (0, 1).
\]
Vertices $1$ and $2$ share identical rows (hence belong to the same component), while vertex $3$ has a distinct row.

\paragraph{Step 3. Block decomposition and factorization.}
Reordering vertices as $(1,2)$ and $(3)$ gives
\[
L_{\text{block}} =
\begin{pmatrix}
L_{G_1} & 0\\[3pt]
0 & L_{G_2}
\end{pmatrix}, \qquad
L_{G_1} =
\begin{pmatrix}
1 & -1\\[3pt]
-1 & 1
\end{pmatrix}, \quad
L_{G_2} = (0).
\]
Deleting one row and column from $L_{G_1}$ yields
\(
\det(L_{G_1[\{1\}^c]}) = 1,
\)
and by convention $\det(L_{G_2[\{1\}^c]}) = 1$. Thus,
\[
\tau_{\mathrm{forest}}(G) = \tau(G_1)\,\tau(G_2) = 1 \times 1 = 1.
\]

\paragraph{Interpretation.}
The spectral decomposition identifies both components purely algebraically through the structure of
$Q_0$. Vertices with identical rows in $Q_0$ form a connected region. In the Grassmann formulation,
the same factorization appears as a product of two independent Gaussian integrals
over $\chi_{1,2}$ and $\chi_3$, confirming Eq.~\eqref{eq:forest_factorization}.

\vspace{1em}

\subsection*{Example 2: Three Components}

We now illustrate a general case with three disconnected components.

\paragraph{Step 1. Graph and Laplacian.}
Let the vertex set be $V = \{1,2,3,4,5,6\}$ and edges
\[
E = \{(1,2), (2,3), (4,5)\}.
\]
Hence,
\[
G_1 = \{1,2,3\}, \qquad
G_2 = \{4,5\}, \qquad
G_3 = \{6\}.
\]
The Laplacian matrix is
\[
L =
\begin{pmatrix}
 1 & -1 &  0 &  0 &  0 &  0\\
-1 &  2 & -1 &  0 &  0 &  0\\
 0 & -1 &  1 &  0 &  0 &  0\\
 0 &  0 &  0 &  1 & -1 &  0\\
 0 &  0 &  0 & -1 &  1 &  0\\
 0 &  0 &  0 &  0 &  0 &  0
\end{pmatrix}.
\]

\paragraph{Step 2. Eigenvalues and zero modes.}
Diagonalizing $L$ gives the spectrum
\[
\lambda = (0,0,0,1,2,3).
\]
The three zero modes (normalized) are
\[
v^{(1)} = \tfrac{1}{\sqrt{3}}(1,1,1,0,0,0)^{\top}, \quad
v^{(2)} = \tfrac{1}{\sqrt{2}}(0,0,0,1,1,0)^{\top}, \quad
v^{(3)} = (0,0,0,0,0,1)^{\top}.
\]
Stacking them as columns gives
\[
Q_0 =
\begin{pmatrix}
\frac{1}{\sqrt{3}} & 0 & 0\\[2pt]
\frac{1}{\sqrt{3}} & 0 & 0\\[2pt]
\frac{1}{\sqrt{3}} & 0 & 0\\[2pt]
0 & \frac{1}{\sqrt{2}} & 0\\[2pt]
0 & \frac{1}{\sqrt{2}} & 0\\[2pt]
0 & 0 & 1
\end{pmatrix}.
\]

\paragraph{Step 3. Reading connectivity from \(Q_0\).}
Each vertex corresponds to one row of $Q_0$:
\[
\begin{array}{c|ccc}
i & (Q_0)_{i1} & (Q_0)_{i2} & (Q_0)_{i3}\\ \hline
1 & 1/\sqrt{3} & 0 & 0\\
2 & 1/\sqrt{3} & 0 & 0\\
3 & 1/\sqrt{3} & 0 & 0\\
4 & 0 & 1/\sqrt{2} & 0\\
5 & 0 & 1/\sqrt{2} & 0\\
6 & 0 & 0 & 1
\end{array}
\]
Vertices $1,2,3$ have identical rows, forming $G_1$; vertices $4,5$ share the same row, forming $G_2$;
and vertex $6$ stands alone, giving $G_3$.

\paragraph{Step 4. General criterion.}
Given any Laplacian $L$ with $k$ zero modes,
\[
L = Q \Lambda Q^{\top}, \qquad \Lambda = \mathrm{diag}(0,\dots,0,\lambda_{k+1},\dots,\lambda_N),
\]
the $N\times k$ matrix $Q_0$ contains all zero-eigenvalue eigenvectors.
Defining $r_i = (Q_0)_{i,1:k}$ as the $i$-th row,
\[
i \text{ and } j \text{ belong to the same component} \iff r_i \parallel r_j.
\]
In other words, identical or proportional rows in $Q_0$ indicate connectivity.
\paragraph{Step 5. Block decomposition and factorization.}
Reordering vertices as $(1,2,3)$, $(4,5)$ and $(6)$ gives
\[
L_{\text{block}} =
\begin{pmatrix}
L_{G_1} & 0 & 0\\[3pt]
0 & L_{G_2} & 0\\[3pt]
0 & 0 & L_{G_3}
\end{pmatrix}, \qquad
L_{G_1} =
\begin{pmatrix}
1 & -1 & 0\\[3pt]
-1 & 2 &-1\\[3pt]
0 & -1 &1
\end{pmatrix}, \qquad
L_{G_1} =
\begin{pmatrix}
1 & -1\\[3pt]
-1 & 1
\end{pmatrix}, \quad
L_{G_3} = (0).
\]
Deleting one row and column from $L_{G_1}$ yields
\(
\det(L_{G_1[\{1\}^c]}) = 1
\), deleting one row and column from $L_{G_2}$ yields
\(
\det(L_{G_2[\{1\}^c]}) = 1
\),
and by convention $\det(L_{G_3[\{1\}^c]}) = 1$. Thus,
\[
\tau_{\mathrm{forest}}(G) = \tau(G_1)\,\tau(G_2) \tau(G_3) = 1.
\]

\paragraph{Step 6. Interpretation.}
In $\mathbb{R}^k$, each vertex corresponds to one point given by $r_i$.
All vertices of the same component collapse to the same point,
while distinct components occupy orthogonal directions.
The Laplacian’s kernel therefore encodes the connected components geometrically,
and the regularized subspace $L_{\mathrm{reg}} = Q_+ \Lambda_+ Q_+^{\top}$
acts only on the nontrivial connected sectors.

---

Together, these examples show concretely how the spectral decomposition not only provides
a regularization mechanism for singular Laplacians but also encodes connectivity information
in the geometry of the zero-mode subspace.

\section{Examples of Monobisyzexant}\label{Mbszexample}

\subsection{Example 1: The case of a \(5\times5\) matrix}
Consider $5\times5$ matrices, where $\mathbf{D}$ is diagonal and $\mathbf{V}$ is symmetric.  
The symmetric tensor $\mathcal{S}_{i_1 i_2 i_3 i_4 i_5}(\mathbf{D},\mathbf{V})$ is defined as
\begin{equation*}
\begin{split}
\mathcal{S}_{i_1i_2i_3i_4i_5}(\mathbf{D},\mathbf{V})=&\begin{vmatrix}
    D_{i_1i_1}&\binom{4}{1}V_{i_1i_2}&0&0&0\\
    -1& D_{i_2i_2}&\binom{3}{1}V_{i_2i_3}&0&0\\
    0&-1& D_{i_3i_3}&\binom{2}{1}V_{i_3i_4}&0\\
    0&0&-1& D_{i_4i_4}&V_{i_4i_5}\\
    0&0&0&-1& D_{i_5i_5}
\end{vmatrix}\\
=&D_{i_1i_1}D_{i_2i_2}D_{i_3i_3}D_{i_4i_4}D_{i_5i_5}+4D_{i_3i_3}D_{i_4i_4}D_{i_5i_5}V_{i_1i_2}+3D_{i_1i_1}D_{i_4i_4}D_{i_5i_5}V_{i_2i_3}\\
&+2D_{i_1i_1}D_{i_2i_2}D_{i_5i_5}V_{i_3i_4}+8D_{i_5i_5}V_{i_1i_2}V_{i_3i_4}+D_{i_1i_1}D_{i_2i_2}D_{i_3i_3}V_{i_4i_5}\\
&+4D_{i_3i_3}V_{i_1i_2}V_{i_4i_5}+3D_{i_1i_1}V_{i_2i_3}V_{i_4i_5}.
\end{split}
\end{equation*}

The corresponding Monobisyzexant is
\begin{equation*}
\hspace{-3.7cm}
\begin{split}
\operatorname{Mbsz}\left(\mathbf{D},\mathbf{V}\right)=&\frac{1}{5!}\sum_{i_1,\dots,i_5=1}^5\rho^{i_1i_2i_3i_4i_5}\mathcal{S}_{i_1i_2i_3i_4i_5}\left(\mathbf{D},\mathbf{V}\right)\\
=&D_{11}D_{22}D_{33}D_{44}D_{55}+D_{33}D_{44}D_{55}V_{12}+D_{22}D_{44}D_{55}V_{13}+D_{22}D_{33}D_{55}V_{14}+D_{22}D_{33}D_{44}V_{15}\\
&+D_{11}D_{44}D_{55}V_{23}+D_{11}D_{33}D_{55}V_{24}+D_{11}D_{33}D_{44}V_{25}+D_{11}D_{22}D_{55}V_{34}+D_{11}D_{22}D_{44}V_{35}\\
&+D_{11}D_{22}D_{33}V_{45}+D_{55}(V_{14}V_{23}+V_{13}V_{24}+V_{12}V_{34})+D_{44}(V_{15}V_{23}+V_{13}V_{25}+V_{12}V_{35})\\
&+D_{33}(V_{15}V_{24}+V_{14}V_{25}+V_{12}V_{45})+ D_{22}(V_{15}V_{34}+V_{14}V_{35}+V_{13}V_{45})\\
&+D_{11}(V_{25}V_{34}+V_{24}V_{35}+V_{23}V_{45})\\
=&\det(\mathbf{D})+\det(\mathbf{D}_{[\{1,2\}^c]})\haf\left(\mathbf{V}_{[\{3,4,5\}^c]}\right)+\det(\mathbf{D}_{[\{1,3\}^c]})\haf\left(\mathbf{V}_{[\{2,4,5\}^c]}\right)\\
&+\det(\mathbf{D}_{[\{1,4\}^c]})\haf\left(\mathbf{V}_{[\{2,3,5\}^c]}\right)+\det(\mathbf{D}_{[\{1,5\}^c]})\haf\left(\mathbf{V}_{[\{2,3,4\}^c]}\right)+\det(\mathbf{D}_{[\{2,3\}^c]})\haf\left(\mathbf{V}_{[\{1,4,5\}^c]}\right)\\
&+\det(\mathbf{D}_{[\{2,4\}^c]})\haf\left(\mathbf{V}_{[\{1,3,5\}^c]}\right)+\det(\mathbf{D}_{[\{2,5\}^c]})\haf\left(\mathbf{V}_{[\{1,3,4\}^c]}\right)+\det(\mathbf{D}_{[\{3,4\}^c]})\haf\left(\mathbf{V}_{[\{1,2,5\}^c]}\right)\\
&+\det(\mathbf{D}_{[\{3,5\}^c]})\haf\left(\mathbf{V}_{[\{1,2,4\}^c]}\right)+\det(\mathbf{D}_{[\{4,5\}^c]})\haf\left(\mathbf{V}_{[\{1,2,3\}^c]}\right)+\det(\mathbf{D}_{[\{1,2,3,4\}^c]})\haf\left(\mathbf{V}_{[\{5\}^c]}\right)\\
&+\det(\mathbf{D}_{[\{1,2,3,5\}^c]})\haf\left(\mathbf{V}_{[\{4\}^c]}\right)+\det(\mathbf{D}_{[\{1,2,4,5\}^c]})\haf\left(\mathbf{V}_{[\{3\}^c]}\right)+\det(\mathbf{D}_{[\{2,3,4,5\}^c]})\haf\left(\mathbf{V}_{[\{1\}^c]}\right)\\
=&\det(\mathbf{D})+\sum_{i_1<i_2<i_3=1}^5\det(\mathbf{D}_{[\{i_1,i_2,i_3\}]})\haf\left(\mathbf{V}_{[\{i_1,i_2,i_3\}^c]}\right)+\sum_{i_1=1}^5\det(\mathbf{D}_{[\{i_1\}]})\haf\left(\mathbf{V}_{[\{i_1\}^c]}\right)\\
=&\det(\mathbf{D})+\sum_{\alpha=1}^2\sum_{i_1<\cdots<i_{2\alpha-1}=1}^5\det(\mathbf{D}_{[\{i_1,\dots,i_{2i-1}\}]})\haf\left(\mathbf{V}_{[\{i_1,\dots,i_{2\alpha-1}\}^c]}\right).
    \end{split}
\end{equation*}

\subsection{Example 2: The case of a \(4\times4\) matrix}

Consider $4\times4$ matrices, $\mathbf{D}$ is a diagonal matrix and $\mathbf{V}$ is a symmetric matrix. The symmetric tensor $\mathcal{S}_{i_1i_2i_3i_4}(\mathbf{D},\mathbf{V})$ is
\begin{equation*}
\begin{split}
\mathcal{S}_{i_1i_2i_3i_4}(\mathbf{D},\mathbf{V})=&\begin{vmatrix}
    D_{i_1i_1}&\binom{3}{1}V_{i_1i_2}&0&0\\
    -1& D_{i_2i_2}&\binom{2}{1}V_{i_2i_3}&0\\
    0&-1& D_{i_3i_3}&V_{i_3i_4}\\
    0&0&-1& D_{i_4i_4}\\
\end{vmatrix}\\
=&D_{i_1i_1}D_{i_2i_2}D_{i_3i_3}D_{i_4i_4}+3D_{i_3i_3}D_{i_4i_4}V_{i_1i_2}+2D_{i_1i_1}D_{i_4i_4}V_{i_2i_3}\\
&+D_{i_1i_1}D_{i_2i_2}V_{i_3i_4}+3V_{i_1i_2}V_{i_3i_4}.
\end{split}
\end{equation*}

The corresponding Monobisyzexant is
\begin{equation*}
\hspace{-3cm}
    \begin{split}
        \operatorname{Mbsz}\left(\mathbf{D},\mathbf{V}\right)=&\frac{1}{4!}\sum_{i_1,\dots,i_4=1}^4\rho^{i_1i_2i_3i_4}\mathcal{S}_{i_1i_2i_3i_4}\left(\mathbf{D},\mathbf{V}\right)\\
        =&D_{11}D_{22}D_{33}D_{44}+D_{33}D_{44}V_{12}+D_{22}D_{44} V_{13}+D_{22}D_{33}V_{14}+D_{11}D_{44}V_{23}+D_{11}D_{33}V_{24}\\
        &+D_{11}D_{22}V_{34}+V_{14}V_{23}+V_{13}V_{24}+V_{12}V_{34}\\
        =&\det(\mathbf{D})+\det(\mathbf{D}_{[\{1,2\}^c]})\haf\left(\mathbf{V}_{[\{3,4\}^c]}\right)+\det(\mathbf{D}_{[\{1,3\}^c]})\haf\left(\mathbf{V}_{[\{2,4\}^c]}\right)+\det(\mathbf{D}_{[\{1,4\}^c]})\haf\left(\mathbf{V}_{[\{2,3\}^c]}\right)\\
        &+\det(\mathbf{D}_{[\{2,3\}^c]})\haf\left(\mathbf{V}_{[\{1,4\}^c]}\right)+\det(\mathbf{D}_{[\{2,4\}^c]})\haf\left(\mathbf{V}_{[\{1,3\}^c]}\right)\\
        &+\det(\mathbf{D}_{[\{3,4\}^c]})\haf\left(\mathbf{V}_{[\{1,2\}^c]}\right)+\haf(\mathbf{V})\\
        =&\det(\mathbf{D})+\sum_{i_1<i_2=1}^4\det(\mathbf{D}_{[\{i_1,i_2\}]})\haf\left(\mathbf{V}_{[\{i_1,i_2\}^c]}\right)+\haf(\mathbf{V}).
    \end{split}
\end{equation*}
%%%%%%%%
\section{Proofs}
\subsection{Kasteleyn parity for the combined sign}

We next show that the product of the Wick sign $\sigma(M)$ and the orientation sign $S(M):=\prod_{\{i,j\}\in M}S_{ij}$ is constant in $M$ on planar graphs. This is the crux of the Kasteleyn method.

\begin{lemma}[Kasteleyn parity lemma]\label{lem:kasteleyn-parity}
Fix a planar embedding and a Kasteleyn orientation. For any two perfect matchings $M$ and $M_0$,
\[
\frac{\sigma(M)\, S(M)}{\sigma(M_0)\, S(M_0)}=+1.
\]
Equivalently, the quantity $\sigma(M)\,S(M)$ is independent of $M$.
\end{lemma}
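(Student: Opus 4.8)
The plan is to establish invariance of $\sigma(M)\,S(M)$ by interpolating between any two perfect matchings through their symmetric difference, thereby reducing the whole claim to the effect of a single local cycle switch. First I would recall that $M \triangle M_0$ is a disjoint union of vertex-disjoint cycles $C_1,\ldots,C_p$, each of even length, whose edges alternate between $M$ and $M_0$. Switching the matching on one cycle $C_t$ at a time (replacing its $M$-edges by its $M_0$-edges) produces a chain of perfect matchings running from $M$ to $M_0$, so by multiplicativity of both $\sigma$ and $S$ it suffices to prove that $\sigma(\cdot)\,S(\cdot)$ is unchanged under one such switch.

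Next I would compute the two sign factors separately for a single alternating cycle $C=(a_1 a_2 \cdots a_{2k})$, where the old matching uses $\{a_1a_2\},\{a_3a_4\},\ldots$ and the new one uses $\{a_2a_3\},\{a_4a_5\},\ldots,\{a_{2k}a_1\}$. The Wick factor $\sigma$ changes by the signature of the permutation relating the two pairings on the $2k$ cycle vertices; this is a rotation of $2k$ symbols, a single $2k$-cycle, contributing $(-1)^{2k-1}=-1$ up to the intra-pair reordering built into the definition of $\sigma$ recalled earlier. The orientation factor $S$ changes by the ratio $\prod_{\mathrm{new}}S_e/\prod_{\mathrm{old}}S_e$, which, after using antisymmetry $S_{ij}=-S_{ji}$, I would rewrite in terms of the number of edges of $C$ whose orientation agrees with a fixed clockwise traversal of $C$, i.e. the clockwise count of $C$.

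The heart of the argument is the Kasteleyn face-summation, which I would carry out as a discrete Stokes/Euler computation. Let $C$ enclose faces $F_1,\ldots,F_f$, with $n$ vertices strictly inside and $E_{\mathrm{int}}$ interior edges. Summing the clockwise-edge counts over the enclosed faces, each interior edge is clockwise for exactly one of its two bounding faces and so contributes $1$, while each edge of $C$ contributes according to whether it is clockwise along $C$; combining the Kasteleyn condition that each face count $c_i$ is odd with Euler's relation $n-E_{\mathrm{int}}+f=1$ for the closed disk bounded by $C$, I would deduce that the clockwise count of $C$ has parity $n+1 \pmod 2$. Finally, since every vertex of $C$ is saturated by $C$-edges in both $M$ and $M_0$, the interior vertices are matched among themselves, forcing $n$ even; hence $C$ is clockwise-odd, and the $-1$ from $\sigma$ is exactly cancelled by the $-1$ from $S$, giving ratio $+1$.

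The step I expect to be the main obstacle is the precise bookkeeping that converts the abstract ratios $\prod_{\mathrm{new}}S_e/\prod_{\mathrm{old}}S_e$ and $\sigma(M_0)/\sigma(M)$ into the single clean statement ``$\sigma S$ flips sign iff $C$ is clockwise-even''. Both ratios depend on ordering conventions (the choice $i<j$ within each pair and the global ordering defining $\sigma$), and one must verify these conventions are mutually compatible so that the combined parity tracks the clockwise count of $C$ rather than its complement. Once this normalization is fixed, the Euler/face-summation identity supplies the required parity and the cancellation is automatic.
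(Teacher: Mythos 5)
Your proposal is correct, and it follows the same skeleton as the paper's argument: decompose $M\triangle M_0$ into vertex-disjoint alternating cycles, reduce to a single cycle switch, compute the Wick and orientation sign ratios separately, and close with the Kasteleyn face condition. The genuine difference is that you actually execute the step the paper delegates to ``standard expositions'': the face summation $\sum_i c_i = E_{\mathrm{int}}+c(C)$ combined with Euler's relation $n-E_{\mathrm{int}}+f=1$ yields $c(C)\equiv n+1\pmod 2$, and you then supply the observation, absent from the paper, that the number $n$ of enclosed vertices is even because those vertices are matched among themselves by either matching. This is the correct parity statement: in the traversal convention the per-cycle Wick factor is the sign of a $2k$-cycle, namely $-1$, the orientation ratio is $(-1)^{c(C)}$, and the combined ratio is $(-1)^{c(C)+1}$, so what is needed is exactly that $C$ be oddly oriented. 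Note that this does \emph{not} match the paper's intermediate claims (a Wick factor $(-1)^{\ell-1}$ and the identity $c(C)\equiv\ell(C)-1$), which are mutually consistent only because their discrepancies cancel, and which fail individually for odd $\ell$ (e.g.\ a single hexagonal face has $\ell=3$, $n=0$, and $c$ odd, not $c\equiv\ell-1\equiv 0$); your version is the one that survives scrutiny. Your flagged concern about ordering conventions is legitimate but benign: for any consistent listing of the pairs, the \emph{product} $\sigma\,S$ per cycle equals $(-1)^{c(C)+1}$ even though the two factors individually shift, so once you fix one convention and carry it through both ratios the cancellation is automatic.
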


\begin{proof}
Consider the symmetric difference $M\triangle M_0$, which decomposes into a disjoint union of even-length cycles $C_1,\dots,C_r$ embedded in the plane. Passing from $M_0$ to $M$ flips the pairing along each cycle $C_k$.

\smallskip
\textit{(i) Wick sign ratio along a cycle.)} Reordering the Grassmann variables to implement the new pairing along a single even cycle of length $2\ell$ contributes a factor $(-1)^{\ell-1}$. (This is a standard computation: in a $2\ell$-tuple, the number of transpositions needed to swap pairings differs by $\ell-1$ modulo 2.) Therefore
\[
\frac{\sigma(M)}{\sigma(M_0)}=\prod_{k=1}^r (-1)^{\ell(C_k)-1},
\]
where $\ell(C)$ is half the length of cycle $C$.

\smallskip
\textit{(ii) Orientation sign ratio along a cycle.)} The orientation sign $S(M)$ is the product over edges in the matching of $S_{ij}$. On a cycle flip, exactly $\ell(C)$ edges of the matching on $C$ are replaced. The ratio of orientation signs contributed by cycle $C$ equals $(-1)^{c(C)}$, where $c(C)$ is the number of edges oriented \textit{clockwise} along $C$ in the embedding. (Indeed, each replaced edge contributes a factor $S_{ij}$, and traversing the cycle once counts the parity of clockwise edges.)

Thus
\[
\frac{S(M)}{S(M_0)}=\prod_{k=1}^r (-1)^{c(C_k)}.
\]

\smallskip
\textit{(iii) Combine and use the Kasteleyn condition.)} Multiply (i) and (ii):
\[
\frac{\sigma(M)\,S(M)}{\sigma(M_0)\,S(M_0)}
=\prod_{k=1}^r (-1)^{\ell(C_k)-1+c(C_k)}.
\]
For a Kasteleyn orientation on a planar embedding, one shows (see e.g.\ standard expositions of Kasteleyn's theorem) that along any even cycle $C$, the parity satisfies
\[
\ell(C)-1+c(C)\equiv 0 \pmod{2}.
\]
Intuitively, the boundary of each face has an odd number of clockwise edges; summing face-parities over the region enclosed by $C$ yields $c(C)\equiv \ell(C)-1\ (\text{mod }2)$. Hence each factor is $+1$, and the product is $+1$.
\end{proof}

The combinatorial content of this lemma finds its natural algebraic expression through Levi-Civita symbols and orientation signs. Consider the product structure:

\[
\epsilon^{i_1\dots i_{2L}}S_{i_1 i_2}\cdots S_{i_{2L-1}i_{2L}}
\]

where $\epsilon^{i_1\dots i_{2L}}$ is the Levi-Civita symbol and $S_{ij}$ are orientation signs. This product encodes both the permutation structure of the matching and the relative orientations of the edges.

The Levi-Civita symbol $\epsilon^{i_1\dots i_{2L}}$ serves two essential purposes:
\begin{itemize}
    \item It ensures the indices $(i_1,\dots,i_{2L})$ form a permutation of $\{1,\dots,2L\}$, thereby enforcing the condition that each vertex appears exactly once---the mathematical expression of a \textit{perfect matching}.
    
    \item In the Hafnian formulation, the square of the Levi-Civita symbol yields a constant positive sign for all contributing configurations, distinguishing it from the determinant, where permutation signs are preserved.
\end{itemize}

When we introduce an arbitrary orientation to the graph, we assign signs to the edges, transforming the weights from $A_{ij} = w_{ij}$ to $S_{ij}A_{ij} = \pm w_{ij}$. Defining a skew-symmetric matrix through this orientation is always possible, but the critical insight lies in the behavior of the combinatorial sign:
\[
\epsilon^{i_1\dots i_{2L}}S_{i_1 i_2}\cdots S_{i_{2L-1}i_{2L}}.
\]
For a general orientation, this product depends on the specific perfect matching $M$ and can be $\pm 1$. The remarkable property of the \textit{Kasteleyn orientation} is that it ensures this product equals $+1$ for \textit{every} perfect matching $M$.

The equivalence between these formulations arises from the identification:
\begin{itemize}
\item $\sigma(M) = \epsilon^{i_1\dots i_{2L}}$ (Wick sign = Levi-Civita permutation sign)
\item $S(M) = S_{i_1 i_2}\cdots S_{i_{2L-1}i_{2L}}$ (orientation sign product)
\end{itemize}

The Kasteleyn Parity Lemma guarantees that $\sigma(M)S(M)$ is constant, while the Levi-Civita formulation shows that this constant can be normalized to $+1$ through an appropriate Kasteleyn orientation. This normalization is precisely what enables the transformation from the Hafnian of the adjacency matrix to the Pfaffian of the Kasteleyn matrix, providing the mathematical foundation for exact computation in planar dimer models.

This equivalence demonstrates that the combinatorial invariance discovered by Kasteleyn corresponds to the existence of a special orientation that makes the Levi-Civita orientation product uniformly positive, thereby removing sign and enabling Pfaffian methods for enumeration.

\subsection{Proof of Theorem \ref{thm:main}}\label{Proof:TheoHfPf}
\begin{enumerate}[label=(\roman*)]
\item For a planar graph with a Kasteleyn orientation, consider any perfect matching on $2L$ vertices, denoted by the edge set $\{\{i_1,i_2\}, \{i_3,i_4\}, \dots, \{i_{2L-1},i_{2L}\}\}$. The product of the Kasteleyn orientation signs for this matching, expressed as $\epsilon^{i_1 i_2} \epsilon^{i_3 i_4} \cdots \epsilon^{i_{2L-1} i_{2L}}$, equals the full Levi-Civita symbol $\epsilon^{i_1 i_2 \cdots i_{2L-1} i_{2L}}$ that is independent of the specific matching. This statement is precisely equivalent to Lemma~\ref{lem:kasteleyn-parity}, which guarantees the invariance of the combined sign $\sigma(M)S(M)$ across all perfect matchings. This fundamental identity enables the transformation from the Hafnian of the adjacency matrix to the Pfaffian of the Kasteleyn matrix, providing the mathematical foundation for the exact solvability of planar dimer models.

Using the fundamental identity between the Levi-Civita symbol and Kasteleyn orientation signs, together with the definition of the Kasteleyn matrix elements $K_{ij} = \epsilon^{ij} A_{ij}$, we can directly relate the Hafnian and Pfaffian expressions:
\begin{equation*}
\begin{split}
\haf(\mathbf{A}) &= \frac{1}{2^L L!}\sum_{i_1,\dots,i_{2L}=1}^{2L}\epsilon^{i_1\dots i_{2L}}\epsilon^{i_1 i_2}\epsilon^{i_3 i_4}\cdots \epsilon^{i_{2L-1}i_{2L}}A_{i_1i_2}\cdots A_{i_{2L-1}i_{2L}}\\
&= \frac{1}{2^L L!}\sum_{i_1,\dots,i_{2L}=1}^{2L}\epsilon^{i_1\dots i_{2L}}(\epsilon^{i_1 i_2}A_{i_1i_2})(\epsilon^{i_3 i_4}A_{i_3i_4})\cdots (\epsilon^{i_{2L-1}i_{2L}}A_{i_{2L-1}i_{2L}})\\
&= \frac{1}{2^L L!}\sum_{i_1,\dots,i_{2L}=1}^{2L}\epsilon^{i_1\dots i_{2L}}K_{i_1i_2}K_{i_3i_4}\cdots K_{i_{2L-1}i_{2L}}\\
&= \pf(\mathbf{K}).
\end{split}
\end{equation*}

\item We can write an equivalent version of Lemma~\ref{lem:kasteleyn-parity} by using Grassmann--Berezin representation: 
\begin{equation*}
    \int \mathbf{D}\boldsymbol{\psi}\prod_{\alpha=1}^{2L}d\psi_{i_\alpha}=\epsilon^{i_1 i_2} \epsilon^{i_3 i_4} \cdots \epsilon^{i_{2L-1} i_{2L}}.
\end{equation*}
\end{enumerate}
By using the above and $K_{ij}=\epsilon^{ij} A_{ij}$, we can go from the fermionic Hafnian to the fermionic Pfaffian
\begin{equation}
\begin{split}
\int\mathbf{D}(\boldsymbol{\chi},\bar{\boldsymbol{\chi}})
e^{\frac{1}{2}(\bar{\boldsymbol{\chi}}\boldsymbol{\chi})^\top\mathbf{A}(\bar{\boldsymbol{\chi}}\boldsymbol{\chi})}&
=\int\mathbf{D}(\boldsymbol{\chi},\bar{\boldsymbol{\chi}})\left(\frac{1}{2}\sum_{i,j=1}^{2L}A_{ij}\bar{\chi}_i\chi_i\bar{\chi}_j\chi_j\right)^L\\
&=\frac{1}{2^L}\int\mathbf{D}\boldsymbol{\chi}\sum_{i_1,\dots,i_{2L}=1}^{2L}\int\mathbf{D}\bar{\boldsymbol{\chi}}\bar{\chi}_{i_1}\cdots\bar{\chi}_{i_{2L}}\prod_{\alpha=1}^LA_{i_{2\alpha-1}i_{2\alpha}}\chi_{i_{2\alpha-1}}\chi_{i_{2\alpha}}\\
&=\frac{1}{2^L}\int\mathbf{D}\boldsymbol{\chi}\sum_{i_1,\dots,i_{2L}=1}^{2L}\prod_{\alpha=1}^L(\epsilon^{i_{2\alpha-1}i_{2\alpha}}A_{i_{2\alpha-1}i_{2\alpha}})\chi_{i_{2\alpha-1}}\chi_{i_{2\alpha}}\\
&=\int\mathbf{D}\boldsymbol{\chi}\left(\frac{1}{2}\sum_{i,j=1}^{2L}K_{ij}\chi_i\chi_j\right)^L\\
&=\int \mathbf{D}\boldsymbol{\chi} e^{\frac{1}{2}\boldsymbol{\chi}^\top\mathbf{K}\boldsymbol{\chi}}.
\end{split}
\end{equation}

\subsection{Proof of Theorem~\ref{PropHafPf}}\label{ProofPropHafPf}

\begin{enumerate}[label=(\roman*)]
    \item The proof follows the strategy of Theorem~\ref{thm:main}, beginning with the Hafnian of the restricted adjacency matrix $\mathbf{A}_{[I^c]}$. The second line factorises the Levi-Civita symbol using the Kasteleyn orientation, introducing sign factors $\epsilon^{i_1i_2}\cdots\epsilon^{i_{2r-1}i_{2r}}$ associated with monomer insertions. The third line identifies the terms $\epsilon^{jk}A_{jk}$ as elements of the original Kasteleyn matrix $\mathbf{K}$. The product of monomer signs is then absorbed into a redefinition of the Kasteleyn matrix, yielding the modified matrix $\mathbf{K'}$. The fifth line recognizes the expression as the Pfaffian of $\mathbf{K'}_{[I^c]}$. The final result follows from Jacobi's identity for Pfaffians, which relates the Pfaffian of a submatrix to the product of the Pfaffian of the full matrix and the Pfaffian of the inverse submatrix.
\begin{equation*}
\hspace{-3.4cm}
\begin{split}
\haf\left(\mathbf{A}_{\left[I^c_{[2r]}\right]}\right)&=\frac{1}{2^{L-r}\left(L-r\right)!}\sum_{j_1,\dots,j_{2(L-r)}=1}^{2L}\epsilon^{i_1\cdots i_{2r}j_1\cdots j_{2(L-r)}}\epsilon^{i_1\cdots i_{2r}j_1\cdots j_{2(L-r)}}A_{j_1j_2}\cdots A_{j_{2(N-r)-1}j_{2(N-r)}}\\
&=\frac{1}{2^{L-r}\left(L-r\right)!}\sum_{j_1,\dots,j_{2(L-r)}=1}^{2L}\epsilon^{i_1\cdots i_{2r}j_1\cdots j_{2(L-r)}}\epsilon^{i_1i_{2}}\cdots\epsilon^{i_{2r-1}i_{2r}}\epsilon^{j_1 j_2}\cdots \epsilon^{j_{2L-1}j_{2L}}A_{j_1j_2}\cdots A_{j_{2(N-r)-1}j_{2(N-r)}}\\
&=\frac{1}{2^{L-r}\left(L-r\right)!}\sum_{j_1,\dots,j_{2(L-r)}=1}^{2L}\epsilon^{i_1\cdots i_{2r}j_1\cdots j_{2(L-r)}}\epsilon^{i_1i_{2}}\cdots\epsilon^{i_{2r-1}i_{2r}}(\epsilon^{j_1 j_2}A_{j_1j_2})\cdots (\epsilon^{j_{2L-1}j_{2L}}A_{j_{2(N-r)-1}j_{2(N-r)}})\\
&=\frac{\epsilon\left(I_{[2r]}\right)\epsilon\left(I_{[2r]}\right)}{2^{L-r}(L-r)!}\sum_{j_1,\dots,j_{2(L-r)}=1}^{2L}\varepsilon^{i_1\cdots i_{2r}j_1\cdots j_{2(L-r)}}K'_{j_1j_2}\cdots K'_{j_{2(L-r)-1}j_{2(L-r)}}\\
&=\epsilon\left(I_{[2r]}\right)\pf\left(\mathbf{K'}_{\left[I^c_{[2r]}\right]}\right)\\
&=(-1)^r\pf(\mathbf{K'})\pf\left((\mathbf{K}'^{-1})_{\left[I_{[2r]}\right]}\right).
\end{split}
\end{equation*}
\item Here we again follow the ideas of proof of Theorem~\ref{thm:main}.

\begin{equation}
\hspace{-3cm}
\begin{split}
\int\mathbf{D}&(\boldsymbol{\chi},\bar{\boldsymbol{\chi}})
\left(\prod_{\alpha=1}^{2r}\bar{\chi}_{k_\alpha}\chi_{k_\alpha}\right)e^{\frac{1}{2}(\bar{\boldsymbol{\chi}}\boldsymbol{\chi})^\top\mathbf{A}(\bar{\boldsymbol{\chi}}\boldsymbol{\chi})}\\
&=\int\mathbf{D}(\boldsymbol{\chi},\bar{\boldsymbol{\chi}})\left(\prod_{\alpha=1}^{2r}\bar{\chi}_{k_\alpha}\chi_{k_\alpha}\right)\left(\frac{1}{2}\sum_{i,j=1}^{2L}A_{ij}\bar{\chi}_i\chi_i\bar{\chi}_j\chi_j\right)^{L-r}\\
&=\frac{1}{2^{L-r}}\int\mathbf{D}\boldsymbol{\chi}\left(\prod_{\alpha=1}^{2r}\chi_{k_\alpha}\right)\sum_{j_1,\dots,j_{2(L-r)}=1}^{2L}\int\mathbf{D}\bar{\boldsymbol{\chi}}\bar{\chi}_{i_1}\cdots\bar{\chi}_{i_{2r}}\bar{\chi}_{j_1}\cdots\bar{\chi}_{j_{2(L-r)}}\prod_{\alpha=1}^{L-r}A_{j_{2\alpha-1}j_{2\alpha}}\chi_{j_{2\alpha-1}}\chi_{j_{2\alpha}}\\
&=\frac{1}{2^{L-r}}\int\mathbf{D}\boldsymbol{\chi}\left(\prod_{\alpha=1}^{2r}\chi_{k_\alpha}\right)\sum_{j_1,\dots,j_{2(L-r)}=1}^{2L}\epsilon^{i_1i_2}\cdots\epsilon^{i_{2r-1}i_{2r}}\prod_{\alpha=1}^{L-r}(\epsilon^{j_{2\alpha-1}j_{2\alpha}}A_{j_{2\alpha-1}j_{2\alpha}})\chi_{j_{2\alpha-1}}\chi_{j_{2\alpha}}\\
&=\int\mathbf{D}\boldsymbol{\chi}\left(\prod_{\alpha=1}^{2r}\chi_{k_\alpha}\right)\left(\frac{1}{2}\sum_{i,j=1}^{2L}K'_{ij}\chi_i\chi_j\right)^{L-r}\\
&=\int \mathbf{D}\boldsymbol{\chi}\left(\prod_{\alpha=1}^{2r}\chi_{k_\alpha}\right)e^{\frac{1}{2}\boldsymbol{\chi}^\top\mathbf{K'}\boldsymbol{\chi}}.
\end{split}
\end{equation}
\end{enumerate}
\subsection{Proof of Theorem~\ref{thm:main2}}\label{Proof:thm:main2}
Our starting point is Lemma~\ref{Mbszexpanded}, which provides an expansion of the Monobisyzexant function in terms of Determinants (which, for the diagonal matrix $\mathbf{D}$, reduce to products of its diagonal entries) and Hafnian, and hafnianinhos. Using Theorem~\ref{thm:main} and Proposition~\ref{PropHafPf}, we systematically rewrite this expansion in terms of Pfaffians and Pfaffinhos of the modified Kasteleyn matrix. This transformation is exclusively valid for planar graphs, as it relies on the combinatorial properties guaranteed by the Kasteleyn orientation. The Grassmann-Berezin integral formulation follows naturally by applying the same theorem and proposition to each term in the expansion, thereby establishing the equivalence between the combinatorial and Grassmann-Berezin representations of the monomer-dimer partition function.
\subsection{Proof of Theorem \ref{exponential with linear fermionic and bosonic sources}}\label{Proof exponential with linear fermionic and bosonic sources}
\begin{proof}
The proof of part (a) begins by decomposing the expression into separate exponentials. We then expand the exponential containing the linear terms, which yields:  
\begin{equation}
\begin{aligned}
\int \mathbf{D}\boldsymbol{\chi}\,
e^{\tfrac{1}{2}\boldsymbol{\chi}^T \mathbf{A} \boldsymbol{\chi} 
+ \mathbf{u}^T \boldsymbol{\chi} 
+ \boldsymbol{\psi}^T \boldsymbol{\chi}}
&= \int \mathbf{D}\boldsymbol{\chi}\, 
\left(1 + \mathbf{u}^T \boldsymbol{\chi}\right)\, \left[ 1 + \sum_{m=1}^{L} \frac{1}{m!} \big(\boldsymbol{\psi}^T\boldsymbol{\chi}\big)^m \right] 
e^{\tfrac{1}{2}\boldsymbol{\chi}^T \mathbf{A} \boldsymbol{\chi} 
} \\[1ex]
&= \int \mathbf{D}\boldsymbol{\chi}\, 
e^{\tfrac{1}{2}\boldsymbol{\chi}^T \mathbf{A} \boldsymbol{\chi} 
}
+ \sum_{k=1}^L u_k \int \mathbf{D}\boldsymbol{\chi}\,
\chi_k \, e^{\tfrac{1}{2}\boldsymbol{\chi}^T \mathbf{A} \boldsymbol{\chi} 
} \\[1ex]
&+ \sum_{m=1}^{L} \frac{1}{m!} \int \mathbf{D}\boldsymbol{\chi}\,  \big(\boldsymbol{\psi}^T\boldsymbol{\chi}\big)^m
e^{\tfrac{1}{2}\boldsymbol{\chi}^T \mathbf{A} \boldsymbol{\chi} 
} + \sum_{k=1}^L u_k \sum_{m=1}^{L} \frac{1}{m!} \int \mathbf{D}\boldsymbol{\chi}\, \chi_k \big(\boldsymbol{\psi}^T\boldsymbol{\chi}\big)^m
e^{\tfrac{1}{2}\boldsymbol{\chi}^T \mathbf{A} \boldsymbol{\chi} 
}.
\end{aligned}
\end{equation}
We now distinguish between the two possible scenarios for \(L\):

\paragraph{Case 1: \(L\) even.}  
In this case, the first contribution, as established in \cite{caracciolo2013algebraic}, is given by 
\(\operatorname{pf}[\mathbf{A}]\). The second contribution vanishes identically, while in the third summation only the even values of \(m\) yield nonzero terms. Conversely, in the fourth summation, only the odd values of \(m\) contribute. Collecting these observations, we obtain:
\begin{equation}
\begin{aligned}
\int \mathbf{D}\boldsymbol{\chi}\,
e^{\tfrac{1}{2}\boldsymbol{\chi}^T \mathbf{A} \boldsymbol{\chi} 
+ \mathbf{u}^T \boldsymbol{\chi} 
+ \boldsymbol{\psi}^T \boldsymbol{\chi}} & = \operatorname{pf}(\mathbf{A})+ \sum_{m=1}^{\tfrac{L}{2}} \frac{(-1)^{m}}{(2m)!}
\sum_{j_1,\ldots,j_{2m}=1}^L
\left( \int \mathbf{D}\boldsymbol{\chi} \, \prod_{\alpha=1}^{2m} \chi_{j_\alpha} 
e^{\frac{1}{2}\boldsymbol{\chi}^T\mathbf{A}\boldsymbol{\chi}} \right)
\prod_{\alpha=1}^{2m} \psi_{j_\alpha}\\
&  + \sum_{m=1}^{\tfrac{L}{2}} \frac{(-1)^{m}}{(2m-1)!}
\sum_{j_0,j_1,\ldots,j_{2m-1}=1}^L
\left( \int \mathbf{D}\boldsymbol{\chi} \, \prod_{\alpha=0}^{2m-1} \chi_{j_\alpha} 
e^{\frac{1}{2}\boldsymbol{\chi}^T\mathbf{A}\boldsymbol{\chi}} \right) u_{j_0}
\prod_{\alpha=1}^{2m-1} \psi_{j_\alpha}.
\end{aligned}
\end{equation}
By explicitly performing the Grassmann integrations, the expression simplifies to the following closed form:
\begin{equation}
    \begin{aligned}
        \int \mathbf{D}\boldsymbol{\chi}\,
e^{\tfrac{1}{2}\boldsymbol{\chi}^T \mathbf{A} \boldsymbol{\chi} 
+ \mathbf{u}^T \boldsymbol{\chi} 
+ \boldsymbol{\psi}^T \boldsymbol{\chi}} &= \pf(\mathbf{A})+\sum_{m=1}^{L/2}\left[
   \frac{(-1)^m}{(2m)!}
   \sum_{j_1,\dots,j_{2m}=1}^L
   \epsilon\left(J_{[2m]}\right)\pf\left(\mathbf{A}_{\left[J^c_{[2m]}\right]}\right)
   \left(\prod_{\alpha=1}^{2m}\psi_{j_\alpha}\right)\right.\\
&\qquad+\frac{(-1)^m}{(2m-1)!}
   \sum_{j_0,j_1,\dots,j_{2m-1}=1}^L
   \epsilon\left(J_{[0,2m-1]}\right)\pf\left(\mathbf{A}_{\left[J^c_{[0,2m-1]}\right]}\right)
   \left(u_{j_0}\prod_{\alpha=1}^{2m-1}\psi_{j_\alpha}\right)
\Bigg],
    \end{aligned}
\end{equation}
where $J_{[k]}=\{j_1,\dots,j_{k}\}\subseteq[L]$ and $J_{[0,k]}=\{j_0,j_1,\dots,j_{k}\}\subseteq[L]$. Using Corollary \ref{exponential with bosonic sources}, the result can be simplified as:
\begin{equation}
    \begin{aligned}
        \int \mathbf{D}\boldsymbol{\chi}\,
e^{\tfrac{1}{2}\boldsymbol{\chi}^T \mathbf{A} \boldsymbol{\chi} 
+ \mathbf{u}^T \boldsymbol{\chi} 
+ \boldsymbol{\psi}^T \boldsymbol{\chi}} &= \pf(\mathbf{A})+\sum_{m=1}^{L/2}\left[
   \frac{(-1)^m}{(2m)!}
   \sum_{j_1,\dots,j_{2m}=1}^L
\epsilon\left(J_{[2m]}\right)\pf\left(\mathbf{A}_{\left[J^c_{[2m]}\right]}\right)
   \left(\prod_{\alpha=1}^{2m}\psi_{j_\alpha}\right)\right.\\
&\qquad-\frac{(-1)^m}{(2m-1)!}
   \sum_{j_1,\dots,j_{2m-1}=1}^L
   \epsilon\left(J_{[2m-1]}\right)\pf\left(\mathbf{X}_{\left[J^c_{[2m-1]}\right]}\right)
   \left(\prod_{\alpha=1}^{2m-1}\psi_{j_\alpha}\right)
\Bigg],
    \end{aligned}
\end{equation}

\paragraph{Case 2: \(L\) odd.} In this case, the first contribution vanishes identically. The second term simplifies to the Pfaffinho of the matrix $\mathbf{A}$. Regarding the summations, in the third series only odd values of $m$ contribute nontrivially, while in the fourth series the contributions are restricted to even values of $m$. Collecting these observations, we obtain:
\begin{equation}
\begin{aligned}
 \int \mathbf{D}\boldsymbol{\chi}\,
e^{\tfrac{1}{2}\boldsymbol{\chi}^T \mathbf{A} \boldsymbol{\chi} 
+ \mathbf{u}^T \boldsymbol{\chi} 
+ \boldsymbol{\psi}^T \boldsymbol{\chi}}  =&  -\sum_{j_0=1}^L(-1)^{j_0}u_{j_0}\pf\left(\mathbf{A}_{[\{j_0\}^c]}\right)\\&  + \sum_{m=1}^{\tfrac{L+1}{2}} \frac{(-1)^{m}}{(2m-1)!}
\sum_{j_1,\ldots,j_{2m-1}=1}^L
\left( \int \mathbf{D}\boldsymbol{\chi} \, \prod_{\alpha=1}^{2m-1} \chi_{j_\alpha} 
e^{\frac{1}{2}\boldsymbol{\chi}^T\mathbf{A}\boldsymbol{\chi}} \right)
\prod_{\alpha=1}^{2m-1} \psi_{j_\alpha}\\
&  + \sum_{m=1}^{\tfrac{L-1}{2}} \frac{(-1)^{m}}{(2m)!}
\sum_{j_0,j_1,\ldots,j_{2m}=1}^L
\left( \int \mathbf{D}\boldsymbol{\chi} \, \prod_{\alpha=0}^{2m} \chi_{j_\alpha} 
e^{\frac{1}{2}\boldsymbol{\chi}^T\mathbf{A}\boldsymbol{\chi}} \right) u_{j_0}
\prod_{\alpha=1}^{2m} \psi_{j_\alpha}.
\end{aligned}
\end{equation}
By performing the Grassmann integrations, we have:
\begin{equation}
    \begin{aligned}
       \int \mathbf{D}\boldsymbol{\chi}\,
e^{\tfrac{1}{2}\boldsymbol{\chi}^T \mathbf{A} \boldsymbol{\chi} 
+ \mathbf{u}^T \boldsymbol{\chi} 
+ \boldsymbol{\psi}^T \boldsymbol{\chi}} =& -\sum_{m=1}^{(L+1)/2}\frac{(-1)^m}{(2m-1)!}\sum_{j_1,\dots,j_{2m-1}=1}^L\epsilon\left(J_{[2m-1]}\right)\pf\!\left(\mathbf{A}_{\left[J_{[2m-1]}^c\right]}\right)\left(\prod_{\alpha=1}^{2m-1}\psi_{j_\alpha}\right)\\
&-\sum_{m=1}^{(L-1)/2}\frac{(-1)^m}{(2m)!}\sum_{j_0,j_1,\dots,j_{2m}=1}^L\epsilon\left(J_{[0,2m]}\right)\pf\left(\mathbf{A}_{\left[J^c_{[0,2m]}\right]}\right)\left(u_{j_0}\prod_{\alpha=1}^{2m}\psi_{j_\alpha}\right)\\
&-\sum_{j_0=1}^L(-1)^{j_0}u_{j_0}\pf\left(\mathbf{A}_{[\{j_0\}^c]}\right),
    \end{aligned}
\end{equation}
which using Corollary \ref{exponential with bosonic sources} can be simplified as:
\begin{equation}
    \begin{aligned}
       \int \mathbf{D}\boldsymbol{\chi}\,
e^{\tfrac{1}{2}\boldsymbol{\chi}^T \mathbf{A} \boldsymbol{\chi} 
+ \mathbf{u}^T \boldsymbol{\chi} 
+ \boldsymbol{\psi}^T \boldsymbol{\chi}} &= \operatorname{pf}(\mathbf{X}) \\&\quad-\sum_{m=1}^{(L+1)/2}\frac{(-1)^m}{(2m-1)!}\sum_{j_1,\dots,j_{2m-1}=1}^L\epsilon\left(J_{[2m-1]}\right)\pf\!\left(\mathbf{A}_{\left[J_{[2m-1]}^c\right]}\right)\left(\prod_{\alpha=1}^{2m-1}\psi_{j_\alpha}\right)\\
&\quad+\sum_{m=1}^{(L-1)/2}\frac{(-1)^m}{(2m)!}\sum_{j_1,\dots,j_{2m}=1}^L\epsilon\left(J_{[2m]}\right)\pf\left(\mathbf{X}_{\left[J^c_{[2m]}\right]}\right)\left(\prod_{\alpha=1}^{2m}\psi_{j_\alpha}\right).
    \end{aligned}
\end{equation}

The proof of part (b) begins by decomposing the expression into separate exponentials. We then expand the exponential containing the linear complex terms, which yields:  
\begin{equation}
\begin{aligned}
\int \mathbf{D}\boldsymbol{\chi}\, \left(\prod_{\alpha=1}^r\chi_{i_\alpha}\right) 
e^{\tfrac{1}{2}\boldsymbol{\chi}^T \mathbf{A} \boldsymbol{\chi} 
+ \mathbf{u}^T \boldsymbol{\chi} 
+ \boldsymbol{\psi}^T \boldsymbol{\chi}}
&= \int \mathbf{D}\boldsymbol{\chi}\, \left(\prod_{\alpha=1}^r\chi_{i_\alpha}\right)
\left(1 + \mathbf{u}^T \boldsymbol{\chi}\right)\, 
e^{\tfrac{1}{2}\boldsymbol{\chi}^T \mathbf{A} \boldsymbol{\chi} + \boldsymbol{\psi}^T \boldsymbol{\chi}
} \\[1ex]
&= \int \mathbf{D}\boldsymbol{\chi}\, \left(\prod_{\alpha=1}^r\chi_{i_\alpha}\right)  
e^{\tfrac{1}{2}\boldsymbol{\chi}^T \mathbf{A} \boldsymbol{\chi} + \boldsymbol{\psi}^T \boldsymbol{\chi}
}\\[1ex] 
&
+ \sum_{k=1}^L u_k \int \mathbf{D}\boldsymbol{\chi}\, \left(\prod_{\alpha=1}^r\chi_{i_\alpha}\right) 
\chi_k \, e^{\tfrac{1}{2}\boldsymbol{\chi}^T \mathbf{A} \boldsymbol{\chi} + \boldsymbol{\psi}^T \boldsymbol{\chi}
}. 
\end{aligned}
\end{equation}
We now distinguish between the four possible scenarios for \(L\) and \(r\):

\paragraph{Case 1: \(L\) even, \(r\) even.} Now, by systematically expanding the remaining linear contribution, we obtain:
\begin{equation}
\begin{aligned}
\int \mathbf{D}\boldsymbol{\chi} \prod_{\alpha=1}^r \chi_{i_\alpha} 
   e^{\tfrac{1}{2}\boldsymbol{\chi}^T \mathbf{A} \boldsymbol{\chi} 
   + \mathbf{u}^T \boldsymbol{\chi} + \boldsymbol{\psi}^T \boldsymbol{\chi}} &= \int \mathbf{D}\boldsymbol{\chi} \prod_{\alpha=1}^r \chi_{i_\alpha} 
   \Biggl[ 1 + \sum_{m=1}^L \frac{1}{m!} (\boldsymbol{\psi}^T \boldsymbol{\chi})^m \Biggr] 
   e^{\tfrac{1}{2}\boldsymbol{\chi}^T \mathbf{A} \boldsymbol{\chi}} \\[1ex]
&\quad + \sum_{k=1}^L u_k \int \mathbf{D}\boldsymbol{\chi}\, \left(\prod_{\alpha=1}^r\chi_{i_\alpha}\right) 
\chi_k 
   \Biggl[ 1 + \sum_{m=1}^L \frac{1}{m!} (\boldsymbol{\psi}^T \boldsymbol{\chi})^m \Biggr] 
   e^{\tfrac{1}{2}\boldsymbol{\chi}^T \mathbf{A} \boldsymbol{\chi}}. \\[1ex]
&= \int \mathbf{D}\boldsymbol{\chi} \prod_{\alpha=1}^r \chi_{i_\alpha} 
   e^{\tfrac{1}{2}\boldsymbol{\chi}^T \mathbf{A} \boldsymbol{\chi}} 
   + \sum_{m=1}^L \frac{1}{m!} \int \mathbf{D}\boldsymbol{\chi} 
   \prod_{\alpha=1}^r \chi_{i_\alpha} (\boldsymbol{\psi}^T \boldsymbol{\chi})^m 
   e^{\tfrac{1}{2}\boldsymbol{\chi}^T \mathbf{A} \boldsymbol{\chi}} \\[1ex]
&\quad + \sum_{k=1}^L u_k \int \mathbf{D}\boldsymbol{\chi}\, \left(\prod_{\alpha=1}^r\chi_{i_\alpha}\right) 
\chi_k  
   e^{\tfrac{1}{2}\boldsymbol{\chi}^T \mathbf{A} \boldsymbol{\chi}}\\[1ex]
   &\quad + \sum_{k=1}^L u_k \sum_{m=1}^L \frac{1}{m!} \int \mathbf{D}\boldsymbol{\chi}\, \left(\prod_{\alpha=1}^r\chi_{i_\alpha}\right) 
\chi_k  (\boldsymbol{\psi}^T \boldsymbol{\chi})^m
   e^{\tfrac{1}{2}\boldsymbol{\chi}^T \mathbf{A} \boldsymbol{\chi}}.
\end{aligned}
\end{equation}
With the same procedure as employed in part (a), the first contribution reduces to the Pfaffinho of the matrix $\mathbf{A}$, while the third term vanishes identically. Concerning the summations, in the second series only the even values of $m$ yield nontrivial contributions, whereas in the fourth series the surviving terms are likewise restricted to even $m$. Collecting these observations and doing the integrals, we obtain:

\begin{equation}
\begin{aligned}
\hspace{-2cm}\int \mathbf{D}\boldsymbol{\chi}\, \left(\prod_{\alpha=1}^r\chi_{i_\alpha}\right) 
e^{\tfrac{1}{2}\boldsymbol{\chi}^T \mathbf{A} \boldsymbol{\chi} 
+ \mathbf{u}^T \boldsymbol{\chi} 
+ \boldsymbol{\psi}^T \boldsymbol{\chi}}&= \epsilon(I_{[r]})\,\operatorname{Pf}\!\big(\mathbf{A}_{[I_{[r]}^c]}\big) 
   + \sum_{m=1}^{\tfrac{L-r}{2}} \frac{(-1)^m}{(2m)!} 
   \sum_{j_1,\dots,j_{2m}=1}^L \epsilon(I_{[r]} \cup J_{[2m]}) 
   \operatorname{Pf}\!\big(\mathbf{A}_{[I_{[r]}^c \cap J_{[2m]}^c]}\big) 
   \prod_{\alpha=1}^{2m} \psi_{j_\alpha} \\[1ex]
&\quad + \sum_{m=1}^{\tfrac{L-r}{2}} \frac{(-1)^m}{(2m-1)!} 
   \sum_{j_0,j_1,\dots,j_{2m-1}=1}^L \epsilon(I_{[r]} \cup J_{[0,2m-1]}) 
   \operatorname{Pf}\!\big(\mathbf{A}_{[I_{[r]}^c \cap J_{[0,2m-1]}^c]}\big) 
   u_{j_0}\prod_{\alpha=1}^{2m-1} \psi_{j_\alpha}.
\end{aligned}
\end{equation}
Using Corollary \ref{exponential with bosonic sources}, the result can be simplified as:
\begin{equation}
\begin{aligned}
\hspace{-1cm}\int \mathbf{D}\boldsymbol{\chi}\, \left(\prod_{\alpha=1}^r\chi_{i_\alpha}\right) 
e^{\tfrac{1}{2}\boldsymbol{\chi}^T \mathbf{A} \boldsymbol{\chi} 
+ \mathbf{u}^T \boldsymbol{\chi} 
+ \boldsymbol{\psi}^T \boldsymbol{\chi}}&= \epsilon(I_{[r]})\pf\left(\mathbf{A}_{[I_{[r]}^c]}\right)\\&\qquad+\sum_{m=1}^{(L-r)/2}\left[
   \frac{(-1)^m}{(2m)!}
   \sum_{j_1,\dots,j_{2m}=1}^L
   \epsilon\left(I_{[r]}\cup J_{[2m]}\right)\pf\left(\mathbf{A}_{\left[I_{[r]}^c\cap J^c_{[2m]}\right]}\right)
   \left(\prod_{\alpha=1}^{2m}\psi_{j_\alpha}\right)\right.\\
&\qquad-\left.\frac{(-1)^m}{(2m-1)!}
   \sum_{j_1,\dots,j_{2m-1}=1}^L
   \epsilon\left(I_{[r]}\cup J_{[2m-1]}\right)\pf\left(\mathbf{X}_{\left[I_{[r]}^c\cap J^c_{[2m-1]}\right]}\right)
   \left(\prod_{\alpha=1}^{2m-1}\psi_{j_\alpha}\right)
\right].
\end{aligned}
\end{equation}

Applying the same procedure and invoking part (a), we derive the remaining three scenarios as follows:

\paragraph{Case 2: \(L\) even, \(r\) odd.}

\begin{equation}
\begin{aligned}
\hspace{-1cm}&\int \mathbf{D}\boldsymbol{\chi}\, \left(\prod_{\alpha=1}^r\chi_{i_\alpha}\right) 
e^{\tfrac{1}{2}\boldsymbol{\chi}^T \mathbf{A} \boldsymbol{\chi} 
+ \mathbf{u}^T \boldsymbol{\chi} 
+ \boldsymbol{\psi}^T \boldsymbol{\chi}}\\
&= \sum_{m=1}^{(L-r+1)/2}\frac{(-1)^m}{(2m-1)!}\sum_{j_1,\dots,j_{2m-1}=1}^L\epsilon\left(I_{[r]}\cup J_{[2m-1]}\right)\pf\left(\mathbf{A}_{\left[I_{[r]}^c\cap J^c_{[2m-1]}\right]}\right)\left(\prod_{\alpha=1}^{2m-1}\psi_{j_\alpha}\right)\\
&\qquad-\sum_{m=1}^{(L-r-1)/2}\frac{(-1)^m}{(2m)!}\sum_{j_1,\dots,j_{2m}=1}^L\epsilon\left(I_{[r]}\cup J_{[2m]}\right)\pf\left(\mathbf{X}_{\left[I_{[r]}^c\cap J^c_{[2m]}\right]}\right)\left(\prod_{\alpha=1}^{2m}\psi_{j_\alpha}\right)-\epsilon(I_{[r]})\pf\left(\mathbf{X}_{[I_{[r]}^c]}\right). 
\end{aligned}
\end{equation}

\paragraph{Case 3: \(L\) odd, \(r\) even.}

\begin{equation}
\begin{aligned}
\hspace{-1cm}&\int \mathbf{D}\boldsymbol{\chi}\, \left(\prod_{\alpha=1}^r\chi_{i_\alpha}\right) 
e^{\tfrac{1}{2}\boldsymbol{\chi}^T \mathbf{A} \boldsymbol{\chi} 
+ \mathbf{u}^T \boldsymbol{\chi} 
+ \boldsymbol{\psi}^T \boldsymbol{\chi}}\\
&= -\sum_{m=1}^{(L-r+1)/2}\frac{(-1)^m}{(2m-1)!}\sum_{j_1,\dots,j_{2m-1}=1}^L\epsilon\left(I_{[r]}\cup J_{[2m-1]}\right)\pf\left(\mathbf{A}_{\left[I_{[r]}^c\cap J^c_{[2m-1]}\right]}\right)\left(\prod_{\alpha=1}^{2m-1}\psi_{j_\alpha}\right)\\
&\qquad+\sum_{m=1}^{(L-r-1)/2}\frac{(-1)^m}{(2m)!}\sum_{j_1,\dots,j_{2m}=1}^L\epsilon\left(I_{[r]}\cup J_{[2m]}\right)\pf\left(\mathbf{X}_{\left[I_{[r]}^c\cap J^c_{[2m]}\right]}\right)\left(\prod_{\alpha=1}^{2m}\psi_{j_\alpha}\right)+\epsilon(I_{[r]})\pf\left(\mathbf{X}_{[I_{[r]}^c]}\right). 
\end{aligned}
\end{equation}

\paragraph{Case 4: \(L\) odd, \(r\) odd.}

\begin{equation}
\begin{aligned}
\hspace{-1cm}\int \mathbf{D}\boldsymbol{\chi}\, \left(\prod_{\alpha=1}^r\chi_{i_\alpha}\right) 
e^{\tfrac{1}{2}\boldsymbol{\chi}^T \mathbf{A} \boldsymbol{\chi} 
+ \mathbf{u}^T \boldsymbol{\chi} 
+ \boldsymbol{\psi}^T \boldsymbol{\chi}}&=- \epsilon(I_{[r]})\pf\left(\mathbf{A}_{[I_{[r]}^c]}\right)\\&\qquad-\sum_{m=1}^{(L-r)/2}\left[
   \frac{(-1)^m}{(2m)!}
   \sum_{j_1,\dots,j_{2m}=1}^L
   \epsilon\left(I_{[r]}\cup J_{[2m]}\right)\pf\left(\mathbf{A}_{\left[I_{[r]}^c\cap J^c_{[2m]}\right]}\right)
   \left(\prod_{\alpha=1}^{2m}\psi_{j_\alpha}\right)\right.\\
&\qquad+\left.\frac{(-1)^m}{(2m-1)!}
   \sum_{j_1,\dots,j_{2m-1}=1}^L
   \epsilon\left(I_{[r]}\cup J_{[2m-1]}\right)\pf\left(\mathbf{X}_{\left[I_{[r]}^c\cap J^c_{[2m-1]}\right]}\right)
   \left(\prod_{\alpha=1}^{2m-1}\psi_{j_\alpha}\right)
\right].
\end{aligned}
\end{equation}
These four possible scenarios can be written in a more compact form as eq \eqref{part (b)}.\\

To prove \eqref{part (c)}, at first, we rewrite the integral as follows:
 \begin{equation}
\begin{split}
\int \mathbf{D}\boldsymbol{\chi} (C{\chi})_{1} (C{\chi})_{2} \cdots (C{\chi})_{r} e^{\tfrac{1}{2}\boldsymbol{\chi}^T\mathbf{A}\boldsymbol{\chi}+\mathbf{u}^T\boldsymbol{\chi}+\boldsymbol{\psi}^T\boldsymbol{\chi}}=\sum\limits_{i_1,...,i_r} C_{1i_1}\cdots C_{ri_r} \int \mathbf{D}\boldsymbol{\chi} {\chi}_{i_1} {\chi}_{i_2} \cdots {\chi}_{i_r} e^{\tfrac{1}{2}\boldsymbol{\chi}^T\mathbf{A}\boldsymbol{\chi}+\mathbf{u}^T\boldsymbol{\chi}+\boldsymbol{\psi}^T\boldsymbol{\chi}}.
\end{split}
\end{equation}
Note that the only nonvanishing term in the sum comes when $i_1,\cdots,i_r$ are all distinct, then we will have:
 \begin{equation}
\begin{split}
\hspace{-2.5cm}\sum\limits_{i_1,...,i_r} C_{1i_1}\cdots C_{ri_r}&\int \mathbf{D}\boldsymbol{\chi} {\chi}_{i_1} {\chi}_{i_2} \cdots {\chi}_{i_r} e^{\tfrac{1}{2}\boldsymbol{\chi}^T\mathbf{A}\boldsymbol{\chi}+\mathbf{u}^T\boldsymbol{\chi}+\boldsymbol{\psi}^T\boldsymbol{\chi}}\\
&\qquad=\sum\limits_{|{I}|=r}\sum\limits_{\sigma \in S_r} C_{1i_{\sigma(1)}}\cdots C_{ri_{\sigma(r)}} \text{sgn}(\sigma) \int \mathbf{D}\boldsymbol{\chi} {\chi}_{i_1} {\chi}_{i_2} \cdots {\chi}_{i_r} e^{\tfrac{1}{2}\boldsymbol{\chi}^T\mathbf{A}\boldsymbol{\chi}+\mathbf{u}^T\boldsymbol{\chi}+\boldsymbol{\psi}^T\boldsymbol{\chi}},
\end{split}
\end{equation}
where the second sum run over all permutation $\sigma$, and $\text{sgn}(\sigma)$ is the sign of the permutation $\sigma$. Finally, using \eqref{part (b)}, the expression simplifies to equation~\eqref{part (d)}.
\end{proof}

\subsection{Proof of Corollary \ref{exponential with linear fermionic and bosonic sources (invertible)}}\label{Proof exponential with linear fermionic and bosonic sources (invertible)}
The proof of part (a) even case begins with the condition that $\pf(\mathbf{A}) \neq 0$ for even $L$, which guarantees that the inverse $\mathbf{A}^{-1}$ exists.  

We start by rewriting the expression from part (a):  
\begin{equation*}
\begin{split}
\int \mathbf{D}\boldsymbol{\chi}\,
e^{\tfrac{1}{2}\boldsymbol{\chi}^T \mathbf{A} \boldsymbol{\chi} + \mathbf{u}^T \boldsymbol{\chi} + \boldsymbol{\psi}^T \boldsymbol{\chi}}
&= \pf(\mathbf{A}) 
+ \sum_{m=1}^{L/2} \Bigg[
   \frac{(-1)^m}{(2m)!}
   \sum_{j_1,\dots,j_{2m}=1}^L
   \epsilon\!\left(J_{[2m]}\right)\pf\!\left(\mathbf{A}_{[J^c_{[2m]}]}\right)
   \Biggl(\prod_{\alpha=1}^{2m}\psi_{j_\alpha}\Biggr) \\
&\quad + \frac{(-1)^m}{(2m-1)!}
   \sum_{j_0,j_1,\dots,j_{2m-1}=1}^L
   \epsilon\!\left(J_{[0,2m-1]}\right)\pf\!\left(\mathbf{A}_{[J^c_{[0,2m-1]}]}\right)
   \Biggl(u_{j_0}\prod_{\alpha=1}^{2m-1}\psi_{j_\alpha}\Biggr)
\Bigg] \\
&= \pf(\mathbf{A})\Bigg\{1 + \sum_{m=1}^{L/2}\Bigg[
   \frac{(-1)^m}{(2m)!}
   \sum_{j_1,\dots,j_{2m}=1}^L
   \frac{\epsilon(J_{[2m]})\pf(\mathbf{A}_{[J^c_{[2m]}]})}{\pf(\mathbf{A})}
   \Biggl(\prod_{\alpha=1}^{2m}\psi_{j_\alpha}\Biggr) \\
&\quad + \frac{(-1)^m}{(2m-1)!}
   \sum_{j_0,j_1,\dots,j_{2m-1}=1}^L
   \frac{\epsilon(J_{[0,2m-1]})\pf(\mathbf{A}_{[J^c_{[0,2m-1]}]})}{\pf(\mathbf{A})}
   \Biggl(u_{j_0}\prod_{\alpha=1}^{2m-1}\psi_{j_\alpha}\Biggr)
\Bigg]\Bigg\}.
\end{split}
\end{equation*}

Using Jacobi’s identity to express the inverse of $\mathbf{A}$, we obtain:  
\begin{equation*}
    \begin{split}
    (-1)^m \frac{\epsilon(J_{[2m]})\pf(\mathbf{A}_{[J^c_{[2m]}]})}{\pf(\mathbf{A})}
    &= (2m-1)!! \prod_{\alpha=1}^m (\mathbf{A}^{-1})_{j_{2\alpha-1},j_{2\alpha}}, \\[0.5em]
    (-1)^m \frac{\epsilon(J_{[0,2m-1]})\pf(\mathbf{A}_{[J^c_{[0,2m-1]}]})}{\pf(\mathbf{A})}
    &= (2m-1)!! \prod_{\alpha=0}^{m-1} (\mathbf{A}^{-1})_{j_{2\alpha},j_{2\alpha+1}}.
    \end{split}
\end{equation*}
This can be rewritten in the compact form  
\[
\pf\!\left((\mathbf{A}^{-1})^T_{[J]}\right)
= \frac{\epsilon(J)\,\pf(\mathbf{A}_{[J^c]})}{\pf(\mathbf{A})}.
\]

---

We first focus on the purely fermionic contribution:
\begin{equation*}
\begin{split}
  \hspace{-1cm} 1 + \sum_{m=1}^{L/2}
   \frac{(-1)^m}{(2m)!}
   \sum_{j_1,\dots,j_{2m}=1}^L
   \frac{\epsilon(J_{[2m]})\pf(\mathbf{A}_{[J^c_{[2m]}]})}{\pf(\mathbf{A})}
   \prod_{\alpha=1}^{2m}\psi_{j_\alpha}
   &= 1 + \sum_{m=1}^{L/2}
   \frac{(2m-1)!!}{(2m)!}
   \sum_{j_1,\dots,j_{2m}=1}^L
   \prod_{\alpha=1}^m (\mathbf{A}^{-1})_{j_{2\alpha-1},j_{2\alpha}}\psi_{j_{2\alpha-1}}\psi_{j_{2\alpha}} \\
   &= 1 + \sum_{m=1}^{L/2}
   \frac{1}{m!\,2^m}
   \Biggl(\sum_{i,j=1}^L (\mathbf{A}^{-1})_{ij}\psi_i\psi_j\Biggr)^m \\
   &= 1 + \sum_{m=1}^{L/2}\frac{1}{m!}
   \Bigl(\tfrac{1}{2}\boldsymbol{\psi}^T\mathbf{A}^{-1}\boldsymbol{\psi}\Bigr)^m \\
   &= e^{\tfrac{1}{2}\boldsymbol{\psi}^T\mathbf{A}^{-1}\boldsymbol{\psi}}.
\end{split}
\end{equation*}

---

Next, consider the mixed bosonic–fermionic contribution:
\begin{equation*}
\begin{split}
   &\sum_{m=1}^{L/2}\frac{(-1)^m}{(2m-1)!}
   \sum_{j_0,j_1,\dots,j_{2m-1}=1}^L
   \frac{\epsilon(J_{[0,2m-1]})\pf(\mathbf{A}_{[J^c_{[0,2m-1]}]})}{\pf(\mathbf{A})}
   \Biggl(u_{j_0}\prod_{\alpha=1}^{2m-1}\psi_{j_\alpha}\Biggr) \\
   &= \sum_{m=1}^{L/2}\frac{(2m-1)!!}{(2m-1)!}
   \sum_{j_0,\dots,j_{2m-1}=1}^L
   \prod_{\alpha=0}^{m-1}(\mathbf{A}^{-1})_{j_{2\alpha},j_{2\alpha+1}}
   \Biggl(u_{j_0}\prod_{\alpha=1}^{2m-1}\psi_{j_\alpha}\Biggr) \\
   &= \mathbf{u}^T\mathbf{A}^{-1}\boldsymbol{\psi} 
   + \mathbf{u}^T\mathbf{A}^{-1}\boldsymbol{\psi}
     \sum_{m=2}^{L/2}\frac{1}{(m-1)!\,2^{m-1}}
     \sum_{j_2,\dots,j_{2m-1}=1}^L
     \prod_{\alpha=1}^{m-1}(\mathbf{A}^{-1})_{j_{2\alpha},j_{2\alpha+1}}\psi_{j_{2\alpha}}\psi_{j_{2\alpha+1}}.
\end{split}
\end{equation*}

By shifting the summation index $m-1=n$ and completing the missing ($n=L/2$) term, we obtain:
\begin{equation*}
\mathbf{u}^T\mathbf{A}^{-1}\boldsymbol{\psi}\sum_{n=0}^{L/2}\frac{1}{n!\,2^n}
\Biggl(\sum_{i,j=1}^L (\mathbf{A}^{-1})_{ij}\psi_i\psi_j\Biggr)^n
= \mathbf{u}^T\mathbf{A}^{-1}\boldsymbol{\psi}\,
e^{\tfrac{1}{2}\boldsymbol{\psi}^T\mathbf{A}^{-1}\boldsymbol{\psi}}.
\end{equation*}

---

Combining both contributions, we conclude:
\[
\int \mathbf{D}\boldsymbol{\chi}\,
e^{\tfrac{1}{2}\boldsymbol{\chi}^T\mathbf{A}\boldsymbol{\chi}
+ \mathbf{u}^T\boldsymbol{\chi} + \boldsymbol{\psi}^T\boldsymbol{\chi}}
= \pf(\mathbf{A})\,
e^{\tfrac{1}{2}\boldsymbol{\psi}^T\mathbf{A}^{-1}\boldsymbol{\psi}
+ \mathbf{u}^T\mathbf{A}^{-1}\boldsymbol{\psi}}.
\]

\subsection{Proof of Corollary \ref{pure fermionic sources}}\label{Proof pure fermionic sources}
\begin{proof}
Expanding the exponential, all terms vanish under Grassmann integration except the highest-order term:
\[
\int \mathbf{D}\boldsymbol{\chi}\, e^{\boldsymbol{\psi}\mathbf{A}\boldsymbol{\chi}}
= \frac{1}{L!} \int \mathbf{D}\boldsymbol{\chi}\, \bigl(\boldsymbol{\psi}\mathbf{A}\boldsymbol{\chi}\bigr)^L.
\]
Explicitly,
\begin{equation*}
\begin{split}
(\boldsymbol{\psi}\mathbf{A}\boldsymbol{\chi})^L
&=\Biggl[\sum_{i=1}^N \sum_{j=1}^L \psi_i A_{ij}\chi_j \Biggr]^L \\
&= \sum_{i_1,\dots,i_L=1}^N \sum_{j_1,\dots,j_L=1}^L 
   \prod_{\alpha=1}^L \bigl(\psi_{i_\alpha} A_{i_\alpha j_\alpha}\chi_{j_\alpha}\bigr) \\
&= (-1)^{\tfrac{L(L+1)}{2}}
   \sum_{i_1,\dots,i_L=1}^N \sum_{j_1,\dots,j_L=1}^L 
   \Bigl(\prod_{\beta=1}^L \chi_{j_\beta}\Bigr) 
   \Bigl(\prod_{\alpha=1}^L \psi_{i_\alpha} A_{i_\alpha j_\alpha}\Bigr),
\end{split}
\end{equation*}
where the sign factor accounts for the reordering of Grassmann variables. Performing the Grassmann integration gives
\begin{equation*}
\begin{split}
\int \mathbf{D}\boldsymbol{\chi}\, e^{\boldsymbol{\psi}\mathbf{A}\boldsymbol{\chi}}
&=\frac{(-1)^{\tfrac{L(L+1)}{2}}}{L!}
  \sum_{i_1,\dots,i_L=1}^N \sum_{j_1,\dots,j_L=1}^L
  \varepsilon^{j_1 \cdots j_L} 
  A_{i_1 j_1} \cdots A_{i_L j_L} 
  \prod_{\alpha=1}^L \psi_{i_\alpha},
\end{split}
\end{equation*}
where $\varepsilon^{j_1 \cdots j_L}$ is the Levi-Civita symbol. Recognizing the determinant formula, we obtain
\[
\int \mathbf{D}\boldsymbol{\chi}\, e^{\boldsymbol{\psi}\mathbf{A}\boldsymbol{\chi}}
= (-1)^{\tfrac{L(L+1)}{2}}
  \sum_{i_1<\dots<i_L=1}^N 
  \det\bigl(\mathbf{A}_{[I_{[L]},\star]}\bigr)\,
  \prod_{\alpha=1}^L \psi_{i_\alpha}.
\]
When $N=L$, the sum yields $L!$ identical terms, recovering the special case.
\end{proof}

\subsection{Proof of Lemma \ref{generalized pure fermionic sources}}\label{proof generalized pure fermionic sources}

\begin{proof}To prove this, by defining the new variables, we have:
\[
\int \mathbf{D}\boldsymbol{\chi} e^{\boldsymbol{\psi}^{(1)}\mathbf{A}^{(1)}\boldsymbol{\chi}+\boldsymbol{\psi}^{(2)}\mathbf{A}^{(2)}\boldsymbol{\chi}+\cdots+\boldsymbol{\psi}^{(N)}\mathbf{A}^{(N)}\boldsymbol{\chi}}
= \int \mathbf{D}\boldsymbol{\chi} \, e^{\boldsymbol{\Psi} \boldsymbol{\mathbb{A}} \boldsymbol{\chi}},
\]
where
\[
\boldsymbol{\Psi} = \left( \boldsymbol{\psi}^{(1)}, \boldsymbol{\psi}^{(2)}, \dots, \boldsymbol{\psi}^{(N)} \right), 
\quad 
\mathbb{A} = \begin{pmatrix}
\mathbf{A}^{(1)} \\ \mathbf{A}^{(2)} \\ \vdots \\ \mathbf{A}^{(N)}
\end{pmatrix}.
\]
Then, by using Corollary \ref{pure fermionic sources} we have:

\[
\hspace{-2cm}\int \mathbf{D}\boldsymbol{\chi} \, e^{\boldsymbol{\Psi} \boldsymbol{\mathbb{A}} \boldsymbol{\chi}}
= (-1)^{\tfrac{m(m+1)}{2}}\sum_{i_1^{(1)}<\dots<i_m^{(1)}=1}^{n^{(1)}}\cdots\sum_{i_1^{(N)}<\dots<i_m^{(N)}=1}^{n^{(N)}}\det\left(\mathbb{A}_{\left[\left\{i_1^{(1)},\dots,i_m^{(1)},\dots,i_1^{(N)},\dots,i_m^{(N)}\right\},\star\right]}\right)\psi_{i_1^{(1)}}^{(1)}\cdots\psi_{i_m^{(1)}}^{(1)}\cdots\psi_{i_1^{(N)}}^{(N)}\cdots\psi_{i_m^{(N)}}^{(N)}.
\]
\end{proof}
\subsection{Proof of Theorem \ref{theorem 2.2}}\label{prooftheorem 2.2}
\begin{proof}
(a) To prove the result, we first decompose the exponential into four distinct exponentials. 
The quadratic term \(\bar{\bm{\chi}}^{T}A\bm{\chi}\) commutes with every other contribution, 
and the linear Grassmann bilinears \(\bar{\bm{\psi}}^{T}\bm{\chi}\) and \(\bar{\bm{\chi}}^{T}\bm{\psi}\) 
also commute with one another, since they involve only Grassmann variables and therefore generate 
mutually commuting nilpotent elements. 
In contrast, the source terms \(\bar{\bm{u}}^{T}\bm{\chi}\) and \(\bar{\bm{\chi}}^{T}\bm{u}\) 
cannot be disentangled: they mix ordinary complex with Grassmann variables, so their exponentials fail to commute and must be retained as a single combined factor. After carrying out this separation, we expand each linear exponential, then we obtain:
\begin{align}
\int \mathbf{D}(\boldsymbol{\chi},\bar{\boldsymbol{\chi}}) e^{\bar{\boldsymbol{\chi}}^T \mathbf{A}\boldsymbol{\chi}+\bar{\boldsymbol{\psi}}^T\boldsymbol{\chi}+\bar{\boldsymbol{\chi}}^T \boldsymbol{\psi}+\bar{\mathbf{u}}^T\boldsymbol{\chi}+\bar{\boldsymbol{\chi}}^T \mathbf{u}}
&= (-1)^{\frac{L(L-1)}{2}}
   \int \mathbf{D}\bm{\chi}\,\mathbf{D}\bar{\bm{\chi}}
   \big(1 + \bar{\bm{u}}^{T}\bm{\chi} + \bar{\bm{\chi}}^{T}\bm{u}\big)\notag\\
&\qquad\qquad\times
   \Big(1 + \sum_{m=1}^{L}\frac{1}{m!}(\bar{\bm{\psi}}^{T}\bm{\chi})^{m}\Big)
   \Big(1 + \sum_{n=1}^{L}\frac{1}{n!}(\bar{\bm{\chi}}^{T}\bm{\psi})^{n}\Big)
   \; e^{\bar{\bm{\chi}}^{T}A\bm{\chi}}.
\end{align}
Multiplying out the three series and doing some simplification, we have:
\begin{align}
\int \mathbf{D}(\boldsymbol{\chi},\bar{\boldsymbol{\chi}})&e^{\bar{\boldsymbol{\chi}}^T \mathbf{A}\boldsymbol{\chi}+\bar{\boldsymbol{\psi}}^T\boldsymbol{\chi}+\bar{\boldsymbol{\chi}}^T \boldsymbol{\psi}+\bar{\mathbf{u}}^T\boldsymbol{\chi}+\bar{\boldsymbol{\chi}}^T \mathbf{u}}\\
&=(-1)^{\frac{L(L-1)}{2}}
   \int \mathbf{D}\bm{\chi}\,\mathbf{D}\bar{\bm{\chi}} \Big\{ 1
       + \sum_{m=1}^{L}\frac{1}{m!}(\bar{\bm{\psi}}^{T}\bm{\chi})^{m}
       + \sum_{n=1}^{L}\frac{1}{n!}(\bar{\bm{\chi}}^{T}\bm{\psi})^{n}\notag\\
       &
       + \sum_{m=1}^{L}\sum_{n=1}^{L}\frac{1}{m!\,n!}
         (\bar{\bm{\psi}}^{T}\bm{\chi})^{m}(\bar{\bm{\chi}}^{T}\bm{\psi})^{n}
       + \sum_{j_0=1}^L \bar{u}_{j_0}{\chi}_{j_0}
       + \sum_{j_0=1}^L \bar{u}_{j_0}{\chi}_{j_0}\sum_{m=1}^{L}\frac{1}{m!}(\bar{\bm{\psi}}^{T}\bm{\chi})^{m} \notag\\
&
       + \sum_{j_0=1}^L \bar{u}_{j_0}{\chi}_{j_0}\sum_{n=1}^{L}\frac{1}{n!}(\bar{\bm{\chi}}^{T}\bm{\psi})^{n}
       + \sum_{j_0=1}^L \bar{u}_{j_0}{\chi}_{j_0}\sum_{m=1}^{L}\sum_{n=1}^{L}\frac{1}{m!\,n!}
         (\bar{\bm{\psi}}^{T}\bm{\chi})^{m}(\bar{\bm{\chi}}^{T}\bm{\psi})^{n}
\notag\\
&
       + \sum_{i_0=1}^L {u}_{i_0}\bar{\chi}_{i_0}
       + \sum_{i_0=1}^L {u}_{i_0}\bar{\chi}_{i_0}\sum_{m=1}^{L}\frac{1}{m!}(\bar{\bm{\psi}}^{T}\bm{\chi})^{m}
       + \sum_{i_0=1}^L {u}_{i_0}\bar{\chi}_{i_0}\sum_{n=1}^{L}\frac{1}{n!}(\bar{\bm{\chi}}^{T}\bm{\psi})^{n}
\notag\\
&
       + \sum_{i_0=1}^L {u}_{i_0}\bar{\chi}_{i_0}\sum_{m=1}^{L}\sum_{n=1}^{L}\frac{1}{m!\,n!}
         (\bar{\bm{\psi}}^{T}\bm{\chi})^{m}(\bar{\bm{\chi}}^{T}\bm{\psi})^{n}
\Big\} \; e^{\bar{\bm{\chi}}^{T}A\bm{\chi}}.
\end{align}
The Grassmann integral is nonvanishing only when the integrand contains the pair Grassmann degree, i.e., a product involving \(\chi_k\) and \(\bar{\chi}_k\) for \(k=1,\dots,L\).  
Consequently, in the expansion above, this implies that the second, third, fifth, sixth, ninth, and eleventh contributions vanish identically. Expanding the retained contributions and simplifying the resulting expressions yields:
\begin{align}
&\int \mathbf{D}(\boldsymbol{\chi},\bar{\boldsymbol{\chi}}) e^{\bar{\boldsymbol{\chi}}^T \mathbf{A}\boldsymbol{\chi}+\bar{\boldsymbol{\psi}}^T\boldsymbol{\chi}+\bar{\boldsymbol{\chi}}^T \boldsymbol{\psi}+\bar{\mathbf{u}}^T\boldsymbol{\chi}+\bar{\boldsymbol{\chi}}^T \mathbf{u}}\\&=(-1)^{\tfrac{L(L-1)}{2}}
   \int \bold{D}\bm{\chi}\, \bold{D}\bar{\bm{\chi}} \Bigg[1+ \;
   \sum_{m,n= 1}^L
   \frac{(-1)^{m}}{m!\,n!}
   \sum_{j_{1},\dots,j_{m}=1}^{L}
   \sum_{i_{1},\dots,i_{n}=1}^{L}
     \prod_{\beta=1}^{n} \bar{\chi}_{i_\beta} \prod_{\alpha=1}^{m}\chi_{j_\alpha}
     \prod_{\alpha=1}^{m} \bar{\psi}_{j_\alpha}
     \prod_{\beta=1}^{n}
     \psi_{i_\beta}
\notag\\
&\quad-\sum_{j_0=1}^L \bar{u}_{j_0} \sum_{i_1=1}^L \bar{\chi}_{i_1}\chi_{j_0} \psi_{i_1} + \sum_{m,n= 1}^L
   \frac{(-1)^{m}}{m!\,n!}
   \sum_{j_0,j_{1},\dots,j_{m-1}=1}^{L}
   \sum_{i_{1},\dots,i_{n}=1}^{L}
   \bar{u}_{j_0}
     \prod_{\beta=1}^{n} \bar{\chi}_{i_\beta} \prod_{\alpha=0}^{m-1}\chi_{j_\alpha}
     \prod_{\alpha=1}^{m-1} \bar{\psi}_{j_\alpha}
     \prod_{\beta=1}^{n}
     \psi_{i_\beta}\notag\\
     &\quad-\sum_{i_0=1}^L \bar{u}_{i_0} \sum_{j_1=1}^L \bar{\chi}_{i_0}\chi_{j_1} \bar{\psi}_{j_1} + \sum_{m,n= 1}^L
   \frac{(-1)^{m}}{m!\,n!}
   \sum_{j_{1},\dots,j_{m}=1}^{L}
   \sum_{i_0,i_{1},\dots,i_{n-1}=1}^{L}
   {u}_{i_0}
     \prod_{\beta=0}^{n-1} \bar{\chi}_{i_\beta} \prod_{\alpha=1}^{m}\chi_{j_\alpha}
     \prod_{\alpha=1}^{m} \bar{\psi}_{j_\alpha}
     \prod_{\beta=1}^{n-1}
     \psi_{i_\beta}
   \,\Bigg] e^{\bar{\bm{\chi}}^{T}A\bm{\chi}}.
\end{align}
By explicitly performing the Grassmann integrations, the expression simplifies to the following closed
form:
\begin{align}
&\int \mathbf{D}(\boldsymbol{\chi},\bar{\boldsymbol{\chi}}) e^{\bar{\boldsymbol{\chi}}^T \mathbf{A}\boldsymbol{\chi}+\bar{\boldsymbol{\psi}}^T\boldsymbol{\chi}+\bar{\boldsymbol{\chi}}^T \boldsymbol{\psi}+\bar{\mathbf{u}}^T\boldsymbol{\chi}+\bar{\boldsymbol{\chi}}^T \mathbf{u}}\\&=\det[\mathbf{A}]+
   \sum_{m= 1}^L
   \frac{(-1)^{m}}{(m!)^2}
   \sum_{j_{1},\dots,j_{m}=1}^{L}
   \sum_{i_{1},\dots,i_{m}=1}^{L}
   \epsilon(I_{[m]},J_{[m]}) \det[\mathbf{A}_{[I_{[m]}^c,J_{[m]}^c]}]
     \prod_{\alpha=1}^{m} \bar{\psi}_{j_\alpha}
     \prod_{\beta=1}^{m}
     \psi_{i_\beta}\notag\\
&
 -\sum_{j_0=1}^L \sum_{i_1=1}^L \epsilon(I_{[1]},J_{[0]}) \det[\mathbf{A}_{[I_{[1]}^c,J_{[0]}^c]}] \bar{u}_{j_0} \psi_{i_1} -\sum_{i_0=1}^L \sum_{j_1=1}^L \epsilon(I_{[0]},J_{[1]}) \det[\mathbf{A}_{[I_{[0]}^c,J_{[1]}^c]}] {u}_{i_0} \bar{\psi}_{j_1}\notag\\
& + \sum_{m= 1}^L
   \frac{(-1)^{m}}{m!(m-1)!}
   \sum_{j_0,j_{1},\dots,j_{m-1}=1}^{L}
   \sum_{i_{1},\dots,i_{m}=1}^{L}
   \epsilon(I_{[m]},J_{[0,m-1]}) \det[\mathbf{A}_{[I_{[m]}^c,J_{[0,m-1]}^c]}] \bar{u}_{j_0}
     \prod_{\alpha=1}^{m-1} \bar{\psi}_{j_\alpha}
     \prod_{\beta=1}^{m}
     \psi_{i_\beta}\notag\\
     &+ \sum_{m= 1}^L
   \frac{(-1)^{m}}{m!(m-1)!}
   \sum_{j_{1},\dots,j_{m}=1}^{L}
   \sum_{i_0,i_{1},\dots,i_{m-1}=1}^{L}
   \epsilon(I_{[0,m-1]},J_{[m]}) \det[\mathbf{A}_{[I_{[0,m-1]}^c,J_{[m]}^c]}] u_{i_0}
     \prod_{\alpha=1}^{m} \bar{\psi}_{j_\alpha}
     \prod_{\beta=1}^{m-1}
     \psi_{i_\beta}.
\end{align}
To simplify this expression, we consider the third and fourth terms and rewrite them as follows:
\begin{equation}
    \begin{split}
    \sum_{j_0=1}^L \sum_{i_1=1}^L (-1)^{j_0+i_1} \det\left(\mathbf{A}_{[\{i_1,j_0\}^c]}\right) \bar{u}_{j_0}\psi_{i_1}= \sum_{k=2}^{L+1} \sum_{i_1=1}^L (-1)^{k+i_1-1} \det\left(\mathbf{A}_{[\{i_1,k-1\}^c]}\right) \bar{u}_{k-1}\psi_{i_1},\\
    \sum_{i_0=1}^L \sum_{j_1=1}^L (-1)^{i_0+j_1} \det\left(\mathbf{A}_{[\{i_0,j_1\}^c]}\right) {u}_{i_0}\bar{\psi}_{j_1}= \sum_{l=2}^{L+1} \sum_{j_1=1}^L (-1)^{l+j_1-1} \det\left(\mathbf{A}_{[\{k-1,j_1\}^c]}\right) {u}_{k-1}\bar{\psi}_{i_1},
    \end{split}
\end{equation}
where $k\equiv j_0+1$ and $l\equiv i_0+1$. We define the following matrices:
\begin{equation}
   \bar{\mathbf{Y}}=\begin{pmatrix} 0 & \bar{\mathbf{u}}^T \\ \boldsymbol{\psi} & \mathbf{A} \end{pmatrix}, \quad  \mathbf{Y}=\begin{pmatrix} 0 & \boldsymbol{\bar{\psi}}^T \\ \mathbf{u} & \mathbf{A} \end{pmatrix}.
\end{equation}
From the structure of the matrices $\bar{\mathbf{Y}}$ and $\mathbf{Y}$ it follows that:
\begin{equation}
    \begin{cases}
        \bar{u}_{k-1}=\bar{Y}_{1k},\\
        {u}_{l-1}={Y}_{l1},\\
        \sum\limits_{i_1=1}^L \mathbf{A}_{\{i_1,k-1\}}\psi_{i_1}=\bar{\mathbf{M}}_{\{1,k\}},\\
        \sum\limits_{j_1=1}^L \mathbf{A}_{\{l-1,j_1\}}\bar{\psi}_{j_1}=\mathbf{M}_{\{l,1\}},
    \end{cases} 
\end{equation}
where $\bar{\mathbf{M}}_{\{1,k\}}$ and $\mathbf{M}_{\{l,1\}}$ are the minors of the matrices $\bar{\mathbf{Y}}$ and $\mathbf{Y}$. Then we obtain:
\begin{equation}
\begin{cases}
    \sum\limits_{k=2}^L(-1)^k \det\left(\mathbf{M}_{[\{1,k\}^c]}\right) \bar{Y}_{1k}=\det{\bar{\mathbf{Y}}},\\
    \sum\limits_{l=2}^L(-1)^l \det\left(\mathbf{M}_{[\{l,1\}^c]}\right) {Y}_{l1}=\det{{\mathbf{Y}}}.
\end{cases}
\end{equation} 
Finally, one can simplify the expression as:
\begin{align}
&\int \mathbf{D}(\boldsymbol{\chi},\bar{\boldsymbol{\chi}}) e^{\bar{\boldsymbol{\chi}}^T \mathbf{A}\boldsymbol{\chi}+\bar{\boldsymbol{\psi}}^T\boldsymbol{\chi}+\bar{\boldsymbol{\chi}}^T \boldsymbol{\psi}+\bar{\mathbf{u}}^T\boldsymbol{\chi}+\bar{\boldsymbol{\chi}}^T \mathbf{u}}\\&=\det[\mathbf{A}]+
   \sum_{m= 1}^L
   \frac{(-1)^{m}}{(m!)^2}
   \sum_{j_{1},\dots,j_{m}=1}^{L}
   \sum_{i_{1},\dots,i_{m}=1}^{L}
   \epsilon(I_{[m]},J_{[m]}) \det[\mathbf{A}_{[I_{[m]}^c,J_{[m]}^c]}]
     \prod_{\alpha=1}^{m} \bar{\psi}_{j_\alpha}
     \prod_{\beta=1}^{m}
     \psi_{i_\beta}\notag\\
&
 -\det[\bar{\mathbf{Y}}] - \sum_{m= 1}^L
   \frac{1}{m!(m-1)!}
   \sum_{j_{1},\dots,j_{m-1}=1}^{L}
   \sum_{i_{2},\dots,i_{m}=1}^{L}
   \epsilon(I_{[m]}\setminus \{i_1\},J_{[m-1]}) \det[\bar{\mathbf{Y}}_{[(I_{[m]}\setminus \{i_1\})^c,J_{[m-1]}^c]}]
     \prod_{\alpha=1}^{m-1} \bar{\psi}_{j_\alpha}
     \prod_{\beta=2}^{m}
     \psi_{i_\beta} \notag\\
     &-\det[\mathbf{Y}]+ \sum_{m= 1}^L
   \frac{(-1)^{m}}{m!(m-1)!}
   \sum_{j_{2},\dots,j_{m}=1}^{L}
   \sum_{i_{1},\dots,i_{m-1}=1}^{L}
   \epsilon(I_{[m-1]},J_{[m]}\setminus \{j_1\}) \det[\mathbf{Y}_{[I_{[m-1]}^c,(J_{[m]}\setminus \{j_1\})^c]}]
     \prod_{\alpha=2}^{m} \bar{\psi}_{j_\alpha}
     \prod_{\beta=1}^{m-1}
     \psi_{i_\beta}.
\end{align}
\end{proof}

% \bibliographystyle{aipnum4-2}
% \bibliography{Bib.bib}
\end{document}